\def\gg{\mathfrak g}
\def\be{\begin{eqnarray}}
\def\ee{\end{eqnarray}}
\newcommand{\bigslant}[2]{{\left.\raisebox{.2em}{$#1$}\middle/\raisebox{-.2em}{$#2$}\right.}}
\newtheorem{definition}{Definition}
\newtheorem{lemma}[definition]{Lemma}
\newtheorem{theorem}[definition]{Theorem}
\newtheorem{example}{Example}
\newtheorem{prop}[definition]{Proposition}
\newtheorem{conjecture}{Conjecture}
\newtheorem{remark}{Remark}
\newenvironment{customthm}[1]
  {\innercustomthm}
  {\endinnercustomthm}
\title{From Lie algebra crossed modules to tensor hierarchies}
\author[a,b]{Sylvain Lavau\thanks{Corresponding author: lavau@math.univ-lyon1.fr}}
\author[c]{Jim Stasheff}
\affil[a]{\small \emph{Euler Institute (EIMI) \& Steklov Institute, Fontanka River Embankment 27,  191023,
St-Petersburg, Russia.}} 
\affil[b]{\small \emph{Aristotle University of Thessaloniki, Department of Mathematics, 541 24, Thessaloniki, Greece.}} %\emph{Euler Institute (EIMI) \& Steklov Institute, Fontanka, 27, 191023 St Petersburg, Russian Federation.}}
\affil[c]{\small \emph{109 Holly Dr,
Lansdale, PA 19446, USA.}}
\def\gg{\mathfrak g}
\def\be{\begin{equation}}
\def\ee{\end{equation}}
\def\ddo{\end{document}}
\date{}
\def\blfootnote{\xdef\@thefnmark{}\@footnotetext}
\begin{document}

\maketitle

%\trd{I have inserted alternative treatments of some subjects, just for consideration and to see what their influence might be further on; ditto for the Pirashvili quotes. In both cases, the scars of the surgery need to be smoothed.}

\begin{abstract}

%The tradition of gauge theories based on Lie algebras may not be sufficient. Gauging procedures in supergravity involve a gauge algebra that is a subalgebra of the Lie algebra of symmetries. In such a cause, the field strength associated to the
%gauge fields might not transform covariantly. Then one is compelled to add a two-form field that compensates for
%the discrepancy, but then the cycle repeats, leading to a hierarchy/tower of tensor fields; a good example is found in supergravity theories.

% On the other hand, of increasing use are $L_\infty$ algebras. More recently, other  natural candidates for
%generalizing Yang-Mills theory are Leibniz algebras.
%Choosing a Leibniz algebra as gauge algebra leads to a Tensor
%Hierarchy, which in turn leads to an  $L_\infty$ algebra. Thus
%$L_\infty$-algebras form a natural, if not necessary, feature of Leibniz
%gauge theories.

 %Leibniz algebras have been increasingly used in gauging procedures in supergravity. Their relationships with $L_\infty$-algebras and tensor hierarchies have been explored \trh{have begun to be explored} in \cite{Bonezzi:2019ygf, LavauPalmkvist}. 
%Inspired by the results presented in [Lav19;LP19], 

The present paper, though inspired by the use of tensor hierarchies in theoretical physics, establishes their mathematical credentials, especially as genetically related to Lie algebra crossed modules. Gauging procedures in supergravity rely on a pairing -- the \emph{embedding tensor} -- between a Leibniz algebra and a Lie algebra. Two such algebras, together with their embedding tensor, form a triple called a \emph{Lie-Leibniz triple}, of which Lie algebra crossed modules are particular cases.
This paper is devoted to showing that any Lie-Leibniz triple induces a differential graded Lie algebra -- its associated \emph{tensor hierarchy} -- whose restriction to the category of Lie algebra crossed modules is the canonical assignment associating to any Lie algebra crossed module its corresponding unique 2-term differential graded Lie algebra. This shows that Lie-Leibniz triples form natural generalizations of Lie algebra crossed modules and that their associated tensor hierarchies can be considered as some kind of `lie-ization' of the former. %We show that this assignment is functorial, when restricted to a particular subcategory of that of Lie-Leibniz triples.
We deem the present construction of such tensor hierarchies clearer and more straightforward than previous derivations. 
%should use 
%, though many of them do. However, w
We stress that  such a construction suggests
%exists, hence justifying that there is room 
the existence of further  well-defined Leibniz gauge theories.
\end{abstract}

\noindent \textbf{Keywords:} Differential graded Lie (super)-algebra,  Leibniz algebra, Lie algebra crossed module.\newline
\noindent 2020 Mathematics Subject Classification: 17B70, 17A32, 18G45.
\tableofcontents

\section{Introduction}

%\trh{Remark for later : I define a $L_\infty$-algebra extension of a Leibniz algebra $V$ as a non-positively graded $L_\infty$-algebra $L=L_0\oplus L_{-1}\oplus\ldots$ such that \begin{itemize}
%    \item $L_0=V$,
%\item $l_2|_{L_0\wedge L_0}=[\,.\,,.\,]$, the skew-symmetric part of the Leibniz product, and
%\item $L_{-i}=0\quad \forall\, i\geq1$ whenever $V$ is a Lie algebra.
%\end{itemize} }
%\p
% \Loo-structures have gradually become a part of the `standard tool kit' in gauge field theory \cite{jds:moskva, Sati:2009ic}. At first,  most results involved technical details of special cases in string field theory and conformal field theory.  %\cite{z:csft}. 
% In the past few years, these higher structures emerged in  gauging procedures in supergravity, bringing  much interest to this topic from physicists. %\cite{Lavau:2014iva, Palmer:2013pka, Hohm:2017pnh, Cederwall:2018aab, Cagnacci:2018buk}. 

 Gauged models of supergravity involve a Lie algebra of symmetries $\mathfrak{g}$, of which only a sub-algebra $\mathfrak{h}$ is promoted to become a gauge algebra. To preserve the symmetries manifest when performing their computations, physicists stick to a formulation of the theory that involves the bigger Lie algebra $\mathfrak{g}$. One of the consequences is that the gauge theory needs the addition of several fields of higher form degree--and their associated field strengths, leading to what physicists call a \emph{tensor hierarchy} \cite{deWit:2008ta, dewit-sam, deWit-sam:endofhierarchy, Trigiante:2016mnt}. Mathematically, it is a (possibly infinite) tower of $\mathfrak{g}$-modules -- that by convention physicists take to be positively graded but that, in the present paper, we take to be negatively graded  %The justification of this choice is that when one goes from tensor hierarchies to $L_\infty$-algebras, the degree of the $k$-brackets is indeed $2-k$ (and not $k-2$, as it would be if the algebra had been positively graded).} 
 -- such that the space at degree $k>1$ depends exclusively on the spaces of lower degree. These spaces contain the redundant information carried by the fields of higher degree; they are necessary to preserve the covariance of the theory. The need for a unified perspective on gauging procedures in supergravity, as well as in Double and Exceptional field theories, has been salient in theoretical physics for some years now \cite{Bonezzi:2019bek,  Cagnacci:2018buk, Cederwall:2017fjm, Hohm:2019wql}.

On the mathematical side, another approach to tensor hierarchies has been recently explored: that of \emph{Leibniz algebras} \cite{Bonezzi:2019ygf, Kotov2019, lavau:TH-Leibniz}. A Leibniz algebra is a vector space $V$ equipped with a bilinear product $\circ$ which is a derivation of itself, i.e. it satisfies the Leibniz rule, hence the denomination %\footnote{They are also called \emph{Loday algebras}, from Jean-Louis Loday (1946-2012), who first identified them, cf the Appendix.} 
\cite{blokhGeneralizationConceptLie1965, loday-pirashvili}:
\begin{equation}
    x\circ(y\circ z)=(x\circ y)\circ z+y\circ(x\circ z)
\end{equation}
%In particular, Lie algebras are Leibniz algebras where the Leibniz product is fully skew-symmetric. 
The construction of tensor hierarchies relies on the existence of \emph{embedding tensors}.
If $\mathfrak{g}$ is a Lie algebra and if a Leibniz algebra $V$ is a $\mathfrak{g}$-module, an embedding tensor is a map $\Theta:V\to \mathfrak{g}$ satisfying some consistency conditions -- called the linear and quadratic constraints, respectively. The quadratic constraint implies that $\mathfrak{h}:=\mathrm{Im}(\Theta)$ is a sub-Lie algebra of $\mathfrak{g}$, whereas what we call the linear constraint %\footnote{Notice that what we call `linear constraint' actually slightly differs from what physicists call `linear constraint' in this context (see footnote \ref{footnoteconstraint}).} 
requires that the Leibniz product on $V$ is generated by the action of $\mathfrak{h}$ on $V$:
\begin{equation}\label{latterconstraint}
x\circ y = \Theta(x)\cdot y 
\end{equation}
In other words, the embedding tensor lifts the adjoint action to $\mathfrak{g}$:
\begin{center}
\begin{tikzpicture}
\matrix(a)[matrix of math nodes, 
row sep=4em, column sep=6em, 
text height=1.5ex, text depth=0.25ex] 
{&\mathfrak{g}\\
V&\mathrm{End}(V)\\}; 
\path[->](a-1-2) edge node[right]{$\rho$} (a-2-2); 
\path[->](a-2-1) edge node[above]{$\mathrm{ad}$} (a-2-2);
\path[->](a-2-1) edge node[above left]{$\Theta$} (a-1-2);
\end{tikzpicture}
\end{center}
where $\rho$ represents the action of $\mathfrak{g}$ on $V$. We call such compatible data $(\mathfrak{g},V,\Theta)$ \emph{Lie-Leibniz triples}.
%Some authors do not put much emphasis on this latter constraint, or consider that the Leibniz algebra structure is a mere by-product of the definition of $\Theta$. On the contrary, we believe that 
Equation~\eqref{latterconstraint} is an important aspect of these algebraic structures, because it draws a parallel with the notion of \emph{Lie algebra crossed module}. The latter, also called \emph{differential crossed module}  \cite{baez:Lie2alg}, appear to be precisely those Lie-Leibniz triples such that  $V$ is a Lie algebra and $\Theta$ is $\mathfrak{g}$-equivariant. The introduction of the notion of Lie algebra crossed module was historically preceded by the study of crossed modules of discrete groups, see \cite{MR4322147} for an historical account.  %Notice that what we call `linear constraint' actually slightly differs from what physicists call `linear constraint' in this context (see footnote \ref{footnoteconstraint}).
%We call the compatible data $(\mathfrak{g},V,\Theta)$ \emph{Lie-Leibniz triples}, and we indeed notice that Lie algebra crossed modules are precisely those Lie-Leibniz triples such that  $V$ is a Lie algebra and $\Theta$ is $\mathfrak{g}$-equivariant.
%of which Lie algebra crossed modules form particular cases.

It has been proved that any Lie-Leibniz triple $(\mathfrak{g},V,\Theta)$ gives rise canonically to a non-positively graded differential graded Lie algebra \cite{lavau:TH-Leibniz}, which might coincide with the one defined from Borcherds algebras in \cite{Palmkvist-THA}. 
Unfortunately, the construction presented in \cite{lavau:TH-Leibniz} is not very satisfying, mathematically speaking, because the role of $\mathfrak{g}$ -- although central in the Lie-Leibniz triple -- is not salient in the resulting differential graded Lie algebra. In the present paper we provide an alternative, simpler construction, that generalizes the results in~\cite{lavau:TH-Leibniz} and that additionally reinstates the prominent role of $\mathfrak{g}$. The result is a differential graded Lie algebra $\mathbb{T}=\bigoplus_{i=-1}^{\infty}T_{-i}$ concentrated in degrees lesser than or equal to $+1$:
\begin{center}
\begin{tikzcd}[column sep=1cm,row sep=0.4cm]
\ldots\ar[r]&T_{-3}\ar[r, "\partial_{-2}"]&T_{-2}\ar[r, "\partial_{-1}"]&V[1]\ar[r,"\partial_{0}=\Theta"]&\mathfrak{g}\ar[r, "\partial_{+1}"]& R_\Theta[-1]
\end{tikzcd}
\end{center}
where $R_\Theta[-1]$ is the cyclic $\mathfrak{g}$-submodule of $\mathrm{Hom}(V,\mathfrak{g})$ generated by $\Theta$, the degree having been shifted by $+1$.
We call this differential graded Lie algebra $\mathbb{T}$ the \emph{tensor hierarchy associated to $(\mathfrak{g},V,\Theta)$}. The negatively graded part $T_\bullet=\bigoplus_{i=1}^{\infty}T_{-i}$ is obtained as a quotient of the free graded Lie algebra of $V[1]$ by a particular graded ideal, as was proposed by earlier work in theoretical physics \cite{Cederwall:2015oua, Gomis:2018xmo}. A priori, it is not an exact sequence, although we raise the question that under some circumstances it might be (see Conjecture~\ref{conjecturr}).

Although it is injective on objects, the function %$G:\textbf{Lie-Leib}\to\textbf{DGLie}_{\leq1}$ 
that assigns to each Lie-Leibniz triple its associated tensor hierarchy is not functorial on the category \textbf{Lie-Leib} of Lie-Leibniz triples. However, we note that its restriction to the full subcategory \textbf{Lie$\times$Mod} of Lie algebra crossed modules
has a very particular image. More precisely, 
  any Lie algebra crossed module $\mathfrak{c}\overset{\Theta}{\longrightarrow}\mathfrak{g}$ is sent by this function to its corresponding canonical 2-term differential graded Lie algebra \cite{baez:Lie2alg} or, more precisely, to the following 3-term differential graded Lie algebra:
\begin{equation}\label{materialization}
\begin{tikzcd}[column sep=1cm,row sep=0.4cm]
\mathfrak{c}\ar[r,"\Theta"]&\mathfrak{g}\ar[r,"0"]&\mathbb{R}[-1]
\end{tikzcd}
\end{equation}
Here, $\mathbb{R}[-1]$ is the irreducible trivial representation of $\mathfrak{g}$, to which  the $\mathfrak{g}$-submodule $R_\Theta\subset\mathrm{Hom}(\mathfrak{c},\mathfrak{g})$ is isomorphic, since $\Theta$ is $\mathfrak{g}$-equivariant by definition of Lie algebra crossed modules. We observe that this assignment defines a faithful functor on \textbf{Lie$\times$Mod}. It is thus reasonable to look for the biggest subcategory of \textbf{Lie-Leib} on which the restriction of the function assigning to each Lie-Leibniz triple its associated tensor hierarchy is functorial.

By construction, the tensor hierarchy associated to a Lie-Leibniz triple encodes the symmetric part of the Leibniz product:
\begin{equation}\label{braque}
    \{x,y\}=\frac{1}{2}\big(x\circ y +y\circ x\big)
\end{equation}
This bracket vanishes when $V$ is a Lie algebra, so in particular when $(\mathfrak{g},V,\Theta)$ is a Lie algebra crossed module, resulting in the cochain complex \eqref{materialization}. The prominence of the symmetric bracket \eqref{braque} and the $\mathfrak{g}$-action in the construction of the tensor hierarchy associated to a given Lie-Leibniz triple $(\mathfrak{g},V,\Theta)$ can be seen more clearly in the fact that the first space of the hierarchy, namely $T_{-2}$, is the quotient of $S^2(V)$ by the biggest $\mathfrak{g}$-submodule of $\mathrm{Ker}\big(\{\,.\,,.\,\}\big)\subset S^2(V)$, denoted $K$.
We say that a morphism of Lie-Leibniz triples $(\varphi, \chi):(\mathfrak{g},V,\Theta)\to(\mathfrak{g}',V',\Theta')$ is \emph{compatible} if it sends $K\subset \mathrm{Ker}\big(\{\,.\,,.\,\}\big)\subset S^2(V)$ to $K'\subset \mathrm{Ker}\big(\{\,.\,,.\,\}'\big)\subset S^2(V')$. 
Then, Lie-Leibniz triples together with compatible morphisms form a wide subcategory \textbf{compLie-Leib} of the category \textbf{Lie-Leib}, and \textbf{Lie$\times$Mod} is a full subcategory of both: \textbf{Lie$\times$Mod}$\ \subset\ $\textbf{compLie-Leib}$\ \subset\ $ \textbf{Lie-Leib}. %{lodayexample2}
%(which is $\mathrm{Rep}(\Theta)$ because $\Theta$ is $\mathfrak{g}$-equivariant). 
%In order to refine our analysis, we rely on the importance of the relationship between the embedding tensor and  the adjoint action of $V$ on itself. We call \emph{semi-strict} (resp. \emph{strict}) Lie-Leibniz triples those Lie-Leibniz triples in which the adjoint map (resp. the embedding tensor) is $\mathfrak{g}$-equivariant. Then we have a sequence of inclusions of full subcategories:
%\begin{equation*}
 %   \textbf{Lie$\times$Mod}\subset\textbf{strLie-Leib}\subset\textbf{semLie-Leib}\subset\textbf{compLie-Leib}\subset\textbf{Lie-Leib}
%\end{equation*}
%where \textbf{semLie-Leib} (resp. \textbf{strLie-Leib}) is the category of semi-strict (resp. strict) Lie-Leibniz triples, and \textbf{compLie-Leib} denotes that of stringent ones.
We now state the main result of this paper\footnote{After several conversations with Jakob Palmkvist at MITP (Mainz) in January 2022, it appeared that Theorem \ref{centraltheorem} was erroneous, as $\textbf{DGLie}_{\leq1}$ should be $\textbf{DGLie}$. We corrected this mistake and provided more details about the reasons why this is so in a Corrigendum \cite{lavauCorrigendumLieAlgebra2023}, which can be found as Appendix \ref{corrige} of this preprint.\label{footnote1}}: %, which can be found as Theorem \ref{centraltheorem}:
\begin{theorem}\label{centraltheorem}
The functor
\begin{align*}
\overline{G}:\hspace{0.2cm}\emph{\textbf{Lie$\times$Mod}}&\xrightarrow{\hspace*{1.2cm}} \hspace{0.4cm}\emph{\textbf{DGLie}}_{\leq1}\\
	\big(\mathfrak{c}\xrightarrow{\Theta}\mathfrak{g}\big)&\xmapsto{\hspace*{1.2cm}} \big(\mathfrak{c}\xrightarrow{\Theta}\mathfrak{g}\xrightarrow{0}\mathbb{R}[-1]\big)
\end{align*}
can be canonically extended to an injective-on-objects function $G:\emph{\textbf{Lie-Leib}}\to\emph{\textbf{DGLie}}_{\leq1}$ such that the restriction of this function to the wide sub-category \textbf{\emph{compLie-Leib}} is a faithful functor. Moreover, \textbf{\emph{compLie-Leib}} is the biggest wide subcategory of \textbf{\emph{Lie-Leib}} (with respect to inclusion) such that this functorial property holds. % which coincides with $\overline{G}$ on \emph{\textbf{Lie$\times$Mod}}.
\end{theorem}

This theorem shows that, while Lie-Leibniz triples form a natural generalization of  Lie algebra crossed modules,  tensor hierarchies  materialize as a generalization of the differential graded Lie algebra formulation of such Lie algebra crossed modules, as given in the cochain complex \eqref{materialization}. The paper is divided into two sections:
Section \ref{section2} presents the main mathematical concepts and ideas,   while Section \ref{section3} is mostly technical as it contains the details of the proof of Theorem \ref{centraltheorem}. To begin with, Lie-Leibniz triples are introduced and studied in subsection~\ref{subsecc}. Then, the relationship between Lie algebra crossed modules and Lie-Leibniz triples is established in subsection \ref{relatio}, which concludes with an explanation of Theorem \ref{centraltheorem}. Section~\ref{section3} 
is subdivided into four parts: building the graded vector space $T_\bullet$ and equipping it with a graded Lie bracket in subsection~\ref{construction}, showing that it moreover admits a non-trivial differential in subsection~\ref{structure}, proving that the differential and the graded Lie bracket are compatible in subsection
\ref{patching}, and finally proving Theorem \ref{centraltheorem} in the final subsection~\ref{restriction}.
Though the construction(s) and proofs are technical, they proceed by a sequence of steps, each of which is meaningful and  fairly transparent. In order for the reader to fully appreciate the first steps of the construction of the tensor hierarchy, we have introduced a detailed example at the end of subsection~\ref{patching} (see Example~\ref{lodayexample2}). The paper concludes on a discussion of the algebraic structures involved and open questions. 
Clarifying the relationship between Lie algebra crossed modules and Lie-Leibniz triples opens the possibility to better understand higher gauge theories. Indeed, the content of this paper suggests that a unified perspective on standard gauge theories and Leibniz gauge theories \cite{Bonezzi:2019ygf, Hohm:2019wql, Strobl:2019mgf} can emerge through the use of Lie-Leibniz triples and tensor hierarchies. %We leave this exciting topic to further investigations. %Applications to Double and Exceptional field theories may certainly be

\section{Lie-Leibniz triples and Lie algebra crossed modules}\label{section2}
%\subsection{Leibniz algebras and Lie-Leibniz triples}

\subsection{Leibniz algebras and Lie-Leibniz triples}\label{subsecc}

The beginning of the section involves well-known facts about Leibniz algebras \cite{blokhGeneralizationConceptLie1965, loday-pirashvili} but then we introduce novel notions that will be thoroughly used in the paper. The field of scalars will be $\mathbb{R}$ throughout the paper.
%and $L_\infty$-algebras \cite{lada-markl}.
\begin{definition}\label{def:leibniz}
A  (left) \emph{Leibniz algebra} is a vector space $V$ together with a bilinear operation $\circ:V\otimes V\to V$ satisfying the relation:
\begin{equation}\label{leibnizproduct}
    x\circ (y\circ z) = (x\circ y)\circ z + y\circ(x\circ z)
    \end{equation}
     A \emph{Leibniz subalgebra} of $V$ is a subspace $U$ which is stable under the Leibniz product.  
    
A morphism of Leibniz algebras, or \emph{Leibniz algebra morphism}, between $(V,\circ)$ and $(V',\circ')$ is a linear map $\chi:V\to V'$ that commutes with the respective Leibniz products:
\begin{equation}
\chi(x\circ y)=\chi(x)\circ'\chi(y)
\end{equation}
\noindent %The space of Leibniz algebra (endo)morphisms from $V$ to itself is noted $\mathrm{End}(V)$. 
We call \emph{\textbf{Leib}} the category of Leibniz algebras, together with Leibniz algebra morphisms.
\end{definition}

%\begin{remark}
%Given a Leibniz algebra $V$, the vector space of Leibniz algebra morphisms from $V$ to itself should not be confused with the space $\mathrm{End}(V)$ of vector spaces endomorphisms.
%\end{remark}

Let $(V,\circ)$ be a Leibniz algebra. The Leibniz product $\circ$ can be split into  its skew-symmetric part, denoted $[\,.\, ,.\, ]:\wedge^2V\to V$, and its symmetric part, denoted 
 $\{\,.\, ,.\, \}: S^2(V) \to V$, defined by:
%We can  split the product $\circ$ of a Leibniz algebra $V$ into its symmetric part $\{.\,,.\}$ and its skew-symmetric part $[\,.\,,.\,]$:

\begin{equation*}
[x,y]:=\frac{1}{2}\big(x\circ  y-y\circ  x\big)%\label{skewsym}
\hspace{0.7cm}\text{and}\hspace{0.7cm}
\{x,y\}:=\frac{1}{2}\big(x\circ  y+y\circ  x\big)%\label{sym}
\end{equation*}
for any $x,y\in V$, so that:
\begin{equation}\label{splitting}
x\circ  y=[x,y]+\{x,y\}
\end{equation}

\begin{example}
A Lie algebra is a Leibniz algebra whose product is skew-symmetric, i.e. such that $\{\,.\,,.\,\}=0$. The category \emph{\textbf{Lie}} of Lie algebras with Lie algebra morphisms is a full subcategory of \emph{\textbf{Leib}}.  %The Leibniz morphisms are then Lie algebra morphisms.
\end{example}

\begin{definition}\label{definitiontoutou}
A \emph{left ideal of $V$} (or \emph{ideal} for short) is a subspace $\mathcal{K}$ of $V$ that satisfies the following condition:
\begin{equation*}
    V\circ \mathcal{K}\subset\mathcal{K}
\end{equation*}
A left ideal $\mathcal{K}$ of $V$ whose %action 
left $\circ$-product with $V$ is trivial, i.e. such that $\mathcal{K}\circ V=0$, is called \emph{central}.
\end{definition}
\begin{example}
The sub-space $\mathcal{I}$ of $V$ generated by elements of the form $\{x,x\}$ is an ideal called the \emph{ideal of squares of V}. A direct application of Equation \eqref{leibnizproduct} shows that it is central. The set $\mathcal{Z}$ of all \emph{central} elements of $V$, defined by:
\begin{equation*}
\mathcal{Z}=\Big\{x\in V\ \big|\ x\circ  y=0\ \ \text{for all}\ \ y\in V\Big\}
\end{equation*}
is a central ideal of $V$, and is the biggest such. It is called the \emph{centreof $V$}.
\end{example}
%By Equation \eqref{eq:deriv}, the centreis stable by every derivations of $V$.

In a general Leibniz algebra, the symmetric part of the bracket is not associative nor does the skew-symmetric part of the 
bracket  satisfy the Jacobi identity (otherwise it would be a Lie bracket), but  using Equation \eqref{leibnizproduct}, we have:
\begin{equation}\label{jacobiator0}
\big[x,[y,z]\big]+\big[y,[z,x]\big]+\big[z,[x,y]\big]=-\frac{1}{3}\Big(\big\{x,[y,z]\big\}+\big\{y,[z,x]\big\}+\big\{z,[x,y]\big\}\Big)
\end{equation}
for every $x,y,z\in V$. 
Since the right hand side lies in the ideal of squares $\mathcal{I}$, we deduce that for any central ideal $\mathcal{K}$ satisfying the following inclusions:
\begin{equation*}
    \mathcal{I}\subset \mathcal{K}\subset\mathcal{Z},
\end{equation*}
the Leibniz product canonically induces a Lie algebra structure on the quotient $\bigslant{V}{\mathcal{K}}$. The Lie algebra $\bigslant{V}{\mathcal{I}}$ is the biggest such, whereas $\bigslant{V}{\mathcal{Z}}$ is the smallest. The quotient map  $V\to\bigslant{V}{\mathcal{I}}$ is universal with respect to morphisms of Leibniz algebras from $V$ to any Lie algebra \cite{loday-pirashvili}.

A (symmetric) representation of a (left) Leibniz algebra is the natural generalization of the notion of Lie algebra representation: %\footnote{The correct notion of representation in the category of Leibniz algebras is  the data of a left and a right actions that satisfy some natural consistency conditions \cite{loday-pirashvili}. It is only when the right action equals minus the left action that one says that the representation is symmetric.}: 

\begin{definition} A (symmetric) representation of a Leibniz algebra $V$ is the data of a vector space $E$ and a morphism of Leibniz algebras $\rho:V\to\mathrm{End}(E)$.
\end{definition}
 In particular, since $\mathrm{End}(E)$ is a Lie algebra, the linear map $\rho$ satisfies the following identity:
\begin{equation}
    \rho(x\circ y)=\big[\rho(x),\rho(y)\big]
\end{equation}
for every $x,y\in V$.
Since the right-hand side is fully skew-symmetric, so should be the left-hand side. Then, the ideal of squares $\mathcal{I}$ is necessarily in the kernel of the map $\rho$. This is true for every symmetric
representation of $V$. In particular, this implies that such representations correspond to representations of the Lie algebra $\bigslant{V}{\mathcal{I}}$ \cite{loday-pirashvili}.

 As an example, one may consider the adjoint action of $V$ on itself.
 Indeed, the Leibniz algebra $V$ is a representation of $V$ through the \emph{adjoint map} -- denoted $\mathrm{ad}$ -- and defined by:
\begin{align*}
\mathrm{ad}:\hspace{0.2cm}V&\xrightarrow{\hspace*{1.2cm}} \hspace{0.4cm}\mathrm{End}(V)\\
	x&\xmapsto{\hspace*{1.2cm}}\mathrm{ad}_x:y\longmapsto x\circ y
\end{align*}
One can rewrite the Leibniz identity \eqref{leibnizproduct} as:
\begin{equation}\label{adleibniz}
     \mathrm{ad}_{\mathrm{ad}_x(y)}(z)=\mathrm{ad}_x\big(\mathrm{ad}_y(z)\big)-  \mathrm{ad}_y\big(\mathrm{ad}_x(z)\big)
\end{equation}
which, in turn, can be compactly summarized as:
\begin{equation}\label{adleibniz2}
 \mathrm{ad}_{\mathrm{ad}_x (y)}=\big[\mathrm{ad}_x,\mathrm{ad}_y\big]
 \end{equation}
 where the bracket on the right hand side is the Lie bracket on $\mathrm{End}(V)$. Since $\mathrm{ad}_x(y)=x\circ y$ on the left hand side, this equation proves that the adjoint action indeed induces a representation of $V$ on itself. 
 
 Moreover, the Leibniz identity \eqref{leibnizproduct} shows that the map $\mathrm{ad}$ lands in the derivations of $V$.
%\begin{definition}
A \emph{derivation} of a Leibniz algebra $(V,\circ)$ is a linear map $\delta:V\to V$ such that:
\begin{equation}
    \delta(x\circ y)=\delta(x)\circ y+x\circ \delta(y)\label{eq:deriv}
\end{equation}
%\end{definition}
The space of %\emph{
derivations of $V$, denoted  $\mathrm{Der}(V)$, is a Lie subalgebra of $\mathrm{End}(V)$.
%whose elements satisfy the following condition:
%\begin{equation}
 %   f(x\circ y)=f(x)\circ y+x\circ f(y)\label{eq:deriv}
%\end{equation}
%for every $f\in\mathrm{Der}(V)$ and $x,y\in V$. 
If $\delta$ is a derivation of $V$, then it is a derivation of the symmetric and the skew-symmetric parts of the Leibniz product. These results imply that both the ideal of squares $\mathcal{I}$ and the centre$\mathcal{Z}$ are stable by every derivation of $V$. %As in the Lie algebra case, one can define an adjoint action on V.
% is the set of derivations of the product $\circ$. 
The image of the adjoint action $\mathrm{ad}:V\to\mathrm{Der}(V)$ is a Lie subalgebra of $\mathrm{Der}(V)$ called the \emph{inner derivations of $V$}, and denoted $\mathfrak{inn}(V)$. The kernel of this map is the centre$\mathcal{Z}$ of the Leibniz algebra. Hence, the adjoint map  defines a Lie algebra isomorphism between $\bigslant{V}{\mathcal{Z}}$ and $\mathfrak{inn}(V)$.

%Since the ideal of squares is in the kernel of the linear map $\mathrm{ad}$, then the leftt hand side equates $\mathrm{ad}_{[x,y]}$, so this formula says that $\mathrm{ad}$ induces a representation of the Lie algebra $\bigslant{V}{\mathcal{I}}$ on $V$, which is equivalent to a symmetric representation of $V$ on itself.

We are now interested in the relationship between the adjoint action of a Leibniz algebra and the embedding tensor of gauging procedures in supergravity. In these theories, the (super)-symmetries are controlled by a Lie algebra $\mathfrak{g}$ and the field content of the physical model lies in various $\mathfrak{g}$-modules \cite{deWit:2008ta, deWit-sam:endofhierarchy}. It turns out that the gauge fields $A_\mu$ do not necessarily take values in the adjoint representation of $\mathfrak{g}$. %the fundamental representation of $\mathfrak{g}$ %\footnote{The fundamental representation of a Lie algebra is defined as its smallest-dimensional faithful representation.}. 
 This is rather unusual, compared to classical Yang-Mills gauge theories where such fields take values in the adjoint representation of $\mathfrak{g}$.
Since gauge transformations are local symmetries generated by the action of gauge fields on the Lagrangian, physicists have developed a natural set-up to couple the representation in which such fields take values with the Lie algebra $\mathfrak{g}$ of global symmetries, see e.g. \cite{Hohm:2019wql}.
%end{remark}

More precisely, the basic data consist of a Lie algebra $\mathfrak{g}$  and a $\mathfrak{g}$-module $V$, % (usually the defining representation of $\mathfrak{g}$), 
in which gauge fields take values. Then physicists observe that under some conditions the Lie algebra structure on $\mathfrak{g}$ can be lifted to a Leibniz algebra structure on $V$ \cite{Hohm:2019wql}.  Although the elements of  $V$ would play the role of gauge fields,  % under some sort of yet not well understood "Leibniz gauge theory". 
their associated field strength -- as given by the usual definition $F=dA + A\wedge A$ -- are not covariant under gauge transformations, due precisely to the absence of skew-symmetry of the Leibniz algebra structure.
This observation forces physicists to add higher fields to the model, taking values in additional $\mathfrak{g}$-modules, hence ending up with a tower of spaces that physicists call \emph{tensor hierarchy} \cite{deWit:2008ta, deWit-sam:endofhierarchy}. This is a new manifestation of higher gauge theories in theoretical physics.

Physicists lift the Lie algebra structure of $\mathfrak{g}$ to a Leibniz algebra structure on $V$ by using what they call an \emph{embedding tensor}. It is a linear mapping $\Theta:V\to \mathfrak{g}$ satisfying some conditions which, if satisfied, ensure the compatibility between the Lie algebra structure on $\mathfrak{g}$ and the Leibniz algebra structure on $V$ in a sense that resembles what occurs in a Lie algebra crossed module.
Due to their importance in supergravity theories and in the present paper, triples of elements $(\mathfrak{g},V,\Theta)$ satisfying these conditions deserve their own name:

\begin{definition}\label{def:lieleibniz}
A \emph{Lie-Leibniz triple} is a triple $(\mathfrak{g},V,\Theta)$ where:
\begin{itemize}
\item $\mathfrak{g}$ is a Lie algebra,
\item $V$ is a $\mathfrak{g}$-module equipped with a Leibniz algebra structure $\circ:V\otimes V\to V$,
\item $\Theta: V\to \mathfrak{g}$ is a linear mapping called the \emph{embedding tensor}, that satisfies two compatibility conditions. 
\end{itemize} 
 The first one is the \emph{linear constraint}:
\begin{equation}\label{eq:compat}
x\circ y = \Theta(x)\cdot y  
\end{equation}
%where $\rho$ denotes the action of $\mathfrak{g}$ on $V$.
The second one is called the \emph{quadratic constraint}:
\begin{equation}\label{eq:equiv}
\Theta(x\circ  y)=[\Theta(x),\Theta(y)]_\mathfrak{g}
\end{equation}
where $[\,.\,,.\,]_\mathfrak{g}$ is the Lie bracket on $\mathfrak{g}$.

A \emph{sub-Lie-Leibniz triple} of $(\mathfrak{g},V,\Theta)$ is a Lie-Leibniz triple $(\mathfrak{t}, U,\Theta|_U)$ such that $\mathfrak{t}$ (resp. U) is a Lie subalgebra of $\mathfrak{g}$ (resp. Leibniz subalgebra of $V$).
%For a fixed Leibniz algebra $(V,\circ)$, every Lie-Leibniz triple $(\mathfrak{g},V,\Theta)$ whose Leibniz product generated by Equation \eqref{eq:compat} coincides with $\circ$, is said \emph{modeled over $V$}. %We say that the Lie-Leibniz triple is \emph{strict} if $\rho:\mathfrak{g}\to\mathrm{End}(V)$ takes values in $\mathrm{Der}(V)$.
%As a map $\Theta$
%If $\Theta:V\to \mathfrak{g}$ is surjective, we say that $(\mathfrak{g},V,\Theta)$ is a \emph{replete Lie-Leibniz triple}. 
\end{definition}
%A Lie-Leibniz triple is said \emph{faithful} when $V$ is a faithful representation of $\mathfrak{g}$. %A Lie-Leibniz triple is said \emph{irreducible} when $T_\Theta$ $-$ the representation to which $\Theta$ belongs $-$ is an irreducible representation of $\mathfrak{g}$.

%\begin{remark}
%In the present paper, to represent the action of an element  $a\in\mathfrak{g}$ on a vector $x$ of any $\mathfrak{g}$-module $R$, we use a map $\rho:\mathfrak{g}\to \mathrm{End}(R)$ or we use a dot, and we write:
%\begin{equation}
%a \cdot x=\rho(a;x)
%\end{equation}
%\end{remark}

\begin{remark}\label{rempalmo}
What physicists call \emph{linear constraint}, or \emph{representation constraint}, is a condition on $\Theta$ that singles out the $\mathfrak{g}$ sub-module of $\mathrm{Hom}(V,\mathfrak{g})$ to which $\Theta$ belongs. In the physics literature, the linear constraint sometimes appears under the form that the symmetric bracket $\{\,.\,,.\,\}$ factors through a $\mathfrak{g}$ sub-module of $S^2(V)$, see e.g. Equation (6) in \cite{Palmkvist2013}. This linear constraint can be interpreted as a weakening of one of the axioms of the notion of Lie algebra crossed module (Equation \eqref{loof1} in Lemma \ref{lemmacapcom}).
\end{remark}

The Leibniz algebra structure on $V$ is uniquely defined from the action of the Lie algebra $\mathfrak{g}$ on $V$ and Equation \eqref{eq:compat}. This linear constraint shows that the embedding tensor is a \emph{lift} of the adjoint action along the representation~$\rho$:\begin{center}
\begin{tikzpicture}
\matrix(a)[matrix of math nodes, 
row sep=5em, column sep=7em, 
text height=1.5ex, text depth=0.25ex] 
{&\mathfrak{g}\\
V&\mathrm{End}(V)\\}; 
\path[->](a-1-2) edge node[right]{$\rho$} (a-2-2); 
\path[->](a-2-1) edge node[above]{$\mathrm{ad}$} (a-2-2);
\path[->](a-2-1) edge node[above left]{$\Theta$} (a-1-2);
\end{tikzpicture}
\end{center}
It is thus redundant to assume in Definition \ref{def:lieleibniz} that $V$ is a Leibniz algebra beforehand. 
 %In practice, most Lie-Leibniz triples arise in this way.
 The same redundancy occurs in the definition of Lie algebra crossed modules. Moreover, Equations \eqref{eq:compat} and \eqref{eq:equiv} are similar to those defining Lie algebra crossed modules. We will investigate in subsection \ref{relatio} the extent of the precise relationship 
between the former and the latter notions.

%The notion of Lie-Leibniz triples originates in gauging procedures in supergravity.
In supergravity models, %the vector space $V$ happens to be a representation of the Lie algebra $\mathfrak{g}$ of symmetries of the system. More precisely, physicists consider that
 $\mathfrak{g}$ is often taken to be the  split real form of a semi-simple Lie algebra and $V$ is the smallest-dimensional faithful representation of $\mathfrak{g}$ \cite{Trigiante:2016mnt}. % (the so-called \emph{fundamental} representation). 
 % Physicists call $V$ the \emph{fundamental} -- or \emph{defining} -- representation of $\mathfrak{g}$ \cite{Trigiante:2016mnt}. 
  The term \emph{embedding tensor} has been introduced in the 2000s and refers to the fact that the image of the map $\Theta$ defines a Lie sub-algebra of $\mathfrak{g}$: this can be straightforwardly read from Equation \eqref{eq:equiv}. %In the situation where this map is onto $\mathfrak{inn}(V)$, one sees that the bilinear product $\circ$ can be derived from the map $\rho$, i.e. for every $x\in V$, there exists $a\in\mathfrak{g}$ such that $\rho(a;-)=\mathrm{ad}_x$, so that:
%\begin{equation}\label{brouillon}
 %   x\circ y=\rho(a;y)
%\end{equation}
%for every $y\in V$. The (non-unique) choice of an assignment of an element $a$ of $\mathfrak{g}$ for every element $x$ of $V$ such that $\rho(a;-)=\mathrm{ad}_x$ gives rise to a Leibniz algebra morphism $\Theta:V\to \mathfrak{g}$ that lifts the adjoint map $\mathrm{ad}:V\to\mathrm{End(V)}$. In other words, the following diagram is commutative in the category of Leibniz algebras:
%on $V$ induces a bilinear product $\bullet$ defining a Leibniz algebra structure on $V$, that may be different than the original one:
%Triples $(\mathfrak{g},V,\Theta)$ making such diagrams commutative  are objects of great interest since they allow more flexibility than a mere Leibniz algebra, by introducing another Lie algebra $\mathfrak{g}$ which is represented in $\mathrm{End}(V)$. 
 %The quadratic constraint, Equation \eqref{eq:equiv}, moreover implies that
%$\mathrm{Im}(\Theta)$ is a Lie subalgebra of \gg, hence the term `embedding'. 
This Lie sub-algebra $\mathrm{Im}(\Theta)\subset\mathfrak{g}$  plays the role of the gauge algebra in supergravity theories. For the particular role that this Lie subalgebra plays in this paper, % that the Lie subalgebra  $\mathrm{Im}(\Theta)\subset\mathfrak{g}$ plays in this paper and in supergravity theories,
 we denote it by $\mathfrak{h}$. 
Throughout this section and the next one, major emphasis will be on the respective roles of 
$\gg$\  and of $\mathfrak{h}$ in a general Lie-Leibniz triple, as well as on the relationship between $\mathrm{ad}$ and $\Theta$. %We will then focus on more specialized Lie-Leibniz triples in Section \ref{section6}, to define the desired functor $ G $, and thus $G$. 
The quadratic constraint~\eqref{eq:equiv} additionally implies two things: that $\mathrm{Ker}(\Theta)$ is an ideal of $V$ and that $\mathcal{I}\subset \mathrm{Ker}(\Theta)$. The linear constraint then implies that this ideal is central. Thus we have the following successive inclusions:
\begin{equation}\label{inclusion}
    \mathcal{I}\subset \mathrm{Ker}(\Theta)\subset \mathcal{Z}
\end{equation}
The embedding tensor $\Theta$ then induces a Lie algebra isomorphism between $\bigslant{V}{\mathrm{Ker}(\Theta)}$ and $\mathfrak{h}=\mathrm{Im}(\Theta)$.
The equality $\mathrm{Ker}(\Theta)=\mathcal{Z}$ is satisfied only when $V$ is a faithful representation of $\mathfrak{g}$. %(which is usually the case in gauging procedures in supergravity).

%\begin{remark}
%In gauging procedures in supergravity, physicists define the \emph{linear constraint} (also called the \emph{representation constraint}) as the symmetrization of Equation \eqref{eq:compat}, i.e. they only require the following identity to hold:
%\begin{equation}
 %   \{x,y\}=\frac{1}{2}\big(\rho(\Theta(x);y)+\rho(\Theta(y);x)\big)
%\end{equation}
%This emphasizes the role that physicists give to the embedding tensor: it contains information on the ideal of squares $\mathcal{I}$, i.e. the part of the Leibniz algebra that does not appear in classical Yang-Mills gauge theory. The Lie algebra structure on $\mathrm{Im}(\Theta)$ defined by the quadratic constraint usually contains all the physical relevant information.
%\end{remark}

\begin{example}
Let $(V,\circ)$ be a Leibniz algebra and take $\mathfrak{g}=\mathrm{End}(V)$.
Let $\Theta=\mathrm{ad}$ be the map that sends any element $x\in V$ to the adjoint endomorphism $\mathrm{ad}_x:y\mapsto x\circ y$. The gauge algebra $\mathfrak{h} = \mathrm{Im}(\Theta)$ consists of the inner derivations of $V$. %, denoted $\mathfrak{inn}(V)$.
Then $\big(\mathrm{End}(V),V,\mathrm{ad}\big)$ is a Lie-Leibniz triple.
\end{example}

Another way of understanding the quadratic constraint  is to note that it implies that $\Theta$ is $\mathrm{Im}(\Theta)$-equivariant (but not necessarily $\mathfrak{g}$-equivariant). 
%More precisely, as a linear map from $V$ to $\mathfrak{g}$, the embedding tensor can be seen as an element of a submodule -- say $ R_\Theta$ -- of the $\mathfrak{g}$-module $V^*\otimes\mathfrak{g}$, that is acted upon through the action of $\mathfrak{g}$ on $V^*$ induced by $\rho$ on the one hand, and through the adjoint action of $\mathfrak{g}$ on itself on the other hand. This defines a map:
%\begin{align*}
%\eta:\hspace{0.2cm}\mathfrak{g}&\xrightarrow{\hspace*{1.2cm}} \hspace{0.4cm}\mathrm{End}(R_\Theta)\\
%	a&\xmapsto{\hspace*{1.2cm}}\eta(a;-):\Xi\longmapsto \big(x\mapsto[a,\Xi(x)]_{\mathfrak{g}}-\Xi(\rho(a;x))\big)
%\end{align*}
%For any Lie subalgebra $\mathfrak{t}$ of the Lie algebra $\mathfrak{g}$, we say that $\Theta$ is \emph{$\mathfrak{t}$-equivariant} if 
%\begin{equation}
 %   \forall\ a\in\mathfrak{t}\hspace{1cm}\eta(a;\Theta)=0\label{eq:equivariance}
%\end{equation}
%In particular, if $\mathfrak{t}=\mathfrak{g}$, the condition of $\mathfrak{g}$-equivariance is equivalent to saying that $ R_\Theta$ is the trivial representation of $\mathfrak{g}$. 
This can be shown as follows:  $\mathrm{Hom}(V,\mathfrak{g})$ is equipped with a $\mathfrak{g}$-module structure through the induced  action of $\mathfrak{g}$ on $V^*$ on the one hand and through the adjoint action of $\mathfrak{g}$ on itself on the other hand. This defines  a map:
\begin{align*}
\eta:\hspace{0.2cm}\mathfrak{g}&\xrightarrow{\hspace*{1.2cm}} \hspace{0.4cm}\mathrm{End}\big(\mathrm{Hom}(V,\mathfrak{g})\big)\\
	a&\xmapsto{\hspace*{1.2cm}}\eta(a;-):\Xi\longmapsto \big(x\mapsto[a,\Xi(x)]_{\mathfrak{g}}-\Xi(a\cdot x)\big)
\end{align*}
We also write $a\cdot \Theta$ for $\eta(a;\Theta)$, and more generally in the present paper,  in order to represent the action of an element  $a\in\mathfrak{g}$ on a vector $r$ of any $\mathfrak{g}$-module $R$, we write $a\cdot r$.
%\begin{equation}
%a \cdot x=\rho(a;x)
%\end{equation}
 %$a\cdot x$ represents the action of an element  $a\in\mathfrak{g}$ on a vector $x\in V$.  .
% -- of the $\mathfrak{g}$-module $\mathrm{Hom}(V,\mathfrak{g})\simeq V^*\otimes\mathfrak{g}$, 
%\Theta_{a_1a_2\ldots a_n}=\eta(a_1;\eta(a_2;\ldots(\eta(a_n;\Theta))\ldots))$ for some $a_1,a_2, \ldots, a_n\in\mathfrak{g}$
%for $\Xi \in R_\Theta$.
Then, for any Lie subalgebra $\mathfrak{t}$ of the Lie algebra $\mathfrak{g}$, we say that $\Theta$ is \emph{$\mathfrak{t}$-equivariant} if 
\begin{equation}
    \forall\ a\in\mathfrak{t}\hspace{1cm} a\cdot \Theta=0\label{eq:equivariance0}
\end{equation}
 Then, %taking $\mathfrak{t}=\mathrm{Im}(\Theta)$, and 
 considering how $\mathfrak{g}$ acts on $V$ and on itself,  Equation \eqref{eq:equiv} can be rewritten:
\begin{equation}\label{eq:equiv2}
    \forall\ x\in V\hspace{1cm}\Theta(x) \cdot\Theta=0
\end{equation}
In other words, the embedding tensor is $\mathfrak{h}$-equivariant since $\mathfrak{h}=\mathrm{Im}(\Theta)$.
We emphasize that $\Theta$ needs not be $\mathfrak{g}$-equivariant.
 The difference between the $\mathfrak{h}$-equivariance %-- recall that we set $\mathfrak{h}=\mathrm{Im}(\Theta)$ --
  versus the $\mathfrak{g}$-equivariance of the embedding tensor $\Theta$ plays a central role in gauging procedures in supergravity. %This difference will actually be central to characterize the relationship between a Lie-Leibniz triple and Lie algebra crossed modules, that we will clarify later. 

Let us now investigate the relationship  between the adjoint map  and the embedding tensor $\Theta$.  Given a Lie-Leibniz triple $(\mathfrak{g},V,\Theta)$, the adjoint map induced by the Leibniz product is an element of $\mathrm{Hom}(V,\mathrm{End}(V))$. Since $V$ is a representation of $\mathfrak{g}$, the spaces $\mathrm{End}(V)$ and $\mathrm{Hom}(V,\mathrm{End}(V))$  canonically inherit a structure of  $\mathfrak{g}$-module. %, so    let us write $ \overline{\eta}$ (resp. $ \widehat{\eta}$) to symbolize this action. 
   Then, since Equation \eqref{eq:compat} establishes that the embedding tensor is the lift of the adjoint action, Equation \eqref{adleibniz} implies that:
\begin{equation}
    \Theta(x)\cdot\mathrm{ad}_y-\mathrm{ad}_{\Theta(x)\cdot y}=0
\end{equation}
for every $x,y\in V$. In turn, this is equivalent to:
\begin{equation}\label{eqad}
    a\cdot\mathrm{ad}=0
\end{equation}
for every $a\in \mathfrak{h}$. That is to say: the map $\mathrm{ad}$ is 
%by definition 
$\mathfrak{h}$-equivariant, as is the embedding tensor. However, it is not necessarily $\mathfrak{g}$-equivariant, i.e. the action of $\mathfrak{g}$ on $\mathrm{ad}$ may not be trivial.
This discussion justifies the following definition:

%the representation $\rho:\mathfrak{g}\to\mathrm{End}(V)$ induces a  %Usually, we denote by  $\mathfrak{h}$ the Lie subalgebra $\mathrm{Im}(\Theta)\subset\mathfrak{g}$.
 %This difference will actually be central to characterize the relationship between a Lie-Leibniz triple and Lie algebra crossed module

%This, together with the relationship between the adjoint map  and the embedding tensor $\Theta$, justifies the following definition:
 \begin{definition}
We say that a Lie-Leibniz triple $(\mathfrak{g},V,\Theta)$ is \emph{semi-strict} if the adjoint map $\mathrm{ad}:V\to\mathrm{End}(V)$ is $\mathfrak{g}$-equivariant, % the Lie algebra morphism  $\rho:\mathfrak{g}\to\mathrm{End}(V)$ defining the $\mathfrak{g}$-module structure on $V$ takes values in $\mathrm{Der}(V)$, 
and we say that it is \emph{strict} if the embedding tensor $\Theta$ is $\mathfrak{g}$-equivariant.
 \end{definition}

The  semi-strictness and the strictness conditions are equivariance properties satisfied by, respectively, the adjoint action and its lift, the embedding tensor. The relationship between these two conditions relies on the relationship between these two maps. %As can be seen on the diagram below, strictness is a refinement of the condition of semi-strictness. 
%The restriction of the representation $\rho:\mathfrak{g}\to\mathrm{End}(V)$
%to $\mathfrak{h}=\mathrm{Im}(\Theta)$
%takes values in $\mathfrak{inn}(V)$ and is a surjective Lie algebra morphism that makes the following diagram commute:
%\begin{center}
%\begin{tikzpicture}
%\matrix(a)[matrix of math nodes, 
%row sep=5em, column sep=7em, 
%text height=1.5ex, text depth=0.25ex] 
%{&\mathfrak{h}\\
%V&\mathfrak{inn}(V)\\}; 
%\path[->>](a-1-2) edge node[right]{$\rho|_{\mathfrak{h}}$} (a-2-2); 
%\path[->>](a-2-1) edge node[above]{$\mathrm{ad}$} (a-2-2);
%\path[->>](a-2-1) edge node[above left]{$\Theta$} (a-1-2);
%\end{tikzpicture}
%\end{center}
As the name indicates, a strict Lie-Leibniz triple is semi-strict but the converse is not necessarily true, as we will now show using another equivalent characterization of semi-strictness:

\begin{lemma}\label{rhoder}
The Lie-Leibniz triple $(\mathfrak{g},V,\Theta)$ is semi-strict if and only if the Lie algebra morphism  $\rho:\mathfrak{g}\to\mathrm{End}(V)$ defining the $\mathfrak{g}$-module structure on $V$ takes values in $\mathrm{Der}(V)$.
\end{lemma}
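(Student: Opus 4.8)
The plan is to unwind the definition of semi-strictness --- that the element $\mathrm{ad}\in\mathrm{Hom}(V,\mathrm{End}(V))$ is $\mathfrak{g}$-equivariant --- into an explicit identity on the Leibniz product, and then to recognize that identity as exactly the assertion that each $\rho(a)$ is a derivation. Since the whole argument proceeds through a chain of equivalences, a single computation will establish both implications of the `if and only if' at once.

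First I would make the $\mathfrak{g}$-module structure on $\mathrm{Hom}(V,\mathrm{End}(V))$ explicit. The target $\mathrm{End}(V)$ is a $\mathfrak{g}$-module via the adjoint action $a\cdot T=[\rho(a),T]$, while the source $V$ carries its given action; hence for $\Psi\in\mathrm{Hom}(V,\mathrm{End}(V))$ one has $(a\cdot\Psi)(x)=[\rho(a),\Psi(x)]-\Psi(a\cdot x)$. Specializing to $\Psi=\mathrm{ad}$, the equivariance condition $a\cdot\mathrm{ad}=0$ becomes
\begin{equation*}
[\rho(a),\mathrm{ad}_x]=\mathrm{ad}_{a\cdot x}\qquad\text{for all }a\in\mathfrak{g},\ x\in V.
\end{equation*}

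Next I would evaluate both sides on an arbitrary $y\in V$ and substitute $\mathrm{ad}_x(y)=x\circ y$. The left-hand side reads $\rho(a)(x\circ y)-x\circ(\rho(a)y)$, while the right-hand side is $(a\cdot x)\circ y=(\rho(a)x)\circ y$, so the condition is equivalent to
\begin{equation*}
\rho(a)(x\circ y)=(\rho(a)x)\circ y+x\circ(\rho(a)y).
\end{equation*}
This is precisely the derivation identity \eqref{eq:deriv} for the endomorphism $\rho(a)$. Thus, for a fixed $a$, the equality $a\cdot\mathrm{ad}=0$ holds if and only if $\rho(a)\in\mathrm{Der}(V)$, and quantifying over all $a\in\mathfrak{g}$ yields the claimed equivalence between semi-strictness and $\rho$ taking values in $\mathrm{Der}(V)$.

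I do not expect any serious obstacle: the argument is a direct dictionary between the $\mathfrak{g}$-equivariance of $\mathrm{ad}$ and the derivation property of $\rho$. The only point demanding care is fixing the sign and module conventions in the induced action on $\mathrm{Hom}(V,\mathrm{End}(V))$, so that the commutator term and the source term combine into the Leibniz rule with the correct signs. One may observe, moreover, that this computation is simply the promotion of the already-established identity \eqref{eqad}, which holds on $\mathfrak{h}=\mathrm{Im}(\Theta)$, to all of $\mathfrak{g}$.
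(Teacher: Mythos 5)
Your proposal is correct and follows essentially the same route as the paper: unwinding the $\mathfrak{g}$-equivariance condition $a\cdot\mathrm{ad}=0$ via the induced module structure on $\mathrm{Hom}(V,\mathrm{End}(V))$ to arrive at the identity $a\cdot(x\circ y)=(a\cdot x)\circ y+x\circ(a\cdot y)$, which is precisely the derivation property of $\rho(a)$. The paper's proof is the same chain of equivalences, stated slightly more tersely as the reverse of the reasoning that led to its Equation \eqref{eqad}.
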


\begin{proof}
%Let $(\mathfrak{g},V,\Theta)$ be a semi-strict Lie-Leibniz triple. Denote by $ \widehat{\eta}$ the action of $\mathfrak{g}$ on $\mathrm{Hom}(V,\mathrm{End}(V))$, to which belongs $\mathrm{ad}$. 
The fact that the map $\mathrm{ad}\in\mathrm{Hom}(V,\mathrm{End}(V))$ is $\mathfrak{g}$-equivariant translates as Equation \eqref{eqad} for every $a\in\mathfrak{g}$. Then, following the reverse reasoning than led to Equation \eqref{eqad}, one finds that 
%\trd{This equation is the same as the previous one} \textcolor{blue}{here $a$ takes values in $\mathfrak{g}$ !}
this equation can be equivalently written as:
%\begin{equation}
    % \overline{\eta}(a;\mathrm{ad}_x)-\mathrm{ad}_{\rho(a;x)}=0
%\end{equation}
%for all $x\in V$ or, when applying this formula to any element $y\in V$, as:
\begin{equation}\label{hello}
a\cdot(x\circ y)-x\circ (a\cdot y )-(a\cdot x)\circ y=0
\end{equation}
for every $a\in\mathfrak{g}$ and $x,y\in V$. This equation characterizes the fact that $\rho:\mathfrak{g}\to\mathrm{End}(V)$ takes values in $\mathrm{Der}(V)$,
hence the result.
\end{proof}

%Since $\rho$ is by definition $\mathfrak{g}$-equivariant 
%with respect to the $\mathfrak{g}$-module structure on $\mathfrak{g}^*\otimes \mathrm{End}(V)$ (this is precisely the condition that $V$ is a $\mathfrak{g}$-module), 
%the above diagram shows
%Equation \eqref{eq:compat} shows that the condition that $\Theta$ is $\mathfrak{g}$-equivariant implies that $\mathrm{ad}$ is, but the converse is not true unless $\rho\big|_\mathfrak{h}:\mathfrak{h}\to\mathfrak{inn}(V)$ is bijective hence injective, i.e. unless  $V$ is faithful. Indeed, 

Going back to our discussion: using the linear constraint \eqref{eq:compat}, one can rewrite Equation \eqref{hello} as:
\begin{equation}
  a\cdot(\Theta(x) \cdot y)-\Theta(x)\cdot(a\cdot y)-\Theta(a\cdot x)\cdot y=0
\end{equation}
The left-hand side can be rewritten as $(a\cdot\Theta)(x)\cdot y$, hence the property that $(\mathfrak{g},V,\Theta)$ is semi-strict is given by the following condition:
\begin{equation}
      \forall\ a\in\mathfrak{g},\quad\forall\ x,y\in V\hspace{1cm}(a\cdot\Theta)(x)\cdot y=0\label{eq:semistrict}
\end{equation}
Thus, if the Lie-Leibniz triple $(\mathfrak{g},V,\Theta)$ is strict, i.e. if $a\cdot\Theta=0$ for every $a\in\mathfrak{g}$, then it is semi-strict, but the converse is not true, for Equation \eqref{eq:semistrict} does not necessarily imply that $\Theta$ is $\mathfrak{g}$-equivariant, unless $\rho:\mathfrak{g}\to\mathrm{End}(V)$ is injective, i.e. unless the representation $V$ is faithful. %\footnote{In supergravity theories, Lie-Leibniz triples are not necessarily semi-strict, let alone strict, but the representation $V$ is often taken to be the defining representation of $\mathfrak{g}$, which is faithful.}. 

%\begin{remark}
%A shortest proof that strictness induces semi-strictness relies on noticing that strictness implies that $R_\Theta$ is isomorphic to the irreducible (one-dimensional) trivial representation of $\mathfrak{g}$. Then, the identity $1\leq\mathrm{dim}(R_{\mathrm{ad}})\leq\mathrm{dim}(R_\Theta)$ concludes the argument.
%\end{remark}

\begin{remark}
The notion of strict Lie-Leibniz triple  had already been introduced by Loday and Pirashvili in Proposition 3.2 of \cite{LodayPiramod0}, as well as by Kinyon and Weinstein in Definition 2.3. of \cite{kinyonLeibnizAlgebrasCourant2001}, although it had not been given a name at the time. In \cite{LodayPiramod}, it is called a \emph{Loday-Pirashvili module}.
\end{remark}

 \begin{example}
 Let $(\mathfrak{g},V,\Theta)$ be a Lie-Leibniz triple, and set $\mathfrak{h}=\mathrm{Im}(\Theta)$. Then, by Equation \eqref{eq:equiv}, the Lie-Leibniz triple $(\mathfrak{h},V,\Theta)$ is strict. 
 \end{example}
 
 \begin{example}\label{example5}
By Lemma \ref{rhoder}, the Lie-Leibniz triple $\big(\mathrm{Der}(V),V,\mathrm{ad}\big)$ is semi-strict. Since the equality $f(\mathrm{ad}_x(-))=\mathrm{ad}_{f(x)}(-)+\mathrm{ad}_x( f(-))$ holds for every derivation $f$ of $V$ and $x\in V$, we deduce that the embedding tensor $\mathrm{ad}$ is $\mathrm{Der}(V)$-equivariant. This means that $\big(\mathrm{Der}(V),V,\mathrm{ad}\big)$ is actually a strict Lie-Leibniz triple.
\end{example}

\begin{example}\label{lodayexample}
We take the canonical example of \cite{loday-pirashvili} generalizing the construction of a Lie algebra associated to any associative algebra. Let $(A,\cdot)$ be a $\mathbb{R}$ associative algebra equipped with an endomorphism $D:A\to A$ which is not necessarily an algebra homomorphism but which satisfies the following identities:
\begin{equation}\label{propD}
D(x\cdot Dy)=Dx\cdot Dy=D(Dx\cdot y)
\end{equation}
Then the bilinear product $\circ:A\times A\to A$ defined as:
\begin{equation}\label{majora}
x\circ y:=Dx\cdot y-y\cdot Dx
\end{equation}
is a Leibniz product on $A$. If $D=\mathrm{id}_A$, then we obtain the canonical Lie algebra structure associated to $A$. Let us denote the corresponding Lie bracket by $[\,.\,,.\,]_A$ -- not to be confused with the skew-symmetric part of the Leibniz product \eqref{majora} -- so that Equation \eqref{majora} then reads:
 \begin{equation}\label{majora2}
x\circ y=[Dx,y]_A
\end{equation}
and we say that $\circ$ is a \emph{derived bracket} \cite{yks:derived}.

With these data we can define a canonical Lie-Leibniz triple associated to any associative algebra $A$: $\mathfrak{g}$ is the Lie algebra $(A,[\,.\,,.\,]_A)$, the Leibniz algebra $V$ is $(A,\circ)$ and the embedding tensor $\Theta$ is the endomorphism~$D$. 
 The action of the Lie algebra $\mathfrak{g}$ on $V$ is mediated through the adjoint action of the Lie bracket $[\,.\,,.\,]_A$: $x\cdot y=[x,y]_A$.  The linear constraint is precisely Equation \eqref{majora2}, while the quadratic constraint is a consequence of properties \eqref{propD}:
 \begin{equation}\label{eqhalo}
 D(x\circ y)=D(Dx\cdot y)- D(y\cdot Dx)=Dx\cdot Dy-Dy\cdot Dx=[Dx,Dy]_A
 \end{equation}
 Moreover, the gauge algebra $\mathfrak{h}$ is precisely the image of the endomorphism $D$, itself a subspace of $A$.
Now, since the Leibniz product is a derived bracket, the Jacobi identity on $\mathfrak{g}$ together with properties \eqref{propD} imply that we have, for any $x,y,z\in A$:
\begin{align}
[x,y\circ z]_A&=[x,[Dy,z]_A]_A\\
&=[[x,Dy]_A,z]_A+[Dy,[x,z]_A]_A\\
&=-[y\circ x, z]_A+y\circ [x,z]_A\\
&=[x,y]_A\circ z+y\circ [x,z]_A  -\big[y\circ x-D([y,x]_A),z\big]_A %[y,x]_A\circ z-[y\circ x, z]_A
\end{align}

So,  the (semi)-strictness of the Lie-Leibniz triple $\big((A,[\,.\,,.\,]_A),(A,\circ),D\big)$ associated to the associative algebra $A$ depends on the endomorphism $D$. The Lie-Leibniz triple is semi-strict if the bracket $ \big[y\circ x-D([y,x]_A),z\big]_A $ vanishes for every $x,y,z$ while the strictness condition -- the $\mathfrak{g}$-equivariance of the embedding tensor -- here reads $[x,Dy]_A=D[x,y]_A$  and is precisely equivalent to the vanishing of $y\circ x-D([y,x]_A)$.
Let us apply these observations to the following two cases, that were introduced in \cite{loday-pirashvili}:
\begin{enumerate}
\item $D$ is an algebra homomorphism and it is idempotent: $D^2=D$;
%\item $A$ is a superalgebra so that any $x\in A$ can be written $x=x_++x_-$, and $D(x)=x_+$;
\item $D$ is a differential: $D(x\cdot y)=Dx\cdot y+x\cdot Dy$ and $D^2=0$.
\end{enumerate}

In the first case, we first deduce from the homomorphism property that $D([x,y]_A)=[Dx,Dy]_A$. By  Equation \eqref{eqhalo} we then conclude that $D([x,y]_A)=D(x\circ y)$ so the Lie-Leibniz triple associated to the associative algebra $A$ and the endomorphism $D$ is strict if and only if $x\circ y=D(x\circ y)$. In the second case, we deduce from the derivation property that $[Dx,y]_A=D([x,y]_A)+[Dy,x]_A$ so we obtain the skew-symmetric bracket of the Leibniz product:
\begin{equation}
[x,y]=\frac{1}{2}D([x,y]_A)
\end{equation}
This has the following consequences: first, $D(x\circ y)=D([x,y])=0$ because $D^2=0$, and second: $y\circ x-D([y,x]_A)=x \circ y$ so the corresponding Lie-Leibniz triple is strict if and only if $x\circ y=0$, i.e. the Leibniz algebra structure is zero.

 % Then $[D(y\circ x),z]=(y\circ x)\circ z=y \circ (x\circ z)-x\circ (y\circ z)$
\end{example}

Let us now observe some consequences of semi-strictness on the 
 symmetric bracket $\{\,.\,,.\,\}$ entailed by the Leibniz algebra structure $\circ$ on $V$. This bracket defines a map from $S^2(V)$ to $V$, also denoted $\{\,.\,,.\,\}$. This map, as an element of $\mathrm{Hom}(S^2(V),V)$ is subject to the action of $\mathfrak{g}$ induced from the $\mathfrak{g}$-module structure on~$V$:
\begin{equation}\label{zeus}
a\cdot\big(\{\,.\,,.\,\}\big)(x,y)=a\cdot\{x,y\}-\{a\cdot x,y\}-\{x,a\cdot y\}
\end{equation}
for every $a\in\mathfrak{g}$ and $x,y\in V$. The $\mathfrak{g}$-equivariance of the symmetric bracket is equivalent to the vanishing of the right hand side, which is a priori not granted. However, thanks to the linear constraint \eqref{eq:compat}, the left-hand side of Equation \eqref{zeus} can be written as the following:
\begin{equation}\label{hera0}
a\cdot\big(\{\,.\,,.\,\}\big)(x,y)=\frac{1}{2}\big((a\cdot \Theta)(x)\big)\cdot y+\frac{1}{2}\big((a\cdot \Theta)(y)\big)\cdot x %=a \cdot \{x, x\}-2\{x, a\cdot x\}
\end{equation}
%for every $a\in \mathfrak{g}$ and $x\in V$. %, and where $\eta:\mathfrak{g}\to\mathrm{End}(R_\Theta)$ encodes the representation of $\mathfrak{g}$ on $R_\Theta$. 
Notice that, due to Equation \eqref{eq:equiv2}, this term is always 0 when $a\in\mathfrak{h}$, thus implying that the symmetric bracket is always $\mathfrak{h}$-equivariant, but there is no guarantee that it is $\mathfrak{g}$-equivariant.   However we deduce from Equation \eqref{eq:semistrict} that a sufficient condition for the symmetric bracket to be $\mathfrak{g}$-equivariant is that the Lie-Leibniz triple $(\mathfrak{g},V,\Theta)$ is semi-strict, so that the right-hand side of Equation \eqref{hera0} vanishes.
 These observations imply that, for any Lie-Leibniz triple, the space  $\mathrm{Ker}\big(\{\,.\,,.\,\}\big)\subset S^2(V)$ is always a $\mathfrak{h}$-module but not necessarily a $\mathfrak{g}$-module. This latter fact is central in the construction of the tensor hierarchy associated to a Lie-Leibniz triple. %Given that this fact is of crucial importance in the construction of the tensor hierarchy associated to $(\mathfrak{g},V,\Theta)$, we give a name to Lie-Leibniz triples for which such a case occur:

%\begin{align}
%[y,x]_A\circ z-[y\circ x, z]_A&=D(yx)\cdot z-D(xy)\cdot z -z\cdot D(yx)+z\cdot D(xy)-D(y)\cdot x\cdot z+x\cdot Dy\cdot z + z\cdot D(y)\cdot x-z\cdot x\cdot Dy\\
%&=\Big(D([y,x]_A)-y\circ x\Big)\cdot z-z\cdot \Big(D([y,x]_A)-y\circ x\Big)\\
%&=\Big[D([y,x]_A)-y\circ x,z\Big]_A
%\end{align}

\begin{comment}

\begin{definition}
A Lie-Leibniz triple $(\mathfrak{g},V,\Theta)$ is called \emph{stringent} when $\mathrm{Ker}\big(\{\,.\,,.\,\}\big)\subset S^2(V)$ is a $\mathfrak{g}$-module.
\end{definition}

\begin{remark}
Although $\mathfrak{g}$-equivariance of the symmetric bracket $\{\,.\,,.\,\}$ entails that the Lie-Leibniz triple is stringent, the converse does not hold.
This also shows that being stringent is strictly weaker than being semi-strict, for a Lie-Leibniz triple may have a $\mathfrak{g}$-equivariant symmetric bracket $\{\,.\,,.\,\}$ -- i.e. the right-hand side of Equation \ref{hera0} vanishes -- without satisfying Equation \eqref{eq:semistrict}. 
\end{remark}

\begin{example}
Lie-Leibniz triples such that $V$ is a Lie algebra, are stringent, because in the latter case $\mathrm{Ker}\big(\{\,.\,,.\,\}\big)=S^2(V)$, which is a $\mathfrak{g}$-module by definition. 
\end{example}
\end{comment}

\subsection{A relationship between Lie-Leibniz triples and Lie algebra crossed modules}\label{relatio}

In order to study in a more consistent way the notion of Lie-Leibniz triple with respect to what is already known for Lie algebra crossed modules, we make use of categories.
That is why we now introduce
%to introduce 
the correct notion for morphisms of Lie-Leibniz triples:
%; it is necessary in order to set up a more categorical point of view:
\begin{definition}\label{defmorphism}
A \emph{morphism} between  two Lie-Leibniz triples $(\mathfrak{g},V,\Theta)$ and $(\mathfrak{g}',V',\Theta')$  is a pair $(\varphi,\chi)$ consisting of a Lie algebra morphism $\varphi:\mathfrak{g}\to \mathfrak{g}'$ and a Leibniz algebra morphism $\chi: V\to V'$, satisfying the following compatibility conditions:
\begin{align}
\Theta'(\chi(x))&=\varphi(\Theta(x))\label{eq:jfk}\\
\chi(a\cdot x)&=\varphi(a)\cdot\chi(x)\label{eq:jfk2}
\end{align}
for every $a\in \mathfrak{g}$ and $x\in V$. %where $\rho$ (resp. $\rho'$) denotes the action of $\mathfrak{g}$ (resp. $\mathfrak{g}'$) on $V$ (resp. $V'$). %We say that $(\varphi,\chi)$ is \emph{confined} if:
%\begin{equation}\label{confined}
%    (\chi\odot\chi)(K)\subset K'
%\end{equation}
%where $K$ (resp. $K'$) is the biggest $\mathfrak{g}$-submodule of $\mathrm{Ker}\big(\{\,.\,,.\,\}\big)$ (resp. the biggest $\mathfrak{g}'$-submodule of $\mathrm{Ker}\big(\{\,.\,,.\,\}'\big)$). 
%Equations \eqref{eq:compat}, \eqref{eq:jfk} and \eqref{eq:jfk2}, together with the Leibniz morphism property of $\chi$ and $\varphi$, form a commutative prism:
%\begin{center}
%\begin{tikzcd}[row sep=4em, column sep=5em, 
%text height=1.5ex, text depth=0.25ex]
%V\otimes V\arrow[drr,"\Theta\otimes %\mathrm{id}"]\arrow[rrr,"\circ_V"]\arrow[dd,"\chi\otimes\chi"]  && & %V\arrow[dd,"\chi"] \\
% && \mathfrak{g}\otimes V\arrow[ur,"\rho"] &   \\
%V'\otimes V'\arrow[drr,"\Theta'\otimes %\mathrm{id}"]\arrow[rrr,"\circ_{V'}"]  && &V'\\
 %&& \mathfrak{g}'\otimes V'\arrow[ur,"\rho'"] \arrow[from=uu, crossing %over, "\varphi\otimes\chi" above right]&\\
% \end{tikzcd}
%\end{center}
%\begin{center}
That is to say, the following prism is commutative:
\begin{center}
\begin{tikzpicture}[on top/.style={preaction={draw=white,-,line width=#1}},
on top/.default=5pt]
\matrix(a)[matrix of math nodes, 
row sep=2em, column sep=2em, 
text height=1ex, text depth=0.25ex] 
{ &\mathfrak{g}\otimes V&&&\mathfrak{g}'\otimes V'\\
&&&&\\
V\otimes V&&&V'\otimes V'&\\ 
&V&&&V'\\}; 
\path[->](a-3-1) edge node[above right]{$\chi\otimes\chi$} (a-3-4); %node[below right]{$\varphi$}  (a-1-2); 
\path[->](a-3-1) edge node[above left]{$\Theta\otimes\mathrm{id}$} (a-1-2);
\path[->](a-3-1) edge node[below left]{$\circ$} (a-4-2);
\path[->](a-3-4) edge node[above right]{$\circ'$} (a-4-5);
\path[->](a-3-4) edge node[above left]{$\Theta'\otimes\mathrm{id}$} (a-1-5);
\path[->](a-1-2) edge node[above]{$\varphi\otimes\chi$} (a-1-5);
\path[->](a-4-2) edge node[above]{$\chi$} (a-4-5);
\path[->](a-1-2) edge[on top=5pt] node[above left]{$\rho$} (a-4-2);
\path[->](a-1-5) edge node[above right]{$\rho'$} (a-4-5);
\end{tikzpicture}
\end{center}
%\begin{center}
%\begin{tikzpicture}[on top/.style={preaction={draw=white,-,line width=#1}},
%on top/.default=5pt]
%\matrix(a)[matrix of math nodes, 
%row sep=2em, column sep=2em, 
%text height=1ex, text depth=0.25ex] 
%{ &\mathfrak{g}&&&\mathfrak{g}'\\
%&&&&\\
%V&&&V'&\\ 
%&f(Ker(\chi))\subset Ker (\chi)&&&\mathrm{Hom}(\chi(V),V')\\}; 
%\path[->](a-3-1) edge node[above right]{$\chi$} (a-3-4); %node[below right]{$\varphi$}  (a-1-2); 
%\path[->](a-3-1) edge node[above left]{$\Theta$} (a-1-2);
%\path[->](a-3-1) edge node[below left]{$\mathrm{ad}$} (a-4-2);
%\path[->](a-3-4) edge node[above right]{$\mathrm{ad}'$} (a-4-5);
%\path[->](a-3-4) edge node[above left]{$\Theta'$} (a-1-5);
%\path[->](a-1-2) edge node[above]{$\varphi$} (a-1-5);
%\path[->](a-4-2) edge node[above]{$\chi$} (a-4-5);
%\path[->](a-1-2) edge[on top=5pt] node[above left]{$\rho$} (a-4-2);
%\path[->](a-1-5) edge node[above right]{$\rho'$} (a-4-5);
%\end{tikzpicture}
%\end{center}
where $\rho$ (resp. $\rho'$) denotes the action of $\mathfrak{g}$ (resp. $\mathfrak{g}'$) on $V$ (resp. $V'$). 
We say that $(\varphi,\chi)$ is an \emph{isomorphism of Lie-Leibniz triples} when both $\varphi$ and $\chi$ are isomorphisms in their respective categories.
\end{definition}

\begin{remark}\label{ex:semi}
Let $(V,\circ)$ be a Leibniz algebra, then any Lie-Leibniz triple $(\mathfrak{g}, V,\Theta)$ whose embedding tensor generates the Leibniz product $\circ$ as given in Equation \eqref{eq:compat} induces a canonical Lie-Leibniz triple morphism $(\rho,\mathrm{id}_V)$ from $(\mathfrak{g}, V,\Theta)$ to $(\mathrm{End}(V),V,\mathrm{ad})$. Moreover, if $(\mathfrak{g}, V,\Theta)$  is semi-strict, the image of the Lie-Leibniz triple morphism $(\rho,\mathrm{id}_V)$ is a sub-Lie-Leibniz triple of $(\mathrm{Der}(V),V,\mathrm{ad})$.
\end{remark}

A particular case of Lie-Leibniz triple morphisms are those for which the linear map $\chi\odot \chi:S^2V\to S^2V'$ -- here, $\odot$ symbolizes the symmetric product -- preserves a particular submodule of $\mathrm{Ker}\big(\{\,.\,,.\,\}\big)\subset S^2(V)$. Assume that we have a morphism  $(\varphi,\chi)$ between two Lie-Leibniz triples $(\mathfrak{g},V,\Theta)$ and $(\mathfrak{g}',V',\Theta')$. The kernel of the symmetric bracket on $V$ is a subspace $\mathrm{Ker}\big(\{\,.\,,.\,\}\big)$ of $S^2(V)$ and a $\mathfrak{h}$-module but not necessarily a $\mathfrak{g}$-module. However, it admits $\mathfrak{g}$-submodules, and among them one is the biggest, that we denote by $K$. The same occurs for the symmetric bracket on $V'$ and the biggest $\mathfrak{g}'$-submodule of $\mathrm{Ker}\big(\{\,.\,,.\,\}'\big)\subset S^2(V')$ is denoted $K'$. 
Since $\chi$ is a morphism of Leibniz algebras, we have that:
\begin{equation}\label{inclusionker}
\chi\odot \chi(K)\subset \mathrm{Ker}\big(\{\,.\,,.\,\}'\big)
\end{equation} By Equation \eqref{eq:jfk2}, the vector space $\chi\odot \chi(K)$ is only a $\varphi(\mathfrak{g})$-module, and not a $\mathfrak{g}'$-module.
So a priori, there is no reason that $\chi\odot \chi(K)$ would be a subspace of $K'$. We decide to give a name to morphisms that precisely have this property:  %It is a priori not guaranteed that $\chi\odot \chi:S^2V\to S^2V'$ sends $K$ into $K'$ and when it is the case

\begin{definition}\label{defcomp}
We say that a morphism of Lie-Leibniz triples    $(\varphi,\chi):(\mathfrak{g},V,\Theta)\to(\mathfrak{g}',V',\Theta')$ is \emph{compatible} if \begin{equation}\label{comstring}\chi\odot \chi(K)\subset K'\end{equation}
\end{definition}

This notion of compatible morphism  allows us to define several categories: we call \textbf{Lie-Leib}  the category of Lie-Leibniz triples with their associated morphisms given in Definition \ref{defmorphism}.  We call \textbf{compLie-Leib} the wide subcategory %\footnote{A \emph{wide} subcategory of a category $\mathcal{C}$ is a category $\mathcal{B}$ whose collection of objects is the same as that of $\mathcal{C}$ -- $\mathrm{Ob}(\mathcal{B})=\mathrm{Ob}(\mathcal{C})$ -- but whose collection of morphisms does not necessarily coincides with that of $\mathcal{C}$. }
 of \textbf{Lie-Leib} whose morphisms are the compatible morphisms of Lie-Leibniz triples. In other words, Ob(\textbf{compLie-Leib})=Ob(\textbf{Lie-Leib}), while Mor(\textbf{compLie-Leib}) is contained in   Mor(\textbf{Lie-Leib}).  The semi-strict (resp. strict)  Lie-Leibniz triples form a full subcategory %\footnote{A \emph{full} subcategory of a category $\mathcal{C}$ is a category $\mathcal{B}$ whose collection of objects is  included in that of $\mathcal{C}$ -- $\mathrm{Ob}(\mathcal{B})\subset\mathrm{Ob}(\mathcal{C})$ -- but whose morphisms between elements are the same as that of  $\mathcal{C}$, in the following sense: for every pair of objects $X,Y\in \mathrm{Ob}(\mathcal{B})$, we have the identity $\mathrm{Hom}_{\mathcal{B}}(X,Y)=\mathrm{Hom}_{\mathcal{C}}(X,Y)$.} 
 of both \textbf{compLie-Leib} and \textbf{Lie-Leib}, denoted \textbf{semLie-Leib} (resp. \textbf{strLie-Leib}). This can be seen from the fact that in semi-strict Lie-Leibniz triples, the biggest $\mathfrak{g}$-submodule of $\mathrm{Ker}\big(\{\,.\,,.\,\}\big)\subset S^2(V)$ is the kernel itself. So,  for $K=\mathrm{Ker}\big(\{\,.\,,.\,\}\big)$ and $K'=\mathrm{Ker}\big(\{\,.\,,.\,\}'\big)$, inclusion \eqref{inclusionker}, which is valid for any morphism of Lie-Leibniz triples, becomes $\chi\odot \chi(K)\subset K'$. Hence any morphism of semi-strict Lie-Leibniz triples is compatible.
 
%Indeed, the subcategory of stringent Lie-Leibniz triple is a \emph{wide} subcategory and not a full one: it contains all Lie-Leibniz triples, i.e. all objects of \textbf{Lie-Leib}, but 
These inclusions of subcategories will allow us  to be more precise when studying the relationship between Lie-Leibniz triples and Lie algebra crossed modules.
Recall the definition of the latter notion: %\footnote{We chose the convention that Lie algebra crossed modules are concentrated in degrees $-1$ and $0$ so that this grading is consistent with the grading of tensor hierarchies, of which Lie algebra crossed modules are a particular case.}:

\begin{definition}\label{defdiff}
A \emph{Lie algebra crossed module} consists of the following data: two Lie algebras $(\mathfrak{g},[\,.\,,.\,]_{\mathfrak{g}})$ and $(\mathfrak{c},[\,.\,,.\,]_{\mathfrak{c}})$, a linear map $\Theta:\mathfrak{c}\to \mathfrak{g}$ and a Lie algebra morphism $\rho:\mathfrak{g}\to \mathrm{Der}(\mathfrak{c})$, satisfying the following axioms:
\begin{align}
\rho(\Theta(x); y)&=[x,y]_{\mathfrak{c}}
\label{equatdiff1}\\
    \Theta(\rho(a;x))&=[a,\Theta(x)]_{\mathfrak{g}}\label{equatdiff2}
\end{align}
%\trd{OR :denoting $\rho(a;x)$ as $a\cdot x$ \begin{align}
%\Theta(x)\cdot y &=[x,y]_{1}
%\label{equatdiff3}\\
%\Theta(a\cdot x)&=[a,\Theta(x)]_{\mathfrak{g}}
%%end{align}}
for every $x,y\in\mathfrak{c}$ and $a\in\mathfrak{g}$. We denote by \emph{\textbf{Lie$\times$Mod}} the category of Lie algebra crossed modules. \end{definition} %\footnote{Morphisms in the category of Lie algebra crossed modules are pairs of Lie algebra morphisms $(\varphi:\mathfrak{g}\to\mathfrak{g}'_0,\chi:\mathfrak{c}\to\mathfrak{g}'_{-1})$ satisfying  Equations \eqref{eq:jfk} and \eqref{eq:jfk2}.}.

%\begin{remark} Equivalently, a Lie algebra crossed module can be thought of as  a differential graded Lie algebra concentrated in degree 0 and $-1$ with differential $\Theta$, also known as a \emph{strict} Lie 2-algebra,
%  \emph{i.e.} 2-term $L_\infty$-algebras where the 3-brackets and higher are zero \cite{baez:Lie2alg}. This aspect and its straightforward generalization to Lie-Leibniz triples is the heart of the paper and will be discussed in more details in Section \ref{section3}. %Alternatively, they can be considered 
 % \end{remark}
  %, for $\mathfrak{g}$ and $V$, respectively. 

%Although Lie algebra crossed modules were originally defined in terms of just Lie algebras and Lie modules, we %use the extension to 
%extend to a case involving Leibniz algebras.
%derived from crossed products of Lie groups, one may want to take some distance from this historical origin and study Lie algebra crossed modules as mathematical objects \emph{per se}.

%The first observation is that some of the conditions appearing in the definition of Lie algebra crossed modules may be too stringent. 

%\trd{what is being relaxed? also why not have the first definition in terms of $g_0,g_1$ and save g and V for the following?} \textcolor{blue}{great idea!}

%Let us rewrite Definition \ref{defdiff} as follows:

Notice that the definition of $\rho$ implies that $\mathfrak{c}$ is a  $\mathfrak{g}$-module, equipped with a Lie algebra structure entirely defined by Equation \eqref{equatdiff1}. Moreover, Equation \eqref{equatdiff2} for $a=\Theta(z)$ implies that $\Theta$ is a Lie algebra morphism as well: $\Theta([z,x]_{\mathfrak{c}})=[\Theta(z),\Theta(x)]_{\mathfrak{g}}$.  These observations imply that we have an alternative, equivalent definition of Lie algebra crossed modules that relies on the following data:
% \begin{remark}
%The equation:
% is not a tautology if we had only assumed that 1)
\begin{lemma}\label{lemmacapcom} A Lie algebra crossed module consists of the following data:
 \begin{enumerate}
 \item a Lie algebra $\mathfrak{g}$,
 \item a $\mathfrak{g}$-module $\mathfrak{c}$, and 
 \item a $\mathfrak{g}$-equivariant linear map $\Theta:\mathfrak{c}\to \mathfrak{g}$ satisfying:
 \begin{equation}
\Theta(x)\cdot y=-\Theta(y)\cdot x\label{loof1}
\end{equation}
\end{enumerate}
 \end{lemma}
 
 \begin{proof}
  Because of Equation \eqref{loof1}, the vector space $\mathfrak{c}$ can be equipped with a skew-symmetric bracket $[x,y]_{\mathfrak{c}}=\Theta(x)\cdot y$. This definition, together with the fact that $\mathfrak{c}$ is a $\mathfrak{g}$-module has two consequences: on the one hand, the bracket $[\,.\,,.\,]_{\mathfrak{c}}$ satisfies the Jacobi identity and on the other hand,  the representation map $\rho:\mathfrak{g}\to \mathrm{End}(\mathfrak{c})$ takes values in $\mathrm{Der}(\mathfrak{c})$. Eventually, Equation \eqref{equatdiff1} is the definition of the bracket on $\mathfrak{c}$ while Equation \eqref{equatdiff2} corresponds to the $\mathfrak{g}$-equivariance of $\Theta$.
  \end{proof}

 The criteria established in Lemma \ref{lemmacapcom} are reminiscent of the definition of a Lie-Leibniz triples. Items 1. and 2. establish that one can equip $\mathfrak{c}$ with a Leibniz product, as in Equation \eqref{eq:compat}, but Equation \eqref{loof1} implies that this product is skew-symmetric, hence a Lie bracket.
 Moreover, the $\mathfrak{g}$-equivariance of $\Theta$ that is constitutive of Lie algebra crossed modules straightforwardly implies the quadratic constraint \eqref{eq:equiv}. We then have the following relationship between Lie algebra crossed modules and Lie-Leibniz triples: %The condition that $\rho:\mathfrak{g_0}\to\mathrm{End}(\mathfrak{c})$ takes values in $\mathrm{Der}(\mathfrak{c})$ is actually a consequence 
\begin{lemma}\label{categdiffmod}
%Every Lie algebra crossed module is a strict Lie-Leibniz triple in which the Leibniz algebra structure is a Lie algebra 
The category \emph{\textbf{Lie$\times$Mod}} forms a full subcategory of \emph{\textbf{strLie-Leib}}.%, in which the Leibniz product on $V$ is a skew-symmetric i.e. it 
\end{lemma}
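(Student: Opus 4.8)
The plan is to show that the category \textbf{diff$\times$Mod} embeds as a full subcategory of \textbf{strLie-Leib} by exhibiting, for each differential crossed module, a canonically associated strict Lie-Leibniz triple, and then verifying that this assignment is essentially the identity on morphisms, so that the subcategory is indeed full. First I would take a differential crossed module $(\mathfrak{g}_0,\mathfrak{g}_{-1},\Theta,\rho)$ and declare the candidate Lie-Leibniz triple to be $(\mathfrak{g}_0,\mathfrak{g}_{-1},\Theta)$, where the Leibniz product on $\mathfrak{g}_{-1}$ is \emph{defined} by the linear constraint $x\circ y:=\Theta(x)\cdot y=\rho(\Theta(x);y)$. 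By Equation \eqref{equatdiff1} this Leibniz product coincides with the Lie bracket $[\,.\,,.\,]_{-1}$, which is already known to be skew-symmetric and to satisfy the Jacobi identity; hence it automatically satisfies the Leibniz relation \eqref{leibnizproduct}, so $V=\mathfrak{g}_{-1}$ is a genuine Leibniz algebra (in fact a Lie algebra).

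The key steps, in order, are: (i) verify the linear constraint \eqref{eq:compat}, which holds by the very definition of $\circ$; (ii) verify the quadratic constraint \eqref{eq:equiv}, namely $\Theta(x\circ y)=[\Theta(x),\Theta(y)]_{\mathfrak{g}_0}$ --- this is exactly Equation \eqref{equatdiff2} once one substitutes $x\circ y=\rho(\Theta(x);y)$ and reads $a=\Theta(x)$; (iii) conclude that $(\mathfrak{g}_0,\mathfrak{g}_{-1},\Theta)$ is a Lie-Leibniz triple; and (iv) upgrade this to \emph{strictness} by checking that $\Theta$ is $\mathfrak{g}_0$-equivariant in the sense of Equation \eqref{eq:equivariance0}, i.e. $a\cdot\Theta=0$ for all $a\in\mathfrak{g}_0$. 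The latter unwinds, using the definition of the action $\eta$ on $\mathrm{Hom}(\mathfrak{g}_{-1},\mathfrak{g}_0)$, to the identity $[a,\Theta(x)]_{0}-\Theta(a\cdot x)=0$ for all $a,x$, which is precisely Equation \eqref{equatdiff2}. Thus every differential crossed module yields a strict Lie-Leibniz triple.

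Next I would address functoriality and fullness. A morphism of differential crossed modules is, by the footnote to Definition \ref{defdiff}, a pair $(\varphi,\chi)$ of Lie algebra morphisms satisfying the analogues of \eqref{eq:jfk} and \eqref{eq:jfk2}; I would check that the very same pair satisfies the defining conditions of a Lie-Leibniz triple morphism in Definition \ref{defmorphism}. Condition \eqref{eq:jfk} is identical, and condition \eqref{eq:jfk2}, $\varphi(a)\cdot\chi(x)=\chi(a\cdot x)$, guarantees that $\chi$ intertwines the Leibniz products since these are defined through the actions and $\Theta$ via \eqref{eq:compat}; explicitly $\chi(x\circ y)=\chi(\Theta(x)\cdot y)=\varphi(\Theta(x))\cdot\chi(y)=\Theta'(\chi(x))\cdot\chi(y)=\chi(x)\circ'\chi(y)$, using \eqref{eq:jfk} and \eqref{eq:jfk2}. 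This shows the assignment is a functor into \textbf{strLie-Leib}. For fullness, I would argue conversely that any Lie-Leibniz triple morphism between two triples arising from differential crossed modules automatically respects the two Lie algebra structures and hence is a differential crossed module morphism; since in this setting $\mathfrak{g}_{-1}$ is a Lie algebra and $\chi$ is a Leibniz (hence Lie) algebra morphism, no extra data or conditions appear, so the morphism sets coincide.

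The main obstacle I anticipate is not any single computation --- each constraint is a direct transcription of one of the crossed-module axioms --- but rather the bookkeeping needed to make the fullness claim airtight: one must confirm that the image of the functor consists of \emph{exactly} the strict Lie-Leibniz triples whose underlying Leibniz algebra is a Lie algebra with $\Theta$ a Lie algebra map, and that every \textbf{Lie-Leib}-morphism between two such objects is forced to come from a crossed-module morphism. Care is needed to ensure that the Jacobi identity for $[\,.\,,.\,]_{-1}$ (part of the crossed-module data) is genuinely recovered from, or consistent with, the Leibniz structure, and that no information is lost or spuriously added when passing back and forth; this is where I would be most careful to check that the correspondence on objects is a genuine bijection onto its image and that Definition \ref{defmorphism} imposes no conditions beyond those of a crossed-module morphism.
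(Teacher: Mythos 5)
Your proposal is correct and follows essentially the same route as the paper: identify $\mathfrak{g}=\mathfrak{g}_0$, $V=\mathfrak{g}_{-1}$ with Leibniz product the Lie bracket $[\,.\,,.\,]_{-1}$, read the linear and quadratic constraints off Equations \eqref{equatdiff1} and \eqref{equatdiff2}, observe that \eqref{equatdiff2} is precisely $\mathfrak{g}_0$-equivariance of $\Theta$ (strictness), and note that the morphism conditions coincide on both sides. Your treatment is merely more explicit than the paper's (which disposes of fullness in one sentence), but there is no substantive difference.
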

\begin{proof}
Let $\mathfrak{c}\overset{\Theta}{\longrightarrow}\mathfrak{g}$ be a Lie algebra crossed module. Then, Equations \eqref{equatdiff1} and \eqref{equatdiff2} imply Equations \eqref{eq:compat} and \eqref{eq:equiv}, under the following assumptions:
\begin{enumerate}
\item  $\mathfrak{g}=\mathfrak{g}$ and $V=\mathfrak{c}$;
    \item the Leibniz product $\circ$ on $\mathfrak{c}$ corresponds to the Lie algebra structure $[\,.\,,.\,]_{\mathfrak{c}}$.
    \end{enumerate}
    This makes $(\mathfrak{g},\mathfrak{c},\Theta)$ a Lie-Leibniz triple. Moreover, Equation \eqref{equatdiff2} tells us that the embedding tensor $\Theta$ is $\mathfrak{g}$-equivariant, i.e. that the Lie-Leibniz triple is strict.  
    %Then, the Lie algebra crossed module $\mathfrak{c}\overset{\Theta}{\longrightarrow}\mathfrak{g}$ can be seen as a strict Lie-Leibniz triple $(\mathfrak{g},V,\Theta)$. %,  i.e. the Lie-Leibniz triple $(\mathfrak{g},\mathfrak{c}, \Theta)$ is strict. 
    %Conversely, any strict Lie-Leibniz triple $(\mathfrak{g},V,\Theta)$ is a Lie algebra crossed module when $V$ is a Lie algebra.
  %  Notice that this is consistent with the fact that the representation $\rho:\mathfrak{g}\to\mathrm{End}(\mathfrak{c})$ takes values in $\mathrm{Der}(\mathfrak{c})$, which is equivalent, by Lemma \ref{rhoder}, to being semi-strict. 
    Morphisms of Lie algebra crossed modules are precisely the morphisms of the corresponding Lie-Leibniz triples, hence the result.
\end{proof}

This lemma characterizes Lie algebra crossed modules as the strict Lie-Leibniz triples $(\mathfrak{g},V,\Theta)$ for which $V$ is a Lie algebra and for which the embedding tensor is $\mathfrak{g}$-equivariant. It moreover implies that we have the following inclusions of categories:

\begin{equation*}
    \textbf{Lie$\times$Mod}\subset\textbf{strLie-Leib}\subset\textbf{semLie-Leib}\subset\textbf{compLie-Leib}\subset\textbf{Lie-Leib}
\end{equation*}
where each inclusion is full in the sense of categories, except on the right-most inclusion where \textbf{compLie-Leib} is a wide subcategory of \textbf{Lie-Leib}.
The similarity between Lie algebra crossed modules and Lie-Leibniz triples is striking: %Equations \eqref{equatdiff1} and \eqref{equatdiff2} are particular cases of Equations \eqref{eq:compat} and \eqref{eq:equiv}, respectively. 
%Conversely, 
one can see the latter as natural generalizations of the former, in which some stringent conditions (a Lie algebra structure on $\mathfrak{c}$ and $\mathfrak{g}$-equivariance of $\Theta$) have been relaxed.

Let us now push the analogy further by analyzing the differential graded Lie algebra structure of Lie algebra crossed modules and its natural extension to Lie-Leibniz triples. %The next section is devoted to showing that the relationship between Lie algebra crossed modules and Lie-Leibniz triples actually goes further.
  %Since Lie algebra crossed modules are particular cases of strict Lie-Leibniz triples in which the Leibniz product $\circ$ induced on $V$ by the action of $\mathfrak{g}$ is skew-symmetric -- hence defining a Lie algebra structure on $V$ --  the following inclusions of categories:
%In the last section, we have shown that Lie algebra crossed modules are particular cases of (strict) Lie-Leibniz triples in which the Leibniz product $\circ$ induced on $V$ by the action of $\mathfrak{g}$ is skew-symmetric -- hence defining a Lie algebra structure on $V$. 
Indeed, an important property of Lie algebra crossed modules is that they can be considered as 2-term differential graded Lie algebras concentrated in degrees 0 and 1 \cite{baez:Lie2alg} or, equivalently, in degrees 0 and $-1$ (our convention). This correspondence is derived from the equivalence of categories between the category of crossed modules of Lie groups and that of strict Lie 2-groups \cite{MR2841564, janelidzeInternalCrossedModules2003}.  We recall here how the identification works, since it will be a cornerstone of the present section.

Given a Lie algebra crossed module $\mathfrak{c}\xrightarrow{\Theta}\mathfrak{g}$, assign degree $-1$ to $\mathfrak{c}$ and degree 0 to $\mathfrak{g}$. Then, define the graded Lie bracket on $\mathfrak{c}\oplus \mathfrak{g}$ as follows:
\begin{align}
    \llbracket a,b\rrbracket&=[a,b]_{\mathfrak{g}}\label{bronze}\\
        \llbracket x,y\rrbracket&=0\label{gold}\\
    \llbracket a,x \rrbracket&=a\cdot x\qquad \text{and}\qquad  \llbracket x,a \rrbracket=-a\cdot x\label{silver}
\end{align}
for every $a,b\in\mathfrak{g}$ and $x,y\in \mathfrak{c}$. This bracket and the differential $\Theta$ are indeed compatible, in the sense that together they form a differential graded Lie algebra structure. The first non-trivial compatibility condition between the graded Lie bracket $\llbracket  \,.\,,.\,\rrbracket$  and the differential $\Theta$ is:
  \begin{equation}\label{compatdiff}
      \Theta\big(\llbracket a, x\rrbracket\big)=\llbracket a, \Theta(x)\rrbracket
\end{equation}
This is nothing but Equation \eqref{equatdiff2}.
The other non-trivial compatibility condition %would be:
is:% Equation \eqref{equatdiff1} can be recasted as:
\begin{equation}\label{compatdiff2}
\Theta\big(\llbracket  x,y\rrbracket\big)=\llbracket \Theta(x), y\rrbracket-\llbracket x, \Theta(y)\rrbracket
  %  \Theta\big(\llbracket a, x\rrbracket\big)&=\llbracket a, \Theta(x)\rrbracket
\end{equation}
Notice that the left hand side  vanishes identically because of Equation \eqref{gold}, and then the right hand side just tells us that $\Theta(x)\cdot y=-\Theta(y)\cdot x$ (Equation \eqref{loof1}). However this is a tautology because by Equation \eqref{equatdiff1}, $\Theta(x)\cdot y=[x,y]_{\mathfrak{c}}$ and we know that the Lie bracket on $\mathfrak{c}$ is skew-symmetric. Eventually, the bracket defined in Equations \eqref{bronze}-\eqref{silver} satisfies the Jacobi identity as the only non-obvious identity to satisfy is the following one:
\begin{equation}\label{compatdiff3}
\big\llbracket a,\llbracket b,x\rrbracket\big\rrbracket-\big\llbracket b,\llbracket a,x\rrbracket\big\rrbracket=\big\llbracket \llbracket a,b\rrbracket,x\big\rrbracket
\end{equation}
This is an alternative, equivalent rewriting of the fact that $\rho:\mathfrak{g}\to\mathrm{Der}(\mathfrak{c})$ is a Lie algebra morphism. 
 %In turn, this map satisfies the Jacobi identity  because  $\rho:\mathfrak{g}\to\mathrm{Der}(\mathfrak{c})$ is a Lie algebra morphism (one uses Equations \eqref{silver} and \eqref{compatdiff} in the proof). Hence, it defines a Lie bracket on $\mathfrak{c}$, which precisely coincides with the Lie bracket defined in Equation \eqref{equatdiff1}.
%
%
 %and hence defines a Lie bracket on $\mathfrak{c}$ because $\rho:\mathfrak{g}\to\mathrm{Der}(\mathfrak{c})$
 Thus, a Lie algebra crossed module is equivalent to a 2-term differential graded Lie algebra concentrated in degrees $0$ and $-1$.

Let us now present an alternative, equivalent formulation of this observation by using the cyclic representation generated by $\Theta$. Let $(\mathfrak{g},V,\Theta)$ be a Lie-Leibniz triple and let $\eta:\mathfrak{g}\to\mathrm{End}\big(\mathrm{Hom}(V,\mathfrak{g})\big)$ denote the induced action of $\mathfrak{g}$ on $\mathrm{Hom}(V,\mathfrak{g})$ as defined in Section \ref{subsecc}.
 As an element of $\mathrm{Hom}(V,\mathfrak{g})$, the embedding tensor $\Theta$ generates a cyclic $\mathfrak{g}$-submodule of the $\mathfrak{g}$-module $\mathrm{Hom}(V,\mathfrak{g})$; we call it
 $ R_\Theta$. More precisely, noting 
 \begin{equation}
 \Theta_{a_1a_2\ldots a_n}:=a_1\cdot(a_2\cdot (\ldots (a_n\cdot\Theta)\ldots))=\eta(a_1;\eta(a_2;\ldots(\eta(a_n;\Theta))\ldots))
 \end{equation}
  for any $a_1,a_2, \ldots, a_n\in\mathfrak{g}$, we have:
 \begin{equation}\label{defrep}
R_\Theta:=\mathrm{Span}(\Theta, \Theta_{a_1a_2\ldots a_m}\,|\, a_1,a_2,\ldots,a_m\in\mathfrak{g})\subset \mathrm{Hom}(V, \mathfrak{g})
\end{equation}
In the case where the Lie-Leibniz triple $(\mathfrak{g},V,\Theta)$ is a Lie algebra crossed module -- i.e. when $\mathfrak{g}=\mathfrak{g}$ and $V=\mathfrak{c}$ -- then $\Theta$ being $\mathfrak{g}$-equivariant means that $R_\Theta$ is the  one-dimensional irreducible trivial representation of $\mathfrak{g}$ and can be identified with $\mathbb{R}$.
Then, the differential graded Lie algebra structure on $\mathfrak{c}\oplus \mathfrak{g}$ presented earlier %Lie algebra crossed module $\mathfrak{c}\xrightarrow{\Theta}\mathfrak{g}$ 
can be straightforwardly extended to a 3-term differential graded Lie algebra structure on the following cochain complex:
\begin{center}
\begin{tikzcd}[column sep=1cm,row sep=0.4cm]
\mathfrak{c}\ar[r,"\Theta"]&\mathfrak{g}\ar[r,"0"]&\mathbb{R}
\end{tikzcd}
\end{center}
%where, at the right extremity, $\mathbb{R}$ denotes the reals and is subject to the trivial action of $\mathfrak{g}$. 
 One consider that elements of $\mathbb{R}$ carry a degree $+1$, so that the grading on the complex is increasing from left to right.
%\trd{OR: where, at the right extremity, $\mathbb{R}$ is the $\mathfrak{g}$-module $\mathrm{Rep}(\Theta)$ generated by $\Theta$ in $\mathrm{Hom}(V,\mathfrak{g})$. }. 
%The linear map $\Theta:\mathfrak{c}\to \mathfrak{g}$, becoming zero under the action of $\mathfrak{g}$ by Equation \eqref{equatdiff2}, generates a one-dimensional $\mathfrak{g}$-submodule of $\mathrm{Hom}(\mathfrak{c},\mathfrak{g})$, which can be identified with the irreducible trivial representation of $\mathfrak{g}$. 
Actually, the above sequence is a particular case of the following observation:
%Every $\mathfrak{g}$-invariant tensor or operator generates a  can be seen as an element of $\mathbb{R}$ the field of real numbers; in particular this is the case for $\Theta$, since it is $\mathfrak{g}$-invariant by Equation \eqref{equatdiff2}. In other words, $\mathbb{R}$ corresponds to the $\mathfrak{g}$-module $\mathrm{Rep}(\Theta)$ generated by $\Theta$ in $\mathrm{Hom}(V,\mathfrak{g})$. 

\begin{prop}\label{experimental}
To every Lie-Leibniz triple $(\mathfrak{g},V,\Theta)$ in which $V$ is a Lie algebra, there corresponds a unique 3-term differential graded Lie algebra concentrated in degrees $-1$, $0$ and $1$\footnote{Our convention for suspension is the following: for any graded vector space $U_\bullet=\bigoplus_{i\in \mathbb{Z}}U_{i}$, the \emph{suspension operator} $[-1]:U\to U[-1]$ shifts the degree of every elements in $U$ by $1$, i.e. it is such that $(U[-1])_{i}=U_{i-1}$. The \emph{desuspension operator} $[1]$ has the reverse effect.}:
\begin{center}
\begin{tikzcd}[column sep=1cm,row sep=0.4cm]
V[1]\ar[r,"\Theta"]&\mathfrak{g}\ar[r,"-\eta(-;\Theta)"]& R_\Theta[-1]
\end{tikzcd}
\end{center}
which canonically extends the graded Lie algebra structure on $V[1]\oplus \mathfrak{g}$ defined by Equations \eqref{bronze}-\eqref{silver}. %Here, $ R_\Theta\subset \mathrm{Hom} (V,\mathfrak{g})$ is the cyclic $\mathfrak{g}$-submodule generated by the embedding tensor $\Theta$ (see Equation \eqref{defrep}), and $\eta:\mathfrak{g}\to\mathrm{End}( R_\Theta)$ denotes the associated action of $\mathfrak{g}$.
\end{prop}

%\begin{remark}\label{remarksusp}
%Our convention for the \emph{suspension operator} is the following: for any graded vector space $U_\bullet=\bigoplus_{i\in \mathbb{Z}}U_{i}$, the operator $[1]:U\to U[1]$ shifts the degree of every elements in $U$ by $-1$, i.e. it is such that $(U[1])_{i}=U_{i+1}$. The \emph{desuspension operator} $[-1]$ has the reverse effect.
%Recall that, for any graded vector space $U_\bullet=\bigoplus_{i\in \mathbb{Z}}U_{i}$, the operator $[1]:U\to U[1]$ is the \emph{suspension operator} that shifts the degree of every elements in $U$ by $-1$, i.e. it is such that $(U[1])_{i}=U_{i+1}$. It admits an inverse \emph{desuspension operator} $[-1]:U\to U[-1]$, that has the reverse effect. %See \cite{LavauPalmkvist} for more details. %, we should have written $V[1]$ instead of $V$  and $ R_\Theta[-1]$ instead of $ R_\Theta$ to emphasize their degree, respectively $-1$ and $+1$. For the sake of clarity we chose not to write them here, but we will adopt these conventions in the remainder of this section.
%\end{remark}

\begin{proof}
Given a Lie-Leibniz triple $(\mathfrak{g},V,\Theta)$ in which $V$ is assumed to be a Lie algebra (i.e. the Leibniz product $\circ$ is skew-symmetric), then one defines the following brackets (the first four being those of Equations \eqref{bronze}-\eqref{silver}):
\begin{align}
    \llbracket a,b\rrbracket&=[a,b]_{\mathfrak{g}},  &\llbracket x,y\rrbracket&=0,\\ 
    \llbracket a,x \rrbracket&=a\cdot x, &\llbracket x,a \rrbracket&=-a\cdot x,\\
    \llbracket a,\Theta \rrbracket&=a\cdot\Theta,  &\llbracket \Theta,a \rrbracket&=-a\cdot\Theta \label{def:bracket1},\\
    \llbracket \Theta,x\rrbracket&=\Theta(x) \hspace{2cm}\text{and}\hspace{-3cm}
        &\llbracket x,\Theta\rrbracket&=\Theta(x), \label{def:bracket2}
\end{align}
for every $a,b\in\mathfrak{g}$ and $x,y\in V[1]$. 
 Since $R_\Theta$ is generated by the successive actions of $\mathfrak{g}$ on $\Theta$, one extends the above brackets to the whole of $R_\Theta[-1]$ by the following formulas:
\begin{align}
 \llbracket\Theta_{a_1a_2\ldots a_m},u\rrbracket&=a_1\cdot \llbracket \Theta_{a_2\ldots a_m},u\rrbracket-\llbracket \Theta_{a_2\ldots a_m},a_1\cdot u\rrbracket\label{imporfg}\\
  \llbracket u,\Theta_{a_1a_2\ldots a_m}\rrbracket&=- (-1)^{|u|} \llbracket\Theta_{a_1a_2\ldots a_m},u\rrbracket \label{imporfgbis}\\
 0&=\llbracket R_\Theta[-1],R_\Theta[-1]\rrbracket\label{imporfg2}
\end{align}
for any homogeneous element $u\in\mathfrak{g}\oplus V[1]$ of degree $|u|$, and where $\Theta_{a_1a_2\ldots a_m}=a_1\cdot(a_2\cdot (\ldots(a_m\cdot\Theta)\ldots))$ for any $a_1,a_2, \ldots, a_m\in\mathfrak{g}$.
Moreover setting $\partial_{1}=-\eta(-,\Theta)$ and $\partial_0=\Theta$, one has $\partial_1(\partial_0(x))=0$ by Equation \eqref{eq:equiv2}. Actually,  Lines \eqref{def:bracket1} and \eqref{def:bracket2}  imply that the induced differential $\partial$ coincides with the adjoint action of $\Theta$:
\begin{equation}\label{earageag}
\partial=\llbracket \Theta,-\rrbracket
\end{equation}

%the identities on Lines \eqref{def:bracket1} and \eqref{def:bracket2} are sufficient to extend the bracket to any element of $R_\Theta[-1]$.

 Using the properties of Lie-Leibniz triples, one may check that the above brackets and differentials satisfy the graded Jacobi identities and Leibniz rule. We have already shown several of them: the Jacobi identity on $\mathfrak{g}\oplus V[1]$ is Equation \eqref{compatdiff3}, while the Leibniz rule for any pair of element $x\in V[1]$ and $a\in\mathfrak{g}$ can be found in Equations \eqref{compatdiff} and \eqref{compatdiff2}. We will now show that their validity extends to $R_\Theta[-1]$. In the following, we assume that $a,b,a_1,\ldots,a_m\in\mathfrak{g}$, and $x,y\in V$. We first turn to the Leibniz rules.

Using Equation \eqref{imporfg} for $\llbracket\Theta_{a},b\rrbracket$ and reorganizing the terms, one has: %recalling that the linear map $\eta:\mathfrak{g}\to \mathrm{End}(R_\Theta)$ is a morphism of Lie algebras, we first have:
 \begin{align}
%\Theta\big(\llbracket  a,b\rrbracket\big)&= -\eta([a,b]_{\mathfrak{g}},\Theta)=-\eta\big(a,\eta(b,\Theta)\big)+\eta\big(b,\eta(a,\Theta)\big)\\
\partial\big(\llbracket a,b\rrbracket\big)=\llbracket \Theta, [a,b]_{\mathfrak{g}}\rrbracket&=a\cdot\llbracket \Theta,b\rrbracket-\llbracket\Theta_{a},b\rrbracket\\
&=\llbracket a,\llbracket \Theta,b\rrbracket\rrbracket+\llbracket \llbracket \Theta,a\rrbracket,b\rrbracket=\llbracket a,\partial(b)\rrbracket+\llbracket \partial(a),b\rrbracket
 \end{align}
 This proves the Leibniz rule for the bracket $\llbracket a,b\rrbracket$. 
 Next, because of the degree, one knows that any Leibniz rule involving an element of $R_\Theta[-1]$ and an element of $\mathfrak{g}$ is automatically zero. This leaves us with only one Leibniz rule to check: the one involving one element $x\in V[1]$ and every generator of $R_\Theta[-1]$.
The Leibniz rule between $x$ and $\Theta$ should read:
 \begin{equation}
  \partial\big(\llbracket \Theta,x\rrbracket\big)=-\llbracket\Theta,\partial(x)\rrbracket
 \end{equation}
because $\partial(R_\Theta[-1])=0$ by construction. But Equation \eqref{earageag} implies that the left-hand side is zero, while the right-hand side is $\eta(\partial(x),\Theta)=\Theta(x)\cdot\Theta$ which also vanishes by the quadratic constraint \eqref{eq:equiv2}. This proves that the Leibniz rules is satisfied on the pair of elements $x$ and $\Theta$. 

It is now sufficient to prove it on any other choice of generator $\Theta_{a_1\ldots a_m}$ of $R_\Theta$.  While one can deduce from Equation \eqref{imporfg} that $\partial\big(\llbracket \Theta_{a},x\rrbracket\big)=-\llbracket\Theta_{a},\partial(x)\rrbracket$, the same does not hold for generators of $R_\Theta[-1]$ of the form $\Theta_{a_1\ldots a_m}$ for $m\geq2$ (see the discussion following Equation \eqref{eq:imporg}). Then, as a further assumption, one is allowed to additionally require that:
 \begin{equation}\label{cannotbe}
 \partial\big(\llbracket \Theta_{a_1\ldots a_m},x\rrbracket\big)=-\llbracket\Theta_{a_1\ldots a_m},\partial(x)\rrbracket
 \end{equation}
This is precisely the Leibniz rule for $x$ and $ \Theta_{a_1\ldots a_m}$.  One concludes that the Leibniz rule holds for every pair of elements taken from $V[1]$ and $R_\Theta[-1]$, and thus on the whole of $V[1]\oplus \mathfrak{g}\oplus R_\Theta[-1]$.
% Because of Equations \eqref{earageag} and \eqref{imporfg2}, this is actually the Leibniz rule involving one element of $R_\Theta[-1]$ and one element of $V[1]$.

 We can now turn to proving the remaining Jacobi identities. We already know by Equation \eqref{compatdiff3} and because $\mathfrak{g}$ is a Lie algebra that the graded Jacobi identity is satisfied on $V[1]\oplus\mathfrak{g}$. Moreover it is trivially satisfied on $R_\Theta[-1]$ because the bracket is zero on this space. Then we need only check the cases for which one or two terms belong to $R_\Theta[-1]$ and the other one or two belong to $V[1]\oplus\mathfrak{g}$. The proof is made by induction and, in order to avoid unnecessary technical computation at this stage of the paper, we postpone it to the second part of the proof of Proposition \ref{prop2} (see in particular the discussion below Equation  \eqref{gradedjac}). This concludes the proof of Proposition \ref{experimental}.
\end{proof}

%We first notice that, for any Lie-Leibniz triple $(\mathfrak{g},V,\Theta)$ in which $V$ is a Lie algebra,  the differential graded Lie algebra associated to it by the construction given in the present section will be the following 3-term differential graded Lie algebra:
%\begin{center}
%\begin{tikzcd}[column sep=1cm,row sep=0.4cm]
%V\ar[r,"\Theta"]&\mathfrak{g}\ar[r,"-\eta(-;\Theta)"]& R_\Theta
%\end{tikzcd}
%\end{center}
 %\tjg{why $-\eta$ ?}\textcolor{blue}{so that we have $\partial\llbracket x,y\rrbracket=\llbracket \partial x,y\rrbracket+(-1)^{|x|}\llbracket x,\partial y\rrbracket$}
%where $ R_\Theta\subset V^*\otimes\mathfrak{g}\simeq\mathrm{Hom} (V,\mathfrak{g})$ is the $\mathfrak{g}$-module to which the embedding tensor $\Theta$ belongs, and where $\eta:\mathfrak{g}\to\mathrm{End}(R_\Theta)$ is the corresponding action of $\mathfrak{g}$. 

This result shows that the assignment to a Lie algebra crossed module
of a differential graded Lie algebra can be extended to specific cases of Lie-Leibniz triples: those for which the Leibniz algebra is a Lie algebra, without further assumption on the $\mathfrak{g}$-equivariance of the embedding tensor. 
One objective of the paper is to show that one can %would like to 
generalize Proposition \ref{experimental} to every Lie-Leibniz triples. %the above observation that Lie algebra crossed modules are 3-term differential graded Lie algebras concentrated in degrees $-1$, $0$ and $+1$ (with additional properties that will be clarified later).
 However,  since Lie-Leibniz triples involve Leibniz algebras and not just Lie algebras, we expect new intricate structures to emerge.  In particular, one would expect to 
associate a bigger differential graded Lie algebra to every Lie-Leibniz triple $(\mathfrak{g},V,\Theta)$ whose  Leibniz algebra $V$ is not a Lie algebra: %, which wreduces to the above construction for Lie algebra crossed modules. % This differential graded Lie algebra %(dgLa for short)% would certainly be concentrated in degrees $\leq0$.
%Moreover, we will show that our construction  provides us with an injective function $G:\textbf{Lie-Leib}\to \textbf{DGLie}\leq1$, which becomes a functor when restricted to $\textbf{semLie-Leib}$.
%In the presence of a symmetric component in the Leibniz product, we will show that the cochain complex of Proposition \ref{experimental} extends to the left:
\begin{center}
\begin{tikzcd}[column sep=1cm,row sep=0.4cm]
\ldots\ar[r]&V[1]\ar[r,"\Theta"]&\mathfrak{g}\ar[r,"-\eta(-;\Theta)"]& R_\Theta[-1]
\end{tikzcd}
\end{center}
More precisely, let $\overline{G}$ be the functor associating to each Lie algebra crossed module the corresponding unique 3-term differential graded Lie algebra:
\begin{align*}
\overline{G}:\hspace{0.2cm}\textbf{Lie$\times$Mod}&\xrightarrow{\hspace*{1.2cm}} \hspace{0.4cm}\textbf{DGLie}_{\leq1}\\
	\big(\mathfrak{c}\xrightarrow{\Theta}\mathfrak{g}\big)&\xmapsto{\hspace*{1.2cm}} \big(\mathfrak{c}\xrightarrow{\Theta}\mathfrak{g}\xrightarrow{0}\mathbb{R}[-1]\big)
\end{align*}
%that corresponds to it 
%(where $\mathbb{R}$ is considered to be in degree $+1$), 
This functor canonically extends to the category of Lie-Leibniz triples for which $V$ is a Lie algebra -- this is the content of Proposition \ref{experimental}. Moreover, it turns out that this functor satisfies Theorem \ref{centraltheorem}, %has the following important property,
 which the rest of the paper is dedicated to proving.

 We will show that, in the presence of a symmetric component in the Leibniz product of a Lie-Leibniz triple $(\mathfrak{g},V,\Theta)$, the cochain complex of Proposition \ref{experimental} actually extends to the left:
  %one can extend the above complex to the right as a cochain complex of $\gg$-modules, so that no information is lost:
\begin{equation}\label{diagram2}
\begin{tikzcd}[column sep=1cm,row sep=0.4cm]
\ldots\ar[r,"\partial_{-3}"]&T_{-3}\ar[r,"\partial_{-2}"]&T_{-2}\ar[r,"\partial_{-1}"]&V[1]\ar[r,"\Theta"]&\mathfrak{g}\ar[r,"-\eta(-;\Theta)"]& R_\Theta[-1]
\end{tikzcd}
\end{equation}
The existence of a tower of spaces $T_\bullet=\bigoplus_{i\geq1}T_{-i}$ beyond degree $-1$ is thus the sign that the Leibniz algebra $V$ is \emph{not} a Lie algebra.
The $\gg$-module structure of $V$ is key in defining $T_{-2}$, as it is used in its very definition. 
%\emph{e.g.} the first step, when one defines $T_{-2}$.
%Given this cochain complex, the first remark is that the Lie bracket between  $\mathfrak{g}$ and any
%other $T_{-i}$ would certainly reduce to the action of $\mathfrak{g}$ on $T_{-i}$. Thus the cochain complex $T_\bullet$ is a cochain complex of $\mathfrak{g}$-modules.
%This is used in
%will justify
%\emph{e.g.} the first step, when one defines $T_{-2}$. 
The other $T_{-i}$ for $i\geq3$ are built by induction as quotients of sums of products of other $T_{-j}$ for $1\leq j\leq i-1$ so that the property of being a $\mathfrak{g}$-module is inherited by induction. %Proposition \ref{prophierarchy} shows that the graded vector space $T_\bullet=\bigoplus_{i\geq1}T_{-i}$ can be canonically equipped with a graded Lie algebra structure.
%Actually, t
The action of $\mathfrak{g}$ on $T_{-i}$ is moreover necessary to define the differential on $T_\bullet$, showing the centrality of the Lie algebra $\mathfrak{g}$ in the construction. As proved in Proposition \ref{prop2}, the total space $\mathbb{T}=T_\bullet\oplus\mathfrak{g}\oplus R_\Theta[-1]$ then canonically inherits a differential graded Lie algebra structure, which restricts to that of Proposition \ref{experimental} when $V$ is a Lie algebra. %This  Proposition \ref{prop2} appears as a straightforward generalization of Proposition \ref{experimental} to any Lie-Leibniz triple. %We will first show how to construct the graded vector space $T_\bullet$ in Subsection \ref{construction}, and then show how to construct the differential graded Lie algebra structure on $\mathbb{T}=T_\bullet\oplus\mathfrak{g}\oplus R_\Theta[-1]$ in Subsections \ref{structure} and \ref{patching}, thus defining the (injective-on-objects) function $G:\textbf{Lie-Leib}\to\textbf{DGLie}_{\leq1}$. Finally, Subsection \ref{restriction} is devoted to showing that the restriction of $G$ to the subcategory \textbf{compLie-Leib} is a faithful functor.

%Here, the space $T_{+1}=R_\Theta[-1]$ is understood as the space $R_\Theta$, shifted by  degree $+1$.

This construction improves and simplifies the one presented in \cite{lavau:TH-Leibniz}. In that paper, the author associated a differential graded Lie algebra $\big((T_{-i})_{i\geq0}, \partial, [\,.\,,.\,]\big)$ to any Lie-Leibniz triple such that $T_{-1}=V[1]$, $T_0=\mathfrak{h}=\mathrm{Im}(\Theta)$ and $\partial_0=\Theta$:
\begin{equation}\label{diagram1}
\begin{tikzcd}[column sep=1cm,row sep=0.4cm]
\ldots\ar[r,"\partial_{-3}"]&T_{-3}\ar[r,"\partial_{-2}"]&T_{-2}\ar[r,"\partial_{-1}"]&V[1]\ar[r,"\Theta"]&\mathfrak{h}
\end{tikzcd}
\end{equation}
Here, the  spaces $T_{-i}$ and the linear maps $\partial_{-i}$ are the same as in the cochain complex \eqref{diagram2}.
The construction presented in \cite{lavau:TH-Leibniz} used a dual point of view which implied rather cumbersome computations and had the serious drawback that the resulting differential graded Lie algebra lost important information about the Lie algebra $\mathfrak{g}$ (for example it did not appear in the resulting cochain complex \eqref{diagram1}). In the present paper, we propose to offer a novel and direct construction of the cochain complex \eqref{diagram2}, drawing on theoretical work done by physicists, among others \cite{Cederwall:2015oua, Gomis:2018xmo}. Moreover, we shall prove that each map $\partial_{-i}$, as an element of $\mathrm{Hom}(T_{-i-1},T_{-i})$ on which $\mathfrak{g}$ acts, generates a cyclic $\mathfrak{g}$-submodule of $\mathrm{Hom}(T_{-i-1},T_{-i})$ which is isomorphic to (a sub-module of) $R_\Theta$. %belongs to the same representation $ R_\Theta$ of $\Theta$. 
This result, although having been conjectured and used for a long time by physicists, is new and is needed precisely in order to extend the cochain complex \eqref{diagram1} to the right as in diagram \eqref{diagram2}.

%The underlying (differential) graded Lie algebra structure that would equip this cochain complex corresponds to  the \emph{tensor hierarchy algebra} 
%so that this is a cochain complex of $\gg$-modules. 

\section{The tensor hierarchy associated to a Lie-Leibniz triple}\label{section3}
%\gg-modules}

%A construction of the graded differential graded Lie algebra represented by diagram \eqref{diagram1} has been given in \cite{lavau:TH-Leibniz}. However, we  here provide a much  simpler and more straightforward construction that slightly generalize it, allowing us to extend it to the right as in diagram \eqref{diagram2}. This extension is not trivial, as it requires to carefully extend the graded Lie bracket and the differential so that they still satisfy the Leibniz rule. This problem is rather subtle, which is the reason why it was not devised before. 

Subsection \ref{construction} is dedicated to defining the particular graded vector space  $T_{\bullet}$ and equip it with a graded Lie bracket. Subsection \ref{structure} is dedicated to defining the differential, while subsection \ref{patching} proves the compatibility of the bracket with the differential, turning the total space $\mathbb{T}=T_\bullet\oplus\mathfrak{g}\oplus R_\Theta[-1]$ into a differential graded Lie algebra. This algebra is the one defining the function $G$ in Theorem \ref{centraltheorem}.  We refer the reader to Example \ref{lodayexample2} in order to fully appreciate the details of the first steps of the construction of the algebra.
Finally, subsection~\ref{restriction} is devoted to showing  that the restriction of  $G$ to the subcategory \textbf{compLie-Leib} is a faithful functor.
%The present discussion draws on theoretical work done by physicists, among others \cite{Cederwall:2015oua, Gomis:2018xmo}. 

\subsection{Associating a graded Lie algebra to a Lie-Leibniz triple}\label{construction}

% and the mathematics of 
 %Loday and  Pirashvili \cite{Loday-Pirashvili}.
 The main result of the current subsection shows that any Lie-Leibniz triple gives rise to a graded Lie algebra generalizing the $\mathfrak{g}$-module $\mathfrak{c}$ that appears in any Lie algebra crossed module $\mathfrak{c}\xrightarrow{\Theta}\mathfrak{g}$:

\begin{prop}\label{prophierarchy}
To any Lie-Leibniz triple $(\mathfrak{g},V,\Theta)$ one can associate a negatively graded Lie algebra $\big(T_\bullet=(T_{-i})_{i\geq1},\llbracket\,.\,,.\,\rrbracket\big)$, which satisfies the  following three conditions:
\begin{enumerate}
    \item $T_\bullet$ is a graded $\mathfrak{g}$-module, and the corresponding representation $\rho:\mathfrak{g}\to\mathrm{End}(T_\bullet)$ takes values in $\mathrm{Der}(T_\bullet)$, the space of  derivations of $\llbracket\,.\,,.\,\rrbracket$; %, so that the bracket between $\mathfrak{g}$ and $T_{-i}$ corresponds to the action of $\mathfrak{g}$ on $T_{-i}$;
   % \item $T_{-1}=V[1]$;
        \item  $T_{\bullet}=V[1]\oplus \llbracket T_{\bullet}, T_\bullet\rrbracket$;
        %\item $\llbracket\,.\,,.\,\rrbracket=0$ if and only if $V$ is a Lie algebra.
    \item $T_{\bullet}=T_{-1}=V[1]$ if and only if $V$ is a Lie algebra.
\end{enumerate}
\end{prop}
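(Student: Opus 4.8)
The plan is to realize $T_\bullet$ as a quotient of a free graded Lie algebra. Write $\mathcal{F}$ for the free graded Lie algebra on the graded vector space $V[1]$, concentrated in degree $-1$. Standard facts about free graded Lie algebras give $\mathcal{F}_{-1}=V[1]$ and $\mathcal{F}=V[1]\oplus\llbracket\mathcal{F},\mathcal{F}\rrbracket$ with $\llbracket\mathcal{F},\mathcal{F}\rrbracket=\bigoplus_{i\ge2}\mathcal{F}_{-i}$; moreover, since the bracket of two odd-degree generators is graded-symmetric, one has a canonical identification $\mathcal{F}_{-2}\cong S^2(V)$. By the universal property of the free graded Lie algebra, the $\mathfrak{g}$-action on $V$ (hence on $V[1]$) extends uniquely to an action on $\mathcal{F}$ by derivations of $\llbracket\,.\,,.\,\rrbracket$, so that $\mathcal{F}$ is already a graded $\mathfrak{g}$-module whose representation lands in $\mathrm{Der}(\mathcal{F})$.

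The heart of the construction, and where the main obstacle lies, is the choice of the ideal to quotient by. The natural candidate is the ideal generated by the kernel $\mathcal{K}:=\mathrm{Ker}\big(\{\,.\,,.\,\}\big)\subset S^2(V)\cong\mathcal{F}_{-2}$ of the symmetric bracket, since the degree $-2$ piece of $T_\bullet$ should record exactly the failure of $\circ$ to be skew-symmetric. However, Equation \eqref{hera0} shows that $\{\,.\,,.\,\}$ is only $\mathfrak{h}$-equivariant and not $\mathfrak{g}$-equivariant in general, so $\mathcal{K}$ is merely an $\mathfrak{h}$-submodule, and quotienting by the ideal it generates would destroy the $\mathfrak{g}$-module structure demanded by condition~1. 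I therefore propose to quotient instead by the ideal $\mathcal{J}$ generated by $\mathcal{K}^{\circ}$, the largest $\mathfrak{g}$-submodule of $\mathcal{F}_{-2}$ contained in $\mathcal{K}$. As $\mathcal{K}^{\circ}$ is a $\mathfrak{g}$-submodule and $\mathfrak{g}$ acts by derivations, the ideal $\mathcal{J}$ it generates is automatically $\mathfrak{g}$-stable, and since $\mathcal{K}^{\circ}\subset\mathcal{F}_{-2}$ it lives entirely in degrees $\le-2$. I then set $T_\bullet:=\mathcal{F}/\mathcal{J}$.

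Conditions 1 and 2 then follow formally. The quotient of a graded Lie algebra by a graded ideal is a graded Lie algebra; as $\mathcal{J}$ is $\mathfrak{g}$-stable the $\mathfrak{g}$-action descends and stays by derivations, giving condition~1. Because $\mathcal{J}\subset\bigoplus_{i\ge2}\mathcal{F}_{-i}$, the degree $-1$ part is untouched, so $T_{-1}=V[1]$; and since $T_\bullet$ is generated in degree $-1$ one has $\llbracket T_\bullet,T_\bullet\rrbracket=\bigoplus_{i\ge2}T_{-i}$, whence $T_\bullet=V[1]\oplus\llbracket T_\bullet,T_\bullet\rrbracket$, which is condition~2.

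The delicate point is condition 3. For the ``if'' direction: if $V$ is a Lie algebra then $\{\,.\,,.\,\}=0$, so $\mathcal{K}=\mathcal{F}_{-2}$ and thus $\mathcal{K}^{\circ}=\mathcal{F}_{-2}$; the ideal generated by the whole of $\mathcal{F}_{-2}$ exhausts $\bigoplus_{i\ge2}\mathcal{F}_{-i}$ (every left-nested bracket of length $\ge2$ already contains a degree $-2$ subbracket), so $T_\bullet=V[1]$. For the ``only if'' direction: if $T_\bullet=V[1]$ then $T_{-2}=\mathcal{F}_{-2}/\mathcal{K}^{\circ}=0$, so $\mathcal{K}^{\circ}=\mathcal{F}_{-2}$; but $\mathcal{K}^{\circ}\subset\mathrm{Ker}\big(\{\,.\,,.\,\}\big)$ forces $S^2(V)\subset\mathrm{Ker}\big(\{\,.\,,.\,\}\big)$, i.e. $\{\,.\,,.\,\}=0$, i.e. $V$ is a Lie algebra. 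I expect this to be the real obstacle precisely because one must pass to $\mathcal{K}^{\circ}$ rather than to the $\mathfrak{g}$-submodule \emph{generated} by $\mathcal{K}$: the latter can fill all of $S^2(V)$ even when $V$ is not a Lie algebra --- for instance whenever $S^2(V)$ is $\mathfrak{g}$-irreducible while $\mathcal{K}\neq0$ --- which would wrongly collapse $T_{-2}$ to zero. The crux is therefore to choose the ideal so as to secure $\mathfrak{g}$-stability (condition~1) and non-collapse (condition~3) simultaneously, and checking that $\mathcal{K}^{\circ}$ achieves both is the key lemma.
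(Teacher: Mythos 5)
Your construction is correct and produces the same object as the paper: the paper also takes $T_\bullet$ to be the quotient of $\mathrm{Free}(V[1])$ by the graded Lie ideal generated by $K_{-2}=K[2]$, where $K$ is the biggest $\mathfrak{g}$-submodule of $\mathrm{Ker}\big(\{\,.\,,.\,\}\big)$ (and the footnote in the paper's proof makes exactly your point about why one must take the largest submodule \emph{contained in} the kernel rather than the submodule it generates). The only genuine difference is organizational: you define the ideal in one shot as ``the ideal generated by $\mathcal{K}^{\circ}$,'' whereas the paper builds its graded pieces $K_{-i}$ degree by degree as images of $\bigoplus_j F_{-j}\otimes K_{-i+j}$ under the bracket, using the exactness of the low-degree Chevalley--Eilenberg complex of a free graded Lie algebra (Lemma \ref{lemmaexact}) together with Lemma \ref{snake}. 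Your shortcut is perfectly adequate for Proposition \ref{prophierarchy} itself, but the paper's inductive route buys something it needs later: the exact presentations $\Lambda^3(\bigoplus T_{-i})\to\Lambda^2(\bigoplus T_{-i})\to T_{-n-1}\to 0$ (Remark \ref{remarquexacte}), i.e.\ the identification $\mathrm{Ker}(q_{-n-2})=q\big(\Lambda^3(T_{\leq-1})\big)$, which is the key input for factoring the maps $j_{-n-1}$ through $T_{-n-2}$ when constructing the differential in Lemma \ref{lemmadiff}. So if you adopt your formulation you would still have to prove that exactness separately before subsection \ref{structure}.
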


The first item of the proposition means that for every $i\geq1$, $T_{-i}$ is a $\mathfrak{g}$-module under the action of a Lie algebra morphism $\rho_{-i}:\mathfrak{g}\to \mathrm{End}(T_{-i})$, so that $\rho$ is the unique (graded) Lie algebra morphism restricting to $\rho_{-i}$ on $\mathrm{End}(T_{-i})$. Moreover for every $a\in\mathfrak{g},x\in T_{-i}$ and $y\in T_{-j}$, $\rho(a;-)$ is a derivation of the graded Lie bracket~$\llbracket x,y\rrbracket$:
\begin{equation}
\rho_{-i-j}\big(a;\llbracket x,y\rrbracket\big)=\big\llbracket\rho_{-i}(a;x),y\big\rrbracket+\big\llbracket x,\rho_{-j}(a;y)\big\rrbracket
\end{equation}
%In particular, $\rho$ is a degree 0 map.
The second item  can be reformulated as follows: $T_{-1}=V[1]$ and for every $i\geq2$, $T_{-i}=\sum_{j=1}^{i-1}\,\llbracket T_{-j},T_{-i+j}\rrbracket$. Eventually, the last item is the condition that
 $T_{-i}=0$ for every $i\geq2$ if and only if $V$ is a Lie algebra. 
In other words, when $V$ is a Lie algebra, the tower of vector spaces reduces to $T_\bullet=T_{-1}=V[1]$ and the graded Lie bracket $\llbracket\,.\,,.\,\rrbracket$ is identically zero, whereas when $V$ is a Leibniz algebra whose Leibniz product is not fully skew-symmetric, then the spaces of lower degrees exist and satisfy item 2. of Proposition \ref{prophierarchy}.

The proof is split in two parts.
First, we will define the $T_{-i}$ recursively, starting by setting $T_{-1}=V[1]$, and then by  taking the quotient of the free graded Lie algebra generated by $V[1]$ by a particular graded ideal $K_\bullet=\bigoplus_{i\geq2}K_{-i}$ which is constructed by induction. The quotient then is a (possibly not bounded) graded vector space $T_\bullet=\bigoplus_{i=1}^\infty T_{-i}$ 
which has the following properties:
\begin{enumerate}
    \item Every vector space $T_{-i}$ is a $\mathfrak{g}$-module;
    \item $T_{-1}=V[1]$;
    \item $T_{-i}=0$ for every $i\geq2$ if and only if $V$ is a Lie algebra.
\end{enumerate}
This represents the biggest part of the proof, and it crucially relies on key Lemmas \ref{lemmaexact} and \ref{snake}.
Then, we will define a graded Lie bracket on $T_\bullet$, so that items 1. and 2. of Proposition \ref{prophierarchy} are satisfied by construction.

\begin{remark}
The idea behind the proof (quotienting a graded ideal out of a free graded Lie algebra) in the context of building a tensor hierarchy has first been given in \cite{Cederwall:2015oua}, where the authors use a partition function in order to compute the $\mathfrak{g}$-modules that form the building blocks of the hierarchy. On the other hand, our proof relies on an inductive argument that is algebraic and that enables us to make explicit the definition of the graded Lie bracket. The relationship between the graded Lie algebra structure presented in \cite{Cederwall:2015oua} and ours has still to be investigated, but we expect it to be very close if not identical.
\end{remark}

 Let us first give a brief account of our conventions and a technical reminder of graded algebra.
Given a finite dimensional vector space $X$, one can define the free Lie algebra $\mathrm{Free}(X)$  as the Lie algebra generated by  $X$, imposing only the defining relations of skew-symmetry
%alternating K-bilinearity 
and the Jacobi identity. It can be thought of 
as the vector space generated by the successive commutators of a basis of $X$.  
The vector space $F_\bullet=\mathrm{Free}(X)$ is graded by the length of the commutators, e.g. 
%\footnote{A more natural description (basis independent) is to \emph{filter} $F =\mathrm{Free}(U)$ as $$F\supset [F,F] \supset [F,[F,F]] \supset$$ and pass to the associated graded.}
%\tgr{ which I would like to add}
$[\cdots [x_1, x_2], x_3], \cdots x_i]$, for any $x_1,\ldots,x_i\in X$, has length $i$. There exist various  formulas for a basis of $F_\bullet=\bigoplus_{k\geq1}F_k$ \cite{Hall, lyndon, shirshovBasesFreeLie2009}. % and any basis $(w^k_{l})_{1\leq l\leq \mathrm{dim}(W_{k})}$ of $W_k$ is graded by the number of elements  in $X$ used when the $w^k_{l}$ are expressed as linear combinations of $k$ basis elements.
The construction also works for graded vector spaces $X$, hence defining  free graded Lie algebras, also called free Lie superalgebras when the emphasis is  on the parity \cite{chibrikov, shtern}. %\emph{superalgebra}.

First, let us recall that 
for a graded vector space $X$, %$\Lambda(X)$ denotes 
the free graded skew-commutative algebra generated by  $X$ is denoted by $\Lambda^\bullet(X)$. That is, the algebra spanned by elements $x_1\wedge  x_2\wedge\cdots\wedge x_n$.  
With the degree of $x$ denoted $|x|$, we have:
\begin{equation}\label{eqantisym}
x\wedge y = - (-1)^{|x||y|} y\wedge x
%=  -(-1)^{|u||v|} v\wedge u.
\end{equation}
%The free graded symmetric algebra generated by $U$ is denoted by $S(U)$ and uses the reverse convention:
%\begin{equation}
%u\odot v =  (-1)^{|u||v|} v\odot u.
%=  -(-1)^{|u||v|} v\wedge u.
%\end{equation}
%where $\odot$ symbolizes the symmetric product.
For example, pick  a vector space $V$ and define  $X=V[1]$ to be its desuspension, that is to say: $X$ is isomorphic to $V$ as a vector space, but its elements are considered to carry a grading defined to be $-1$. Then, due to the shifted degree and Equation \eqref{eqantisym},  the vector space $\Lambda^2(X)$ is isomorphic  to $S^2(V)$ (understood in the category of vector spaces).
%\tgr{Need to reconcile notation above}
%denotes the component in ?total? degree $-k$.
%\trd{The following may not be needed.}Let $i:Free( ) \to \mathfrak{T}$ restrict to $i_n: Free( )_n\subset \mathfrak{T}_n$,as in Periashvili. His main observation  is the fact that if $\gg$ is a Leibniz algebra, then $L(\gg)_*$ is closed under the Loday boundary map, and therefore $L(\gg)_*$ is a subcomplex of ${CL}_*(\gg)$. In order to describe the induced boundary map$\dd:L(\gg)_n\to L(\gg)_{n-1}$, we need the linear map $p_n:T(V,1)_n\to L(V,1)_n$ given by$$p_n(x_1\gg\otimes\cdots\gg\otimes x_n)=[[x_1,\cdots ,x_n]].$$
%\begin{Prop}\label{komplexi} Let $\gg$ be a Leibniz algebra. Then, for any element $\omega\in L(\gg)_{n+1}$ one has $$di_{n+1}(\omega)=(-1)^np_n\circ f^n\circ i_{n+1}(\omega),$$ where
%$$f^n: \gg^{\otimes n+1}\to \gg^{ \otimes n}$$
%is the linear map given by
%$$f^n(x_1 \otimes\cdots \otimes x_n \otimes x_{n+1})=x_1\gg\gg\otimes \cdots \gg\gg\otimes x_{n-1}\gg\gg\otimes \{x_n,x_{n+1}\}.$$\end{Prop}
%\end{remark}
%\tgr{Need to say something about $\Lambda^2(F_{-1})$ being a special case to be distinguished from
%$\Lambda^\bullet(F_\bullet)$ as in CE} \textcolor{blue}{$\Lambda^\bullet(F_\bullet)=F_{-1}\oplus \Lambda^2(F_{-1})\oplus F_{-2}\oplus \Lambda^3(F_{-1})\oplus F_{-1}\otimes F_{-2}\oplus F_{-3}\oplus ...$ so one is part of the other!}

Thus, the free graded 
Lie algebra $F_\bullet=\mathrm{Free}(X)$ generated by the degree $-1$ vector space $F_{-1}=X=V[1]$  admits $\Lambda^2(X)\simeq S^2(V)$ as the desired space $F_{-2}$, where $\simeq$ indicates that the two spaces are isomorphic.
Notice that for $x,y\in F_{-1}$, we have $[x,y]= [y,x]$, where here the bracket is the free graded Lie bracket on $F_\bullet$.
%freely generated by the commutators of two elements of $F_{-1}$ is then $\Lambda^2(F_{-1})$.
The next vector space $F_{-3}$ has
a basis 
%freely generated by 
of some elements of the form $[[x,y],z]$ for  $x,y,z\in F_{-1}$, subject to the Jacobi identity.
%would thus be 
Thus $F_{-3}$ is isomorphic to the quotient of $F_{-1}\otimes \Lambda^2(F_{-1})\simeq V\otimes S^2(V)$ by $\Lambda^3(F_{-1})\simeq S^3(V)$. More generally, the negative grading on $F_\bullet=\bigoplus_{k\geq1}F_{-k}$ is defined as follows: the number $k\geq1$ labelling $F_{-k}$ indicates the number of elements $x_1,\ldots, x_k\in X=V[1]$ used in the brackets $[\cdots[x_1, x_2], x_3],\ldots],x_k]$  defining the generators of $F_{-k}$.  %The graded vector space $F_\bullet =\mathrm{Free}(V[1])$ is thus graded by bracket length.

\begin{comment}
In particular, the graded vector space $F_\bullet =\mathrm{Free}(V[1])$ is graded by bracket length.
%\begin{remark}Alternate phrasing, just for consideration: Let V be a k-module. The free Lie algebra $\mathrm{Free}(U)$ over V can be constructed as a quotient of the free tensor algebra, the free associative k-algebra over V.\end{remark}
%
%\trd{I'm unable to comment on Young diagrams}
%\begin{remark}
%Using Young diagram notation, if one symbolizes $V$ by the empty box $\yng(1)$, then we would have $T_{-2}=\yng(2)$ and $F_{-3}=\yng(2,1)$. This is to be contrasted with the free Lie algebra generated by $V$ (in degree 0),
%\trd{in degree 0 - you are trying to point out...?}
%since the vector space freely generated by commutators of elements of $V$ would be $\Lambda^2(V)=\yng(1,1)$. However the vector space
% spanned by basis
%elements of the form $[[a,b],c]$ for $a,b,c\in V$ \trd{I lost something here}
%Given a vector space $V$, the decomposition of the free Lie algebra $\mathrm{Free}(V)$ in terms of irreducible representations of $\mathrm{GL}(V)$ has been  investigated  in \cite{Siberia,Weyman}. On the other hand, the mathematically well established use of the free (graded) Lie algebra 
%of an odd graded vector space 
%has been pioneered in physics in \cite{Cederwall:2015oua, Gomis:2018xmo}. %Beginning with a $\mathfrak{g}$-module $V$, they introduce a partition function whose coefficients are combinations of symmetric and skew-symmetric powers of $V$. The Young diagrams notation is adapted to this context and we will thoroughly use it when needed.
%\end{remark}
%Let $\mathfrak{g}$ be a Lie algebra and let $V$ be a $\mathfrak{g}$-module, and let $F_\bullet=\mathrm{Free}(V[1])$ be the free graded Lie algebra generated by $V[1]$, with
%, the suspension of $V$. It admits 
More precisely, the negative grading on $F_\bullet=\bigoplus_{i\geq1}F_{-i}$ is defined as follows: the number $i\geq1$ labelling $F_{-i}$ indicates the number of elements $x_1,\ldots, x_i\in V[1]$ used in the brackets $[\cdots[x_1, x_2], x_3],\ldots],x_i]$  defining the generators of $F_{-i}$. In particular, $F_{-1}=V[1]$ and $F_{-2}=\Lambda^2(F_{-1})\simeq S^2(V)$.
Notice that for $x,y\in F_{-1}$, we have $[x,y]= [y,x].$
\end{comment}

In this section, we will consider the total exterior algebra  $\Lambda(F_\bullet)=\bigoplus_{p=0}^\infty \Lambda^p F_\bullet$ of a free graded Lie algebra of the form $F_\bullet=\mathrm{Free}(V[1])$. The  exterior algebra inherits a grading from $F_\bullet$ which is
%Notice $\Lambda^*(F_{*})$ 
%which is bigraded, for 
such that the component in total degree $-i$ is denoted $\Lambda(F_\bullet)_{-i}= \bigoplus_{p\geq0} \Lambda^p(F_\bullet)|_{-i}$. The space $\Lambda(F_\bullet)_{-i}$ at degree $-i$  is a finite dimensional graded vector space.
This exterior algebra is obtained as the quotient of the tensor algebra $T(F_\bullet)$ by the (graded) ideal generated by elements of the form:
\begin{equation}
x\otimes y +(-1)^{|x||y|}y\otimes x
\end{equation}
for every two homogeneous elements $x,y\in F_\bullet$. The resulting quotient is isomorphic to a graded sub-vector space of the tensor algebra, and is a graded algebra on its own, when equipped with the wedge product.

Let us discuss the first few spaces appearing in the decomposition of $\Lambda^2(F_\bullet)$. As a vector space, $\Lambda^2(F_\bullet)|_{-2}$ is canonically isomorphic to $F_{-1}\wedge F_{-1}$, while $\Lambda^2(F_\bullet)|_{-3}$ is canonically isomorphic to $F_{-1}\otimes F_{-2}$, whereas $\Lambda^2(F_\bullet)|_{-4}$ is canonically isomorphic to $F_{-1}\otimes F_{-3}\oplus F_{-2}\wedge F_{-2}$. More generally, $\Lambda^2(F_\bullet)_{-i}$ (for $i\geq3$) is canonically isomorphic to the following vector spaces depending on the parity of $i$:
\begin{align}
&\text{if $i$ is even} &F_{-\frac{i}{2}}\wedge F_{-\frac{i}{2}}\quad\oplus \bigoplus_{1\leq j\leq\frac{i-2}{2}}F_{-j}\otimes F_{-i+j}\\
&\text{if $i$ is odd} &\hspace{2cm}\bigoplus_{1\leq j\leq\frac{i-1}{2}}F_{-j}\otimes F_{-i+j}
\end{align} When $j\neq \frac{i}{2}$, what we denote by $F_{-j}\wedge F_{-i+j}$ is the vector space isomorphic to $F_{-j}\otimes F_{-i+j}$, understood as a subspace of $\Lambda^2(F_\bullet)_{-i}$. In particular,  if $x\in F_{-j}$ and $y\in F_{-i+j}$ then we will write $x\wedge y$ instead of $x\otimes y$ when we want to emphasize that we work in $F_{-j}\wedge F_{-i+j}$.
Regarding $F_{-j}\wedge F_{-i+j}$ as isomorphic to both $F_{-j}\otimes F_{-i+j}$ and $F_{-i+j}\otimes F_{-j}$, we then have $x\wedge y$ represented by $x\otimes y\in F_{-j}\otimes F_{-i+j}$ or equally by $-(-i)^{j(i-j)}y\otimes x\in F_{-i+j}\otimes F_{-j}$.
% and we will consider that $-(-1)^{j(i-j)}y\wedge x$ represent the same element in $F_{-j}\wedge F_{-i+j}$, that can be faithfully represented as $x\otimes y$ in $F_{-j}\otimes F_{-i+j}$.

The Chevalley-Eilenberg homology of a graded Lie algebra $\gg$  is the homology
of their chain complex $CE_\bullet(\gg)$ with $CE_n(\gg)=\Lambda^n\gg$
and differential $d = \{d_n:CE_n(\gg)\to CE_{n-1}(\gg)\}$ given by (\cite{GARCIAMARTINEZ} p. 471):
\begin{equation}\label{CEdiff}
d_n(x_1\wedge\ldots\wedge x_n)=
 \sum_{1\leq i<j\leq n}\ (-1)^{i+j+|x_j|\,\eta_j+|x_i|\,\eta_{i}-|x_i||x_j|}\ [x_i,x_j]\wedge x_1\wedge \ldots\wedge x_{i-1}\wedge x_{i+1}\wedge \ldots\wedge x_{j-1}\wedge  x_{j+1}\wedge \ldots\wedge x_n
%\sum_{\sigma\in\mathrm{Un}(2,n-2)}\epsilon^\sigma\,[x_{\sigma(1)},x_{\sigma(2)}] \wedge x_{\sigma(3)}\wedge\ldots\wedge  x_{\sigma(n)}   
\end{equation}
where $\eta_m=\sum_{1\leq k\leq m-1} |x_{k}|$.
Degrees of the elements appear in the formula because one has to permute $x_i$ and $x_{j}$ to the first and second positions in order to eventually take the Lie bracket of $x_i$ and $x_j$. There exists another, identical but more compact formula, which uses unshuffles:
\begin{equation}\label{CEdiff2}
  d_n(x_1\wedge\ldots\wedge x_n)=  -\sum_{\sigma\in\mathrm{Un}(2,n-2)}\epsilon^\sigma\,[x_{\sigma(1)},x_{\sigma(2)}] \wedge x_{\sigma(3)}\wedge\ldots\wedge  x_{\sigma(n)} 
\end{equation}
where $\mathrm{Un}(2,n-2)$ are the $(2,n-2)$-unshuffles, and where $\epsilon^\sigma$ is the sign that is induced by the permutation $\sigma$ on $x_1\wedge\ldots\wedge x_n$. More precisely, a permutation $\sigma\in S_n$ is called a $(p,n-p)$-\emph{unshuffle} if $\sigma(1)<\ldots<\sigma(p)$ and 
$\sigma(p+1)<\ldots<\sigma(n)$, and $\epsilon^\sigma$ is defined through the following formula: 
\begin{equation}\label{eqkoszul}
x_1\wedge\ldots\wedge x_n=\epsilon^\sigma x_{\sigma(1)}\wedge\ldots\wedge x_{\sigma(n)}
\end{equation} % the product of the Koszul sign of the permutation $\sigma$ with its signature, that is: $x_1\wedge\ldots\wedge  x_n=\epsilon^\sigma\,x_{\sigma(1)}\wedge \ldots\wedge x_{\sigma(n)}$. So for example, for  $\sigma=(1\,2)$, we have  $\epsilon^{(1\,2)}=-(-1)^{|x_1||x_2|}$.  
One may check that Equation \eqref{CEdiff} is the same as Equation \eqref{CEdiff2}, and that they correspond to the common definition of the Chevalley-Eilenberg differential (in homology) when $\mathfrak{g}$ is concentrated in degree 0 (see \cite{knappLieGroupsLie1988} p. 284).
%\tgr{Even simpler is the original with
%\begin{equation}
%d_n(x_1\wedge\ldots\wedge x_n)=
%\sum_{1 \leq n}\epsilon^\sigma\,[x_{i},x_{j}] \wedge\ldots \hat x_i \wedge \ldots\wedge \hat x_n 
%\end{equation}}
Then the following result holds:
\begin{lemma}\label{lemmaexact}
For a free graded Lie algebra  $F_\bullet=(F_{-i})_{i\geq1}$, %then $H_1(F_\bullet)=H_2(F_\bullet)=0$, and this translates as 
the following sequences are exact:
\begin{equation}\label{exactsequence1}
\begin{tikzcd}[column sep=0.7cm,row sep=0.4cm]
0\ar[r]&\Lambda^2(F_{-1})\ar[r,"d_2"]&F_{-2}\ar[r]&0
\end{tikzcd}
\end{equation}
\begin{equation}\label{exactsequence2}
\begin{tikzcd}
\Lambda^3(F_{-1})\ar[r,"d_3"]&F_{-1}\wedge F_{-2}\ar[r,"d_2"]&F_{-3}\ar[r]&0%\\
%\Lambda^3(F_{-1}\oplus F_{-2})|_{-4}\ar[r]&F_{-1}\otimes F_{-3}\,\oplus \,\Lambda^2(F_{-2})\ar[r,"\varphi_{-4}"]&F_{-4}\ar[r]&0%\\
%\Lambda^3\big(\bigoplus_{j=1}^3 F_{-j}\big)|_{-5}\ar[r]&\Lambda^2\big(\bigoplus_{j=1}^4 F_{-j}\big)|_{-5}\ar[r, "\varphi_{-5}"]&F_{-5}\ar[r]&0
\end{tikzcd}
\end{equation}
and more generally, for every $i\geq4$:
\begin{equation}
\begin{tikzcd}[column sep=0.7cm,row sep=0.4cm]
\Lambda^3\big(\bigoplus_{j=1}^{i-2} F_{-j}\big)\big|_{-i}\ar[r, "d_3"]&\Lambda^2\big(\bigoplus_{j=1}^{i-1} F_{-j}\big)\big|_{-i}\ar[r, "d_2"]&F_{-i}\ar[r]&0 \label{exactsequence3}
\end{tikzcd}
\end{equation}
\end{lemma}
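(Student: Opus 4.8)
The plan is to recognize all three displayed sequences as pieces of a single object read off in a fixed internal degree: the Chevalley--Eilenberg chain complex $\big(\Lambda^\bullet F_\bullet,d\big)$ of the free graded Lie algebra $F_\bullet=\mathrm{Free}(V[1])$. The differential $d$ preserves the internal grading inherited from $F_\bullet$ and lowers the exterior (homological) degree by one, so the complex splits as a direct sum over internal degrees $-i$. In internal degree $-i$ the piece $\Lambda^p(F_\bullet)|_{-i}$ is the sum of the $F_{-j_1}\wedge\cdots\wedge F_{-j_p}$ with $j_1+\cdots+j_p=i$ and each $j_\ell\geq1$; hence $\Lambda^1|_{-i}=F_{-i}$, the $\Lambda^2$-part is $\bigoplus_{j=1}^{i-1}F_{-j}\wedge F_{-i+j}$, and the $\Lambda^3$-part only involves summands $F_{-j}$ with $j\leq i-2$. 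Thus in internal degree $-2$ the entire complex is $0\to\Lambda^2 F_{-1}\xrightarrow{d_2}F_{-2}\to0$; in internal degree $-3$ it is $\Lambda^3 F_{-1}\xrightarrow{d_3}F_{-1}\wedge F_{-2}\xrightarrow{d_2}F_{-3}\to0$ (there being a unique partition $3=1+2$); and for $i\geq4$ the displayed sequence \eqref{exactsequence3} is the exterior-degree $\leq3$ truncation of the complex in internal degree $-i$, ending at $F_{-i}$. Proving exactness of each sequence therefore reduces to vanishing of $H^{CE}_n(F_\bullet)$ at the relevant spots.

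First I would dispose of exactness at the right-hand end, i.e. surjectivity of $d_2$ onto $F_{-i}$. By \eqref{CEdiff2} one has $d_2(x\wedge y)=-[x,y]$, and the defining bracket-length grading of the free graded Lie algebra says precisely that, for $i\geq2$, $F_{-i}=\sum_{j=1}^{i-1}[F_{-j},F_{-i+j}]$; so $d_2$ hits all of $F_{-i}$ and exactness at $F_{-i}$ is immediate. Equivalently this is the vanishing of $H^{CE}_1(F_\bullet)$ in internal degrees $-i$ with $i\geq2$, reflecting that the abelianization of a free Lie algebra is its space of generators $V[1]$, concentrated in internal degree $-1$.

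The essential point is exactness at the middle spot, namely $\ker d_2=\mathrm{im}\,d_3$ in internal degree $-i$ (and, for \eqref{exactsequence1}, injectivity of $d_2$, i.e. $\ker d_2=0$ there). This is the statement $H^{CE}_2(F_\bullet)=0$, which I would deduce from freeness. The Chevalley--Eilenberg complex computes Lie-algebra homology with trivial coefficients, $H^{CE}_n(F_\bullet)\cong\mathrm{Tor}^{U(F_\bullet)}_n(k,k)$, and for the free graded Lie algebra the universal enveloping algebra is the tensor algebra $U(F_\bullet)\cong T(V[1])$ by the graded Poincar\'e--Birkhoff--Witt theorem. Since $T(V[1])$ is a free associative algebra it is hereditary: the sequence $0\to T(V[1])\otimes V[1]\to T(V[1])\to k\to0$ is a length-one free resolution of $k$, whence $\mathrm{Tor}^{T(V[1])}_n(k,k)=0$ for all $n\geq2$. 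Tracking the internal grading, $H^{CE}_0$ sits in degree $0$ and $H^{CE}_1=V[1]$ in degree $-1$, so for every $i\geq2$ both $H^{CE}_1|_{-i}$ and $H^{CE}_2|_{-i}$ vanish, which is exactly the claimed exactness of \eqref{exactsequence1}, \eqref{exactsequence2} and \eqref{exactsequence3}.

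The main obstacle I anticipate is bookkeeping rather than conceptual: one must verify that the sign conventions in \eqref{CEdiff} are exactly those making $\big(\Lambda^\bullet F_\bullet,d\big)$ the trivial-coefficient Chevalley--Eilenberg complex of the \emph{graded} (super) Lie algebra $F_\bullet$ — the generators sitting in odd degree $-1$, so that $[x,y]=[y,x]$ on $F_{-1}$ and $\Lambda^2(V[1])\cong S^2(V)$ — and that the identification with $\mathrm{Tor}^{U(F_\bullet)}(k,k)$ is the graded one. I would stress that \eqref{exactsequence2} and \eqref{exactsequence3} require only $H^{CE}_1=H^{CE}_2=0$; they do \emph{not} assert injectivity of $d_3$ (which is $H^{CE}_3=0$), so no claim about the kernel of the leftmost map is needed. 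If a self-contained argument is preferred to invoking $\mathrm{Tor}$, the same vanishing follows from the standard contracting homotopy on the Koszul resolution of $T(V[1])$, but the homological-algebra route above is the most economical.
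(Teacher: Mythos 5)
Your proof is correct and follows essentially the same route as the paper: both read the three sequences off as the internal-degree-$(-i)$ pieces of the Chevalley--Eilenberg complex of $F_\bullet$, obtain exactness at $F_{-i}$ from the fact that a free graded Lie algebra is spanned in degrees $\leq-2$ by iterated brackets (i.e.\ $H_1(F_\bullet)$ is concentrated in internal degree $-1$), and reduce exactness at the middle spot (and injectivity of $d_2$ in degree $-2$) to $H_2(F_\bullet)=0$. The only divergence is that the paper gets $H_2(F_\bullet)=0$ by citing Corollary 6.5 of \cite{GARCIAMARTINEZ}, whereas you derive it yourself from $H_n(F_\bullet)\cong\mathrm{Tor}^{U(F_\bullet)}_n(k,k)$ together with the graded PBW identification $U(\mathrm{Free}(V[1]))\cong T(V[1])$ and the length-one free resolution of $k$ over the tensor algebra --- a self-contained and equally valid substitute for the citation.
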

\begin{proof}
For a free graded Lie algebra $F_\bullet$, one has $H_1(F_\bullet)=0$ and $H_2(F_\bullet)=0$. The first equality is justified because on the one hand, the map $d_1:F_\bullet\to \mathbb{R}$ is the zero map and, on the other hand, $d_2:\Lambda^2(F_\bullet)\to F_{\leq-2}$ is by definition surjective: 
%, i.e. that $H_1(F_\bullet)=0$, 
 $F_\bullet$ is a free graded Lie algebra so each  $F_{-i}$ (for $i\geq2$) is spanned by (iterated) brackets. The second equation, $H_2(F_\bullet)=0$, is deduced from Corollary 6.5 in \cite{GARCIAMARTINEZ} by keeping track of the natural graduation of $F_\bullet$. % induced by that of X
  Then it suffices to write 
%the equations $H_1(F_\bullet)=0$ and $H_2(F_\bullet)=0$
the equations $H_1(F_\bullet)=0$ and $H_2(F_\bullet)=0$ degree-wise, starting at $n=2$, to obtain the exact sequences \eqref{exactsequence1}, \eqref{exactsequence2} and \eqref{exactsequence3}. 
%The rightmost column is the image of
%Note $d_1:F_\bullet\to\mathbb{K}$  is zero by definition.
\end{proof}

The proof of Proposition \ref{prophierarchy} then relies on the use of Lemma \ref{lemmaexact} in conjunction with the following Lemma: %, whose statement and proof has been brought to our knowledge by Jonathan Wise:
%Then we need to introduce the following Lemma:

\begin{lemma}\label{snake}
Let $\mathcal{A}$ be an abelian category and assume that the following diagram in $\mathcal{A}$ is commutative:
\begin{center}
\begin{tikzcd}[column sep=1cm,row sep=1cm]
0\ar[r] &\ar[d,"u"]A\ar[r,"\mathrm{id}"]&\ar[d,"k"]A\ar[r]&0\\
W\ar[r,"a"]&\ar[d]X\ar[r,"b"]&\ar[d]Y\ar[r,"c"]&Z\\
W\ar[r,"\alpha"]&\bigslant{X}{u(A)}\ar[r,"\beta"]&\bigslant{Y}{k(A)}\ar[r,"\gamma"]&Z
\end{tikzcd}
\end{center}
where $\alpha, \beta$ and $\gamma$ are the linear maps canonically induced from $a,b$ and $c$. Then exactness of the second row implies exactness of the third. %, with exact rows. Then the induced sequence:
%\begin{center}
%\begin{tikzcd}[column sep=1cm,row sep=0.4cm]
%V\ar[r,"a"]&\bigslant{W}{u(Z)}\ar[r,"b"]&\bigslant{X}{b(u(Z))}\ar[r,"c"]&Y
%\end{tikzcd}
%\end{center}
%is exact.
\end{lemma}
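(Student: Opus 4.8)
The plan is to prove exactness of the bottom row at each of its two interior objects, $X/u(A)$ and $Y/k(A)$, by a direct diagram chase. First I would record that the three induced maps are genuinely well defined: $\overline{a}$ is simply the composite $\pi_X\circ a$ with the quotient map $\pi_X:X\to X/u(A)$; $\overline{b}$ exists because $\pi_Y\circ b$ annihilates $u(A)$, since $\pi_Y b u=\pi_Y k=0$ (here the commutativity of the top square, namely $bu=k$, is used in an essential way); and $\overline{c}$ exists because $c$ annihilates $k(A)$, since $ck=cbu=0$ by exactness of the middle row at $Y$. Because $\mathcal{A}$ is abelian, element-style arguments are legitimate via the Freyd--Mitchell embedding theorem (equivalently, Mac Lane's calculus of members), so I would carry out the chase as if $W,X,Y,Z$ were modules and $u(A),k(A)$ their image submodules.

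For exactness at $X/u(A)$: the composite $\overline{b}\,\overline{a}=\pi_Y b a$ vanishes because $ba=0$, by exactness of the middle row at $X$. Conversely, given $\xi\in\ker\overline{b}$, I would lift it to $x\in X$ with $\pi_X x=\xi$; then $bx\in k(A)$, say $bx=k a_0=b(u a_0)$, so $x-u a_0\in\ker b=\operatorname{im}a$, whence $x-u a_0=a w$ and $\xi=\pi_X x=\overline{a}\,w$, placing $\xi$ in the image of $\overline{a}$. The correction step $bx=b(u a_0)$ is precisely where the relation $bu=k$ (the content of the identity on the top row) does the work.

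For exactness at $Y/k(A)$: the composite $\overline{c}\,\overline{b}$ vanishes after precomposition with the epimorphism $\pi_X$, since $\overline{c}\,\overline{b}\,\pi_X=cb=0$, and $\pi_X$ epi forces $\overline{c}\,\overline{b}=0$. Conversely, given $\eta\in\ker\overline{c}$, I would lift it to $y\in Y$; then $cy=0$, so $y\in\ker c=\operatorname{im}b$, say $y=bx$, and $\eta=\pi_Y y=\overline{b}\,\pi_X x$ lies in the image of $\overline{b}$. (When $Z=0$, as in the intended applications, exactness at $Y/k(A)$ reduces to surjectivity of $\overline{b}$, which this argument covers with $c=0$.)

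The proof is short, and the only real subtlety is bookkeeping rather than genuine difficulty: one must keep straight which exactness and which commutativity is invoked at each step, the pivotal ingredient being $bu=k$, which guarantees that $b$ carries $u(A)$ into $k(A)$ and makes the correction step possible. If one prefers a chase-free argument -- and if $u$ and $k$ are monomorphisms, as happens in our applications -- the same conclusion follows structurally by viewing the three columns as a short exact sequence of three-term complexes $0\to C'\to C\to C''\to 0$, where $C'$ is the acyclic complex $A\xrightarrow{\mathrm{id}}A$ and $C$ is the middle row; the long exact homology sequence then forces $H_\bullet(C'')=0$, i.e.\ exactness of the bottom row, from the acyclicity of $C'$ and the exactness of $C$. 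The main point to be careful about in that route is the well-definedness of the quotient complex $C''$ and the identification of its homology with exactness of the induced sequence.
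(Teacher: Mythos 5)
Your proof is correct, but it takes a different route from the paper's. You argue by an element-wise diagram chase (legitimised in an abelian category by Freyd--Mitchell or Mac Lane's calculus of members), verifying exactness at $\bigslant{X}{u(A)}$ and at $\bigslant{Y}{k(A)}$ separately, with the relation $b\circ u=k$ doing the key work in the correction step $bx=k(a_0)=b(u(a_0))$. The paper instead gives a purely categorical argument: it reformulates exactness of the middle row as the canonical isomorphisms $\mathrm{Coker}(a)\simeq\mathrm{Im}(b)$ and $\mathrm{Coker}(b)\simeq\mathrm{Im}(c)$, and then derives the corresponding isomorphisms for the bottom row by the chains $\mathrm{Im}(\overline{c})\simeq\mathrm{Im}(c)\simeq\mathrm{Coker}(b)\simeq\mathrm{Coker}(\overline{b})$ and $\mathrm{Coker}(\overline{a})\simeq\mathrm{Coker}\big(A\to\mathrm{Coker}(a)\big)\simeq\mathrm{Coker}\big(A\to\mathrm{Im}(b)\big)\simeq\mathrm{Im}(\overline{b})$. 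What your approach buys is concreteness and an explicit verification that the induced maps are well defined (the paper takes this for granted); what the paper's approach buys is brevity and the avoidance of any embedding theorem, working natively with cokernels and images. Your alternative sketch via the long exact homology sequence of a short exact sequence of complexes is closest in spirit to the paper's argument, but, as you note, it requires $u$ and $k$ to be monomorphisms, a hypothesis neither your chase nor the paper's proof needs.
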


\begin{remark}
Here, given a morphism $\chi\in\mathrm{Hom}_{\mathcal{A}}(P,Q)$, the quotient notation $\bigslant{P}{\chi(Q)}$ denotes the cokernel of the map $\chi$, that is: $\bigslant{P}{\chi(Q)}:=\mathrm{coker}(\chi:P\to Q)$. In the category of $R$-modules, this reduces to the usual quotient of $R$-modules. %Hence Lemma \ref{snake} amounts to saying that taking cokernels is a right exact functor.
\end{remark}

%\begin{lemma}\label{snake-alt}
%Let $\mathcal{A}$ be an abelian category, and assume that the following diagram in $\mathcal{A}$:
%\begin{center}
%\begin{tikzcd}[column sep=1cm,row sep=1cm]0\ar[r] &\ar[d,"u"]Z\ar[r,"\mathrm{id}"]&\ar[d,"b\circ u"]Z\ar[r]&0\\V\ar[r,"a"]&W\ar[r,"b"]&X\ar[r,"c"]&Y\\V\ar[r,"a"]&\bigslant{W}{u(Z)}\ar[r,"b"]&\bigslant{X}{b(u(Z))}\ar[r,"c"]&Y\end{tikzcd}\end{center}
%is commutative, with the top 2 rows exact. Then the third sequence:
%\begin{center}\begin{tikzcd}[column sep=1cm,row sep=0.4cm]
%V\ar[r,"a"]&\bigslant{W}{u(Z)}\ar[r,"b"]&\bigslant{X}{b(u(Z))}\ar[r,"c"]&Y\end{tikzcd}\end{center}is exact.\end{lemma}

\begin{proof}
Exactness of the sequence:
\begin{center}
\begin{tikzcd}[column sep=1cm,row sep=1cm]
W\ar[r,"a"]&X\ar[r,"b"]&Y\ar[r,"c"]&Z
\end{tikzcd}
\end{center}
is equivalent to the existence of two isomorphisms:
\begin{equation*}
\mathrm{Coker}(a)\simeq\mathrm{Im}(b)\hspace{1cm}\text{and}\hspace{1cm}\mathrm{Coker}(b)\simeq\mathrm{Im}(c)
\end{equation*}
%which in turn is true if and only if there exists an isomorphism between $\mathrm{Coker}(a)$  and $\mathrm{Ker}(c)$.
Now we have the following isomorphisms:
\begin{equation*}
\mathrm{Im}(\gamma)\simeq\mathrm{Im}(c)\simeq \mathrm{Coker}(b)\simeq \mathrm{Coker}(\beta)
\end{equation*}
which implies that the sequence:
\begin{center}
\begin{tikzcd}[column sep=1cm,row sep=1cm]
W\ar[r,"\alpha"]&\bigslant{X}{u(A)}\ar[r,"\beta"]&\bigslant{Y}{k(A)}\ar[r,"\gamma"]&Z
\end{tikzcd}
\end{center}
is exact at $\bigslant{Y}{k(A)}$. Exactness at $\bigslant{X}{u(A)}$ is implied by the following sequences of isomorphisms:
\begin{equation*}
\mathrm{Coker}(\alpha)\simeq\mathrm{Coker}\big(A\to\mathrm{Coker}(a)\big)\simeq\mathrm{Coker}\big(A\to\mathrm{Im}(b)\big)\simeq\mathrm{Im}(\beta)
\end{equation*}
where the first and last equalities are obtained by definition of the maps $\alpha$ and $\beta$. This concludes the proof.
 \end{proof}

 \begin{proof}[Proof of Proposition \ref{prophierarchy}]
The first part of the proof relies on
constructing  the graded vector space $T_\bullet$ %underlying the cochain complex:
%\begin{center}
%\begin{tikzcd}[column sep=1cm,row sep=0.4cm]
%\ldots\ar[r,"\partial_{-3}"]&T_{-3}\ar[r,"\partial_{-2}"]&T_{-2}\ar[r,"\partial_{-1}"]&V[1]\ar[r,"\Theta"]&\mathfrak{g}\ar[r,"-\eta(-;\Theta)"]& R_\Theta[-1]
%\end{tikzcd}
%\end{center}
by taking the quotient of  the free graded Lie algebra $F_\bullet=\mathrm{Free}(V[1])$ by a graded ideal $K_\bullet=\bigoplus_{i\geq2} K_{-i}$. More precisely, for every $i\geq2$:
$$T_{-i}=\bigslant{F_{-i}}{K_{-i}}$$
%and\footnote{Quotienting by a graded ideal is a little more subtle than in the ungraded case \cite{Weibel?}.}
and  $T_{-1}=V[1]$ of course.
This will be done by using Lemma \ref{snake}, where we  justify that the lines are exact by invoking Lemma \ref{lemmaexact}.
%To keep the analogy with Lie algebra crossed modules where $\mathfrak{c}$ is a $\mathfrak{g}$-module, we want $T_\bullet$ to be a graded $\mathfrak{g}$-module.The choice of ideal will be such that the negatively graded vector space $T_\bullet$ obtained from the quotient operation would satisfy the following condition: when $V$ is a Lie algebra, then $T_\bullet=V[1]$, that is to say: all spaces of degree strictly lower than $-1$ would vanish. However, in the case in which $V$ would be a Leibniz algebra with non trivial symmetric bracket, these spaces  certainly would not vanish. 
%The third item implies in turn the following remark: 

Given this strategy, %the first condition in Proposition \ref{prophierarchy} will be satisfied by choosing a graded ideal $K_\bullet$ of $F_\bullet$ that is also a $\mathfrak{g}$-submodule.
let us first define $T_{-2}$ as a quotient of $\Lambda^2(F_{-1})\simeq S^2(V)$. %Denote by $\mathfrak{h}$ the image of $\Theta$ in $\mathfrak{g}$.
The kernel $\mathrm{Ker}\big(\{\,.\,,.\,\}\big)$ of the symmetric bracket on $V$ is a subspace of $S^2(V)$; as seen from Equation \eqref{hera0}, it is a $\mathfrak{h}$-module but not necessarily a $\mathfrak{g}$-module. However, it admits $\mathfrak{g}$-submodules, and among them one is the biggest, that we denote by $K$. %Hence, when the Lie-Leibniz triple is stringent, $K=\mathrm{Ker}\big(\{\,.\,,.\,\}\big)$.
We then set $T_{-2}$ as the following quotient: 
\begin{equation*}
T_{-2}:=\left(\bigslant{S^2(V)}{K}\right)[2]
\end{equation*}
Here, the desuspension operator $[1]$ is applied twice so that elements of $T_{-2}$ are considered to carry   degree $-2$.
 Identifying $\Lambda^2(F_{-1})$ and $S^2(V)[2]$, and setting $K_{-2}=K[2]$, the vector space $T_{-2}$ can moreover be identified with:
\begin{equation*}
T_{-2}=\bigslant{F_{-2}}{K_{-2}}
\end{equation*}
since $\Lambda^2(F_{-1})$ is by construction canonically isomorphic to $F_{-2}$.
This vector space is a $\mathfrak{g}$-module, inheriting this structure from the canonical quotient map $p:F_{-2}\to T_{-2}$. Recalling that $F_{-1}=T_{-1}=V[1]$, let us define $q_{-2}:\Lambda^2(T_{-1})\to T_{-2}$ to be the following composite map, which is by construction $\mathfrak{g}$-equivariant: \begin{center}
\begin{tikzpicture}
\matrix(a)[matrix of math nodes, 
row sep=5em, column sep=6em, 
text height=1ex, text depth=0.25ex] 
{&F_{-2}\\
\Lambda^2(T_{-1})&T_{-2}\\} ; 
\path[->>](a-2-1) edge node[above]{$q_{-2}$} (a-2-2); %node[belowright]{$\varphi$}  (a-1-2); 
\path[->](a-1-2) edge node[right]{$2\,p$} (a-2-2);
\path[->](a-2-1) edge node[above left]{$-d_2$} (a-1-2);
\end{tikzpicture}
\end{center}

\begin{remark}
  This choice  of $T_{-2}$ satisfies all required conditions: it is a $\mathfrak{g}$-module and it is zero if and only if $V$ is a Lie algebra, since then $\{\,.\,,.\,\}=0$ and $K= S^2(V)$. In particular, it is precisely for this last reason that we did not pick up \emph{any} $\mathfrak{g}$-submodule of $\mathrm{Ker}\big(\{\,.\,,.\,\}\big)$, but \emph{only} the biggest such.
\end{remark}

%twice the quotient map defined above. %That is to say:
%\begin{center}
%\begin{tikzcd}[column sep=1cm,row sep=1cm]
%\Lambda^2(T_{-1})\ar[r,"q_{-1}=2p"]&\bigslant{\Lambda^2(T_{-1})}{K_{-2}}\ar[r]&0
%\end{tikzcd}
%\end{center}
%The relevance of the coefficient $2$ in the definition of $q_{-2}$ will be explained in Subsection \ref{structure}.
%The presence of the factor 2 is a convention that makes certain calculations much simpler and consistent with the definition of maps in subsection \ref{structure}.

Let us now use Lemmas \ref{lemmaexact} and \ref{snake} to define $T_{-3}$. %Since $T_{-1}=F_{-1}=V[1]$, t
Since $K_{-2}$ is a submodule of $F_{-2}$, there is a canonical embedding $u_{-3}:F_{-1}\otimes K_{-2}\to F_{-1}\wedge F_{-2}$.
This makes the following diagram:
\begin{center}
\begin{tikzcd}[column sep=1cm,row sep=1cm]
0\ar[r]&\ar[d,"u_{-3}"] F_{-1}\otimes K_{-2}\ar[r,"\mathrm{id}"]&\ar[d, dashed, "k_{-3}"] F_{-1}\otimes K_{-2}\ar[r]&0\\
\Lambda^3(F_{-1})\ar[r,"-d_3"]&F_{-1}\wedge F_{-2}\ar[r,"-d_2"]&F_{-3}\ar[r]&0
\end{tikzcd}
\end{center}
commutative with exact rows (the bottom row is exact by Lemma \ref{lemmaexact}). We appended a minus sign in front of the linear maps $d_2$ and $d_3$ because the definition of the Chevalley-Eilenberg differential $d_2$ is minus the Lie bracket defining $F_{-3}$ (see Equation \eqref{CEdiff2}). 
The map $k_{-3}:F_{-1}\otimes K_{-2}\to F_{-3}$   is the composite $-d_2\circ u_{-3}$, making this diagram commutative. 
As a composition of $\mathfrak{g}$-equivariant morphisms, the map $k_{-3}$ is itself $\mathfrak{g}$-equivariant, so that $\mathrm{Im}(k_{-3})$
%its image 
defines a $\mathfrak{g}$-submodule $K_{-3}$ of $F_{-3}$.

Since $T_{-1}=F_{-1}$ the quotient of $F_{-1}\wedge F_{-2}$ by $F_{-1}\wedge K_{-2}$ is $T_{-1}\wedge T_{-2}$, so by Lemma \ref{snake} we deduce that:
\begin{equation}
\begin{tikzcd}[column sep=1cm,row sep=1cm]
\Lambda^3(F_{-1})\ar[r]&T_{-1}\wedge T_{-2}\ar[r]&\bigslant{F_{-3}}{K_{-3}}\ar[r]&0
\end{tikzcd}
\end{equation}
is an exact sequence. Here, we adopted the same convention for $T_{-1}\wedge T_{-2}$ as we did for $F_{-1}\wedge F_{-2}$, so that it can be identified with $T_{-1}\otimes T_{-2}$.   We set:
\begin{equation*}
T_{-3}:=\bigslant{F_{-3}}{K_{-3}}
\end{equation*}
and we denote by  $q_{-3}:T_{-1}\wedge T_{-2}\to T_{-3}$ the  map induced from $-d_2|_{F_{-1}\wedge F_{-2}}$ by Lemma \ref{snake}, i.e. the unique map such that the following diagram is commutative, with exact rows:
\begin{center}
\begin{tikzcd}[column sep=1cm,row sep=1cm]
0\ar[r]&\ar[d,"u_{-3}"] F_{-1}\otimes K_{-2}\ar[r,"\mathrm{id}"]&\ar[d, "k_{-3}"] F_{-1}\otimes K_{-2}\ar[r]&0\\
\Lambda^3(F_{-1})\ar[r,"-d_3"]&\ar[d]F_{-1}\wedge F_{-2}\ar[r,"-d_2"]&\ar[d]F_{-3}\ar[r]&0\\
\Lambda^3(F_{-1})\ar[r]&T_{-1}\wedge T_{-2}\ar[r,"q_{-3}"]&T_{-3}\ar[r]&0
\end{tikzcd}
\end{center}
The bottom vertical rows are the quotient maps induced by $u_{-3}$ and $k_{-3}$.
 Elements of $T_{-3}$ are considered to carry  homogeneous degree $-3$. By construction, $T_{-3}$ is a $\mathfrak{g}$-module and $q_{-3}$ is $\mathfrak{g}$-equivariant. Finally, if $V$ is a Lie algebra, then $T_{-2}=0$ which implies in turn that $T_{-3}=0$ as well, as required by item 3. of Proposition~\ref{prophierarchy}.

The next step is not as straightforward and illustrates what happens for higher degrees; for this reason, it is worth its own treatment. 
%Given that $T_{-2}=\bigslant{F_{-2}}{K_{-2}}$, we deduce that $\Lambda^2(F_{-2})$ is (non canonically) isomorphic to the direct sum $T_{-2}\wedge T_{-2}\,\oplus\, (T_{-2}\otimes K_{-2})\,\oplus\, K_{-2}\wedge K_{-2}$. 
Since $K_{-2}$ (resp. $K_{-3}$) is a well-defined subspace of $F_{-2}$ (resp. $F_{-3}$), %a choice of a complementary subspace to $K_{-2}$ in $F_{-2}$ -- with which $T_{-2}$ could be identified -- 
one can define a linear map $u_{-4}:F_{-1}\otimes K_{-3}\,\oplus\, F_{-2}\otimes K_{-2}\to \Lambda^2(F_\bullet)|_{-4}$ as the composition of the following two maps:
\begin{equation*}
(F_{-1}\otimes K_{-3})\,\oplus\, (F_{-2}\otimes K_{-2})\xhookrightarrow{\hspace{2cm}}  (F_{-1}\otimes F_{-3})\,\oplus\, (F_{-2}\otimes F_{-2}) \,\oplus\, (F_{-3}\otimes F_{-1})\xrightarrow{\hspace{2cm}} \Lambda^2(F_\bullet)|_{-4}
\end{equation*}
The first arrow is the canonical inclusion, whereas the second arrow is the quotient map defining the exterior algebra. The action of $u_{-4}$ can be explicitly given by the following definition:
\begin{equation*}
u_{-4}(x\otimes y)=x\wedge y\hspace{1cm}\text{for every}\quad x\otimes y\in (F_{-1}\otimes K_{-3})\,\oplus\, (F_{-2}\otimes K_{-2})
\end{equation*} Then, by construction, the map $u_{-4}$ is injective. This implies in turn that it is $\mathfrak{g}$-equivariant because $K_{-2}$ and $K_{-3}$ are $\mathfrak{g}$-modules. %Given that $T_{-2}=\bigslant{F_{-2}}{K_{-2}}$, one can deduce that $\Lambda^2(F_{-2})$ is (non canonically) isomorphic to the direct sum $T_{-2}\wedge T_{-2}\,\oplus\, F_{-2}\otimes K_{-2}$. The isomorphism depends on a choice of complementary subspace to $K_{-2}$ in $F_{-2}$. It means however that
%Even if the choice of complementary subspace to $K_{-2}$ is arbitrary, the cokernel of $u_{-4}$ is always canonically isomorphic to $T_{-2}\wedge T_{-2}\oplus T_{-1}\otimes T_{-3}$ (recall that $T_{-1}=F_{-1}$).
We have the following commutative diagram:
\begin{center}
\begin{tikzcd}[column sep=1cm,row sep=1cm]
0\ar[r]&\ar[d, "u_{-4}"] \parbox{3cm}{\centering $F_{-1}\otimes K_{-3}$\\$\oplus\, F_{-2}\otimes K_{-2}$}\ar[r,"\mathrm{id}"]&\ar[d, dashed, "k_{-4}"] \parbox{3cm}{\centering $F_{-1}\otimes K_{-3}$\\$\oplus\, F_{-2}\otimes K_{-2}$}\ar[r]&0\\
\Lambda^3(F_{-1}\oplus F_{-2})|_{-4}\ar[r, "-d_3"]&\parbox{2cm}{\centering $F_{-1}\wedge F_{-3}$\\ $\oplus\, \Lambda^2(F_{-2})$}\ar[r, "-d_2"]&F_{-4}\ar[r]&0
\end{tikzcd}
\end{center}

Notice that the bottom row is exact by Lemma \ref{lemmaexact}. 
%where $-d_2$ sends the chosen complementary subspace to 0 and hence the map is independent of the choice of complement and restricts to the free Lie algebra structure on    $F_{-1}\otimes F_{-3}$ $\oplus\, \Lambda^2(F_{-2})$.
The dashed vertical arrow $k_{-4}$ is the composite $-d_2\circ u_{-4}$  making this diagram commutative.
As a composition of $\mathfrak{g}$-equivariant morphisms, this map is $\mathfrak{g}$-equivariant. Its image is thus a $\mathfrak{g}$-submodule of $F_{-4}$, denoted $K_{-4}$. By construction, this space is uniquely defined.
%Although the image of the map $u_{-4}$ depends on the choice of complementary subspace to $K_{-2}$ into $F_{-2}$, the image of $k_{-4}$ does not. %cokernel of the map $u_{-4}$ is canonically isomorphic to 
%This can be shown as follows: first, even if the choice of complementary subspace to $K_{-2}$ is arbitrary, the cokernel of $u_{-4}$ is always canonically isomorphic to $T_{-2}\wedge T_{-2}\oplus T_{-1}\otimes T_{-3}$.
Then, recalling that $T_{-1}=F_{-1}$, one can quotient the bottom row by the respective images of $u_{-4}$ and $k_{-4}$ and obtain:
\begin{equation}\label{sequence1}
\begin{tikzcd}[column sep=1cm,row sep=1cm]
\Lambda^3(F_{-1}\oplus F_{-2})|_{-4}\ar[r]&T_{-1}\wedge T_{-3}\,\oplus \,\Lambda^2(T_{-2})\ar[r]&\bigslant{F_{-4}}{K_{-4}}\ar[r]&0
\end{tikzcd}
\end{equation}
%Notice that we have here adopted the same convention for $T_{-1}\wedge T_{-3}$ that we had chosen for $F_{-1}\wedge F_{-3}$. 
We set:
\begin{equation*}
T_{-4}:=\bigslant{F_{-4}}{K_{-4}}
\end{equation*}
so that elements of $T_{-4}$ are considered to carry  homogeneous degree $-4$.
We call $q_{-4}:T_{-1}\wedge T_{-3}\,\oplus \,\Lambda^2(T_{-2})\to T_{-4}$ the  map %canonically 
induced from $-d_2|_{\Lambda^2(F_\bullet)|_{-4}}$ by the quotient.  $T_{-4}$ inherits the quotient $\mathfrak{g}$-module structure and $q_{-4}$ is by construction $\mathfrak{g}$-equivariant.  By Lemma \ref{snake}  the sequence \eqref{sequence1} is exact. It implies in particular that $q_{-4}$ is surjective.
Moreover, if $V$ is a Lie algebra, then  $T_{-2}=0$ and $T_{-3}=0$, which implies that $T_{-4}=0$ as desired in that case. %Notice also that exactness of the sequence \eqref{sequence1} is conserved in the following one:
%\begin{equation}
%\begin{tikzcd}[column sep=1cm,row sep=1cm]
%\Lambda^3(T_{-1}\oplus T_{-2})|_{-4}\ar[r]&T_{-1}\wedge T_{-3}\,\oplus \,\Lambda^2(T_{-2})\ar[r]&\bigslant{F_{-4}}{K_{-4}}\ar[r]&0
%\end{tikzcd}
%\end{equation}
%This fact will be used in subsection \eqref{structure}.

Now, let $n\geq4$ and assume that the spaces $T_{-i}$, for $4\leq i\leq n$, have been defined as  quotients:
\begin{equation*}
T_{-i}=\bigslant{F_{-i}}{K_{-i}}
\end{equation*}
where $K_{-i}$ is the $\mathfrak{g}$-submodule of $F_{-i}$ (of degree $-i$) corresponding to the canonical image of $\bigoplus_{j=1}^{i-2} F_{-j}\otimes K_{-i+j}$ in $F_{-i}$, as the following commutative diagram (with exact rows) shows:

%\tjg{Suggestion on notation: If we change $\bigoplus_{j=1}^{i-2}T_{-j}\oplus\bigoplus_{j=2}^{i-1}K_{-j}\big)$ to  $\bigoplus_{j=1}^{i-2}T_{-j}\oplus\bigoplus_{j=1}^{i-2}K_{-j+1}\big)$ then we could simplify to $\bigoplus_{j=1}^{i-2}(T_{-j}\oplus K_{-j+1}\big)$}
\begin{center}
\begin{tikzcd}[column sep=1cm,row sep=1cm]
0\ar[r]&\ar[d, "u_{-i}"]\bigoplus_{j=1}^{i-2} F_{-j}\otimes K_{-i+j} \ar[r,"\mathrm{id}"]&\ar[d, "k_{-i}"] \bigoplus_{j=1}^{i-2} F_{-j}\otimes K_{-i+j} \ar[r]&0\\
\Lambda^3\big(\bigoplus_{j=1}^{i-2} F_{-j}\big)\big|_{-i}\ar[r, "-d_3"]&\Lambda^2\big(\bigoplus_{j=1}^{i-1} F_{-j}\big)\big|_{-i}\ar[r, "-d_2"]&F_{-i}\ar[r]&0
\end{tikzcd}
\end{center}
where the (injective) map $u_{-i}$ is the composition of the two following maps:
\begin{equation*}
\bigoplus_{j=1}^{i-2} F_{-j}\otimes K_{-i+j} 
\xhookrightarrow{\hspace{2cm}}  \bigoplus_{j=1}^{i-1} F_{-j}\otimes F_{-i+j} 
\xrightarrow{\hspace{2cm}} \Lambda^2(F_\bullet)|_{-i}
\end{equation*}
and where $k_{-i}$ is the unique linear making the diagram commutative.  We can assume that both $u_{-i}$ and $k_{-i}$ are $\mathfrak{g}$-equivariant. Moreover, if $V$ is a Lie algebra, assume that every $T_{-i}=0$, for $2\leq i\leq n$. %The vertical morphism on the left is not unique because it depends on the choice of an embedding of $\Lambda^2\big(\bigoplus_{j=2}^{i-2}T_{-j}\oplus\bigoplus_{j=2}^{i-2}K_{-j}\big)|_{-i}$ into $\Lambda^2\big(\bigoplus_{j=2}^{i-2} F_{-j}\big)|_{-i}$. However, whatever choice of such embedding we make, the morphism $k_{-i}$ is well defined/unique.

Let us show that these data uniquely define a vector space $T_{-n-1}$ of degree $-(n+1)$ satisfying the same conditions, one level higher. We will then use this induction to conclude the proof of Proposition \eqref{prophierarchy}. %Since for every $2\leq i\leq n$, the submodule $K_{-i}$ canonically injects itself into $F_{-i}$:
%First, one has to choose an embedding of $\Lambda^2\big(\bigoplus_{i=1}^{n-1}T_{-i}\oplus \bigoplus_{i=2}^{n}K_{-i}\big)|_{-n-1}$ into $\Lambda^2\big(\bigoplus_{i=1}^{n} F_{-i}\big)|_{-n-1}$, say $u_{-n-1}$. It is non-canonical because one has to pick up a complementary subspace in each $F_{-i}$ to which $T_{-i}$ would be sent. 
The finite direct sum $\bigoplus_{j=1}^{n-1} F_{-j}\otimes K_{-(n+1)+j}$ is a subspace of $\bigoplus_{j=1}^{n-1} F_{-j}\otimes F_{-(n+1)+j}$ and then, this induces a unique linear map $u_{-n-1}:\bigoplus_{j=1}^{n-1} F_{-j}\otimes K_{-(n+1)+j}\to\Lambda^2(F_\bullet)|_{-n-1}$ through the composition:
\begin{equation*}
\bigoplus_{j=1}^{n-1} F_{-j}\otimes K_{-(n+1)+j}\xhookrightarrow{\hspace{2cm}} 
\bigoplus_{j=1}^{n-1} F_{-j}\otimes F_{-(n+1)+j} \xrightarrow{\hspace{2cm}}
\Lambda^2\big(\bigoplus_{i=1}^{n} F_{-i}\big)\big|_{-n-1}
\end{equation*}
The map $u_{-n-1}$ is injective and a morphism of $\mathfrak{g}$-modules.
One can then define a commutative diagram:
\begin{center}
\begin{tikzcd}[column sep=1cm,row sep=1cm]
0\ar[r]&\ar[d, "u_{-n-1}"] \bigoplus_{j=1}^{n-1} F_{-j}\otimes K_{-(n+1)+j}\ar[r,"\mathrm{id}"]&\ar[d, dashed, "k_{-n-1}"] \bigoplus_{j=1}^{n-1} F_{-j}\otimes K_{-(n+1)+j}\ar[r]&0\\
\Lambda^3\big(\bigoplus_{i=1}^{n-1} F_{-i}\big)\big|_{-n-1}\ar[r, "-d_3"]&\Lambda^2\big(\bigoplus_{i=1}^{n} F_{-i}\big)\big|_{-n-1}\ar[r, "-d_2"]&F_{-n-1}\ar[r]&0
\end{tikzcd}
\end{center}
The dashed vertical line on the right is defined as the unique linear map -- denoted $k_{-n-1}$ -- making the diagram commutative. %We call $k_{-n-1}$ %:\Lambda^2(\bigoplus_{i=1}^{n}K_{-i})|_{-n-1}\to F_{-n-1}$ 
%this map.
 As a composition of two $\mathfrak{g}$-equivariant morphisms, it is $\mathfrak{g}$-equivariant. %Although the choice of $u_{-n-1}$ is not unique, the induced morphism $k_{-n-1}$ is.
The image of this map  uniquely defines a $\mathfrak{g}$-submodule  of $F_{-n-1}$ that we call $K_{-n-1}$. Since by Lemma \ref{lemmaexact}  the bottom row of the above diagram is exact, Lemma \ref{snake} implies that its quotient by the above row is an exact sequence:
\begin{equation}\label{sequence2}
\begin{tikzcd}[column sep=1cm,row sep=1cm]
\Lambda^3\big(\bigoplus_{i=1}^{n-1} F_{-i}\big)\big|_{-n-1}\ar[r]&\Lambda^2\big(\bigoplus_{i=1}^{n} T_{-i}\big)\big|_{-n-1}\ar[r,"q_{-(n+1)}"]&\bigslant{F_{-n-1}}{K_{-n-1}}\ar[r]&0
\end{tikzcd}
\end{equation}
 We set:
\begin{equation*}
T_{-n-1}:=\bigslant{F_{-n-1}}{K_{-n-1}}
\end{equation*}
and elements of $T_{-n-1}$ are considered to carry  homogeneous degree $-(n+1)$. We call $q_{-n-1}:\Lambda^2(\bigoplus_{i=1}^{n} T_{-i})|_{-n-1}\to T_{-n-1}$ the  map %canonically 
induced from $-d_2|_{\Lambda^2(F_\bullet)|_{-n-1}}$ by taking the quotient. By construction, $T_{-n-1}$ is a $\mathfrak{g}$-module and $q_{-n-1}$ is $\mathfrak{g}$-equivariant. Exactness of the sequence \eqref{sequence2} implies that $q_{-n-1}$ is surjective.
Thus, if $V$ is a  Lie algebra, since by induction $T_{-i}=0$ for every $2\leq i\leq n$,  we obtain that $T_{-n-1}=0$ as well. We then have proved that the induction hypothesis is true at level $n+1$.

\begin{remark}\label{remarquexacte} Notice that exactness of the sequence \eqref{sequence2} is preserved if one changes each $F_{-i}$ by $T_{-i}$ in the term at the extreme left, so that the following sequence is also exact:
\begin{equation}\label{equationcentrale}
\begin{tikzcd}[column sep=1cm,row sep=1cm]
\Lambda^3\big(\bigoplus_{i=1}^{n-1} T_{-i}\big)\big|_{-n-1}\ar[r]&\Lambda^2\big(\bigoplus_{i=1}^{n} T_{-i}\big)\big|_{-n-1}\ar[r]& T_{-n-1}\ar[r]&0
\end{tikzcd}
\end{equation}
This fact will be used in the proof of Lemma \ref{lemmadiff}. %subsection \eqref{structure}.
\end{remark}

Continuing the induction provides us with a (possibly infinite) graded vector space $T_\bullet=\bigoplus_{i=1}^\infty T_{-i}$ 
which has the following properties:
\begin{enumerate}
    \item Every vector space $T_{-i}$ is a $\mathfrak{g}$-module;
    \item $T_{-1}=V[1]$;
    \item $T_{-i}=0$ for every $i\geq2$ if and only if $V$ is a Lie algebra.
\end{enumerate}
%This is half of what needs to be shown.
So, items 1. and 2. of Proposition \ref{prophierarchy} are partially shown, whereas item 3. is proved. Let us now define a graded Lie bracket on $T_\bullet$ so that $T_\bullet=V[1]\oplus \llbracket T_\bullet, T_\bullet\rrbracket$.
%In particular, if $V$ is a Lie algebra then the graded Lie bracket on $T_\bullet = V[1]$ is zero. This is not contradictory with the fact that $V$ is a Lie algebra since the Lie bracket on $V$ is skew symmetric, whereas the expected graded Lie bracket on $V[1]$ would be symmetric, because element of $V[1]$ have odd degree. 
%and, since $T_{-1}=R_{-1}=V[1]$, the quotient map $\phi:T_{-1}\otimes T_{-2}\to T_{-1}\otimes\Lambda^2(T_{-1})$ is a morphism of $\mathfrak{g}$-modules.
The graded vector space $T_\bullet$ can be seen  as a quotient of  the free graded Lie algebra $F_\bullet=\mathrm{Free}(V[1])$ by the graded $\mathfrak{g}$-submodule $K_\bullet=\bigoplus_{i=2}^\infty K_{-i}$. 
Additionally, by construction, this graded module is a Lie ideal in the free graded Lie algebra $F_\bullet$. This implies in particular that the graded vector space $T_\bullet$ inherits a graded Lie algebra structure that descends from the free graded Lie algebra structure on $F_\bullet$.

The corresponding graded Lie bracket $\llbracket\,.\,,.\,\rrbracket$ can be made explicit: let $q:\Lambda^2(T_\bullet)\to T_{\bullet}$ be the degree 0 map whose restriction to $\Lambda^2(T_\bullet)_{-i}$ coincides with $q_{-i}$ for every $i\geq2$. This condition uniquely defines the map $q$. This bilinear map, being induced from the Chevalley-Eilenberg differential, satisfies the cohomological condition:
\begin{equation}\label{chevalleyeilen}
q^2=0
\end{equation}
Additionally, it is surjective and $\mathfrak{g}$-equivariant, by construction.
%where $\mathrm{Un}(2,p-2)$ are the $(2,p-2)$-unshuffles, and where $\epsilon^\sigma$ is the product of the Koszul sign of the permutation $\sigma$ with its signature, that is: $x_1\wedge\ldots\wedge  x_p=\epsilon^\sigma\,x_{\sigma(1)}\wedge \ldots\wedge x_{\sigma(p)}$. So for example, for  $p=2$ and $\sigma=(1\,2)$, we have  $\epsilon^{(1\,2)}=-(-1)^{|x_1||x_2|}$. %\textbf{Give some details why on $\Lambda^3(T_\bullet)$ the map $q$ is the map induced by quotienting $\Lambda^3(F_\bullet)\longrightarrow\Lambda^2(T_\bullet)$ by the ideal $K_\bullet$ and hence have the same image, making the following sequence exact:}
%\begin{center}
%\begin{tikzcd}[column sep=1cm,row sep=1cm]
%\Lambda^3(T_\bullet)|_{-i}\ar[r]&\Lambda^2(T_\bullet)|_{-i}\ar[r]&T_{-i}\ar[r]&0
%\end{tikzcd}
%\end{center}
%\tgr{I'll take a look at it also}
Then, for every two homogeneous elements $x,y\in T_\bullet$, one has:
\begin{equation}\label{crochet1}
    \llbracket x,y\rrbracket=q(x,y)
\end{equation}
%More precisely, it means that $q(x,y)\in T_{|x|+|y|}$, %$\llbracket x,y\rrbracket=q_{|x|+|y|}(x,y)$,
%where $|x|,|y|\leq-1$ are the respective degrees of the homogeneous elements $x$ and $y$.
This bracket is a (degree 0) graded bracket -- i.e. % it is graded:
$\llbracket x,y\rrbracket \in F_{|x|+|y|}$, 
and it is graded skew symmetric:
\begin{equation}
    \llbracket x,y\rrbracket=-(-1)^{|x||y|}\llbracket y,x\rrbracket
\end{equation}
Moreover, it satisfies the Jacobi identity -- which is equivalent to $q$ satisfying Equation \eqref{chevalleyeilen}. %is induced by the Lie bracket of $F_\bullet$.
Since $q$ is surjective on $T_{-i}$ for every $i\geq2$, the condition $T_\bullet=V[1]\oplus \llbracket T_\bullet, T_\bullet\rrbracket$ is automatically satisfied.  Item 2. of Proposition \ref{prophierarchy} is proved, there remains the last part of item 1.

Let us call $\rho_{-i}:\mathfrak{g}\otimes T_{-i}\to T_{-i}$ the map representing the inherited action of $\mathfrak{g}$ on $T_{-i}$,
and let $\rho:\mathfrak{g}\otimes T_\bullet \to T_\bullet$ be the degree 0 map restricting to $\rho_{-i}$ on $\mathfrak{g}\otimes T_{-i}$. Alternatively, $\rho$ can of course be seen as a degree 0  graded Lie algebra morphism  from $\mathfrak{g}$ to $\mathrm{End}(T_\bullet)$.
Then, since $q$ is $\mathfrak{g}$-equivariant, it means that the action of $\mathfrak{g}$ on $T_\bullet$ is a derivation of the Lie bracket, i.e. the representation $\rho:\mathfrak{g}\to\mathrm{End}(T_\bullet)$  actually takes values in $\mathrm{Der}(T_\bullet)$. This concludes the proof of Proposition \ref{prophierarchy}.\end{proof}

\begin{remark}
The graded Lie algebra structure defined on $T_\bullet$ is not independent of the Leibniz algebra structure on $V$ since we defined $T_{-2}$ as a quotient of $S^2(V)$ by the biggest $\mathfrak{g}$-submodule $K$ of $\mathrm{Ker}\big(\{\,.\,,.\,\}\big)\subset S^2(V)$. By induction, the dependence on the Leibniz structure has thus been  propagated implicitly along the whole structure of $T_\bullet$, in particular through the maps $q_{-i}$. In particular, semi-strict Lie-Leibniz triples satisfy: $$T_{-2}=\left(\bigslant{S^2(V)}{\mathrm{Ker}\big(\{\,.\,,.\,\}\big)}\right)[2]$$
\end{remark}

\begin{remark}
The graded Lie algebra obtained in Proposition \ref{prophierarchy} coincides with the one defined in Lemma 3.13 in \cite{lavau:TH-Leibniz}. This follows from: 1. the choice for $T_{-1}$ and $T_{-2}$ are identical in both constructions and 2. the map $q$ in the present section is dual to the map $\pi$ defined in \cite{lavau:TH-Leibniz}. The surjectivity of $q$ hence corresponds to the injectivity of $\pi$, then the spaces of both hierarchies coincide. Notice that a typo has unfortunately been left in the final, published version of the afore mentioned paper \cite{lavau:TH-Leibniz}: in Definition 2.13, one should read $H^2_{\mathrm{CE}}(\mathfrak{g})=\bigslant{\wedge^2(\mathfrak{g}_{-1})^*}{d_{\mathrm{CE}}(\mathfrak{g}_{-2})^*}$. \end{remark}

\subsection{Existence and uniqueness of the differential}\label{structure}

From now on, in order to avoid any misunderstanding, we will denote by $T_{\leq -1}$ the graded vector space $T_\bullet$ constructed in Proposition \ref{prophierarchy}, and more generally we will write $T_{\leq-i}=\bigoplus_{k\geq i}T_{-k}$. The direct sum $T_\bullet\oplus \mathfrak{g}$ is thus denoted  $T_{\leq0}$ while $T_\bullet\oplus \mathfrak{g}\oplus R_{\Theta}[-1]$ is denoted $T_{\leq+1}$, where the space $T_{+1}=R_\Theta[-1]$ is understood as the space $R_\Theta$, shifted by  degree $+1$.
 Additionally, for brevity sometimes we will write $\mathbb{T}$ in order  to denote $T_{\leq+1}$.
 
This subsection is dedicated to building a (non-trivial) differential $\partial=(\partial_{-i})_{i\geq1}$ on  the graded vector space $T_{\leq-1}$ so that it eventually extends to the right as a cochain complex:
\begin{center}
\begin{tikzcd}[column sep=1cm,row sep=0.4cm]
%\ldots\ar[r,"\partial_{-4}"]&T_{-4}\ar[r,"\partial_{-3}"]&T_{-3}\ar[r,"\partial_{-2}"]&T_{-2}\ar[r,"\partial_{-1}"]&V[1]\ar[r,"\partial_0=\Theta"]&\mathfrak{g}
\ldots\ar[r,"\partial_{-3}"]&T_{-3}\ar[r,"\partial_{-2}"]&T_{-2}\ar[r,"\partial_{-1}"]&V[1]\ar[r,"\Theta"]&\mathfrak{g}\ar[r,"-\eta(-;\Theta)"]& R_\Theta[-1]
\end{tikzcd}
\end{center}
The differential will be constructed precisely so that it is compatible with the graded Lie algebra structure on $T_{\leq-1}$ defined in Proposition \ref{prophierarchy}. However we postpone to  subsection \ref{patching} the proof that such objects equip the graded vector space $\mathbb{T}$ with a differential graded Lie algebra structure.
 The following proposition then provides existence and unicity of the desired differential: %Denote by $\mathbb{T}$ the graded vector space concentrated in degrees lower than or equal to $+1$ defined by $\mathbb{T}=  T_{\leq0}\oplus R_\Theta[-1]$. 
  %The grading of $T_{\leq0}$ is left unchanged, hence $T_0=\mathfrak{g}$ and $T_{-1}=V[1]$, etc. 
% This would emphasize its negative grading and moreover allows extension to higher degrees.  
%As the following proposition shows, we can define a unique differential on $T_{\leq -1}$ which  is \emph{almost} compatible with the graded Lie algebra structure on $T_{\leq-1}$ defined in Proposition \ref{prophierarchy}:
%The following proposition is the main result of t

%Now let us turn to the proof of the existence and uniqueness of a well-defined family of linear maps $$\partial_{-i}:T_{-i-1}\to T_{-i}$$ satisfying
%item 6. of Definition \ref{wooo} as well as the condition
%$\partial_{-i}\circ \partial_{-i-1}=0$, for every $i\geq1$:

% and being (almost) compatible with the graded Lie bracket.

\begin{prop}\label{propdiff}
Let $(\mathfrak{g},V,\Theta)$ be a Lie-Leibniz triple, and let $(T_{\leq-1},\llbracket\,.\,,.\,\rrbracket)$ be the negatively graded Lie algebra associated to $(\mathfrak{g},V,\Theta)$ by Proposition \ref{prophierarchy}. Then there exists a unique family of degree $+1$  $\mathfrak{h}$-equivariant linear maps $\partial=\big(\partial_{-i}:T_{-i-1}\to T_{-i}\big)_{i\geq1}$ %extending as a derivation of $\Lambda^\bullet(T_{\leq-1})$
 and satisfying the following conditions:
 \begin{align}
    \partial\big(\llbracket u,v\rrbracket\big)&=  2\{u,v\}\label{commutator0}\\
   \partial\big(\llbracket u,x\rrbracket\big)&=  \rho(\Theta(u);(x)) - \llbracket u, \partial(x)\rrbracket\label{commutator}\\
     \partial\big(\llbracket x,y\rrbracket\big) &= \llbracket \partial(x), y\rrbracket+(-1)^{|x|}\llbracket x,\partial(y)\rrbracket\label{commutatorbis}
 \end{align}
 for every $u,v\in T_{-1}$ and $x,y\in T_{\leq-2}$.
 Moreover, the degree $+1$ linear map $\partial=\big(\partial_{-i}:T_{-i-1}\to T_{-i}\big)_{i\geq1}$ is a differential, i.e. for every $i\geq1$,
\begin{equation}\partial_{-i}\circ \partial_{-i-1}=0.\label{differentialzero}
\end{equation}

\begin{remark}
The differential defined in Proposition \ref{propdiff} does not turn $T_{\leq-1}$ into a differential graded Lie algebra, but when one appends $T_0=\mathfrak{g}$ and $T_{+1}=R_\Theta[-1]$ to $T_{\leq-1}$ with adequate graded Lie brackets, the total graded vector space $\mathbb{T}=T_{\leq+1}$ becomes a differential graded Lie algebra, as is shown in Proposition \ref{prop2}.
\end{remark}

 %Moreover, for every $i\geq1$, the linear map $\partial_{-i}$, as an element of $\mathrm{Hom}(T_{-i-1},T_{-i})$ on which $\mathfrak{g}$ acts, generates a $\mathfrak{g}$-submodule $R_{-i}$ of $\mathrm{Hom}(T_{-i-1},T_{-i})$ which is isomorphic to a quotient of $R_\Theta$.
  %this linear map transforms in the same representation as the embedding tensor $\Theta$ \textbf{Eventually improve the formulation of this sentence}.
\end{prop}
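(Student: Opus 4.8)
The plan is to build $\partial$ by induction on the degree, exploiting that $T_{\leq-1}$ is generated by $T_{-1}=V[1]$ under the bracket (item~2 of Proposition~\ref{prophierarchy}). \emph{Uniqueness} then comes for free: since every element of $T_{-i-1}$ (for $i\geq1$) is a sum of brackets, the three rules \eqref{commutator0}, \eqref{commutator} and \eqref{commutatorbis} express $\partial_{-i}$ on such an element entirely in terms of $\partial$ in strictly lower degrees, so at most one family $(\partial_{-i})$ can satisfy them. The substance is therefore \emph{existence}, i.e. checking that these forcing formulas are well defined. For the base case I would define $\partial_{-1}\colon T_{-2}\to T_{-1}$ by \eqref{commutator0}. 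Recall $T_{-2}=\left(\bigslant{S^2(V)}{K}\right)[2]$ with $K\subseteq\mathrm{Ker}(\{\,.\,,.\,\})$ the largest $\mathfrak{g}$-submodule, and that $q_{-2}$ is, up to the factor $2$, the quotient map, so that $\mathrm{Ker}(q_{-2})=K_{-2}$. The candidate assignment $u\wedge v\mapsto 2\{u,v\}$ vanishes on $K_{-2}$ exactly because $K\subseteq\mathrm{Ker}(\{\,.\,,.\,\})$; hence it descends to a well-defined $\partial_{-1}$. Its $\mathfrak{h}$-equivariance follows from that of the symmetric bracket (vanishing of the right-hand side of \eqref{hera0} for $a\in\mathfrak{h}$) together with the $\mathfrak{g}$-equivariance of $q_{-2}$.

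For the inductive step, assume $\partial$ is defined, $\mathfrak{h}$-equivariant, and satisfies \eqref{commutator0}--\eqref{commutatorbis} on $T_{-k}$ for $k\leq i$. I would assemble the right-hand sides of \eqref{commutator} and \eqref{commutatorbis} into a single bilinear map $\Phi\colon\Lambda^2(T_{\leq-1})|_{-i-1}\to T_{-i}$, defined on the component $T_{-1}\wedge T_{-i}$ by \eqref{commutator} and on the components with both factors in $T_{\leq-2}$ by \eqref{commutatorbis}. The goal is to push $\Phi$ down along the surjection $q_{-i-1}$ of Proposition~\ref{prophierarchy}, defining $\partial_{-i}$ by $\partial_{-i}\circ q_{-i-1}=\Phi$. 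By the exact sequence \eqref{equationcentrale} of Remark~\ref{remarquexacte}, $\mathrm{Ker}(q_{-i-1})$ is the image of the left-hand ($d_3$-type) map $\delta_3$ from $\Lambda^3(\bigoplus_{k=1}^{i-1}T_{-k})|_{-i-1}$; thus $\partial_{-i}$ exists (and is forced) precisely when $\Phi\circ\delta_3=0$.

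This coherence identity is the main obstacle. Evaluating $\Phi\circ\delta_3$ on a generator $x_1\wedge x_2\wedge x_3$ produces, via \eqref{CEdiff2}, the three terms $\Phi\big(\llbracket x_a,x_b\rrbracket\wedge x_c\big)$; expanding each through \eqref{commutator}/\eqref{commutatorbis} and then re-applying the inductive formula to the inner $\partial\llbracket x_a,x_b\rrbracket$ yields a quadratic expression in $\partial$ and $\llbracket\,.\,,.\,\rrbracket$. I expect these contributions to cancel in three groups: the purely bracket-theoretic terms cancel by the graded Jacobi identity for $\llbracket\,.\,,.\,\rrbracket$, equivalently $q^2=0$ of \eqref{chevalleyeilen}; the terms carrying a factor $\rho(\Theta(-);-)$ cancel by the derivation property of $\rho$ on $T_\bullet$ (item~1 of Proposition~\ref{prophierarchy}); and the genuinely new terms, in which two embedding tensors meet, cancel by the quadratic constraint \eqref{eq:equiv}. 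Carrying this out is a careful but mechanical sign-tracking exercise, and the same bookkeeping shows $\Phi$ is $\mathfrak{h}$-equivariant, using that $\Theta$ is $\mathfrak{h}$-equivariant; this propagates $\mathfrak{h}$-equivariance to $\partial_{-i}=\Phi\circ q_{-i-1}^{-1}$.

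Finally, for $\partial^2=0$ I would argue by induction on degree using the now-established derivation rules. Writing any element of $T_{-i-2}$ as a bracket $\llbracket a,b\rrbracket$ and applying \eqref{commutatorbis} twice, the two cross terms proportional to $\llbracket\partial a,\partial b\rrbracket$ cancel by the sign rule $(-1)^{|\partial a|}=-(-1)^{|a|}$, leaving only terms in $\partial^2 a$ and $\partial^2 b$ which vanish by the inductive hypothesis. The one case needing extra care is a bracket $\llbracket u,x\rrbracket$ with $u\in T_{-1}$: here \eqref{commutator} produces the term $\partial\big(\rho(\Theta(u);x)\big)$, which equals $\rho(\Theta(u);\partial x)$ precisely because $\partial$ is $\mathfrak{h}$-equivariant and $\Theta(u)\in\mathfrak{h}=\mathrm{Im}(\Theta)$, and this cancels the contribution of $-\partial\llbracket u,\partial x\rrbracket$. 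This is exactly where $\mathfrak{h}$-equivariance is indispensable, establishing \eqref{differentialzero} and completing the proof.
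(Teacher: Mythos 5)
Your proposal is correct and follows essentially the same route as the paper: you define $\partial_{-1}$ by descending the symmetric bracket to the quotient $T_{-2}$, inductively factor the map $\Phi$ (the paper's $j_{-n-1}=m_{-n-1}+q_{-n-1}\circ\partial_{(-n)}$) through the surjection $q_{-n-2}$ using the exact sequence \eqref{equationcentrale}, and prove $\partial^2=0$ by induction using $\mathfrak{h}$-equivariance and $\Theta\circ\partial_{-1}=0$. The ``mechanical sign-tracking'' you defer is exactly what the paper isolates as the clean identity $m\circ q+q\circ m=0$ (Lemma \ref{lemmacommut}), which combined with the induction hypothesis $\partial\circ q=m+q\circ\partial$ and $q^2=0$ gives the vanishing on $\mathrm{Ker}(q_{-n-2})$ in two lines.
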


\begin{proof} The proof is build on three lemmas  \ref{lemmacommut}, \ref{lemmadiff} and \ref{lemmahandy} involving a bilinear map $m:T_{-1}\wedge T_{\leq-1}\to T_{\leq-1}$ that is a convenient rewriting of the bilinear map $ \rho(\Theta(-);-)$ appearing in the right-hand side of Equation~\eqref{commutator}.
We define it component-wise, so that 
for every $i\geq2$,  let  %$m_{-i}:\Lambda^2(T_{\leq-1})\to T_{\leq-1}$ be the unique degree $+1$ map such that its restriction 
$m_{-i}:T_{-1}\wedge T_{-i}\to T_{-i}$ be the linear map defined by composing $\rho_{-i}$ with $\Theta$:
\begin{center}
\begin{tikzpicture}
\matrix(a)[matrix of math nodes, 
row sep=5em, column sep=5em, 
text height=1.5ex, text depth=0.25ex] 
{T_{-1}\wedge T_{-i}&T_{-i}\\ 
\mathfrak{g}\otimes T_{-i}&\\}; 
\path[->](a-1-1) edge node[above]{$m_{-i}$}  (a-1-2); 
\path[->](a-1-1) edge node[left]{$\Theta\otimes\mathrm{id}$} (a-2-1);
\path[->](a-2-1) edge node[below right]{${\rho_{-i}}$} (a-1-2);
\end{tikzpicture}
\end{center}
As for $i=1$,  we take $m_{-1}=2\,\{\,.\,,.\,\}:\Lambda^2(T_{-1})\to T_{-1}$, as we have the following identity: 
\begin{equation}
m_{-1}(x,y)=\rho_{-1}(\Theta(x);y)+\rho_{-1}(\Theta(y);x)=2\{x,y\} %=m_{-1}(x,y) 
\end{equation}
where $\rho_{-1}$ denotes the action of $\mathfrak{g}$ on $T_{-1}=V[1]$.
Then, the map $m_{-i}$ lifts the action of $\mathfrak{g}$ on $T_{-i}$ to an action of $V$ on $T_{-i}$ and the quadratic constraint \eqref{eq:equiv2} implies that each $m_{-i}$ is $\mathfrak{h}$-equivariant.

Let  $m:\Lambda^2(T_{\leq-1})\to T_{\leq-1}$ be the unique morphism whose restriction to $\Lambda^2(T_{\leq-1})|_{-i-1}$ is $m_{-i}$ (for every $i\geq1$). Therefore, on any subspace of $\Lambda^2(T_{\leq-1})|_{-i-1}$ other than $T_{-1}\wedge T_{-i}$ (for $i\geq2$), the morphism $m$ is the zero morphism.
Both the bilinear map $q$ defined at the end of the proof of Proposition \ref{prophierarchy} and the bilinear map $m$ can be extended to multilinear maps on $\Lambda^p(T_{\leq-1})$ (for every $p\geq3$) as follows:
\begin{align}
{q}(x_1\wedge\ldots\wedge x_p)&=\sum_{\sigma\in \mathrm{Un}(2,p-2)}\,\epsilon^\sigma\,q(x_{\sigma(1)}\wedge x_{\sigma(2)})\wedge x_{\sigma(3)}\wedge \ldots\wedge x_{\sigma(p)}\label{extensionQ}\\
m(x_1\wedge\ldots\wedge x_p)&=\sum_{\sigma\in \mathrm{Un}(2,p-2)}\,\epsilon^\sigma\,m(x_{\sigma(1)}\wedge x_{\sigma(2)})\wedge x_{\sigma(3)}\wedge \ldots\wedge x_{\sigma(p)}\label{extensionM}
\end{align}
where $\mathrm{Un}(2,p-2)$ is the set of  $(2,p-2)$-unshuffles, and $\epsilon^\sigma$ has been defined in Equation \eqref{eqkoszul}. Then, the first Lemma is the following:
%This has the following consequence:

\begin{lemma}\label{lemmacommut}
The following identity holds:
\begin{equation}\label{equationfondamentale8}
m\circ q+q\circ m=0
\end{equation} That is: the following diagram is commutative, for every $i\geq1$:
\begin{center}
\begin{tikzpicture}
\matrix(a)[matrix of math nodes, 
row sep=4em, column sep=3em, 
text height=1.5ex, text depth=0.25ex] 
{&\Lambda^2 (T_{\leq-1})_{-i-1}&T_{-i-1}\\ 
\Lambda^3( T_{\leq-1})_{-i-2}&\wedge^2( T_{\leq-1})_{-i-2}&% T_{-i})_{-i}
\\}; 
\path[->](a-1-2) edge node[above]{${q}$}  (a-1-3); 
\path[->](a-2-1) edge node[below]{${q}$}  (a-2-2); 
\path[->](a-2-1) edge node[above left]{$m$} (a-1-2);
\path[->](a-2-2) edge node[below right]{$-{m}$} (a-1-3);
\end{tikzpicture}
\end{center}
\end{lemma}

\begin{proof}
Let $x\wedge y\wedge z\in \Lambda^3( T_{\leq-1})_{-i-2}$. If $x$, $y$, and $z$ have degree strictly lower than $-1$, then 
\eqref{equationfondamentale8} is trivially satisfied (because $m$ and $-m$ both vanish). 
%\trh{so we can say that now at least one element is in $T_{-1}=V$, and we can say that it is $a$. In particular $|b|+|c|=-i$. %If $|b|=|c|=0$ then $a,b,c\in X_0$, but this does not change the computation below because $\Theta\circ q_0=0$
%\noindent $\diamond$ \textbf{Case 1: $a\in T\_{-1}T_T_{-1}$ and $|b|,|c|\leq-1$ (in particular $|b|+|c|=-i$)}
%\trd{Shall we establish $\Theta=0$ on any element whose degree is strictly lower than 0 when $\Theta$ first defined?} 
Then, one can assume that one element is in $T_{-1}=V[1]$, say $x$, which is then of degree $-1$. In particular $|y|+|z|=-i-1$.
The action of ${q}$ on $x\wedge y\wedge z$ is defined through formula \eqref{extensionQ}:
\begin{equation}\label{applicat}
{q}(x\wedge y\wedge z)=q(x\wedge y)\wedge z-(-1)^{|y||z|}q(x\wedge z)\wedge y+(-1)^{-|y|-|z|}q(y\wedge z)\wedge x
\end{equation}
In the following we will define $\Theta$ to be $0$ on any element whose degree is  lower than $-1$. This will allow us to keep track of those terms even if they formally vanish. Keeping in mind that $|x|=-1$, then applying $m$ on both sides of Equation \eqref{applicat} implies that: %first term on the left hand side of Equation \eqref{equationfondamentale8} reads :
 \begin{align}
 &m({q}(x\wedge y\wedge z))\\
 &=m\Big(-(-1)^{|z|(|y|-1)} z\wedge q(x\wedge y) +(-1)^{2|y||z|-|y|}y\wedge q(x\wedge z) -(-1)^{-|y|-|z|+i+1}x\wedge q(y\wedge z)\Big)\\
&=-(-1)^{|z|(|y|-1)}\Theta(z)\cdot q(x\wedge y) +(-1)^{|y|}\Theta(y)\cdot q(x\wedge z)-\Theta(x)\cdot q(y\wedge z)\\
 &=-(-1)^{|z|(|y|-1)}q\big((\Theta(z)\cdot x)\wedge y+x\wedge (\Theta(z)\cdot y)\big)\\
 &\hspace{1.5cm}+(-1)^{|y|}q\big((\Theta(y)\cdot x)\wedge z+x\wedge (\Theta(y)\cdot z)\big)\\
 &\hspace{2cm}-q\big((\Theta(x)\cdot y)\wedge z+y\wedge (\Theta(x)\cdot z)\big)\\
% &=q_{-i}\big(m_{|b|}(a\wedge b)\wedge c+b\wedge m_{|c|}(a\wedge c)\big)+\cdots\\
 &=-q\Big(m(x\wedge y)\wedge z-(-1)^{|y||z|}m(x\wedge z)\wedge y + (-1)^{|y|+|z|}m(y\wedge z)\wedge x\Big)\\
 &=-q( {m}(x\wedge y\wedge z))
 \end{align}
\noindent as desired (where we used the fact that $q$ is $\mathfrak{g}$-equivariant, thus in particular $\mathfrak{h}$-equivariant). To pass from the antepenultimate line to the penultimate one, we used the fact that $m(y,z)\neq0$ only if neither $y$ nor $z$ have simultaneously their degrees strictly lower than $-1$. Obviously, Equation \eqref{equationfondamentale8} extends to the whole of~$\Lambda(F_\bullet)$.\end{proof} %;  
 %\trd{which implies $\rho_{-i-1}\big(\Theta(\ - \ ), \ \big)$ acts as a derivation }
% in fact $\mathfrak{h}$-equivariance is sufficient).

%First, notice that Equation \eqref{commutator} can be written component wise as:
%\begin{equation}\label{commutator2}
%    m_{-i}=\partial\circ q+q_{-i}\circ \partial
%\end{equation}
%for every $i\geq1$.

Going back to the proof of Proposition \ref{propdiff}, the construction of the degree $+1$ linear endomorphism $\partial=\big(\partial_{-i}:T_{-i-1}\to T_{-i}\big)_{i\geq1}$ is made by induction. 
 First, for $i=1$, the definition of $\partial_{-1}$ is a by-product of the observation that $m_{-1}$ factors through $T_{-2}$: %(which is by construction a quotient of $\Lambda^2(T_{-1})$ by a $\mathfrak{g}$-submodule of $\mathrm{Ker}\big(\{\,.\,,.\,\}\big)$):
 %:= \bigslant{\Lambda^2(T_{-1})}{K}$.
\begin{center}
\begin{tikzpicture}
\matrix(a)[matrix of math nodes, 
row sep=5em, column sep=6em, 
text height=1.5ex, text depth=0.25ex] 
{&T_{-1}\\
\Lambda^2(T_{-1})&T_{-2}\\} ; 
\path[->>](a-2-1) edge node[above]{$q_{-2}$} (a-2-2); %node[belowright]{$\varphi$}  (a-1-2); 
\path[->, dashed](a-2-2) edge node[right]{$\partial_{-1}$} (a-1-2);
\path[->](a-2-1) edge node[above left]{$m_{-1}=2\,\{.\,,.\}$} (a-1-2);
\end{tikzpicture}
\end{center}
The map $\partial_{-1}$ is defined as the unique linear map making  the above triangle commutative, i.e. such that 
\begin{equation}\label{commutator0bis}
\partial_{-1}\circ q(u,v)=m(u,v)
\end{equation}
for every $u,v\in T_{-1}$.
 In particular, it is surjective on the ideal of squares $\mathcal{I}$ and it is $\mathfrak{h}$-equivariant because $\{\,.\,,.\,\}$ has both of these properties. A priori $\partial_{-1}$ is not $\mathfrak{g}$-equivariant because the symmetric bracket $\{\,.\,,.\,\}$ -- equivalently $m_{-1}$ --  is not. By construction, Equation \eqref{commutator0bis} is equivalent to Equation \eqref{commutator0}.

%However it is not $\mathfrak{g}$-equivariant, but it belongs to a $\mathfrak{g}$-submodule of $ R_\Theta$. \textbf{give more details of why}

Now assume that we have constructed a family of $\mathfrak{h}$-equivariant linear maps $\partial_{(-n)}=(\partial_{-i}:T_{-i-1}\to T_{-i})_{1\leq i\leq n}$ up to order $n\geq1$ satisfying the following conditions: $\partial_{(-n)}$ is  the unique linear degree $+1$ endomorphism of $T_{\leq-1}$ restricting to $\partial_{-i}$ on $T_{-i-1}$ for every $1\leq i\leq n$ (in particular it vanishes on $T_{-1}$), extending as a graded derivation on $\Lambda^\bullet(\bigoplus_{i=1}^{n+1} T_{-i})$, and  satisfying the following equation on $\Lambda^\bullet(\bigoplus_{i=1}^{n+1} T_{-i})$:
\begin{equation}\label{commutator2}
    \partial_{(-n)}\circ q=m+q\circ \partial_{(-n)}
\end{equation}
To define the map $\partial_{-n-1}:T_{-n-2}\to T_{-n-1}$ such that it extends Equation \eqref{commutator2} one step further, we first define a degree $+1$ linear map $j_{-n-1}: \Lambda^2(T_{\leq-1})|_{-n-2} \to T_{-n-1}$ as the sum:
\begin{equation}\label{chapeau}
j_{-n-1} := m_{-n-1}  + q_{-n-1}\circ\partial_{(-n)}
\end{equation}
% and extend it $\Lambda^p (T_{\leq-1})$ (for every $p\geq3$) using Equations \eqref{extensionQ} and \eqref{extensionM}:
%as a function ${j}_{-i-1}$:
%\begin{equation}
%{j}_{-n-1}(x_1\wedge\ldots\wedge x_p)=\sum_{\sigma\in \mathrm{Un}(2,p-2)}\,\epsilon^\sigma\,j_{-n-1}(x_{\sigma(1)}\wedge x_{\sigma(2)})\wedge x_{\sigma(3)}\wedge \ldots\wedge x_{\sigma(p)}\end{equation}
%Notice that this map is $\mathfrak{h}$-equivariant since $m_{-n-1}$, $q_{-n-1}$ and $\partial_{(-n)}$ are. As an element of $\mathrm{Hom}(\Lambda^2)$
%\trd{indices to be fixed}
By construction the map $j_{-n-1}$ is $\mathfrak{h}$-equivariant, and allows us to prove the existence and uniqueness of the linear map $\partial_{-n-1}$:
\begin{lemma}\label{lemmadiff}
The map $j_{-n-1}$ factors through $T_{-n-2}$; in fact, there exists a unique $\mathfrak{h}$-equivariant linear map $$\partial_{-n-1}:T_{-n-2}\to T_{-n-1}$$ such that the following triangle is commutative:
\begin{center}
\begin{tikzpicture}
\matrix(a)[matrix of math nodes, 
row sep=5em, column sep=6em, 
text height=1.5ex, text depth=0.25ex] 
{&T_{-n-1}\\ 
\Lambda^2 (T_{\leq-1})|_{-n-2}&T _{-n-2}\\}; 
\path[->>](a-2-1) edge node[above]{$q_{-n-2}$}  (a-2-2); 
\path[->, dashed](a-2-2) edge node[right]{$\partial_{-n-1}$} (a-1-2);
\path[->](a-2-1) edge node[above left]{$j_{-n-1}$} (a-1-2);
\end{tikzpicture}
\end{center}
\end{lemma}

\begin{proof}
%\trd{It would b good to have a morestriking notation instead of br or overline - perhaps mathfrak or just say: We us the same notation for $ d, m etc$;
%the content should make it clear which is meant}

Let $x\in T_{-n-2}$. By surjectivity of $q_{-n-2}$, there exists $y\in \Lambda^2 (T_{\leq-1})|_{-n-2}$ such that $x=q_{-n-2}(y)$. Then one would define:
\begin{equation}\label{eqporto}
\partial_{-n-1}(x)=j_{-n-1}(y)
\end{equation} if  guaranteed that any other
choice of pre-image of $x$ does not change the result, i.e. that  $j_{-n-1}\big(\mathrm{Ker}(q_{-n-2})\big)=0$. 
 Thus, let $w\in \mathrm{Ker}(q_{-n-2})$.  Exactness of the sequence \eqref{equationcentrale}:
 \begin{equation}
\begin{tikzcd}[column sep=1cm,row sep=1cm]
%\Lambda^3\big(\bigoplus_{i=1}^{n} T_{-i}\big)\big|_{-n-2}\ar[r,"q"]&\Lambda^2\big(\bigoplus_{i=1}^{n+1} T_{-i}\big)\big|_{-n-2}\ar[r,"q_{-n-2}"]& T_{-n-2}\ar[r]&0
\Lambda^3(T_{\leq-1})|_{-n-2}\ar[r,"q"]&\Lambda^2(T_{\leq-1})|_{-n-2}\ar[r,"q_{-n-2}"]& T_{-n-2}\ar[r]&0
\end{tikzcd}
\end{equation} implies the equality $\mathrm{Ker}(q_{-n-2})={q}\big(\Lambda^3(T_{\leq-1})|_{-n-2}\big)$. Then there exists $\alpha\in \Lambda^3(T_{\leq-1})|_{-n-2}$ such that ${q}(\alpha)=w$, and by Lemma \ref{lemmacommut}, we have:
\begin{equation}m_{-n-1}(w)=-q\circ {m}(\alpha)\end{equation}
But by the induction hypothesis, $m$ satisfies Equation \eqref{commutator2} on $\Lambda^3(T_{\leq-1})|_{-n-2}$,  %${m}_{(-i)}={\partial}_{(-i)}\circ {q}_{(-i)}-{q}_{(-i+1)}\circ{\partial}_{(-i+1)}$ 
 so we obtain that:
\begin{equation}
m_{-n-1}(w)=-q_{-n-1}\circ{\partial}_{(-n)}\circ{q}(\alpha)+q_{-n-1}\circ{q}\circ\partial_{(-n)}(\alpha)=-q_{-n-1}\circ{\partial}_{(-n)}(w)
\end{equation}
where we used Equation \eqref{chevalleyeilen} to get rid of the second term on the right hand side.
Hence the result: $j_{-n-1}(w)=0$.
$\mathfrak{h}$-equivariance of $\partial_{-n-1}$ is straightforward because $q_{-n-2}$ is $\mathfrak{g}$-equivariant and $j_{-n-1}$ is $\mathfrak{h}$-equivariant, so for any $a\in\mathfrak{h}$ one has:
\begin{equation}\label{eqporto2}
\partial_{-n-1}(a\cdot x)=\partial_{-n-1}(a\cdot q_{-n-2}(y))=\partial_{-n-1}( q_{-n-2}(a\cdot y))=j_{-n-1}(a\cdot y)=a\cdot (j_{-n-1}(y))=a\cdot \partial_{-n-1}(x)
\end{equation}
Notice that a priori $\partial_{-n-1}$ is not $\mathfrak{g}$-equivariant.
%Let $T\in T\_{-1}T_{-i-2}$, then for any two pre-images $y,y'\in S^2(T\_{-1}T_T_{-1}\oplus \ldots\oplus T\_{-1}T_{-i})_{-i-1}$ of $\mathbb{T}$, the element $y-y'$ lives in $\mathrm{Ker}(q_{-i-1})$. By construction of $q_{-i-1}$, we have the equality $\mathrm{Ker}(q_{-i-1})=\mathrm{Im}({q}_{-i}|_{S^3(X_0\oplus \ldots\oplus X_{-i})_{-i}})$.
\end{proof}

The linear map $\partial_{-n-1}$ defined in Lemma \ref{lemmadiff} is unique and has the advantage of extending Equation \eqref{commutator2} to the level $-(n+1)$, by construction. That is to say, setting $\partial_{(-(n+1))}$ to be  the unique linear degree $+1$ endomorphism of $T_{\leq-1}$ that restricts to $\partial_{-i}$ on $T_{-i-1}$ for every $1\leq i\leq n+1$, it extends as a graded derivation on $\Lambda^\bullet(\bigoplus_{i=1}^{n+2} T_{-i})$, and  satisfies the following equation on $\Lambda^\bullet(\bigoplus_{i=1}^{n+2} T_{-i})$:
\begin{equation}\label{commutator3}
    \partial_{(-(n+1))}\circ q=m+q\circ \partial_{(-(n+1))}
\end{equation}
%Moreover, since $j_{-n-1}$ and $q_{-i-1}$ are $\mathfrak{h}$-equivariant, and since $q_{-i-1}$ is moreover surjective, it is a straightforward computation to show that $\partial_{-n-1}$ is also $\mathfrak{h}$-equivariant. 
By induction, this demonstrates the existence and the uniqueness of a degree $+1$ map $\partial$ satisfying the following identity on $\Lambda(T_{\leq-1})$:
\begin{equation}\label{commutator4}
\partial\circ q=m+q\circ \partial
\end{equation}
This equation, when restricted to $\Lambda^2(T_{\leq-1})$,  is equivalent to the three Equations \eqref{commutator0}, \eqref{commutator} and \eqref{commutatorbis}. Indeed, if two elements of $T_{-1}$ are involved in Equation \eqref{commutator4}, then the last term disappears and $m=m_{-1}$ so that we are left with Equation \eqref{commutator0bis}, which is equivalent to Equation~\eqref{commutator0}; if only one element of $T_{-1}$ is involved in Equation~\eqref{commutator4}, then we obtain \eqref{commutator}, while if no element of $T_{-1}$ is involved, $m=0$ and we obtain Equation~\eqref{commutatorbis}. This argument proves the first part of Proposition \ref{propdiff}.

%completes the proof of Proposition~\ref{propdiff}.
%This result shows in particular that the map $m_{-i-1}$ is null-homotopic and satisfies the desired equation, that is: we have proved that Item 5 of the induction hypothesis is satisfied at level $-i-1$. 
%This, together with the $\mathfrak{g}$-invariance of $q_{-i-1}$, implies that Item 1 of the induction hypothesis is satisfied at level $-i-1$. We now have to prove that Item 3 is also satisfied at level $-i-1$ to finish the proof of the inductive step:
Now we will show that this degree $+1$ linear map $\partial$  is a differential. The proof is made by induction and relies on the following lemma:
\begin{lemma}\label{lemmahandy}
The following identity holds:
\begin{equation}
    m\circ \partial+\partial\circ m=0
\end{equation}
That is: the following diagram is commutative for every $i\geq1$:
\begin{center}
\begin{tikzpicture}
\matrix(a)[matrix of math nodes, 
row sep=4em, column sep=3em, 
text height=1.5ex, text depth=0.25ex] 
{&T_{-i}\\ 
\Lambda^2 (T_{\leq-1})|_{-i-1}&T_{-i-1}\\
\Lambda^2 (T_{\leq-1})|_{-i-2}&\\
}; 
\path[->](a-2-2) edge node[right]{$\partial_{-i}$} (a-1-2); 
\path[->](a-3-1) edge node[left]{$\partial_{(-i)}$}  (a-2-1); 
\path[->](a-2-1) edge node[above left]{$-m_{-i}$} (a-1-2);
\path[->](a-3-1) edge node[below right]{$m_{-i-1}$} (a-2-2);
\end{tikzpicture}
\end{center}
\end{lemma}
%or, in other words:
%\begin{equation}\label{equationfondamentale2}
%m_{-i}{\partial}_{(-i)}=-\partial_{-i}m_{-i-1}
%\end{equation}
\begin{proof}
 Let $i\geq1$ and let $x\wedge y\in \Lambda^2 (T_{\leq-1})_{-i-2}$. If neither $x$ nor $y$ belong to $T_{-1}$, 
 then $m(x\wedge y)=0$ and $m\big({\partial}(x)\wedge y+(-1)^{|x|}x\wedge \partial(y)\big)=0$ (because either $\Theta$ vanishes on $x,y$ or $\Theta\circ\partial=0$), so that commutativity of the diagram is trivially satisfied. Thus, we can assume that at least one element, say $x$, belongs to $T_{-1}$. In that case $|y|=-i-1<-1$. As usual we assume that $\Theta$ vanishes on elements of degree lower than $-1$, and that $\partial(T_{-1})=0$. Then, one has: %Now, since $-i-1\leq -1$, the only possible other case is when $a\in T_{-1}$ and $|b|\leq-1$. For $|b|=-i-1$,
 %Then:
 \begin{align}
 -m\circ {\partial}(x\wedge y)&=m(x\wedge \partial(y))\\
 &=\Theta(x)\cdot \partial(y)-(-1)^{-i}\Theta(\partial(y))\cdot x\\
 &=\partial(\Theta (x)\cdot y)\\
 &=\partial\circ m(x\wedge y)
 \end{align}
 as desired. Notice that we used the $\mathfrak{h}$-equivariance of $\partial$ to pass from the antepenultimate line to the penultimate one. \end{proof}
%\end{proof}

%We can now turn to the following proposition:
%This Lemma is actually useful to prove that the degree $+1$ linear morphism $\partial$ is actually a differential on $T_{\leq-1}$:

%\begin{prop}\label{propdiff2}
%The degree $+1$ linear map $\partial=\big(\partial_{-i}:T_{-i-1}\to T_{-i}\big)_{i\geq1}$ defined in Proposition \ref{propdiff} is a differential, i.e. for every $i\geq1$,
%\begin{equation}\partial_{-i}\circ \partial_{-i-1}=0.\label{differentialzero}
%\end{equation}
%\end{prop}

%\begin{proof} %First, let $j:\Lambda^2(T_{\leq-1})\to$ be the unique degree $+1$ bilinear morphism whose restriction to $\Lambda^2(T_{\leq-1})_{-i-2}$ is $j_{-i-1}$.
%One can extend it $\Lambda^p (T_{\leq-1})$ (for every $p\geq3$) using Equations \eqref{extensionQ} and \eqref{extensionM}:
%as a function ${j}_{-i-1}$:
%\begin{equation}
%{j}(x_1\wedge\ldots\wedge x_p)=\sum_{\sigma\in \mathrm{Un}(2,p-2)}\,\epsilon^\sigma\,j(x_{\sigma(1)}\wedge x_{\sigma(2)})\wedge x_{\sigma(3)}\wedge \ldots\wedge x_{\sigma(p)}\end{equation}
%This notation will become handy during the proof.
The proof of the second part of Proposition \ref{propdiff} is then straightforward.
First, let us show that $\partial_{-1}\circ \partial_{-2}=0$. Let $x\in T_{-3}$, then by construction there exists $y\in T_{-1}\wedge T_{-2}$ such that $\partial_{-2}(x)=j_{-2}(y)$. Then, we get:
\begin{align}
    \partial_{-1}\circ j_{-2}(y)&=\partial_{-1}\circ m_{-2}(y)+\partial_{-1}\circ q_{-2}\circ \partial_{(-1)}(y)\\
    &=\partial_{-1}\circ m_{-2}(y)+m_{-1}\circ \partial_{(-1)}(y)\\
    &=0
\end{align}
where we used Lemma \ref{lemmahandy} to make the penultimate line vanish.

Secondly, let $n\geq1$ and assume that Equation \eqref{differentialzero} has been shown for every $1\leq i\leq n$. In particular the following equation is valid:
\begin{equation}\partial_{(-i)}\circ \partial_{(-i-1)}=0\label{differ0}
\end{equation} for every $1\leq i\leq n$, where $\partial_{(-i)}$ is the unique linear degree $+1$ endomorphism of $T_{\leq-1}$ that restricts to $\partial_{-j}$ on $T_{-j-1}$ for every $1\leq j\leq i$.
 We want to show that:
\begin{equation}\partial_{-n-1}\circ \partial_{-n-2}=0
\end{equation}

Let $x\in T_{-n-3}$.  By construction, we have  $\partial_{-n-2}(x)=j_{-n-2}(y)$ for some $y\in \Lambda^2 (T_{\leq-1})_{-n-3}$.
%By surjectivity of $q_{-i-1}$, there exists $y\in S^2(T_{-1}\oplus \ldots\oplus T_{-i-1})_{-i-1}$ such that $x=q_{-i-1}(y)$.
%Then we would like to show that 
%\begin{equation}
 %   \partial_{-n-1}\circ j_{-n-2}(y)=0.\end{equation} %This result relies on the following inclusion:
%\begin{equation}\label{inclusion2}\mathrm{Im}(j_{-i-1})\subset \mathrm{Ker}(\partial_{-i}),\end{equation}
%Hence Lemma 2 is proved.
Then, using Lemma \ref{lemmahandy}, we have:
\begin{align}
   \partial_{-n-1} \circ\partial_{-n-2}(x)
   &= \partial_{-n-1}\circ j_{-n-2}(y)\\
   &=\partial_{-n-1}\circ \big(m_{-n-2}+q_{-n-2}\circ \partial_{(-n-1)}\big)(y)\\
   &=\big(-m_{-n-1}\circ \partial_{(-n-1)}+\partial_{-n-1}\circ q_{-n-1}\circ\partial_{(-n-1)}\big)(y)\\
   &=\big(-m_{-n-1}+\partial_{-n-1}\circ q_{-n-2}\big)\circ \partial_{(-n-1)}(y)\\
   &= q_{-n-1} \circ \partial_{(-n)}\circ\partial_{(-n-1)}(y)\\
   &= 0
\end{align}
where we used Equation \eqref{commutator2} between the fourth and fifth line, and the induction hypothesis Equation \eqref{differ0} for $i=n$ at the penultimate line. This concludes the proof of Proposition \ref{propdiff}.
%where we used Lemma \ref{lemma4} and Item 5 of the induction hypothesis at level $-i$.
%\begin{equation}\partial_{-i}(w)=\partial_{-i}m_{-i-1}(z)-\partial_{-i}q_{-i}{d}_{-i}(z)\\=(m_{-i}-\partial_{-i}q_{-i}){d}_{-i}(z)=q_{-i+1}\underbrace{{d}_{-i+1}{d}_{-i}(z)}_{=\ 0}=0,\end{equation} %\ref{inclusion2}. This 
\end{proof}

%Moreover we assume that for every $1\leq i\leq n$ there exists a surjective homomorphism of $\mathfrak{g}$-modules $\mu_{-i}:R_{\Theta}\to R_{-i}$, such that:
%\begin{equation}
%    \partial_{-i}=\mu_{-i}(\Theta)
%\end{equation}
Before going further, we need to investigate the relationship between the linear maps $(\partial_{-i})_{i\geq1}$, and the embedding tensor $\Theta$. As $\Theta$, each of the linear map $\partial_{-i}$ is $\mathfrak{h}$-equivariant, but certainly not $\mathfrak{g}$-equivariant. The cyclic module generated by each linear map $\partial_{-i}$ %is subject to the action of $\mathfrak{g}$ and thus would 
is a $\mathfrak{g}$-submodule of $\mathrm{Hom}(T_{-i-1},T_{-i})$, that we denote $R_{-i}$:
\begin{equation}\label{defrep2}
R_{-i}:=\mathrm{Span}\big(\partial_{-i},a_1\cdot(a_2\cdot(\ldots(a_m\cdot \partial_{-i})\ldots))\,\big|\, a_1,a_2,\ldots,a_m\in\mathfrak{g}\big)%\subset \mathrm{Hom}(T_{-i-1}, T_{-i})
\end{equation}
 Recall that $R_\Theta$ is itself a cyclic $\mathfrak{g}$-submodule of $\mathrm{Hom}(V,\mathfrak{g})$ (see Equation \eqref{defrep}). Then the $R_{-i}$ are characterized by the following observation:

\begin{prop}\label{propref}
 For every $i\geq1$, there exists a surjective morphism of $\mathfrak{g}$-modules $\mu_{-i}: R_{\Theta}\to R_{-i}$, %making $R_{-i}$ isomorphic to a quotient of $R_\Theta$
such that:
 \begin{equation}\label{eqqqqq}
 \partial_{-i}=\mu_{-i}(\Theta)
 \end{equation}
 \end{prop} 
 
 \begin{proof}
 The proof is made by induction. We will show that the $R_{-i}$ are quotients of the cyclic module $R_\Theta$, so that the $\mu_{-i}$ are the quotient maps. First let us observe that,  for every $i\geq2$,  as an element of $\mathrm{Hom}(T_{-1}\wedge T_{-i}, T_{-i})$, the map $m_{-i}$ %is subject to the action of $\mathfrak{g}$ induced by the structure of $\mathfrak{g}$-module of $T_{\leq-1}$. For every $i\geq1$, let us call 
 generates a $\mathfrak{g}$-submodule  $M_{-i}$ of $\mathrm{Hom}(T_{-1}\wedge T_{-i}, T_{-i})$ through the successive actions of $\mathfrak{g}$ on $m_{-i}$:
 \begin{equation}
 M_{-i}:=\mathrm{Span}\big(m_{-i},a_1\cdot(a_2\cdot(\ldots(a_m\cdot m_{-i})\ldots))\,\big|\, a_1,a_2,\ldots,a_m\in\mathfrak{g}\big)%\subset \mathrm{Hom}(T_{-1}\wedge T_{-i}, T_{-i})
 \end{equation}
%On the one hand, by construction of $m_{-n-1}$ -- the composition of $\Theta$ and $\rho_{-n-1}$ (a $\mathfrak{g}$-invariant object), the representation $\rho_{-n-1}$ induces a surjective homomorphism of $\mathfrak{g}$-modules from $R_\Theta$ onto the  $\mathfrak{g}$-submodule generated by $m_{-n-1}$ in  $\mathrm{Hom}(\Lambda^2(T_{\leq-1})|_{-n-2}, T_{-n-1})$, which is then isomorphic to a quotient of $R_\Theta$.
 %For $i=1$, let us call $M_{-1}$ the $\mathfrak{g}$-submodule of $\mathrm{Hom}(\Lambda^2(T_{-1}),T_{-1})$ generated by the symmetric bracket. 

The representation $\rho_{-i}:\mathfrak{g}\otimes T_{-i}\to T_{-i}$ induces a morphism of $\mathfrak{g}$-modules $\overline{\rho_{-i}}$ from $\mathrm{Hom}(T_{-1}\wedge T_{-i},\mathfrak{g}\otimes T_{-i})$ to $\mathrm{Hom}(T_{-1}\wedge T_{-i}, T_{-i})$.
By construction, this homomorphism satisfies:
 \begin{equation}\label{dernier}
m_{-i}=\overline{\rho_{-i}}(\Theta\wedge\mathrm{id}_{T_{-i}})
\end{equation}
Here, the right hand side is the composition of $\overline{\rho_{-i}}$ with the following injective linear map:
\begin{align*}
\mathrm{Hom}(T_{-1},\mathfrak{g})&\xrightarrow{\hspace*{1.2cm}} \mathrm{Hom}(T_{-1}\wedge T_{-i},\mathfrak{g}\otimes T_{-i})\\
	\alpha\hspace{0.2cm}&\xmapsto{\hspace*{1.2cm}}\hspace{0.4cm}\alpha\wedge\mathrm{id}_{T_{-i}}
\end{align*}
This composition of maps yields a morphism of $\mathfrak{g}$-modules $\nu_{-i}=\overline{\rho_{-i}}\circ\alpha:\mathrm{Hom}(T_{-1},\mathfrak{g})\to\mathrm{Hom}(T_{-1}\wedge T_{-i}, T_{-i})$ sending $\Theta$ onto $m_{-i}$:
 \begin{equation}\label{maison}
m_{-i}=\nu_{-i}(\Theta)
\end{equation}
In particular, since $\nu_{-i}$ is $\mathfrak{g}$-equivariant, the action of $\mathfrak{g}$ on $m_{-i}$  depends only on the action of $\mathfrak{g}$ on $\Theta$, through~$\nu_{-i}$. 
In particular, it sends $R_\Theta$ onto $M_{-i}$, which is then isomorphic to a quotient of $R_\Theta$. This line of argument is still valid for $i=1$, where $m_{-1}=2\,\{\,.\,,.\,\}$.
%For $i=1$, recall that $m_{-1}$, which can be identified with twice the symmetric bracket $\{\,.\,,.\,\}$, is induced by the embedding tensor: 
%\begin{equation}
%m_{-1}(x,y)=2\{x,y\}=\rho_{-1}(\Theta(x);y)+\rho_{-1}(\Theta(y);x) %=m_{-1}(x,y) 
%\end{equation}
%where $\rho_{-1}$ denotes the action of $\mathfrak{g}$ on $T_{-1}=V[1]$.
%Following the same line of arguments as above, the $\mathfrak{g}$-equivariant map $\rho_{-1}$ induces a surjective homomorphism of $\mathfrak{g}$-modules $\nu_{-1}:R_\Theta\to M_{-1}$, satisfying the condition:
 %\begin{equation}
%m_{-1}=\nu_{-1}(\Theta)
%\end{equation}

%These data will be used to extend the graded Lie algebra structure on $T_{\leq0}$ to a dgLa structure on $T_{\leq0}\oplus R_\Theta[-1]$.
%\begin{equation}
%a\cdot(m_{-i})(u)=a\cdot(m_{-i}(u))-m_
%\end{equation}

\bigskip
\noindent \textbf{Initialization:}
Let us now turn to the maps $\partial_{-i}$: assume that we have built the differential $\partial=(\partial_{-i}:T_{-i-1}\to T_{-i})_{i\geq1}$ as in Proposition \ref{propdiff}.
%The linear map $\partial_{-1}:T_{-2}\to T_{-1}$ generates a $\mathfrak{g}$-submodule of $\mathrm{Hom}(T_{-2},T_{-1})$, that we call $R_{-1}$. 
%Recall that the symmetric bracket $\{\,.\,,.\,\}$ can be seen as an element of the $\mathfrak{g}$-module $\mathrm{Hom}(\Lambda^2(T_{-1}),T_{-1})$ and as such, generates a $\mathfrak{g}$-submodule $M_{-1}$.  %It implies that $\mathrm{Ker}(\mu_s)$ is a $\mathfrak{g}$-submodule of $R_\Theta$ and that $R_{sym}$ is isomorphic, as a $\mathfrak{g}$-module, to the quotient $\bigslant{R_\Theta}{\mathrm{Ker}(\mu_s)}$. 
Given that the map $q_{-2}:T_{-1}\wedge T_{-1}\to T_{-2}$ is surjective and $\mathfrak{g}$-equivariant, and that $m_{-1}=\partial_{-1}\circ q_{-2}$, it follows that $q_{-2}$ induces a morphism of $\mathfrak{g}$-modules $\overline{q_{-2}}:M_{-1}\to R_{-1}$, sending $m_{-1}$ onto $\partial_{-1}$.
%the $\mathfrak{g}$-equivariant quotient map $q_{-2}$ induces a homomorphism of $\mathfrak{g}$-modules $\overline{q_{-2}}:M_{-1}\to R_{-1}$. 
Pre-composing it with $\nu_{-1}$ induces a homomorphism of $\mathfrak{g}$-modules $\mu_{-1}:R_\Theta\to R_{-1}$:
\begin{center}
\begin{tikzpicture}
\matrix(a)[matrix of math nodes, 
row sep=5em, column sep=5em, 
text height=1.5ex, text depth=0.25ex] 
{R_{\Theta}&M_{-1}&R_{-1}\\}; 
\path[->](a-1-1) edge node[above]{$\nu_{-1}$} (a-1-2); 
\path[->](a-1-2) edge node[above]{$\overline{q_{-2}}$} (a-1-3);
\path[->](a-1-1) edge [bend right] node[below]{$\mu_{-1}$} (a-1-3);
\end{tikzpicture}
\end{center}
In particular the composition of maps implies that:
\begin{equation}
    \partial_{-1}=\mu_{-1}(\Theta)
\end{equation}
which, by definition of $R_{-1}$, implies that $\mu_{-1}$ is onto.
Moreover, since $\mathrm{Ker}(\mu_{-1})$ is a $\mathfrak{g}$-submodule of $R_\Theta$ then $R_{-1}$ is isomorphic, as a $\mathfrak{g}$-module, to the quotient $\bigslant{R_\Theta}{\mathrm{Ker}(\mu_{-1})}$.

\bigskip
\noindent \textbf{Inductive step:} Let $n\geq1$ and assume %that the family of linear maps $(\partial_{-i}:T_{-i-1}\to T_{-i})_{1\leq i\leq n}$ up to order $n\geq1$ 
that for every $1\leq i\leq n$ there exists a surjective homomorphism of $\mathfrak{g}$-modules $\mu_{-i}:R_{\Theta}\to R_{-i}$, such that:
\begin{equation}
    \partial_{-i}=\mu_{-i}(\Theta)
\end{equation}
Let us prove that this is still the case one level higher.

The bilinear map $j_{-n-1}$, as an element of $\mathrm{Hom}(\Lambda^2(T_{\leq-1})|_{-n-2}, T_{-n-1})$, generates a $\mathfrak{g}$-submodule of $\mathrm{Hom}(\Lambda^2(T_{\leq-1})|_{-n-2}, T_{-n-1})$ that we call  $S_{-n-1}$:
\begin{equation}
S_{-n-1}:=\mathrm{Span}\big(j_{-n-1}, a_1\cdot(a_2\cdot(\ldots(a_m\cdot j_{-n-1})\ldots))\,\big|\, a_1,a_2,\ldots,a_m\in\mathfrak{g}\big)%\subset\mathrm{Hom}(\Lambda^2(T_{\leq-1})|_{-n-2}, T_{-n-1})
\end{equation}
By the induction hypothesis and Equation \eqref{maison} at level $-n-1$, the right hand side of Equation \eqref{chapeau} consists of a set of  terms, each of which is the image of $\Theta$ through a $\mathfrak{g}$-equivariant map. Then it means that there exists a morphism of $\mathfrak{g}$-modules $s_{-n-1}:R_\Theta\to S_{-n-1}$ such that:
\begin{equation}
j_{-n-1}=s_{-n-1}(\Theta)
\end{equation}
By definition of $S_{-n-1}$, this morphism is surjective.
%On the one hand, the map $m_{-n-1}$ generates a $\mathfrak{g}$-module $M_{-n-1}$ of $\mathrm{Hom}(\Lambda^2(T_{\leq-1})|_{-i-1}, T_{-i})$ that is isomorphic to the quotient $\bigslant{R_\Theta}{\mathrm{Ker}(\overline{\rho_{-n-1}})}$ induced by the homomorphism of $\mathfrak{g}$-modules $\overline{\rho_{-n-1}}:R_\Theta\to R_{-n-1}$. On the other hand, the induction hypothesis assumes that each $R_{-i}$ is isomorphic to the quotient $\bigslant{R_\Theta}{\mathrm{Ker}(\mu_{-i})}$ induced by the homomorphism $\mu_{-i}:R_\Theta\to R_{-i}$. 
%The map $q$ being $\mathfrak{g}$-invariant, each term $q\circ \partial_{-i}$ on the right-hand side of Equation \eqref{chapeau}, as an element of  $\mathrm{Hom}(\Lambda^2(T_{\leq-1})|_{-i-1}, T_{-i})$, generates a $\mathfrak{g}$-submodule that is a quotient of $R_\Theta$. %Composing those quotient maps by the homomorphism $q_{-n-1}$ 

%Hence, the bilinear map $j_{-n-1}$, as an element of $\mathrm{Hom}(\Lambda^2(T_{\leq-1})|_{-n-2}, T_{-n-1})$, belongs to a $\mathfrak{g}$-submodule that is isomorphic, as a $\mathfrak{g}$-module, to a quotient of $R_\Theta$. Let us call $S_{-n-1}$ this $\mathfrak{g}$-submodule of $\mathrm{Hom}(\Lambda^2(T_{\leq-1})|_{-n-2}, T_{-n-1})$.
Now given that the map $q_{-n-2}:\Lambda^2(T_{\leq-1})|_{-n-2}\to T_{-n-2}$ is surjective and $\mathfrak{g}$-equivariant, and that $\partial_{-n-1}\circ q_{-n-2}=j_{-n-1}$ by Lemma \ref{lemmadiff}, it induces a morphism of $\mathfrak{g}$-modules $\overline{q_{-n-2}}:S_{-n-1}\to R_{-n-1}$, such that $\overline{q_{-n-2}}(j_{-n-1})=\partial_{-n-1}$. Pre-composing it with $s_{-n-1}$ induces in turn a homomorphism of $\mathfrak{g}$-modules $\mu_{-n-1}:R_\Theta\to R_{-n-1}$:
\begin{center}
\begin{tikzpicture}
\matrix(a)[matrix of math nodes, 
row sep=5em, column sep=5em, 
text height=1.5ex, text depth=0.25ex] 
{R_{\Theta}&S_{-n-1}&R_{-n-1}\\}; 
\path[->](a-1-1) edge node[above]{$s_{-n-1}$} (a-1-2); 
\path[->](a-1-2) edge node[above]{$\overline{q_{-n-2}}$} (a-1-3);
\path[->](a-1-1) edge [bend right] node[below]{$\mu_{-n-1}$} (a-1-3);
\end{tikzpicture}
\end{center}
The composition of maps implies that:
\begin{equation}
    \partial_{-n-1}=\mu_{-n-1}(\Theta)
\end{equation}
which, by definition of $R_{-n-1}$, implies that $\mu_{-n-1}$ is onto.
Moreover, since $\mathrm{Ker}(\mu_{-n-1})$ is a $\mathfrak{g}$-submodule of $R_\Theta$ then $R_{-n-1}$ is isomorphic, as a $\mathfrak{g}$-module, to the quotient $\bigslant{R_\Theta}{\mathrm{Ker}(\mu_{-n-1})}$.
%The fact that $q_{-n-2}\Lambda^2(T_{\leq-1})|_{-n-2}\to T_{-n-2}$ is onto, implies that the homomorphism $\mu_{-n-1}$ is surjective. 
 %In other words, the $\mathfrak{g}$-module $R_{-n-1}$ is isomorphic to the quotient $\bigslant{R_\Theta}{\mathrm{Ker}(\mu_{-n-1})}$, and then one can set:
%\begin{equation}
 %   \partial_{-n-1}=\mu_{-n-1}(\Theta)
%\end{equation}
This concludes the proof of Proposition \ref{propref}. \end{proof}
 
% \begin{remark}
%Proposition \ref{propref} states in a mathematical language what physicists say when they assert that $\partial$ in the same representation as $\Theta$.
%\end{remark}

Recall what is it we know so far. 
Let $(\mathfrak{g}, V,\Theta)$ be a Lie-Leibniz triple. Proposition \ref{prophierarchy} %canonically 
uniquely
associates to this Lie-Leibniz triple a negatively graded vector space $T_{\leq-1}$ equipped with a graded Lie algebra structure. %Then, Lemma \ref{propextension} has shown that this graded Lie algebra structure can be extended to $T_{\leq0}=T_{\leq-1}\oplus \mathfrak{g}$. %Let us denote by $\llbracket\,.\,,.\,\rrbracket$ the graded Lie bracket on $T_{\leq0}$. 
Then, Proposition \ref{propdiff} has shown that the complex $T_{\leq-1}$ can be equipped with a differential $\partial=(\partial_{-i}:T_{-i-1}\to T_{-i})_{i\geq1}$ which is almost compatible with the graded Lie bracket, see Equations \eqref{commutator0}- \eqref{commutatorbis}. We will now extend this differential and the Lie bracket to the graded vector space $\mathbb{T}=T_{\leq-1}\oplus\mathfrak{g}\oplus R_{\Theta}[-1]$, making it a differential graded Lie algebra. The construction is precisely made so that when the Lie-Leibniz triple is a Lie algebra crossed module or more generally when $V$ is a Lie algebra, one finds the differential graded Lie algebra associated to it as in Proposition \ref{experimental}. %Extending the algebraic structure is the topic of the next subsection, while the proof of Theorem \ref{centraltheorem} can be found in subsection \ref{restriction}.

\subsection{The differential graded Lie algebra structure on \texorpdfstring{$\mathbb{T}$}{T}}\label{patching}

The present subsection is dedicated to showing that the graded Lie bracket on $T_{\leq -1}$ defined in Proposition~\ref{prophierarchy} and the differential defined in Proposition \ref{propdiff} can be extended to the whole of $\mathbb{T}=T_{\leq-1}\oplus\mathfrak{g}\oplus R_{\Theta}[-1]$ in such a way that it forms a differential graded Lie algebra.

Let $(\mathfrak{g},V,\Theta)$ be a Lie-Leibniz triple, and let $T_{\leq-1}$ be the negatively graded Lie algebra associated to $(\mathfrak{g},V,\Theta)$ by Proposition \ref{prophierarchy}. Recall that $T_0=\mathfrak{g}$ and $T_{+1}=R_\Theta[-1]$.
 The differential defined on $T_{\leq-1}$ by Proposition \ref{propdiff} straightforwardly extends to $T_{-1}$ and $T_{0}$ by setting:
\begin{equation}\label{definetheta0}\partial_0=\Theta:T_{-1}\to T_0\quad \text{and}\quad \partial_{+1}=-\eta(-,\Theta):T_{0}\to T_{+1}\end{equation}
 We indeed have $\partial_0\circ\partial_{-1}=0$ because the subspace $\mathrm{Im}(\partial_{-1})=\mathcal{I}$ is by construction
  included in $\mathrm{Ker}(\Theta)$, while $\partial_{+1}\circ \partial_0=0$ because %whatever element $x\in T_{-1}=V[1]$ is chosen, 
%\begin{equation}
%\partial_{+1}\circ \partial_0(x)=-\eta(\Theta(x),\Theta)
%\end{equation} which identically vanishes since 
$\Theta$ is $\mathrm{Im}(\Theta)$-equivariant, by the quadratic constraint \eqref{eq:equiv2}. For convenience, we denote the extended differential $\partial$ as well.  %The cochain complex has thus indeed be extended on the right one step further.
%In other words, it can by formally understood as a $\mathfrak{g}$-equivariant linear map \begin{equation}
%\mu:R_\Theta[-1]\xrightarrow{\hspace{1cm}}\mathrm{Hom}(\mathbb{T},\mathbb{T})\big|_{+1}=\bigoplus_{i\geq-1}\mathrm{Hom}(T_{-i-1},T_{-i})\label{wowww}
%\end{equation}
% from $R_\Theta$ to $\bigoplus_{i\leq-1}\mathrm{Hom}(T_{-i-1},T_{-i})$
% whose co-restriction to $\mathrm{Hom}(T_{-i-1},T_{-i})$ coincides with $\mu_{-i}$. 
Now, let us extend the
graded Lie algebra structure defined in %by the family of maps $q=\big(q_i:\Lambda^2\big(\bigoplus_{j=1}^{i-1} T_{-j}\big)|_{-i}\to T_{-i}\big)_{i\leq-2}$ in 
Proposition~\ref{prophierarchy}  %even 
to the higher levels $T_0=\mathfrak{g}$ and $T_{+1}=R_{\Theta}[-1]$. For every $a,b\in\mathfrak{g}$ and $x\in T_{\leq-1}$, let us set:
\begin{align}
    \llbracket a,x\rrbracket&=a\cdot x\label{crochet2} \quad \text{and}\quad   \llbracket x,a\rrbracket=-a\cdot x,\\
    \llbracket a,b\rrbracket&=[a,b],\label{crochet3}
\end{align}
where $[\,.\,,.\,]$ is the Lie bracket on $\mathfrak{g}$. The first result is the following:

\begin{lemma}\label{propextension}
$(T_{\leq0},\llbracket\,.\,,.\,\rrbracket)$ is a graded Lie algebra.
\end{lemma}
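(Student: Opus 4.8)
The plan is to verify the two defining axioms of a graded Lie algebra---graded skew-symmetry and the graded Jacobi identity---for the bracket $\llbracket\,.\,,.\,\rrbracket$ on $T_{\leq0}=T_{\leq-1}\oplus\mathfrak{g}$. Graded skew-symmetry is immediate: it already holds on $T_{\leq-1}$ by Proposition \ref{prophierarchy} and on $\mathfrak{g}$ because the latter is a Lie algebra, while on mixed arguments it is built into the defining formulas, since $\llbracket a,x\rrbracket=a\cdot x=-\llbracket x,a\rrbracket$ and $a$ carries degree $0$ (so the Koszul sign is trivial). Thus the entire content of the lemma is the graded Jacobi identity.

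First I would organize the Jacobi identity into cases according to how many of the three homogeneous arguments lie in the degree-$0$ summand $\mathfrak{g}$ versus the negatively graded summand $T_{\leq-1}$. Because $\mathfrak{g}$ is concentrated in degree $0$, every Koszul sign attached to an element of $\mathfrak{g}$ is trivial, which collapses the sign bookkeeping in all the mixed cases; this is precisely why the computation remains manageable.

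Then I would treat the four resulting cases in turn, each of which reduces to a fact already available. (i) Three arguments in $\mathfrak{g}$: the identity is just the ordinary Jacobi identity of the Lie algebra $\mathfrak{g}$. (ii) Two arguments $a,b\in\mathfrak{g}$ and one $x\in T_{\leq-1}$: unwinding the brackets via \eqref{crochet2}--\eqref{crochet3} turns the identity into $a\cdot(b\cdot x)-b\cdot(a\cdot x)=[a,b]\cdot x$, which is exactly the assertion that $\rho:\mathfrak{g}\to\mathrm{End}(T_{\leq-1})$ is a morphism of Lie algebras. (iii) One argument $a\in\mathfrak{g}$ and two arguments $x,y\in T_{\leq-1}$: the identity becomes $a\cdot\llbracket x,y\rrbracket=\llbracket a\cdot x,y\rrbracket+\llbracket x,a\cdot y\rrbracket$, which is precisely the statement that $\rho(a;-)$ is a derivation of the graded bracket---established as item 1 of Proposition \ref{prophierarchy}. (iv) All three arguments in $T_{\leq-1}$: this is the graded Jacobi identity already proved for $(T_{\leq-1},\llbracket\,.\,,.\,\rrbracket)$ in Proposition \ref{prophierarchy}.

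Since these four cases exhaust all ways of distributing homogeneous arguments between the two summands, the graded Jacobi identity holds on all of $T_{\leq0}$, and the lemma follows. The only place demanding any care is the sign analysis in case (iii), where the single degree-$0$ element must be commuted past a homogeneous element of arbitrary negative degree; however, since that element has degree $0$, the relevant Koszul sign $(-1)^{|a||x|}$ equals $1$ and no obstruction arises. I therefore do not anticipate any genuine difficulty beyond this routine verification; the substance of the lemma is entirely carried by the representation and derivation properties recorded in Proposition \ref{prophierarchy}.
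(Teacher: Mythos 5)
Your proposal is correct and follows essentially the same route as the paper: the paper likewise splits the Jacobi identity by how many arguments lie in $\mathfrak{g}$, reducing the three-in-$\mathfrak{g}$ case to the Jacobi identity of $\mathfrak{g}$, the two-in-$\mathfrak{g}$ case to the $\mathfrak{g}$-module structure on $T_{\leq-1}$, the one-in-$\mathfrak{g}$ case to the $\mathfrak{g}$-equivariance of $q$ (your derivation property from Proposition \ref{prophierarchy}), and the remaining case to the graded Jacobi identity already established on $T_{\leq-1}$. Your additional remarks on skew-symmetry and the trivial Koszul signs for degree-$0$ elements are consistent with, and slightly more explicit than, the paper's treatment.
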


%\begin{lemma}\label{propextension}
%Let $(\mathfrak{g},V,\Theta)$ be a Lie-Leibniz triple, and let $T_{\leq-1}$ be the negatively graded Lie algebra associated to $(\mathfrak{g},V,\Theta)$ by Proposition \ref{prophierarchy}. Then, the graded Lie algebra structure extends to $T_{\leq0}=T_{\leq-1}\oplus \mathfrak{g}$, by setting:
%\begin{align}
 %   \llbracket a,x\rrbracket&=a\cdot x\label{crochet2}\\
 %   \llbracket a,b\rrbracket&=[a,b]\label{crochet3}
%\end{align}
%for every $a,b\in\mathfrak{g}$ and $x\in T_{\leq-1}$, where $[\,.\,,.\,]$ is the Lie bracket on $\mathfrak{g}$. % and where $\rho:\mathfrak{g}\to\mathrm{End}(T_{\leq-1})$ is the representation of $\mathfrak{g}$ on $T_{\leq-1}$. %, inherited in the construction of $T_\bullet$.
%such that the graded Lie bracket on $T_0=\mathfrak{g}$ coincides with the Lie bracket defining the Lie algebra structure on $\mathfrak{g}$.
%\end{lemma}

\begin{proof}
%Let  $T_0=\mathfrak{g}$, and let us define the bracket of one element $a\in\mathfrak{g}$ and one homogeneous element $x\in T_{|x|}$ (where $|x|\leq-1$) as:
%\begin{equation}\label{crochet2}
 %   \llbracket a,x\rrbracket=\rho_{|x|}(a;x)
%\end{equation}
%where $\rho_{|x|}:\mathfrak{g}\to\mathrm{End}(T_{|x|})$ is the representation of $\mathfrak{g}$ on $T_{|x|}$, inherited in the construction of $T_\bullet$. The bracket between two elements of $\mathfrak{g}$ would be the pre-existing Lie bracket of $\mathfrak{g}$.
The Jacobi identity for the Lie bracket on $T_0$ is automatically satisfied because $\mathfrak{g}$ is a Lie algebra.
Then, one need only check that the bracket defined in Equation \eqref{crochet2} and the graded Lie bracket defined in Equation \eqref{crochet1} are compatible, in the sense that they satisfy the two Jacobi identities:
\begin{align}
    \llbracket a,\llbracket b,x\rrbracket\rrbracket&=\llbracket\llbracket a,b\rrbracket, x\rrbracket+\llbracket b,\llbracket a,x\rrbracket\rrbracket\label{diddy1}\\
    \llbracket a,\llbracket x,y\rrbracket\rrbracket&=\llbracket\llbracket a,x\rrbracket, y\rrbracket+\llbracket x,\llbracket a,y\rrbracket\rrbracket\label{diddy2}
\end{align}
for any $a,b\in\mathfrak{g}$ and $x,y\in T_{\leq-1}$. 
The first equation encodes the fact that $T_{|x|}$ is a $\mathfrak{g}$-module:
\begin{equation}
    a\cdot(b\cdot x)=[a,b]\cdot x+b\cdot(a\cdot x)
\end{equation}
so it is automatically satisfied. Recalling that on $T_{\leq-1}$ the bracket corresponds to the family of maps $q=(q_{-i}:\Lambda^2(T_{\leq-1})|_{-i}\to T_{-i})_{i\leq-2}$, Equation \eqref{diddy2} reads:
\begin{equation}
    a\cdot q(x,y)=q(a\cdot x,y)+q(x,a\cdot y)
\end{equation}
which corresponds to the $\mathfrak{g}$-equivariance of the map $q$, which is satisfied by construction.\end{proof}

%\begin{example}\label{examplediffercro}
%If $(\mathfrak{g},\mathfrak{c},\Theta)$ is a Lie algebra crossed module, then $T_{\leq-1}= T_{-1}= \mathfrak{c}[1]$ (by item 3. of Proposition \ref{prophierarchy}) and the graded Lie algebra structure defined on $T_{\leq0}=\mathfrak{g}\oplus \mathfrak{c}[1]$ by Lemma \ref{propextension} coincides with the action of $\mathfrak{g}$ on $\mathfrak{c}$ together with the Lie bracket on $\mathfrak{g}$. The Lie algebra structure of $\mathfrak{c}$ can then be reconstructed from Equation \eqref{equatdiff1}. %This example shows that adding differentials to the picture considerably enhance the possibilities of 
%\end{example}

\begin{remark}\label{remarkhystrion}
The graded Lie algebra $(T_{\leq0},\llbracket\,.\,,.\,\rrbracket)$, when equipped with the differential $\partial=\big(\partial_{-i}:T_{-i-1}\to T_{-i}\big)_{i\geq0}$ is \emph{not} a differential graded Lie algebra. Indeed, for any $a\in\mathfrak{g}$ and $x\in T_{\leq-1}$, the Leibniz rule reads:
\begin{equation}
    \partial\big(\llbracket a,x\rrbracket \big)=\llbracket \partial(a),x\rrbracket +\llbracket a,\partial(x)\rrbracket = \llbracket a,\partial(x)\rrbracket
\end{equation}
as the restriction of the differential on $\mathfrak{g}$ is (in the present case) the zero map. The difference between the right-hand side and the left-hand side is thus zero, meaning that we have the following identity:
\begin{equation}\label{eqidrasil}
 (a\cdot\partial)(x)=\llbracket a,\partial(x)\rrbracket-\partial\big(\llbracket a,x\rrbracket \big)=0
\end{equation}
for every $a\in\mathfrak{g}$. But this is in contradiction with the fact that the maps $\partial_{-i}:T_{-i-1}\to T_{-i}$ are certainly \emph{not} $\mathfrak{g}$-equivariant. Equation \eqref{eqidrasil} would be only valid on $\mathfrak{h}$, implying in turn that $T_{\leq-1}\oplus \mathfrak{h}$ is a differential graded Lie algebra.  However, this latter partial result  \cite{lavau:TH-Leibniz}, is quite unsatisfying since one loses most data from $\mathfrak{g}$. %, as was shown in $\partial$\cite{lavau:TH-Leibniz}. However, as said 
\end{remark}

We will now extend the graded Lie algebra structure to $T_{+1}=R_\Theta[-1]$ and check at the same time that it forms a graded Lie algebra structure on $\mathbb{T}$ which is compatible with the differential $\partial$.
 % One can extend the differential $\partial$ to $\mathbb{T}$ by adding a new map from $T_0$ to $T_{+1}$:%$\partial_{+1}=-\eta(-;\Theta)$:
%\begin{center}
%\begin{tikzcd}[column sep=1cm,row sep=0.4cm]
%\ldots\ar[r,"\partial_{-3}"]&T_{-3}\ar[r,"\partial_{-2}"]&T_{-2}\ar[r,"\partial_{-1}"]&V[1]\ar[r,"\Theta"]&\mathfrak{g}\ar[r,"-\eta(-;\Theta)"]& R_\Theta[-1]
%\end{tikzcd}
%\end{center}
%In other words, we have set:
%\begin{equation}\label{definetheta1}
%\partial_{+1}=-\eta(-;\Theta)
%\end{equation}
%One may also extend the graded Lie bracket $\llbracket\,.\,,.\,\rrbracket$ to $\mathbb{T}$, %This extension relies on the following fact:
%\begin{lemma}
%Let $\mathfrak{g}$ be a (semi-simple?) Lie algebra and let $U$ be a $\mathfrak{g}$-module. Then $U$ is irreducible if and only if every nonzero vector of $U$ is \emph{cyclic}, i.e. it generates $U$ through the action of $\mathfrak{g}$: $\mathfrak{g}\cdot u=U$ for every $u\neq0$.
%\end{lemma}
%The extension uses  the following supplementary definitions:
Recalling that the generators of $R_\Theta$ are of the form $\Theta_{a_1a_2\ldots a_m}=a_1\cdot(a_2\cdot (\ldots(a_m\cdot\Theta)\ldots))$ for some $a_1,a_2, \ldots, a_m\in\mathfrak{g}$,
we set the following definitions for the bracket involving $T_{+1}=R_{\Theta}[-1]$:
\begin{align}
%    \llbracket \Theta,a\rrbracket&=-\eta(a,\Theta)\\
    \llbracket \Theta, x\rrbracket&=\partial(x) \qquad \text{and}\qquad \llbracket x,\Theta\rrbracket=-(-1)^{|x|}\partial(x)\label{thetadiff}\\
     \llbracket\Theta_{a_1a_2\ldots a_m},x\rrbracket&=a_1\cdot \llbracket \Theta_{a_2\ldots a_m},x\rrbracket-\llbracket \Theta_{a_2\ldots a_m},a_1\cdot x\rrbracket\label{enforcedjacobi}\\
     \llbracket x,\Theta_{a_1a_2\ldots a_m}\rrbracket&=-(-1)^{|x|} \llbracket\Theta_{a_1a_2\ldots a_m},x\rrbracket\label{enforcedjacobi42}\\
    \llbracket T_{+1}, T_{+1}\rrbracket&=0\label{thetasquared}
\end{align}
for every homogeneous element $x\in T_{\leq0}$ and every $a_1,\ldots,a_m\in\mathfrak{g}$. As a side remark, be aware that physicists do not necessarily impose that $\llbracket T_{+1},T_{+1}\rrbracket=0$, see Section 3.4. of \cite{Bonezzi:2019bek}. %Equations \eqref{thetadiff}, \eqref{enforcedjacobi}, \eqref{enforcedjacobi42} and \eqref{thetasquared} define how the bracket $\llbracket\,.\,,.\,\rrbracket$ is extended to the whole of $T_{+1}$. %\textbf{We will study below that it automatically satisfies a whole set of Jacobi identities, but we need to enforce the following one:}
%\begin{equation}\label{enforcedjacobi2}
%\llbracket u, \llbracket v,x\rrbracket\rrbracket=-\llbracket v, \llbracket u,x\rrbracket\rrbracket
%\end{equation}
%for every $u,v\in T_{+1}$ and $x\in \mathbb{T}$.

Notice that Equations \eqref{thetadiff} and \eqref{enforcedjacobi} need a bit of explanation. Indeed, $\Theta$ is a linear map of vector spaces whereas $\partial$ is a family of linear maps, which do not necessarily even belong to the same $\mathfrak{g}$-modules. However, we have seen in the proof of Proposition \ref{propref} that each cyclic $\mathfrak{g}$-module $R_{-i}$ is a quotient of $R_\Theta$, through a quotient map $\mu_{-i}$ sending $\Theta$ on $\partial_{-i}$, see Equation \eqref{eqqqqq}.
We can straightforwardly extend Proposition \ref{propref} to $\partial_0$ and $\partial_{+1}$. Since $\partial_0=\Theta$, we know that $R_{0}=R_\Theta$ so that we set $\mu_0=\mathrm{id}_{R_\Theta}$. Next, we define $R_{+1}$ to be the cyclic $\mathfrak{g}$-submodule of  $\mathrm{Hom}(\mathfrak{g},R_\Theta)$ generated by $\partial_{+1}$, %Additionally, one can check that the linear map $\partial_{+1}=-\eta(-,\Theta)$ generates a $\mathfrak{g}$-submodule $R_{+1}$ of $\mathrm{Hom}(\mathfrak{g},R_\Theta)$. 
so that the $\mathfrak{g}$-equivariant linear map $-\eta$ sending $\Theta$ to $\partial_{+1}$ induces a morphism of $\mathfrak{g}$-modules $\mu_{+1}:R_\Theta\to R_{+1}$ such that  $\mu_{+1}(\Theta)=\partial_{+1}$. %Then, one can set $\mu=(\mu_{-i}:R_\Theta\to R_{-i})_{i\geq-1}$ to be the family of degree 0 $\mathfrak{g}$-equivariant linear maps sending $\Theta$ to $\partial=(\partial_{-i}:T_{-i-1}\to T_{-i})_{i\geq-1}$.
%\begin{equation}
 %   \partial_{-i}=\mu_{-i}(\Theta)\hspace{1cm}\text{for every $i\geq-1$.}
%\end{equation}
 %Altogether, the latter form a  family of linear maps $\mu=(\mu_{-i}:R_\Theta\to R_{-i})_{i\geq-1}$  so that
 Then one can make sense of Equation  \eqref{thetadiff} by understanding that $\llbracket\Theta,-\rrbracket$, evaluated on a homogeneous element $x$, actually corresponds to $\mu_{|x|+1}(\Theta)(x)=\partial_{|x|+1}(x)$. Moreover, $\mathfrak{g}$-equivariance of $\mu_{|x|+1}$ %, this allows us to symbolically write:
%\begin{equation}
%a_1\cdot(a_2\cdot (\ldots(a_m\cdot\partial)\ldots))=\llbracket \Theta_{a_1a_2\ldots a_m},-\rrbracket
%\end{equation}
%for any $a_1,\ldots,a_m\in\mathfrak{g}$. This
 allows to make sense of Equation \eqref{enforcedjacobi} as:
\begin{equation}\label{marcl}
\llbracket \Theta_{a_1a_2\ldots a_m},x\rrbracket=a_1\cdot(a_2\cdot (\ldots(a_m\cdot\partial)\ldots))(x)
\end{equation}
The right-hand side of Equation \eqref{marcl} could have been alternatively found by iterating Equation \eqref{enforcedjacobi} $m$~times on the left-hand side of Equation \eqref{marcl}.  %It is then transparent that on the left-hand side we are thinking about $a_1\cdot(a_2\cdot (\ldots(a_m\cdot\partial_{-i})\ldots))$ when the right-hand side is applied to an element of $T_{-i-1}$.

From Equations \eqref{thetadiff}-\eqref{thetasquared}, we can deduce the following identities:
 \begin{equation}\label{eq:imporg}
 \partial\big(\llbracket \Theta_a,x\rrbracket\big)=-\llbracket \Theta_a,\partial(x)\rrbracket\quad \text{and}\quad a\cdot(\partial)\big(\llbracket \Theta,x\rrbracket\big)=-\llbracket \Theta,a\cdot(\partial)(x)\rrbracket
 \end{equation} Then, Equations \eqref{eq:imporg} imply that the following identities are equivalent:
 \begin{align}
 \partial\big(\llbracket \Theta_{ab},x\rrbracket\big)=-\llbracket \Theta_{ab},\partial(x)\rrbracket&&\Longleftrightarrow &&
 a\cdot(\partial)\big(\llbracket \Theta_b,x\rrbracket\big)&=-\llbracket \Theta_b,a\cdot(\partial)(x)\rrbracket\label{diablo}\\
 &&\Longleftrightarrow &&  b\cdot(a\cdot(\partial))\big(\llbracket \Theta,x\rrbracket\big)&=-\llbracket \Theta,b\cdot(a\cdot(\partial))(x)\rrbracket
 \end{align} However, none of them is deducible from existing equations. Then, one is free to chose these identities to hold, implying that at the next level the following are equivalent:
\begin{align}
\partial\big(\llbracket \Theta_{abc},x\rrbracket\big)=-\llbracket \Theta_{abc},\partial(x)\rrbracket&&\Longleftrightarrow && a\cdot(\partial)\big(\llbracket \Theta_{bc},x\rrbracket\big)&=-\llbracket \Theta_{bc},a\cdot(\partial)(x)\rrbracket \label{diablo2}\\ 
&&\Longleftrightarrow && b\cdot(a\cdot(\partial))\big(\llbracket \Theta_{c},x\rrbracket\big)&=-\llbracket \Theta_{c},b\cdot(a\cdot(\partial))(x)\rrbracket\\
&&\Longleftrightarrow && c\cdot(b\cdot(a\cdot(\partial)))\big(\llbracket \Theta,x\rrbracket\big)&=-\llbracket \Theta,c\cdot(b\cdot(a\cdot(\partial)))(x)\rrbracket\label{diablo3}
\end{align}
being understood that none of them  is deducible from existing equations.  Assuming that any -- equivalently, all -- of them hold, we then find another set of equivalent identities at the next level, etc.  By induction, we are thus free (and led) to assume the following  compatibility condition between the bracket and the differential, generalizing the left-hand sides of Lines \eqref{eq:imporg}, \eqref{diablo} and \eqref{diablo2} to any level:
 \begin{equation}\label{enforcedjacobi2}
 \forall\ u\in T_{+1}, v\in T_{\leq-1}\qquad\partial(\llbracket u,v\rrbracket)=-\llbracket u,\partial(v)\rrbracket
 \end{equation}

 % the family of maps $\mu=(\mu_{-i}:R_\Theta\to R_{-i})_{i\geq-1}$ satisfying Equation \eqref{eqqqqq}. %Then, one can see  $\mu(\Theta)$ as a definition for $\llbracket\Theta,-\rrbracket$. %Moreover, this makes sense of  the action of $\mathfrak{g}$ on both sides of Equation \eqref{thetadiff}: by the homomorphism property, %acting on $\Theta$ and then taking $\mu$ is equivalent to taking $\mu$ and then acting on $\partial$:
%\begin{equation}\label{joker}
%\big\llbracket \llbracket a,\Theta\rrbracket, x\big\rrbracket= (a\cdot\partial_{-i}) (x)=\rho(a;\partial_{-i}(x))-\partial_{-i}(\rho(a;x))
%\end{equation}
%for every $x\in T_{-i-1}$ and $a\in\mathfrak{g}$. Notice that the term on the left hand side can be rewritten $\llbracket \Theta_a,x\rrbracket$.

%This relation is important to prove the Jacobi identities in the following Proposition:

It is now time to check the compatibility of the extended bracket and the extended differential on $\mathbb{T}$. %Recall what we have so far: given a Lie-Leibniz triple $(\mathfrak{g},V,\Theta)$, let $T_{\leq-1}$ be the graded vector space associated to $(\mathfrak{g},V,\Theta)$ by Proposition \ref{prophierarchy} and extended to degree 0 and $+1$ by appending to $T_{-1}$ the two vector spaces $T_0=\mathfrak{g}$ and $T_{+1}=R_{\Theta}[-1]$. We call $\mathbb{T}=\bigoplus_{i\geq-1}T_{-i}$ the graded vector space thus obtained, and 
Recall that the graded bracket $\llbracket \,.\,,.\,\rrbracket$ is defined by Proposition \ref{prophierarchy} and Equations \eqref{crochet2}, \eqref{crochet3}, \eqref{thetadiff},  \eqref{enforcedjacobi}, \eqref{enforcedjacobi42} and \eqref{thetasquared}, while the differential  is defined in Proposition \ref{propdiff} and extended to $T_0$ and $T_{+1}$ by Equations \eqref{definetheta0}. Additionally, we have required that these two objects satisfy Equation \eqref{enforcedjacobi2}.
Then, under these assumptions, we have the following important result, which generalizes Proposition \ref{experimental} to any Lie-Leibniz triple:

\begin{prop}\label{prop2}
 $(\mathbb{T},\llbracket\,.\,,.\,\rrbracket,\partial)$ is a differential graded Lie algebra.
\end{prop}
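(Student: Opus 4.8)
The plan is to check, on the whole of $\mathbb{T}=T_{\leq-1}\oplus\mathfrak{g}\oplus R_\Theta[-1]$, the three defining properties of a differential graded Lie algebra — graded skew-symmetry of $\llbracket\,.\,,.\,\rrbracket$, compatibility of $\partial$ with the bracket (the graded Leibniz rule), and the graded Jacobi identity — organizing the argument according to the summands in which the homogeneous arguments lie. Two of these cost nothing more. Skew-symmetry holds on $T_{\leq-1}$ by Proposition \ref{prophierarchy}, on $T_{\leq0}$ by Lemma \ref{propextension}, and the brackets meeting $T_{+1}=R_\Theta[-1]$ were defined to be skew in Equations \eqref{thetadiff} and \eqref{enforcedjacobi42}. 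The equality $\partial^2=0$ is also in hand: the interior cancellations are Proposition \ref{propdiff}, while $\partial_0\circ\partial_{-1}=0$ and $\partial_{+1}\circ\partial_0=0$ were checked right after \eqref{definetheta0}. So the content lies in the Leibniz rule and in Jacobi.

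For the graded Leibniz rule $\partial\llbracket x,y\rrbracket=\llbracket\partial x,y\rrbracket+(-1)^{|x|}\llbracket x,\partial y\rrbracket$ I would argue by cases. When $x,y\in T_{\leq-1}$ the identity is exactly Equations \eqref{commutator0}–\eqref{commutatorbis}, once one records that the extended differential sends $u\in T_{-1}$ to $\Theta(u)\in\mathfrak{g}$ and uses the linear constraint $\Theta(u)\cdot v=u\circ v$ to identify $\llbracket\Theta(u),v\rrbracket$ with $\rho(\Theta(u);v)$. When one argument is in $\mathfrak{g}$, say $x=a$, one has $\partial a=-a\cdot\Theta=-\Theta_a$ and, by \eqref{enforcedjacobi}–\eqref{marcl}, $\llbracket\Theta_a,-\rrbracket=a\cdot\partial$; the two sides then agree by the $\mathfrak{g}$-equivariance of the structure maps $q$, $m$ and $\mu$ (Propositions \ref{prophierarchy} and \ref{propref}), with the $\mathfrak{g}$–$\mathfrak{g}$, $\mathfrak{g}$–$T_{-1}$ and $\mathfrak{g}$–$R_\Theta$ subcases carrying over verbatim from the computations in Proposition \ref{experimental}. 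Finally, when an argument lies in $T_{+1}$, the rule collapses — via $\partial(R_\Theta[-1])=0$ and \eqref{thetasquared} — to the compatibility \eqref{enforcedjacobi2}, whose base instance \eqref{eq:imporg} is \emph{derived} and whose higher instances are the freely imposed, mutually equivalent identities of \eqref{diablo}–\eqref{diablo3}.

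For the Jacobi identity I would use the reformulation that it is equivalent to the operator $\llbracket x,-\rrbracket$ being a graded derivation of $\llbracket\,.\,,.\,\rrbracket$ of degree $|x|$ for every homogeneous $x$; writing $\mathrm{ad}_x:=\llbracket x,-\rrbracket$ this reads
\begin{equation}\label{gradedjac}
\mathrm{ad}_x\llbracket y,z\rrbracket=\llbracket\mathrm{ad}_x y,z\rrbracket+(-1)^{|x||y|}\llbracket y,\mathrm{ad}_x z\rrbracket .
\end{equation}
It then suffices to verify \eqref{gradedjac} on a generating family of $x$. For $x=a\in\mathfrak{g}$, $\mathrm{ad}_a$ is the $\mathfrak{g}$-action, a derivation because $q$, the adjoint action on $\mathfrak{g}$ and the action on $R_\Theta[-1]$ (through the equivariant $\mu$) are all $\mathfrak{g}$-equivariant; this settles every triple containing a $\mathfrak{g}$-element. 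For $x=\Theta$, one has $\mathrm{ad}_\Theta=\partial$ by \eqref{thetadiff} and \eqref{earageag}, so \eqref{gradedjac} is precisely the Leibniz rule just proved. For a general generator $\Theta_{a_1\ldots a_m}$ of $R_\Theta[-1]$ I would induct on the word length $m$: Equation \eqref{enforcedjacobi} says exactly that $\mathrm{ad}_{\Theta_{a_1\ldots a_m}}=\mathrm{ad}_{a_1}\mathrm{ad}_{\Theta_{a_2\ldots a_m}}-\mathrm{ad}_{\Theta_{a_2\ldots a_m}}\mathrm{ad}_{a_1}$ as operators on $\mathbb{T}$ (consistently extended to $T_{+1}$ by \eqref{thetasquared}), and the graded commutator of two graded derivations is again a graded derivation; since $\mathrm{ad}_{a_1}$ and, by induction, $\mathrm{ad}_{\Theta_{a_2\ldots a_m}}$ are derivations, so is $\mathrm{ad}_{\Theta_{a_1\ldots a_m}}$. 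The triples with two arguments in $R_\Theta[-1]$ belong to the same induction, where \eqref{gradedjac} amounts to the vanishing of $\mathrm{ad}_u\mathrm{ad}_{u'}+\mathrm{ad}_{u'}\mathrm{ad}_u$ forced by $\llbracket u,u'\rrbracket=0$.

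The hard part is precisely the passage through $T_{+1}=R_\Theta[-1]$. Since $R_\Theta$ is a \emph{cyclic} $\mathfrak{g}$-module, hence a quotient, one must first know that $\llbracket u,-\rrbracket$ is well defined on $R_\Theta$ and not merely on the formal words $\Theta_{a_1\ldots a_m}$; this is exactly what Proposition \ref{propref} provides, realizing these brackets as $\mu(u)$ for the genuine $\mathfrak{g}$-module maps $\mu_{-i}$. The residual difficulty is that the compatibility \eqref{enforcedjacobi2} and the Jacobi identities for pairs drawn from $R_\Theta[-1]$ are coupled and neither is forced by the lower-level data, so they must be imposed and shown consistent level by level — which is the role of the chain of equivalences \eqref{diablo}–\eqref{diablo3}. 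Anchoring this tower at the derived base case \eqref{eq:imporg} and propagating it $\mathfrak{g}$-equivariantly closes the induction and equips $\mathbb{T}$ with the desired differential graded Lie algebra structure, which by construction restricts to that of Proposition \ref{experimental} whenever $V$ is a Lie algebra.
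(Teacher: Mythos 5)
Your proposal is correct and follows essentially the same route as the paper: the same case-by-case verification of the graded Leibniz rule organized by the summands $T_{\leq-1}$, $\mathfrak{g}$, $R_\Theta[-1]$, the same reliance on Proposition \ref{propref} and the imposed identity \eqref{enforcedjacobi2}, and the same induction on the word length of the generators $\Theta_{a_1\ldots a_m}$ for the Jacobi identity. Your only departure is cosmetic: you phrase the inductive step as ``the graded commutator of two graded derivations is a graded derivation,'' which is exactly what the paper's explicit computation in Equations \eqref{samarcande1}--\eqref{samarcande2} verifies by hand.
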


\begin{proof} The proof is not much different than that of Proposition \ref{experimental} because most problems only arise in the interaction between elements at level $+1$ and $0$.
We already know that $\partial^2=0$ by construction, and that the Jacobi identity is satisfied on $T_{\leq0}$ by Lemma \ref{propextension}.
First let us prove that the bracket and the differential are compatible, i.e. that they satisfy the Leibniz rule:
\begin{equation}\label{Leibnizz}
    \partial\big(\llbracket u,v\rrbracket \big)=\llbracket \partial(u),v\rrbracket +(-1)^{|u|}\llbracket u,\partial(v)\rrbracket 
\end{equation}
for every $u,v\in \mathbb{T}$. We have several cases to handle:

\textbf{1.} If $u,v\in T_{+1}$, or if $u\in T_{+1}$ and $v\in T_{0}$ (or conversely) the identity is trivial as a consequence of Equation \eqref{thetasquared} and because $\partial|_{T_{+1}}$ is  supposed to be the zero map.

\textbf{2.} Let $u\in T_{+1}$ and $v\in T_{\leq-1}$. Then, the Leibniz rule for $u$ and $v$ is nothing but Equation \eqref{enforcedjacobi2}. %, hence justifying that we assumed is was satisfied (nothing prevents us to do so). %The proof is made by induction. If $u=\Theta$ then Equation \eqref{Leibnizz} is automatically satisfied. \textbf{si on impose enforcedjacobi2 alors on a pas besoin de ce passage car tout est automatique !!} Assume now that $u$ is of the form $\Theta_{a}=\eta(a;\Theta)=-\partial(a)$ for some $a\in\mathfrak{g}$, then the left hand side of Equation \eqref{Leibnizz} reads:

\textbf{3.} If $u,v\in T_0=\mathfrak{g}$ then, recalling that $\partial|_{T_0}=-\eta(-;\Theta)$, Equation \eqref{Leibnizz} is equivalent to the following:
\begin{equation}
    -[u,v]\cdot\Theta=v\cdot(u\cdot\Theta)-u\cdot(v\cdot\Theta)
\end{equation}
which corresponds to the fact that $R_\Theta$ is a $\mathfrak{g}$-module. %$\eta:\mathfrak{g}\to\mathrm{End}(R_\Theta)$ is a Lie algebra morphism.

\textbf{4.} If $u\in T_0$ and $v\in T_{-1}$, then Equation \eqref{Leibnizz} becomes:
\begin{equation}
    \partial(u\cdot v)=\llbracket -u\cdot \Theta,v\rrbracket+u\cdot\partial(v)
\end{equation}
This equation is equivalent to:
\begin{equation}
   \llbracket u\cdot\Theta,v\rrbracket= u\cdot(\partial(v))-\partial(u\cdot v)
\end{equation}
which is nothing but Equation \eqref{enforcedjacobi} for $\Theta_u=u\cdot \Theta$.

\textbf{5.} If $u,v\in T_{\leq-1}$, then  we have several subcases (we will implicitly use Equation \eqref{crochet1} in this part):

\textbf{5.a.} If $u,v\in T_{-1}$, recalling that $\partial_0=\Theta$, Equation \eqref{Leibnizz} reads:
\begin{equation}
    \partial_{-1}\big(q_{-2}(u,v)\big)=\Theta(u)\cdot v+\Theta(v)\cdot u
\end{equation}
This equation is nothing but Equation \eqref{commutator0}.

\textbf{5.b.} If $u\in T_{-1}$ and $v\in T_{\leq-2}$, then Equation \eqref{Leibnizz} becomes:
\begin{equation}
    \partial\big(q(u,v)\big)=\Theta(u)\cdot v-q\big(u,\partial(v)\big)
\end{equation}
This equation can be rewritten as:
\begin{equation}
   \Theta(u)\cdot v=\partial\big(q(u,v)\big)+q\big(u,\partial(v)\big)
\end{equation}
which, given the definition of $m_{|v|}$, is nothing but Equation \eqref{commutator}.

\textbf{5.c.} If $u,v\in T_{\leq-2}$, then Equation \eqref{Leibnizz} can be rewritten as: %setting $-i=|u|+|v|+1$, then we have:
%\begin{equation}
 %   \partial\big(q(u,v)\big)=q\big(\partial(u),v\big)-(-1)^{|u|}q\big(u,\partial(v)\big)
%\end{equation}
%which could be rewritten as:
\begin{equation}
    \partial\big(q(u,v)\big)=q\circ \partial(u\wedge v)
\end{equation}
which is, once the term on the right hand side is transported to  the left hand side, nothing but Equation \eqref{commutatorbis}. %, since on this space $m(u,v)=0$.
Hence the three subcases of item 5., gathered together, are equivalent to the identities \eqref{commutator0}, \eqref{commutator}, and \eqref{commutatorbis}. This result was expected since by construction the bilinear map $m:\Lambda^2(T_{\leq-1})\to T_{\leq-1}$ is the composition of $\partial_0$ with the graded Lie bracket. To conclude, the Leibniz rule is thus satisfied on the whole of $\mathbb{T}$.

\bigskip
%Let us check the Jacobi identity and then we will check the Leibniz rule. 
Now let us check the Jacobi identity for the graded Lie bracket. For notational purposes, recall that for $u,v,w\in \mathbb{T}$, the Jacobi identity reads:
\begin{equation}\label{gradedjac}
    \llbracket u,\llbracket v,w\rrbracket \rrbracket =\llbracket \llbracket u,v\rrbracket ,w\rrbracket +(-1)^{|u||v|}\llbracket v,\llbracket u,w\rrbracket \rrbracket 
\end{equation}
We already know by Lemma \ref{propextension} that the graded Jacobi identity is satisfied on $T_{\leq0}$. Moreover it is trivially satisfied on $T_{+1}$ because the bracket is zero on this space. Then we need only check the cases for which one or two terms belong to $T_{+1}$, that is to say: \textbf{I.}~$u\in T_{+1}$ and $v,w\in T_{\leq0}$, and \textbf{II.}~$u,v\in T_{+1}$ and $w\in T_{\leq0}$. 
 It is sufficient to check both of these Jacobi identities on the generators of $T_{+1}$.

\textbf{I.} First notice that if $v$ or $w$ belongs to $T_0=\mathfrak{g}$ then Equation \eqref{gradedjac} is Equation \eqref{enforcedjacobi}. Thus one can restrict oneself to the situation where $v,w\in T_{\leq-1}$. The proof is made by induction. First assume that $u=\Theta$.
 Then, since $\llbracket \Theta,-\rrbracket=\partial$ (see Equation \eqref{thetadiff}), Equation \eqref{gradedjac} becomes:
\begin{equation}\label{flemme}
    \partial\big(\llbracket v,w\rrbracket\big)=\llbracket\partial(v),w\rrbracket+(-1)^{|v|}\llbracket v,\partial(w)\rrbracket
\end{equation}
which is Equation \eqref{Leibnizz} for $v,w\in T_{\leq-1}$, and has been proved to be satisfied in item 5. above.
Therefore, assume that $u=\Theta_{a}$ for some $a\in\mathfrak{g}$. Then, using Equation \eqref{enforcedjacobi}, one obtains:
\begin{align}
\llbracket\Theta_a,\llbracket v,w\rrbracket\rrbracket&=a\cdot\llbracket\Theta,\llbracket v,w\rrbracket\rrbracket-\llbracket\Theta,a\cdot\llbracket v,w\rrbracket\rrbracket\label{samarcande1}\\
&=a\cdot\big(\llbracket\llbracket\Theta,v\rrbracket, w\rrbracket+(-1)^{|v|}\llbracket v,\llbracket\Theta,w\rrbracket\rrbracket\big)-\llbracket\Theta,\llbracket a\cdot v,w\rrbracket-\llbracket v,a\cdot w\rrbracket\rrbracket\\
&=\llbracket a\cdot\llbracket\Theta,v\rrbracket,w\rrbracket+\llbracket\llbracket\Theta,v\rrbracket,a\cdot w\rrbracket+(-1)^{|v|}\big(\llbracket a\cdot v,\llbracket\Theta,w\rrbracket\rrbracket+\llbracket v,a\cdot \llbracket\Theta,w\rrbracket\rrbracket\big)\label{samarcande1bis}\\
&\hspace{1cm}-\llbracket\llbracket\Theta,a\cdot v\rrbracket,w\rrbracket-(-1)^{|v|}\llbracket a\cdot v,\llbracket\Theta,w\rrbracket\rrbracket-\llbracket\llbracket\Theta,v\rrbracket,a\cdot w\rrbracket-(-1)^{|v|}\llbracket v,\llbracket\Theta,a\cdot w\rrbracket\rrbracket\nonumber\\
&=\llbracket\llbracket\Theta_a,v\rrbracket w\rrbracket+(-1)^{|v|}\llbracket v,\llbracket\Theta_a,w\rrbracket\label{samarcande2}
\end{align}
where, from the first line to the second line we used Equation \eqref{flemme} on the term on the left, and Lemma~\ref{propextension} on the term on the right; from the second to the third line, we used Lemma \ref{propextension} on the first term, and Equation~\eqref{flemme} on the second one; the transition from the third line to the fourth and last one results from cancellation of four terms, and the use of Equation \eqref{enforcedjacobi}.

Now let $n\geq1$ and assume that Equation \eqref{gradedjac} has been shown for every term $u\in T_{+1}$ of the form $\Theta_{a_1\ldots a_n}$, and every $v,w\in T_{\leq-1}$. Chose $n+1$ elements $a_1,\ldots, a_{n+1}\in\mathfrak{g}$, then the proof of the following identity:
\begin{equation}
\llbracket\Theta_{a_1\ldots a_{n+1}},\llbracket v,w\rrbracket\rrbracket=\llbracket\llbracket\Theta_{a_1\ldots a_{n+1}},v\rrbracket w\rrbracket+(-1)^{|v|}\llbracket v,\llbracket\Theta_{a_1\ldots a_{n+1}},w\rrbracket
\end{equation}
relies on the same step as for Equations \eqref{samarcande1}-\eqref{samarcande2}, except that one uses the induction hypothesis in place of Equation \eqref{flemme}. This proves the first case. 
% In that case one can assume that $u=\Theta$, then since $\llbracket \Theta,-\rrbracket=\partial$ Equation \eqref{gradedjac} becomes:
%\begin{equation}
 %   \partial\big(\llbracket v,w\rrbracket\big)=\llbracket\partial(v),w\rrbracket+(-1)^{|v|}\llbracket v,\partial(w)\rrbracket
%\end{equation}
%which is Equation \eqref{Leibnizz} for $v,w\in T_{\leq0}$. Hence it is automatically satisfied. This concludes the proof.

\textbf{II.} Now assume that $u,v\in T_{+1}$. Then, by Equation \eqref{thetasquared}, the graded Jacobi identity reduces to:
\begin{equation}\label{gradedjac2}
  \llbracket u,\llbracket v,w\rrbracket \rrbracket = -\llbracket v,\llbracket u,w\rrbracket \rrbracket 
\end{equation}
Notice that if $w\in T_{0}=\mathfrak{g}$, then it is automatically satisfied because both $\llbracket v,w\rrbracket$ and $\llbracket u,w\rrbracket$ belong to $T_{+1}$. Hence one can assume that $w\in T_{\leq-1}$.
%The proof relies on the following observation, that for every $u=\Theta_{a_1\ldots a_n}$ and $v=\Theta_{b_1\ldots b_m}$, we have the following equalities:
%\begin{equation}
%\llbracket\Theta_{a_1\ldots a_n},\llbracket \Theta_{b_1\ldots b_m},w\rrbracket\rrbracket+\llbracket \Theta_{b_1\ldots b_m},\llbracket \Theta_{a_1\ldots a_n},w\rrbracket\rrbracket=
%-\llbracket\Theta_{b_1a_1\ldots a_n},\llbracket \Theta_{b_1\ldots b_m},w\rrbracket\rrbracket-\llbracket \Theta_{b_2\ldots b_m},\llbracket \Theta_{b_1a_1\ldots a_n},w\rrbracket\rrbracket%=
%-\llbracket\Theta_{a_2\ldots a_n},\llbracket \Theta_{a_1b_1\ldots b_m},w\rrbracket\rrbracket+\llbracket \Theta_{a_1b_1\ldots b_m},\llbracket \Theta_{a_2\ldots a_n},w\rrbracket\rrbracket
%\end{equation}
The proof is made by induction as well. It is sufficient to check Equation \eqref{gradedjac2} on the generators of $T_{+1}$. %Let $m\geq1$ and pick up $m$ elements $b_1,\ldots, b_m$ of $\mathfrak{g}$, and set $v=\Theta_{b_1\ldots b_m}$. This element will stay unchanged during the whole induction. First notice that if $u=\Theta$, then Equation \eqref{gradedjac2} is Equation \eqref{enforcedjacobi2}. Now assume that $u=\Theta_a$ for some $a\in\mathfrak{g}$,
Let $v$ be a fixed element of $T_{+1}$. If $u=\Theta$ then Equation \eqref{gradedjac2} corresponds to Equation \eqref{enforcedjacobi2}. Next, let $u=\Theta_{a}$ for some $a\in\mathfrak{g}$.
 Then, following the same lines of argument as for Equations \eqref{samarcande1}-\eqref{samarcande1bis} and recalling that $\llbracket T_{+1},T_{+1}\rrbracket=0$, one obtains the following identity:
\begin{equation}\label{marre}
\llbracket \Theta_a,\llbracket v,w\rrbracket \rrbracket=-\big(\llbracket a\cdot v,\llbracket\Theta,w\rrbracket\rrbracket+\llbracket v,a\cdot\llbracket\Theta,w\rrbracket\rrbracket\big)+\llbracket a\cdot v,\llbracket\Theta,w\rrbracket\rrbracket+\llbracket v,\llbracket\Theta,a\cdot w\rrbracket\rrbracket
\end{equation}
The right hand side can be recast as $\llbracket v, \llbracket \Theta_a,w\rrbracket\rrbracket$, which is the desired result. Notice that in the present context, where both $u$ and $v$ belong to $T_{+1}$, the use of Equation \eqref{enforcedjacobi2} was crucial to pass from the left-hand side of Equation \eqref{marre} to the right-hand side.

Now let $n\geq1$ and assume that Equation \eqref{gradedjac2} has been proved for every $u$ of the form $\Theta_{a_1\ldots a_n}$. Then pick some $a_1,\ldots, a_{n+1}\in\mathfrak{g}$ so, %and set $u=\Theta_{a_1\ldots a_{n+1}}$. Then, 
using Equation \eqref{enforcedjacobi}, one has:
\begin{align}
\llbracket \Theta_{a_1\ldots a_{n+1}},\llbracket v,w\rrbracket\rrbracket&=a_1\cdot\llbracket \Theta_{a_2\ldots a_{n+1}},\llbracket v,w\rrbracket\rrbracket-\llbracket\Theta_{a_2\ldots a_{n+1}},a_1\cdot\llbracket v,w\rrbracket\rrbracket\\
&=a_1\cdot\big(-\llbracket v,\llbracket\Theta_{a_2\ldots a_{n+1}},w\rrbracket\rrbracket\big)-\llbracket\Theta_{a_2\ldots a_{n+1}},\llbracket a_1\cdot v,w\rrbracket-\llbracket v,a_1\cdot w\rrbracket\rrbracket\\
&=-\llbracket a_1\cdot v,\llbracket\Theta_{a_2\ldots a_{n+1}},w\rrbracket\rrbracket-\llbracket v,a_1\cdot \llbracket\Theta_{a_2\ldots a_{n+1}},w\rrbracket\rrbracket\\
&\hspace{2cm}+\llbracket a_1\cdot v,\llbracket\Theta_{a_2\ldots a_{n+1}},w\rrbracket\rrbracket+\llbracket v,\llbracket\Theta_{a_2\ldots a_{n+1}},a_1\cdot w\rrbracket\rrbracket\nonumber\\
&=-\llbracket v,\llbracket\Theta_{a_1\ldots a_{n+1}},w\rrbracket\rrbracket
\end{align}
where we used three times the induction hypothesis, applied to $\Theta_{a_2\ldots a_{n+1}}$. This proves that Equation \eqref{gradedjac2} is satisfied at level $n+1$. Notice that this result could have been straightforwardly obtained from the generalization of equivalences \eqref{diablo}-\eqref{diablo3} at every level, justified by the fact that we assumed Equation \eqref{enforcedjacobi2} to hold. By induction this proves that Equation \eqref{gradedjac2} is satisfied over the generators of  $T_{+1}$, and thus for every $u,v\in T_{+1}$. This proves that $\llbracket\,.\,,.\,\rrbracket$ is a graded Lie bracket on $\mathbb{T}$.\end{proof}

%and let us use Equation \eqref{enforcedjacobi} on the term $\llbracket u,\llbracket v,w\rrbracket \rrbracket$ to obtain:
%\eqref{gradedjac2}:
%\begin{align}
 %\llbracket \Theta_a,\llbracket \Theta_{b_1\ldots b_m},w\rrbracket \rrbracket&=\rho\big(a;\llbracket \Theta,\llbracket \Theta_{b_1\ldots b_m},w\rrbracket \rrbracket\big)
%\end{align}

%\begin{remark}
%One notices that since the differential $\partial$ coincideswith the adjoint action of $\Theta$, the differential graded Lie algebra structure on $\mathbb{T}$ is equivalent to the graded Lie algebra structure on $\mathbb{T}$ where we forget about the differential. The differential can be recovered through the symbolic formula $\partial=\llbracket\Theta,-\rrbracket$.
% \end{remark}
 
 %\begin{remark}
 %In light of this result, one can notice that the bilinear map $m:\Lambda^2(F_{\leq-1})\to F_{\leq-2}$ is the composition of $\partial_0$ with the Lie bracket. In other words:
 %\end{remark}
 
 The fact that the maps $(\partial_{-i})_{i\geq-1}$ generate $\mathfrak{g}$-modules $(R_{-i})_{i\geq-1}$  satisfying Proposition \ref{propref} is of crucial importance in the proof of Proposition \eqref{prop2}. Indeed this fact allows us to identify the action of the differential $\partial$ and the adjoint action of the embedding tensor $\Theta$, thus proving the Jacobi identities that the graded bracket has to satisfy. Without this identification, one cannot simply extend the graded Lie algebra structure defined on $T_{\leq-1}$ in Proposition \ref{prophierarchy} to the whole complex $\mathbb{T}=T_{\leq-1}\oplus \mathfrak{g}\oplus R_\Theta[-1]$, and should rather limit oneself to extend it to $T_{\leq-1}\oplus \mathfrak{h}$ only, as was emphasized in Remark \ref{remarkhystrion}.
 By overcoming the problem established in this remark, Proposition \eqref{prop2} is thus a great improvement to the situation, since we do not lose any data in the process. This justifies the following characterization:

\begin{definition}
%Let $(\mathfrak{g},V, \Theta)$ be a Lie-Leibniz triple. The (differential) graded Lie algebra $(\mathbb{T}, \llbracket\,.\,,.\,\rrbracket,\partial)$ obtained through Proposition \ref{prophierarchy}, \ref{propextension},\ref{propdiff} and \ref{prop2}, is called the \emph{tensor hierarchy algebra} associated to the Lie-Leibniz triple $(\mathfrak{g},V, \Theta)$.\footnote{Maybe footnote on the appellation THA.}
 Given a Lie-Leibniz triple $(\mathfrak{g},V, \Theta)$, the (differential) graded Lie algebra $(\mathbb{T}, \llbracket\,.\,,.\,\rrbracket,\partial)$ defined in Proposition \ref{prop2} is called the \emph{tensor hierarchy associated to $(\mathfrak{g},V, \Theta)$}. %we have constructed %obtained  through Propositions \ref{prophierarchy}, \ref{propdiff}, and \ref{prop2}.
\end{definition}

\begin{remark}
The tensor hierarchy here defined should not be confused with the `tensor hierarchy algebra' defined in \cite{Palmkvist-THA}, which a priori is \emph{not} a differential graded Lie algebra, but a $\mathbb{Z}$-graded Lie superalgebra with a subspace at degree $+1$ (in our convention) accommodating all possible embedding tensors satisfying the representation constraint.  Moreover, in Section 3.4 of \cite{Bonezzi:2019bek}, the authors explain that although the condition $\llbracket R_{\Theta},R_{\Theta} \rrbracket=0$ defines a completely consistent mathematical model, it is however not general enough for gauged supergravity. 
\end{remark}

When applying this construction to specific cases of Lie-Leibniz triples we obtain the following result, whose proof is immediate:

\begin{prop}
The tensor hierarchy associated to a Lie-Leibniz triple $(\mathfrak{g},V,\Theta)$ for which $V$ is a Lie algebra is precisely the 3-term differential graded Lie algebra obtained in Proposition \ref{experimental}. In particular the tensor hierarchy associated to a Lie algebra crossed module $\mathfrak{c}\xrightarrow{\Theta}\mathfrak{g}$  is the 3-term differential graded Lie algebra:
\begin{center}
\begin{tikzcd}[column sep=1cm,row sep=0.4cm]
\mathfrak{c}\ar[r,"\Theta"]&\mathfrak{g}\ar[r,"0"]&\mathbb{R}[-1]
\end{tikzcd}
\end{center}
\end{prop}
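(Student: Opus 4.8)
The plan is to show that the hypothesis that $V$ is a Lie algebra collapses the entire negatively graded tail $T_{\leq-2}$, so that the tensor hierarchy $\mathbb{T}$ reduces to the three-term complex of Proposition~\ref{experimental}, and then to read off the bracket and the differential directly. First I would argue that $T_{-i}=0$ for every $i\geq 2$. Since $V$ is a Lie algebra, its Leibniz product is skew-symmetric, so the symmetric bracket $\{\,.\,,.\,\}$ of \eqref{braque} vanishes identically; hence $\mathrm{Ker}\big(\{\,.\,,.\,\}\big)=S^2(V)$. But $S^2(V)$ is a $\mathfrak{g}$-module, so the biggest $\mathfrak{g}$-submodule $K$ of $\mathrm{Ker}\big(\{\,.\,,.\,\}\big)$ used to build $T_{-2}$ in the proof of Proposition~\ref{prophierarchy} is all of $S^2(V)$, giving $T_{-2}=\big(S^2(V)/K\big)[2]=0$. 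The inductive vanishing established in that same proof ($T_{-n-1}=0$ whenever $T_{-2}=\cdots=T_{-n}=0$) then forces $T_{-i}=0$ for all $i\geq2$, i.e. item~3 of Proposition~\ref{prophierarchy}. Consequently $T_{\leq-1}=T_{-1}=V[1]$ and the underlying graded vector space of the tensor hierarchy is exactly
\begin{equation*}
\mathbb{T}=V[1]\oplus\mathfrak{g}\oplus R_\Theta[-1],
\end{equation*}
which is the graded vector space appearing in Proposition~\ref{experimental}.

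Next I would match the differential and the bracket term by term. The map $\partial_{-1}\colon T_{-2}\to T_{-1}$ is trivially zero since $T_{-2}=0$, while $\partial_0=\Theta$ and $\partial_{+1}=-\eta(-;\Theta)$ by \eqref{definetheta0}; this is precisely the differential of the complex in Proposition~\ref{experimental}. For the bracket, Equations~\eqref{crochet2} and \eqref{crochet3} reproduce \eqref{bronze}--\eqref{silver}, the bracket $\llbracket x,y\rrbracket$ for $x,y\in V[1]$ vanishes because $q$ lands in $T_{-2}=0$ (matching \eqref{gold}), and Equations~\eqref{thetadiff}, \eqref{enforcedjacobi}, \eqref{enforcedjacobi42} and \eqref{thetasquared} coincide with \eqref{def:bracket1}--\eqref{imporfg2} once one observes that $\llbracket\Theta,x\rrbracket=\partial(x)$ restricts on $V[1]$ to $\Theta(x)$, in agreement with \eqref{def:bracket2}. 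Thus the differential graded Lie algebra produced by Proposition~\ref{prop2} is identical to the one produced by Proposition~\ref{experimental}.

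Finally, for a differential crossed module $\mathfrak{g}_{-1}\xrightarrow{\Theta}\mathfrak{g}_0$ I would invoke strictness: by Lemma~\ref{categdiffmod} such a triple is strict, so $\Theta$ is $\mathfrak{g}_0$-equivariant, i.e. $a\cdot\Theta=0$ for every $a\in\mathfrak{g}_0$. Hence the cyclic module $R_\Theta=\mathrm{Span}(\Theta,\Theta_{a_1\ldots a_m})$ of \eqref{defrep} reduces to the one-dimensional span of $\Theta$, which is the irreducible trivial representation $\mathbb{R}$, and $\partial_{+1}=-\eta(-;\Theta)=0$ by \eqref{eq:equiv2}. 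Substituting these into the complex obtained in the previous step yields $\mathfrak{g}_{-1}\xrightarrow{\Theta}\mathfrak{g}_0\xrightarrow{0}\mathbb{R}[-1]$, as claimed. The only real work here is bookkeeping---verifying that no nonzero higher bracket survives and that each defining equation of Proposition~\ref{prop2} specializes to its counterpart in Proposition~\ref{experimental}---so I expect no genuine obstacle, in agreement with the assertion that the proof is immediate.
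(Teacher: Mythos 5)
Your proposal is correct and follows exactly the route the paper intends (the paper itself declares the proof ``immediate'' and supplies no details): the vanishing of $T_{\leq-2}$ via item~3 of Proposition~\ref{prophierarchy}, the term-by-term matching of differential and bracket with Proposition~\ref{experimental}, and the collapse of $R_\Theta$ to $\mathbb{R}$ via strictness from Lemma~\ref{categdiffmod}. The only nitpick is that the vanishing of $\partial_{+1}$ for a differential crossed module follows from the $\mathfrak{g}_0$-equivariance of $\Theta$ (which you had just invoked), not from the quadratic constraint \eqref{eq:equiv2}, which only gives $\mathrm{Im}(\Theta)$-equivariance; this is a misplaced citation, not a gap.
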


\begin{example}\label{lodayexample2}
We take a particular case of the general Example \ref{lodayexample}. Let $A$ be the associative algebra of 2-by-2 real matrices, i.e. $A=\mathcal{M}_{2\times2}(\mathbb{R})$ and the associative product $\cdot$ is the matrix product $\times$. The associative algebra structure on $A$ induces a Lie algebra structure on $A$ where the Lie bracket $[\,.\,,.\,]_A$ is the commutator of matrices; this is $\mathfrak{gl}_2(\mathbb{R})$.
 Let $D:A\to A$ be the linear projection on the diagonal matrices:
\begin{equation}
D:	\begin{pmatrix}
a_{11}& a_{12}\\
a_{21}&a_{22}\\
\end{pmatrix}\xmapsto{\hspace*{1.2cm}}\begin{pmatrix}
a_{11}& 0\\
0&a_{22}\\
\end{pmatrix}
\end{equation}
This linear map satisfies Equation \eqref{propD}. Then, we can define a Leibniz product on $A$ via  Equation \eqref{majora2}:
\begin{equation}\label{majora3}
\begin{pmatrix}
a_{11}& a_{12}\\
a_{21}&a_{22}\\
\end{pmatrix}\circ \begin{pmatrix}
b_{11}& b_{12}\\
b_{21}&b_{22}\\
\end{pmatrix}=\begin{pmatrix}
0& (a_{11}-a_{22})b_{12}\\
-(a_{11}-a_{22})b_{21}&0\\
\end{pmatrix}
\end{equation}

From now on, when we want to emphasize the algebraic structures equipping $A$, we denote by $\mathfrak{g}$ the Lie algebra structure $(A,[\,.\,,.\,]_A)$ and by $V$ the Leibniz algebra structure $(A,\circ)$. The action of $\mathfrak{g}$ on $V$ (equivalently, on $A$) is materialized via the adjoint action of the Lie bracket. As the image of $D$ consists of the Lie subalgebra $\mathfrak{h}$ of diagonal 2-by-2 matrices, Equation \eqref{majora3} shows that Equation \eqref{eqhalo} is automatically satisfied. Together with Equation \eqref{majora2}, it implies that the triple $(\mathfrak{g},V,D)$ is a Lie-Leibniz triple. Since $\big[a\circ b- D([a,b]_A),c\big]_A\neq0$, by the discussion at the end of Example \ref{lodayexample}, this Lie-Leibniz triple is not semi-strict. Then, let us compute how $\mathfrak{g}$ acts on the embedding tensor $D$. Let $a, b\in A$, if we denote by $\eta:\mathfrak{g}\to\mathrm{End}\big(\mathrm{End}(A)\big)$ the representation of $\mathfrak{g}$ on $\mathrm{End}(A)$ induced by the adjoint action of $\mathfrak{g}$ on $A$, then we have:
\begin{equation}\label{mixmiya}
\eta(a;D)(b)=[a,Db]_A-D([a,b]_A)=-\big(b\circ a-D([b,a]_A)\big)=\begin{pmatrix}
b_{12}a_{21}-a_{12}b_{21}& -(b_{11}-b_{22})a_{12}\\
(b_{11}-b_{22})a_{21}&b_{21}a_{12}-a_{21}b_{12}\\
\end{pmatrix}
\end{equation}
In particular it confirms that the embedding tensor is not $\mathfrak{g}$-equivariant, but that it is $\mathfrak{h}$-equivariant, i.e. $\mathrm{Im}(D)$-equivariant (the right-hand side of Equation \eqref{mixmiya} vanishes when $a$ is a diagonal matrix).

As a consequence, the symmetric bracket $\{\,.\,,.\,\}:S^2A\to A$ is not $\mathfrak{g}$-equivariant and its kernel is not a $\mathfrak{g}$-module. Let us decompose $S^2A$ into $\mathfrak{g}$-modules to find the biggest $\mathfrak{g}$-submodule $K$ of $\mathrm{Ker}\big(\{\,.\,,.\,\}\big)$. A basis of $A$ is given by the four matrices:
\begin{equation}
E_{11}=\begin{pmatrix}
1& 0\\
0&0\\
\end{pmatrix},\quad E_{12}=\begin{pmatrix}
0& 1\\
0&0\\
\end{pmatrix},\quad E_{21}=\begin{pmatrix}
0& 0\\
1&0\\
\end{pmatrix}\quad \text{and}\quad E_{22}=\begin{pmatrix}
0& 0\\
0&1\\
\end{pmatrix}
\end{equation}
The $\mathfrak{g}$-action of a matrix $a=\begin{pmatrix}
a_{11}& a_{12}\\
a_{21}&a_{22}\\
\end{pmatrix}$ on each of them is given by the adjoint action $[a,.\,]_A$:
\begin{align}
[a,E_{11}]_A&=-a_{12}E_{12}+a_{21}E_{21}, &[a,E_{12}]_A&=-a_{21}E_{11}+(a_{11}-a_{22})E_{12}+a_{21}E_{22}\label{joel1}\\
[a,E_{22}]_A&=a_{12}E_{12}-a_{21}E_{21}, &[a,E_{21}]_A&=a_{12}E_{11}-(a_{11}-a_{22})E_{21}-a_{12}E_{22}\label{joel2}
\end{align}

The matrices $E_{ij}$ induce a basis of the 10-dimensional space $S^2A$ via the symmetric product $E_{ij}\odot E_{kl}$. We can alternatively describe this space with degree 2 polynomials of four variables by assigning the variable $X$ to $E_{11}$, $Y$ to $E_{12}$, $Z$ to $E_{21}$ and $T$ to $E_{22}$. Then the action \eqref{joel1}, \eqref{joel2} of $\mathfrak{g}$ on the matrices $E_{ij}$ induces an action on the monomials $X$, $Y$, $Z$, $T$:
\begin{align}
a\cdot X&=-a_{12}Y+a_{21}Z, &a\cdot Y&=-a_{21}X+(a_{11}-a_{22})Y+a_{21}T\label{joel3}\\
a\cdot T&=a_{12}Y-a_{21}Z, &a\cdot Z&=a_{12}X-(a_{11}-a_{22})Z-a_{12}T\label{joel4}
\end{align}
The action of $\mathfrak{g}=\mathfrak{gl}_2(\mathbb{R})$ on $S^2A$ is induced from Equations \eqref{joel3} and \eqref{joel4} by derivation.

Since $\mathfrak{gl}_2(\mathbb{R})$ is reductive, %the decomposition of $S^2(A)=S^2(\mathfrak{gl}_2(\mathbb{R}))$ then the four modules $U_i$ correspond to the decomposition of $S^2(\mathfrak{gl}_2(\mathbb{R}))$ into irreducible $\mathfrak{gl}_2(\mathbb{R})$-modules (equivalently: $\mathfrak{sl}_2(\mathbb{R})$-modules because $[\mathfrak{gl}_2(\mathbb{R}),\mathfrak{gl}_2(\mathbb{R})]=\mathfrak{sl}_2(\mathbb{R})$). 
the symmetric space $S^2A=S^2(\mathfrak{gl}_2(\mathbb{R}))$ is completely reducible into four irreducible $\mathfrak{gl}_2(\mathbb{R})$-submodules (equivalently: $\mathfrak{sl}_2(\mathbb{R})$-modules):
 \begin{align*}
 U_1&=\langle XT-YZ\rangle\\
 U_2&=\langle X^2+T^2+XT+YZ\rangle\\
 U_3&=\big\langle X^2-T^2, XY+YT, XZ+ZT\big\rangle\\
 U_4&=\big\langle Y^2, Z^2, X^2+T^2-2(XT+YZ), XY-YT, XZ-ZT\big\rangle
 \end{align*}
  The generators of the modules form an alternative basis for $S^2A$
because  one can pass from the standard basis of $S^2A$ to the generators of $U_1,U_2,U_3$ and $U_4$ by the following formula:
 \begin{align}
 &\begin{tabular}[c]{@{}c@{}}$a_{X^2}X^2+a_{XY}XY+ a_{XZ}XZ+a_{XT} XT+a_{Y^2}Y^2$\\$+a_{YZ}YZ+a_{YT}YT+a_{Z^2}Z^2+a_{ZT}ZT+a_{T^2}T^2$\end{tabular}=\label{changelev}\\
 &\hspace{6cm}  \begin{tabular}[c]{@{}c@{}}$\frac{a_{XT}-a_{YZ}}{2}(XT-YZ)+\frac{2(a_{X^2}+a_{T^2})+a_{XT}+a_{YZ}}{6}(X^2+T^2+XT+YZ)$\\ $+\frac{a_{X^2}-a_{T^2}}{2}(X^2-T^2)+\frac{a_{XY}+a_{YT}}{2}(XY+YT)+\frac{a_{XZ}+a_{ZT}}{2}(XZ+ZT)$\\$+ a_{Y^2}Y^2+a_{Z^2}Z^2+\frac{a_{X^2}+a_{T^2}-a_{XT}-a_{YZ}}{6}(X^2+T^2-2(XT+YZ))$\\ $+\frac{a_{XY}-a_{YT}}{2}(XY-YT)+\frac{a_{XZ}-a_{ZT}}{2}(XZ-ZT)$\end{tabular}\nonumber
 \end{align}
We then have two equivalent ways of parametrizing $S^2A$, whose respective usefulness is relative to the given computation.

The symmetric product $a\odot b\in S^2A$   of two matrices can be decomposed on the polynomials in $X$, $Y$, $Z$, $T$ as the following:
\begin{align}
a\odot b&=a_{11}b_{11}X^2+\frac{1}{2}(a_{11}b_{12}+b_{11}a_{12})XY+\frac{1}{2}(a_{11}b_{21}+b_{11}a_{21})XZ+\frac{1}{2}(a_{11}b_{22}+b_{11}a_{21})XT+a_{12}b_{12}Y^2\label{kappa}\\
&\hspace{1cm}+\frac{1}{2}(a_{12}b_{21}+b_{12}a_{21})YZ+\frac{1}{2}(a_{12}b_{22}+b_{12}a_{22})YT+a_{21}b_{21}Z^2+\frac{1}{2}(a_{21}b_{22}+b_{21}a_{22})ZT+a_{22}b_{22}T^2\nonumber
\end{align}
By comparing the anti-diagonal matrix elements of the matrix on the right-hand side of the symmetrized version of Equation \eqref{majora3} to the decomposition \eqref{kappa}, we deduce that the symmetric bracket $\{\,.\,,.\,\}:S^2A\to A$ is defined on an element of $S^2A$ as: %sends an element $ a_{X^2}X^2+a_{XY}XY+ a_{XZ}XZ+a_{XT} XT+a_{Y^2}Y^2+a_{YZ}YZ+a_{YT}YT+a_{Z^2}Z^2+a_{ZT}ZT+a_{T^2}T^2$ to the 2-by-2 matrix $\begin{pmatrix}
%0& a_{XY}-a_{YT}\\
%-a_{XZ}+a_{ZT}&0\\
%\end{pmatrix}$.
\begin{equation}\label{symker}
 \begin{tabular}[c]{@{}c@{}}$a_{X^2}X^2+a_{XY}XY+ a_{XZ}XZ+a_{XT} XT+a_{Y^2}Y^2$\\$+a_{YZ}YZ+a_{YT}YT+a_{Z^2}Z^2+a_{ZT}ZT+a_{T^2}T^2$\end{tabular}\xmapsto{\hspace*{0.6cm}\{\,.\,,\,.\,\}\hspace*{0.6cm}} \begin{pmatrix}
0& a_{XY}-a_{YT}\\
-a_{XZ}+a_{ZT}&0\\
\end{pmatrix}
 \end{equation}
 The elements of the left-hand side of Equation \eqref{symker} that do not appear on the right-hand side live in the kernel of the symmetric bracket, which is then the 8-dimensional subspace of $S^2A$ generated by the following vectors:
 \begin{equation*}
 \mathrm{Ker}\big(\{\,.\,,.\,\}\big)=\Big\langle X^2, Y^2, Z^2, T^2, XT, YZ, XY+YT, XZ+ZT\Big\rangle
 \end{equation*}

 The three first $\mathfrak{g}$-submodules $U_1, U_2, U_3$ are contained in $ \mathrm{Ker}\big(\{\,.\,,.\,\}\big)$ while for the last module, only the first three generators belong to the kernel:
 \begin{equation*}
 \mathrm{Ker}\big(\{\,.\,,.\,\}\big)=U_1\oplus U_2\oplus U_3\oplus \big\langle Y^2, Z^2, X^2+T^2-2(XT+YZ)\big\rangle
 \end{equation*}
 As  the action of $\mathfrak{g}$ on the remaining generators $Y^2, Z^2, X^2+T^2-2(XT+YZ)$ leaves the kernel, the biggest $\mathfrak{g}$-submodule of the kernel is the 5-dimensional subspace $K=U_1\oplus U_2\oplus U_3$. We can then identify $T_{-2}=\left(\bigslant{S^2(A)}{K}\right)[2]$, the degree $-2$ space of the tensor hierarchy, with $U_4[2]$ ($U_4$ shifted by $-2$).
 %\begin{equation*}
 %T_{-2}\simeq\big\langle Y^2, Z^2, X^2+T^2-2(XT+YZ), XY-YT, XZ-ZT\big\rangle
 %\end{equation*}

The projection $p:S^2A\to U_4$ then induces the graded Lie bracket on $T_{-1}=A[1]$ taking values in $T_{-2}$ by Equation \eqref{crochet1}, that is to say $\llbracket a,b\rrbracket=q_{-2}(a\odot b)=2\,p(a\odot b)$. We can give an explicit formula of this bracket, thanks to the decomposition \eqref{kappa} and the correspondence~\eqref{changelev}:
\begin{align}
\left\llbracket\begin{pmatrix}
a_{11}& a_{12}\\
a_{21}&a_{22}\\
\end{pmatrix}, \begin{pmatrix}
b_{11}& b_{12}\\
b_{21}&b_{22}\\
\end{pmatrix}\right\rrbracket&= \frac{1}{6}\left(\begin{tabular}[c]{@{}c@{}}$2a_{11}b_{11}-(a_{11}b_{22}+b_{11}a_{21})$\\$+2a_{22}b_{22}-(a_{12}b_{21}+b_{12}a_{21})$\end{tabular}\right)(X^2+T^2-2(XT+YZ))\\
&\hspace{1cm}+\frac{1}{2}\big( (a_{11}b_{12}+b_{11}a_{12})-(a_{12}b_{22}+b_{12}a_{22})\big)(XY-YT)\nonumber\\
&\hspace{1.5cm}+\frac{1}{2}\big( (a_{11}b_{21}+b_{11}a_{21})-(a_{21}b_{22}+b_{21}a_{22})\big)(XZ-ZT)\nonumber\\
&\hspace{2.5cm}+2\,a_{12}b_{12} Y^2+2\,a_{21}b_{21}Z^2\nonumber
\end{align}
The linear map $\partial_{-1}:T_{-2}\to T_{-1}$ is then defined as:
\begin{equation}
\begin{tabular}[c]{@{}c@{}}$pY^2+qZ^2$\\$+r(X^2+T^2-2(XT+YZ))$\\$+s(XY-YT)+t(XZ-ZT)$ \end{tabular}\xmapsto{\hspace*{0.6cm}\partial_{-1}\hspace*{0.6cm}}\begin{pmatrix}
0& 2s\\
-2t&0\\
\end{pmatrix}
\end{equation}
so that we indeed have $\partial_{-1}(\llbracket a,b\rrbracket)=2\{a,b\}$. The map $\partial_{-1}$ has a 3-dimensional kernel.
The tensor hierarchy associated to the Lie-Leibniz triple $(\mathfrak{g},V,D)$ then begins with the following sequence:
\begin{center}
\begin{tikzcd}[column sep=1cm,row sep=0.4cm]
%\ldots\ar[r,"\partial_{-4}"]&T_{-4}\ar[r,"\partial_{-3}"]&T_{-3}\ar[r,"\partial_{-2}"]&T_{-2}\ar[r,"\partial_{-1}"]&V[1]\ar[r,"\partial_0=\Theta"]&\mathfrak{g}
\ldots\ar[r]&U_4[2]\ar[r,"\partial_{-1}"]&A[1]\ar[r,"D"]&A\ar[r,"-\eta(-;D)"]& R_D[-1]
\end{tikzcd}
\end{center}
where $R_D\subset\mathrm{End}(A)$ is the $\mathfrak{g}$-submodule generated by $D$.

In order to define $T_{-3}$, one need to decompose the following diagram in terms of irreducible $\mathfrak{sl}_2(\mathbb{R})$-modules (because they are the irreducible $\mathfrak{gl}_2(\mathbb{R})$-modules):
\begin{equation}\label{digrammatique}
\begin{tikzcd}[column sep=1cm,row sep=1cm]
0\ar[r]&\ar[d,"u_{-3}"] A\otimes (U_1\oplus U_2\oplus U_3)\ar[r,"\mathrm{id}"]&\ar[d, "k_{-3}"] A\otimes (U_1\oplus U_2\oplus U_3)\ar[r]&0\\
S^3(A)\ar[r,"-d_3"]&\ar[d]A\otimes S^2(A)\ar[r,"-d_2"]&\ar[d]F_{-3}\ar[r]&0\\
S^3(A)\ar[r]&A\otimes U_4\ar[r,"q_{-3}"]&T_{-3}\ar[r]&0
\end{tikzcd}
\end{equation}
Here, recall that lines and columns are exact sequences. For any $k\geq0$, let us denote by $V_k$ the irreducible $\mathfrak{sl}_2(\mathbb{R})$-module of highest weight $k$, so in particular it is a $k+1$-dimensional vector space. With this convention, $V_2$ denotes the 3-dimensional adjoint representation of $\mathfrak{sl}_2(\mathbb{R})$, and $U_4=V_4$. Given the discussion above and since $\mathfrak{gl}_2(\mathbb{R})=\mathfrak{sl}_2(\mathbb{R})\oplus \langle I_2\rangle= V_2\oplus V_0$, the decomposition of the second symmetric power of $A=\mathfrak{gl}_2(\mathbb{R})$ in terms of the $\mathfrak{sl}_2(\mathbb{R})$-modules $U_1,U_2,U_3,U_4$ can be rewritten as:
\begin{equation}
S^2(\mathfrak{gl}_2(\mathbb{R}))=S^2(V_2\oplus V_0)=V_0\oplus V_0\oplus V_2 \oplus V_4
\end{equation}
Then, using the classical formulas for reducibility of (symmetric) powers of $\mathfrak{sl}_2(\mathbb{R})$-modules, we have:
\begin{align}
S^3(\mathfrak{gl}_2(\mathbb{R}))&=2\,V_0\oplus 2\,V_2\oplus V_4\oplus V_6\label{souls1}\\
\mathfrak{gl}_2(\mathbb{R})\otimes S^2(\mathfrak{gl}_2(\mathbb{R}))&= 3\, V_0\oplus 5\, V_2\oplus 3\, V_4\oplus V_6\label{souls2}
\end{align}
where the integers in front of each module denote their multiplicities. By exactness of the lines of Diagram~\eqref{digrammatique}, the difference between Lines \eqref{souls2} and \eqref{souls1} gives:
\begin{equation}
F_{-3}=V_0\oplus 3\,V_2\oplus 2\, V_4
\end{equation}
% \begin{equation}(V_2\oplus V_0)\otimes \underbrace{S^2(V_2)}_{=\,V_4\oplus V_0}=(V_2\otimes V_4)\oplus V_2\oplus V_4\oplus V_0=\underbrace{V_2\oplus V_6}_{=\, S^3(V_2)}\oplus V_4\oplus V_2\oplus V_4\oplus V_0\end{equation}
 %\begin{equation}
 %\mathfrak{sl}_2\otimes S^2(\mathfrak{sl}_2)=V_2\otimes (V_4\oplus V_0)=2\,V_2\oplus V_4\oplus V_6
 %\end{equation}

As every space in the present example is completely reducible, the module $T_{-3}$ can be understood as a sub-module of both $F_{-3}$ and $A\otimes U_4=\mathfrak{gl}_2(\mathbb{R})\otimes V_4=V_2\oplus 2\, V_4\oplus V_6$. From this observation we deduce that:
\begin{equation}
T_{-3}\ \subset\  F_{-3}\cap (A\otimes U_4)= V_2\oplus 2\, V_4
\end{equation}
Since $T_{-3}\simeq \bigslant{A\otimes U_4}{S^3(A)}$, by carefully analyzing the intersection of $A\otimes U_4$ with $S^3(\mathfrak{gl}_2(\mathbb{R}))$, we can deduce the explicit decomposition of $T_{-3}$. For example,  %As $\mathfrak{gl}_2(\mathbb{R})=\mathfrak{sl}_2(\mathbb{R})\oplus \langle I_2\rangle$,
as we have the inclusion $A\otimes U_4\subset \mathfrak{gl}_2(\mathbb{R})\otimes S^2(\mathfrak{sl}_2)$ and since the latter tensor product contains $S^3(\mathfrak{sl}_2)=V_2\oplus V_6$, we deduce that the term $V_2\oplus V_6$ in $A\otimes U_4$ corresponds to $S^3(\mathfrak{sl}_2)$ so, by exactness of the bottom line in Diagram \eqref{digrammatique}, $V_2$ cannot appear in $T_{-3}$. Now, there is only one module $V_4$ in $S^3(\mathfrak{gl}_2(\mathbb{R}))$, while there are two in $A\otimes U_4$ so $T_{-3}$ contains at least one module $V_4$, if not two.  The question is then if either one of the two modules $V_4$ in  the decomposition of $A\otimes U_4$ lies in the image of~$S^3(\mathfrak{gl}_2(\mathbb{R}))$.  Since  the two modules $V_4$ in $A\otimes U_4$ are precisely those of $\mathfrak{gl}_2(\mathbb{R})\otimes S^2(\mathfrak{sl}_2(\mathbb{R}))$, it is equivalent to asking if either one of the two submodules $V_4$ of $\mathfrak{gl}_2(\mathbb{R})\otimes S^2(\mathfrak{sl}_2(\mathbb{R}))$ lies in the image of~$S^3(\mathfrak{gl}_2(\mathbb{R}))$.

The following, speculative argument suggests that neither of them lie in this image. First, notice that:
\begin{align}
S^3(\mathfrak{gl}_2(\mathbb{R}))&=S^3(\mathfrak{sl}_2(\mathbb{R}))\oplus S^2(\mathfrak{gl}_2(\mathbb{R}))\label{meluche1}\\
\mathfrak{gl}_2(\mathbb{R})\otimes S^2(\mathfrak{gl}_2(\mathbb{R}))&=\big(\mathfrak{gl}_2(\mathbb{R})\otimes S^2(\mathfrak{sl}_2(\mathbb{R}))\big)\oplus \big(\mathfrak{gl}_2(\mathbb{R})\otimes \mathfrak{gl}_2(\mathbb{R})\big) \label{meluche2}\end{align}
Then, comparing the decompositions \eqref{meluche1} and \eqref{meluche2}, we deduce that the submodule $S^2(\mathfrak{gl}_2(\mathbb{R}))$ of $S^3(\mathfrak{gl}_2(\mathbb{R}))$ sits in the submodule $\mathfrak{gl}_2(\mathbb{R})\otimes \mathfrak{gl}_2(\mathbb{R})$ of $\mathfrak{gl}_2(\mathbb{R})\otimes S^2(\mathfrak{gl}_2(\mathbb{R}))$. Then, since $S^3(\mathfrak{sl}_2)=V_2\oplus V_6$, the submodule $V_4$ of $S^3(\mathfrak{gl}_2(\mathbb{R}))$ is necessarily in $S^2(\mathfrak{gl}_2(\mathbb{R}))$, so it seems to be sent to the submodule $V_4$ of $\mathfrak{gl}_2(\mathbb{R})\otimes \mathfrak{gl}_2(\mathbb{R})$ in Line \eqref{meluche2}. %, and not to any of those of $\mathfrak{gl}_2(\mathbb{R})\otimes \mathfrak{sl}_2(\mathbb{R})$. 
From this, we conjecture that none of the  two submodules $V_4$ of $\mathfrak{gl}_2(\mathbb{R})\otimes S^2(\mathfrak{sl}_2(\mathbb{R}))$ -- and thus, of $A\otimes U_4$ -- lies in the image of~$S^3(\mathfrak{gl}_2(\mathbb{R}))$. Then, Diagram \eqref{digrammatique} can be recast in the following form, where the colors mark the correspondence between the domain and the range of each map (and where the purple color means that it is both red and blue): %We argue that none of them lies in this image for the following reason: the two modules $V_4$ in $A\otimes U_4$ are those of $\mathfrak{gl}_2(\mathbb{R})\otimes S^2(V_2)$, and one comes from $V_2\otimes S^2(V_2)=S^3(V_2)\oplus V_4$ and the other comes from the  
\begin{center}
\begin{tikzcd}[column sep=1cm,row sep=1cm]
0\ar[r]&\ar[d] \textcolor{red}{3\, V_0}\oplus \textcolor{red}{4\, V_2} \oplus \textcolor{red}{V_4}  \ar[r,"\mathrm{id}"]&\ar[d] \textcolor{red}{3\, V_0}\oplus \textcolor{red}{4\, V_2} \oplus \textcolor{red}{V_4}  \ar[r]&0\\
\begin{tabular}[c]{@{}c@{}}$\textcolor{blue}{2\, V_0}\oplus \textcolor{blue}{2\, V_2} $\\$\oplus\, \textcolor{blue}{V_4}\oplus \textcolor{blue}{V_6}$\end{tabular} \ar[r]&\ar[d]\begin{tabular}[c]{@{}c@{}}$\textcolor{violet}{2\, V_0}\oplus\textcolor{red}{V_0}$\\$
\oplus\,\textcolor{blue}{V_2}\oplus\textcolor{violet}{V_2}\oplus \textcolor{red}{3\, V_2} $\\$\oplus\, \textcolor{violet}{V_4} \oplus2\, V_4$\\$\oplus\, \textcolor{blue}{V_6}$\end{tabular}\ar[r]&\ar[d]\textcolor{red}{V_0}\oplus \textcolor{red}{3\,V_2}\oplus2\, V_4\ar[r]&0\\
\begin{tabular}[c]{@{}c@{}}$\textcolor{blue}{2\, V_0}\oplus \textcolor{blue}{2\, V_2} $\\$\oplus\, \textcolor{blue}{V_4}\oplus \textcolor{blue}{V_6}$\end{tabular} \ar[r]&\textcolor{blue}{V_2}\oplus 2\,V_4\oplus \textcolor{blue}{V_6}\ar[r]& 2\,V_4\ar[r]&0
\end{tikzcd}
\end{center}

From this we deduce that $T_{-3}=(V_4\oplus V_4)[3]$. The Lie bracket between $T_{-2}=V_4[2]$ and $T_{-1}=A[1]$ is uniquely defined as the bottom horizontal line, while the linear map $\partial_{-2}:T_{-3}\to T_{-2}$ is also uniquely defined by Lemma \ref{lemmadiff}. This gives the next level of the tensor hierarchy associated to the Lie-Leibniz triple $(\mathfrak{g},V,D)$:
\begin{center}
\begin{tikzcd}[column sep=1cm,row sep=0.4cm]
%\ldots\ar[r,"\partial_{-4}"]&T_{-4}\ar[r,"\partial_{-3}"]&T_{-3}\ar[r,"\partial_{-2}"]&T_{-2}\ar[r,"\partial_{-1}"]&V[1]\ar[r,"\partial_0=\Theta"]&\mathfrak{g}
\ldots\ar[r]&(V_4\oplus V_4)[3]\ar[r,"\partial_{-2}"]&V_4[2]\ar[r,"\partial_{-1}"]&A[1]\ar[r,"D"]&A\ar[r,"-\eta(-;D)"]& R_D[-1]
\end{tikzcd}
\end{center}
We will not provide the next space of the hierarchy as the computations become more cumbersome, but in principle there is no mathematical obstruction to this task.
\end{example}
  %is completely reducible:
 %$U_4=V_4$ is a submodule of $S^2(\mathfrak{sl}_2(\mathbb{R}))$
%The question is then if, beyond the pair of terms $V_2\oplus V_6$, either one of the two modules $V_4$ in  the right-hand side lives in the image of $S^3(\mathfrak{gl}_2(\mathbb{R}))$.
%We will not go further in the computation given that it becomes more and more cumbersome, but will provide in the next example how it could be performed in principle.

%\begin{example}
%Let us apply this construction to a Lie algebra crossed module $\mathfrak{c}\xrightarrow{\Theta}\mathfrak{g}$. Example \ref{examplediffercro} tells us that $T_{\leq0}=\mathfrak{g}\oplus \mathfrak{c}[1]$. Equation \eqref{equatdiff2} implies that the embedding tensor $\Theta$ is $\mathfrak{g}$-equivariant, which implies in turn that it belongs to the $\mathfrak{g}$-module $\mathbb{R}$ on which $\mathfrak{g}$ acts trivially. Hence by Proposition \ref{prop2}, one deduces that the following cochain complex:
%\begin{center}
%\begin{tikzcd}[column sep=1cm,row sep=0.4cm]
%V\ar[r,"\Theta"]&\mathfrak{g}\ar[r,"0"]&\mathbb{R}
%\end{tikzcd}
%\end{center}
%is the tensor hierarchy associated to the Lie algebra crossed module $(\mathfrak{g},\mathfrak{c},\Theta)$. 
%They happen to be  equivalent.
%\end{example}

\subsection{Proof of Theorem 1 and open questions}\label{restriction}

Since the tensor hierarchy $\mathbb{T}$ associated to a given Lie-Leibniz triple $(\mathfrak{g},V,\Theta)$ includes $\mathfrak{g}$ and $V$ as $T_{0}$ and $T_{-1}$, we immediately have the following result, which is the first part of Theorem \ref{centraltheorem}:

\begin{prop}\label{prop26}
The function $G:\emph{\textbf{Lie-Leib}}\to\emph{\textbf{DGLie}}_{\leq1}$ associating a tensor hierarchy to a given Lie-Leibniz triple is injective-on-objects.
\end{prop}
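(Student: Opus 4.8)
The plan is to prove injectivity by exhibiting an explicit reconstruction of the triple $(\mathfrak{g},V,\Theta)$ from the differential graded Lie algebra $\mathbb{T}=G(\mathfrak{g},V,\Theta)$, regarded together with its $\mathbb{Z}$-grading, graded Lie bracket $\llbracket\,.\,,.\,\rrbracket$ and differential $\partial$. The key observation is that all the defining data of a Lie-Leibniz triple already sit, undisturbed, in the low-degree part of $\mathbb{T}$, so that no information is lost in passing from the triple to its tensor hierarchy, and the assignment of Proposition \ref{prop2} admits a retraction.

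First I would recover the Lie algebra $\mathfrak{g}$ as the degree-$0$ component $T_0$ of $\mathbb{T}$, equipped with the Lie bracket obtained by restricting the graded Lie bracket to $T_0$, which coincides with $[\,.\,,.\,]_{\mathfrak{g}}$ by Equation \eqref{crochet3}. Next I would recover the vector space $V$ as the desuspension $T_{-1}[-1]$ of the degree-$(-1)$ component, since $T_{-1}=V[1]$ by construction (Proposition \ref{prophierarchy}). The $\mathfrak{g}$-module structure on $V$ is then read off from the mixed bracket $\llbracket a,x\rrbracket=a\cdot x$ for $a\in T_0$ and $x\in T_{-1}$, as prescribed by Equation \eqref{crochet2}. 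Finally, the embedding tensor is recovered as the degree-$0$ piece of the differential, namely $\partial_0=\Theta\colon T_{-1}\to T_0$, by Equation \eqref{definetheta0}.

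At this point $\mathfrak{g}$, $V$, the action $\rho$, and $\Theta$ have all been reconstructed from $\mathbb{T}$ alone. The Leibniz product on $V$ need not be recovered as independent data: by the linear constraint \eqref{eq:compat} it is entirely determined by the action and the embedding tensor through $x\circ y=\Theta(x)\cdot y$, as already noted in Remark \ref{camarche}. Hence the whole Lie-Leibniz triple is a function of the dgLa $\mathbb{T}$, so that $G(\mathfrak{g},V,\Theta)=G(\mathfrak{g}',V',\Theta')$ forces $(\mathfrak{g},V,\Theta)=(\mathfrak{g}',V',\Theta')$, which is precisely the statement that $G$ is injective on objects.

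As for obstacles, I do not expect any genuine difficulty; the content is only that the construction is a retraction onto its low-degree data. The single point requiring a word of care is that the grading, bracket and differential are all part of the object datum in $\textbf{DGLie}_{\leq1}$, so that restricting to $T_0$ and $T_{-1}$ and evaluating the bracket and differential there are legitimate operations. Once this is granted, the reconstruction above is unambiguous and the proof is immediate.
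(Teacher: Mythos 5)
Your proposal is correct and follows essentially the same route as the paper, which notes that $\mathfrak{g}$, $V$ and $\Theta$ are recovered directly as $T_0$, $T_{-1}$ and $\partial_0$ inside $\mathbb{T}$ (the Leibniz product then being determined by the linear constraint \eqref{eq:compat}). Your write-up merely makes explicit the reconstruction that the paper treats as immediate.
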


Thus, tensor hierarchies can be seen as Lie-ifications of Lie-Leibniz triples, where no information has been lost, generalizing what occurs for Lie algebra crossed modules.
We now show that, when restricted to a particular subcategory of \textbf{Lie-Leib}, this function is a functor. %on the category \textbf{Lie-Leib}
%In the next section we will use our construction to build $L_\infty$-algebra extensions of Leibniz algebras.
%by showing
The following Proposition is the second part of Theorem \ref{centraltheorem}:
\begin{prop}\label{prop7}
The restriction of the  function $G:\emph{\textbf{Lie-Leib}}\to\emph{\textbf{DGLie}}_{\leq1}$ to \emph{\textbf{compLie-Leib}} is a faithful functor. Moreover, \textbf{\emph{compLie-Leib}} is the biggest wide subcategory of \textbf{\emph{Lie-Leib}} (with respect to inclusion) such that this functorial property holds.
\end{prop}

%\begin{remark}
%Actually, the function $G$ is a faithful functor on the more general sub-category of \emph{\textbf{Lie-Leib}} having the property that $\mathrm{Ker}\big(\{\,.\,,.\,\}\big)$ is a $\mathfrak{g}$-module. This includes semi-strict Lie-Leibniz triples, but also Lie-Leibniz triples $(\mathfrak{g},V,\Theta)$ for which $V$ is a Lie algebra.
%\end{remark}

%For now, we have proved that one can associate to any Lie-Leibniz triple a differential graded Lie algebra called its associated tensor hierarchy.
%Let us prove now that this assignement is functorial on $\textbf{semLie-Leib}$, i.e. that a morphism of semi-strict Lie-Leibniz triples induces a dgLa morphism between their tensor hierarchies. Thus,  in the following, we assume that all Lie-Leibniz triples are semi-strict, so that we work in the category \textbf{semLie-Leib}. 
%a morphism of Leibniz algebra is canonically transported to a morphism of dgLas between their associated tensor hierarchies. 

\begin{proof} 
Let us first prove that the assignment $G$ is functorial on $\textbf{compLie-Leib}$, i.e. that a compatible morphism of  Lie-Leibniz triples (see Definition \ref{defcomp}) induces a dgLa morphism between their tensor hierarchies.  Faithfulness will follow from the definition of the action of $G$ on morphisms.
 % We will show that any morphism of  Lie-Leibniz triples $(\varphi,\chi):(\mathfrak{g},V,\Theta)\to(\mathfrak{g}',V',\Theta')$  canonically induces a morphism of differential graded Lie algebras $\phi:(T,\llbracket\,.\,,.\,\rrbracket,\partial)\to(T',\llbracket\,.\,,.\,\rrbracket',\partial')$ whose restrictions to $T_0=\mathfrak{g}$ and $T_{-1}=V[1]$ satisfy:
%\begin{equation}
 %   \phi|_{T_0}=\varphi\hspace{1cm}\text{and}\hspace{1cm}
 %   \phi|_{T_{-1}}=\chi
%\end{equation}

% The proof is made by induction.
  Let $(\mathfrak{g},V,\Theta)$ and $(\mathfrak{g}',V',\Theta')$ be Lie-Leibniz triples, and let $(\mathbb{T},\llbracket\,.\,,.\,\rrbracket,\partial)$ and  $(\mathbb{T}',\llbracket\,.\,,.\,\rrbracket',\partial')$ be their  associated tensor hierarchies.
 Let us introduce some notation: 
  for every $i\geq1$, let $T_{(-i)}$ be the direct sum $\bigoplus_{0\leq j\leq i} T_{-j}$, the same convention applying to $\mathbb{T}'$. 
In the proof, $\phi_{-i}$ will always denote a degree 0 linear map from $T_{-i}$ to $T'_{-i}$, and $\phi_{(-i)}$ will denote the unique  degree 0 linear map from $T_{(-i)}$ to $T'_{(-i)}$ that restricts to $\phi_{-i}$ on $T_{-i}$. It can be straightforwardly extended to $\Lambda^\bullet (T_{(-i)})$ as a morphism of graded algebras. 
 
 \bigskip
 \textbf{Definition of $\phi_{+1}$:} Let $(\varphi, \chi):(\mathfrak{g},V,\Theta)\to(\mathfrak{g}',V',\Theta')$ be any (compatible) morphism of Lie-Leibniz triples, and let us denote $\phi_{0}=\varphi$ and $\phi_{-1}=\chi:V[1]\to V'[1]$. The compatibility of the respective graded Lie brackets on $\mathbb{T}$ and $\mathbb{T}'$ with $\phi_0$ and $\phi_{-1}$ is deduced from the fact that $\phi_0$ is a Lie algebra morphism and by Equation~\eqref{eq:jfk2}:
  \begin{align}
      \big\llbracket \phi_{0}(a),\phi_{0}(b)\big\rrbracket&=\phi_{0}\big(\llbracket a,b\rrbracket\big)\\
       \big\llbracket \phi_{0}(a),\phi_{-1}(x)\big\rrbracket&=\phi_{-1}\big(\llbracket a,x\rrbracket\big)
  \end{align}
  for every $a,b\in T_{0}$ and $x\in T_{-1}$. % such that $-1\leq|x|+|y|\leq0$. %Recall that we have %set the convention that 
  %$\phi_{(-1)}= \phi_{0}\oplus \phi_{-1}:T_{(-1)}\to T'_{(-1)}$. 
  %For a  semi-strict Lie-Leibniz triple $(\mathfrak{g}, V,\Theta)$, %the action of $\mathfrak{g}$ on $\Theta$ is zero, so that $R_\Theta=\mathbb{R}$ and $\partial_{+1}=0$.  
Now, let $R_\Theta\subset\mathrm{Hom}(V,\mathfrak{g})$ and $R_{\Theta'}\subset\mathrm{Hom}(V',\mathfrak{g}')$ the cyclic $\mathfrak{g}$-modules generated by $\Theta$ and $\Theta'$, respectively (see Equation \eqref{defrep}). %We denote by $\eta:\mathfrak{g}\to\mathrm{End}(R_\Theta)$ and $\eta':\mathfrak{g}'\to\mathrm{End}(R_{\Theta'})$ the respective action of $\mathfrak{g}$ and $\mathfrak{g}'$ on these modules. 
Then we define the linear map $\phi_{+1}:R_\Theta[-1]\to R_{\Theta'}[-1]$ to be the unique morphism of cyclic $\mathfrak{g}$-modules sending $\Theta$ to $\Theta'$ and compatible with $\phi_0$. That is to say, since $\Theta$ (resp. $\Theta'$) generates $R_\Theta$ (resp. $R_\Theta'$), we have:
\begin{equation}\label{defphi1}
    \phi_{+1}\big(\Theta_{a_1\ldots a_n}\big)=\Theta'_{\varphi(a_1)\ldots\varphi(a_n)}
\end{equation}
where $\Theta_{a_1\ldots a_n}=a_1\cdot (a_2\cdot(\ldots(a_n\cdot\Theta)\ldots))$, for every $a_1,\ldots, a_n\in \mathfrak{g}$.
Moreover, %the fact that $\phi_{+1}(\Theta)=\Theta'$ and S
since $R_{-i}$ (resp. $R'_{-i}$) is generated by $\partial_{-i}$ (resp. $\partial'_{-i}$) -- see Equation \eqref{defrep2}  -- the fact that $\phi_{+1}(\Theta)=\Theta'$ implies that the map $\phi_{+1}$ induces a morphism of cyclic $\mathfrak{g}$-modules from $R_{-i}$ to $R'_{-i}$ -- also denoted $\phi_{+1}$ for convenience -- such that:
\begin{equation}\label{bouteille}
\phi_{+1}(\partial_{-i})=\partial'_{-i}
\end{equation}
%so that Equation \eqref{eq:jfk} reads: 
%  \begin{equation}\label{compatdiff}
 %     \partial'_{0}\circ\phi_{-1}=\phi_{0}\circ\partial_{0}
 % \end{equation}
  %The similar equation involving $\partial_{+1}$ is trivially satisfied since the latter vanishes. 
 %we deduce that the map $\phi_{+1}|_{R_{-i}}:R_{-i}\to R'_{-i}$ is surjective.

So far, there are two compatibility equations involving $\phi_{+1}$ and the graded Lie bracket. The first one is of the form:
  \begin{equation}\label{eq:pasdid}
      \big\llbracket \phi_{+1}(\Theta),\phi_{-1}(x)\big\rrbracket=\phi_{0}\big(\llbracket\Theta,x\rrbracket\big)
  \end{equation}
  for any $x\in T_{-1}$. It is automatically satisfied because it is a rewriting of Equation \eqref{eq:jfk}, under the convention that $\partial=\llbracket\Theta,-\rrbracket$. The second equation is:
   \begin{equation}\label{eq:pasdid2}
      \big\llbracket \phi_{+1}(\Theta),\phi_{0}(a)\big\rrbracket=\phi_{+1}\big(\llbracket\Theta,a\rrbracket\big)
  \end{equation}
  for every $a\in T_0$. It is also satisfied because it can be equivalently rewritten as 
  \begin{equation}
  -\varphi(a)\cdot\Theta'=\phi_{+1}(-a\cdot\Theta)
  \end{equation}
   which is a particular case of Equation \eqref{defphi1}.
   Using successively Equations \eqref{defphi1}, \eqref{enforcedjacobi} and \eqref{eq:jfk2}, together with the fact that $\varphi$ is a morphism of Lie algebras, one can show that Equations \eqref{eq:pasdid} and \eqref{eq:pasdid2} imply that the following two equations hold for any generator $\Theta_{a_1\ldots a_m}$ of $T_{+1}=R_\Theta[-1]$:
   \begin{align}
       \big\llbracket \phi_{+1}(\Theta_{a_1\ldots a_m}),\phi_{-1}(x)\big\rrbracket&=\phi_{0}\big(\llbracket\Theta_{a_1\ldots a_m},x\rrbracket\big)\label{eqraja1}\\
     \big\llbracket \phi_{+1}(\Theta_{a_1\ldots a_m}),\phi_{0}(a)\big\rrbracket&=\phi_{+1}\big(\llbracket\Theta_{a_1\ldots a_m},a\rrbracket\big)\label{eqraja2}
   \end{align}
    Hence, the first three maps $\phi_{+1}$, $\phi_0$ and $\phi_{-1}$ are so far compatible with the graded Lie brackets on $\mathbb{T}$ and $\mathbb{T}'$, respectively. 
Now, because of Equations \eqref{thetadiff} and \eqref{bouteille}, one can check that the identity $\partial_{0}'\circ \phi_{-1}=\phi_{0}\circ \partial_{0}$ corresponds to Equation
\eqref{eq:pasdid}, and that $\partial_{+1}'\circ \phi_{0}=\phi_{+1}\circ \partial_{+1}$ corresponds to Equation \eqref{eq:pasdid2}.
  
  \bigskip
 \textbf{Definition of $\phi_{-i}$, $i\geq2$:} The proof is made by induction. Let $T_{-2}=\left(\bigslant{S^2(V)}{K}\right)[2]$ and  $T_{-2}'=\left(\bigslant{S^2(V')}{K'}\right)[2]$ . Since the morphism $(\varphi, \chi):(\mathfrak{g},V,\Theta)\to(\mathfrak{g}',V',\Theta')$ is a compatible morphism, Equation~\eqref{comstring} 
 implies that $\chi\odot\chi$ passes to the quotient, i.e. that it induces a well-defined map $\phi_{-2}:T_{-2}\to T'_{-2}$ that renders the following prism commutative:
\begin{center}
\begin{tikzpicture}[on top/.style={preaction={draw=white,-,line width=#1}},
on top/.default=5pt]
\matrix(a)[matrix of math nodes, 
row sep=2em, column sep=3em, 
text height=1.5ex, text depth=0.25ex] 
{ &T_{-2}&&&T'_{-2}\\
&&&&\\
\Lambda^2(T_{-1})&&&\Lambda^2(T'_{-1})&\\ 
&T_{-1}&&&T'_{-1}\\}; 
\path[->](a-3-1) edge node[above right]{$\phi_{-1}\wedge \phi_{-1}$} (a-3-4); %node[below right]{$\varphi$}  (a-1-2); 
\path[->>](a-3-1) edge node[above left]{$q_{-2}$} (a-1-2);
\path[->](a-3-1) edge node[below left]{$m_{-1}$} (a-4-2);
\path[->](a-3-4) edge node[above right]{$m'_{-1}$} (a-4-5);
\path[->>](a-3-4) edge node[above left]{$q'_{-2}$} (a-1-5);
\path[->](a-1-2) edge node[above]{$\phi_{-2}$} (a-1-5);
\path[->](a-4-2) edge node[above]{$\phi_{-1}$} (a-4-5);
\path[->](a-1-2) edge[on top=5pt] node[left]{$\partial_{-1}$} (a-4-2);
\path[->](a-1-5) edge node[right]{$\partial_{-1}'$} (a-4-5);
\end{tikzpicture}
\end{center}
where we have written  $\chi$ under its current denomination  $\phi_{-1}:T_{-1}\to T'_{-1}$. This is precisely here that we use the compatibility property of the morphism. We refer to the end of the proof for a discussion of what happens when the morphism is not compatible.
%The map $q_{-2}$ has been defined in subsection \ref{construction} and the map $m_{-1}$ has been defined in subsection \ref{structure}.  
%where $\partial_{-1}$ is the unique map factorizing the symmetric bracket through $\bigslant{S^2(V)}{\mathrm{Ker}\big(\{\,.\,,.\,\}\big)}$:  %(resp. $\bigslant{S^2(V')}{\mathrm{Ker}\big(\{\,.\,,.\,\}'\big)}$):
%\begin{equation}
 %   \{\,.\,,.\,\}=\delta\circ p
%\end{equation}
%and respectively, $\delta'$ is the unique map factorizing the symmetric bracket through $\bigslant{S^2(V')}{\mathrm{Ker}\big(\{\,.\,,.\,\}'\big)}$ 
%Let $(T,\llbracket\,.\,,.\,\rrbracket,\partial)$ and $(T',\llbracket\,.\,,.\,\rrbracket',\partial')$ be the two differential graded Lie algebras respectively associated to $(\mathfrak{g},V,\Theta)$ and $(\mathfrak{g}',V',\Theta')$ by Proposition 1.
 %We will now show that there exists a canonical dgLa morphism $\phi:T\to T'$ such that $\phi|_{T_0}=\varphi$, $\phi|_{T_{-1}}=\chi$, and $\phi|_{T_{-2}}=\tau$.  

The commutative diagram proves the following two equations:
 \begin{align}
\partial'_{-1}\circ\phi_{-2}&=\phi_{-1}\circ\partial_{-1}\label{eq:didpas}\\
\phi_{-2}\big(\llbracket x,y\rrbracket\big)&=\big\llbracket\phi_{-1}(x),\phi_{-1}(y)\big\rrbracket
\end{align} 
for every $x,y\in T_{-1}$. The fact that $\chi$ satisfies Equation \eqref{eq:jfk2} implies that $\phi_{-2}$ in turn satisfies:
\begin{equation}\label{qqqsd}
   \phi_{-2}\big(\llbracket a,x\rrbracket\big)=\big\llbracket\phi_{0}(a),\phi_{-2}(x)\big\rrbracket
\end{equation}
for every $a\in \mathfrak{g}$ and $x\in T_{-2}$.
The last compatibility equations to check are:
  \begin{align}
      \big\llbracket \phi_{+1}(\Theta),\phi_{-2}(x)\big\rrbracket&=\phi_{-1}\big(\llbracket\Theta,x\rrbracket\big)\label{eq:pasdid3}\\
       \big\llbracket \phi_{+1}(\Theta_{a_1\ldots a_m}),\phi_{-2}(x)\big\rrbracket&=\phi_{-1}\big(\llbracket\Theta_{a_1\ldots a_m},x\rrbracket\big)\label{eq:pasdid3bis}
  \end{align}
 for every $x\in T_{-2}$. By Equations \eqref{thetadiff} and \eqref{bouteille}, the first equation is an alternative form of Equation \eqref{eq:didpas}.  Equation \eqref{eq:pasdid3bis} can be proved by using Equations \eqref{eq:jfk2}, \eqref{enforcedjacobi}, \eqref{defphi1},  \eqref{qqqsd}, as well as Equation \eqref{eq:pasdid3}. For example, for $m=1$, one has:
 \begin{align}
\big\llbracket \phi_{+1}(\Theta_{a}),\phi_{-2}(x)\big\rrbracket&=\llbracket \Theta'_{\varphi(a)},\phi_{-2}(x)\big\rrbracket\label{bientot}\\
&=\varphi(a)\cdot \llbracket \Theta',\phi_{-2}(x)\rrbracket-\llbracket \Theta', \varphi(a)\cdot \phi_{-2}(x)\rrbracket\\
&=\varphi(a)\cdot \phi_{-1}\big(\llbracket\Theta,x\rrbracket\big)-\llbracket\Theta',\phi_{-2}(a\cdot x)\rrbracket\\
&=\phi_{-1}\big(a\cdot\llbracket\Theta,x\rrbracket\big)-\phi_{-1}\big(\llbracket\Theta,a\cdot x\rrbracket\big)\\
&=\phi_{-1}\big(\llbracket\Theta_{a},x\rrbracket\big)\label{bientot2}
 \end{align}

Lets now turn to the inductive step.
Assume that the linear map $\phi_{(-i)}:T_{(-i)}\to T'_{(-i)}$ is constructed up to order $i\geq2$, and that is satisfies:
 \begin{align}
\partial'\circ\phi_{(-i)}(x)&=\phi_{(-i)}\circ\partial(x)\label{eq:morphism3}\\
\phi_{(-i)}\big(\llbracket x,y\rrbracket\big)&=\big\llbracket\phi_{(-i)}(x),\phi_{(-i)}(y)\big\rrbracket\label{eq:morphism4}
\end{align} 
 for every $x,y\in T_{\leq0}$  such that $-i\leq|x|+|y|\leq0$.
 Moreover, notice that for $a\in T_0=\mathfrak{g}$, and $u\in T_{\leq0}$, Equation~\eqref{eq:morphism4} implies the following result:
 \begin{equation}\label{eq:morphism5}
     \phi_{(-i)}(a\cdot u)=\phi_0(a)\cdot \phi_{(-i)}(u)
 \end{equation}
 %where $\rho$ (resp. $\rho'$) represents the action of $\mathfrak{g}$ (resp. $\mathfrak{g}'$) on the tensor hierarchy $\mathbb{T}$ (resp. $\mathbb{T}'$). In particular, t
 This equation extends to every element $y\in \Lambda(T_{\leq-1})$. 
  The only remaining non-trivial equations are:
 \begin{align}\big\llbracket \phi_{+1}(\Theta),\phi_{-i}(x)\big\rrbracket&=\phi_{-i+1}\big(\llbracket\Theta,x\rrbracket\big)\label{nontriv}\\
\big\llbracket \phi_{+1}(\Theta_{a_1\ldots a_m}),\phi_{-i}(x)\big\rrbracket&=\phi_{-i+1}\big(\llbracket\Theta_{a_1\ldots a_m},x\rrbracket\big)\label{nontriv2} %\hspace{1cm}\text{for every $x\in T_{-i}$}
  \end{align} for every $x\in T_{-i}$. By Equations \eqref{thetadiff} and \eqref{bouteille}, Equation \eqref{nontriv} is automatically satisfied because it is  equivalent to Equation \eqref{eq:morphism3}, while Equation \eqref{nontriv2} is a consequence of the former as well as \eqref{eq:morphism5}, as can be shown by using the same kind of arguments as for Equations \eqref{bientot}-\eqref{bientot2}.

Let us construct $\phi_{-i-1}:T_{-i-1}\to T'_{-i-1}$ such that Equations \eqref{eq:morphism3} and \eqref{eq:morphism4} are valid one step higher, i.e. such that $i$ is replaced by $i+1$. Then, as explained above, Equations \eqref{nontriv} and \eqref{nontriv2} would be obtained as a mere consequence of the former.    %the two following equations are satisfied:
 %\begin{align}
%\partial'_{(-2)}\circ\phi_{(-3)}(a)&=\phi_{(-2)}\circ\partial_{(-2)}(a)\label{eq:morphism1}\\
%\phi_{(-3)}\big([a,b]\big)&=\big[\phi_{(-3)}(a),\phi_{(-3)}(b)\big]\label{eq:morphism2}
%\end{align} 
% for every $a,b\in T_{(-2)}$  such that $|a|+|b|\geq-3$. 
Since, by Equation \eqref{crochet1}, the graded Lie bracket $\llbracket \,.\,,.\,\rrbracket$ (resp. $\llbracket \,.\,,.\,\rrbracket'$)  coincides with $q$ (resp. $q'$), %proving Equation \eqref{eq:morphism4} for $i+1$ is equivalent to showing that the following diagram is commutative:
% and where it is understood that $\phi_{(-i)}$ acts a an algebra morphism.
%where it is understood that $\phi_{(-i)}$ acts a an algebra morphism. 
%When $i=2$, this is a consequence of the definition of $\phi_{-2}$:
%\begin{equation}
 %   \phi_{-2}\circ q_{-2}=q'_{-2}\circ \phi_{-1}\label{eq:tau1}
%\end{equation}
%which was proved in the first step.
 %Thus, 
 Equation \eqref{eq:morphism4} can be rewritten as:
\begin{equation}
    \phi_{(-i)}\circ q= q'\circ \phi_{(-i)}
\end{equation}
and straightforwardly extended to $\Lambda^3(T_{\leq-1})|_{-i-1}$ (in particular, see Equation \eqref{extensionQ}), making the following diagram commutative :
 \begin{center}
\begin{tikzpicture}
\matrix(a)[matrix of math nodes, 
row sep=5em, column sep=5em, 
text height=1.5ex, text depth=0.25ex] 
{
\Lambda^3(T_{\leq-1})|_{-i-1}&\Lambda^3(T'_{\leq-1})|_{-i-1}\\
\Lambda^2(T_{\leq-1})|_{-i-1}&\Lambda^2(T'_{\leq-1})|_{-i-1}\\
}; 
\path[->](a-1-1) edge node[above]{$\phi_{(-i)}$} (a-1-2); 
\path[->](a-1-1) edge node[left]{$q$}  (a-2-1); 
\path[->](a-2-1) edge node[above]{$\phi_{(-i)}$} (a-2-2); 
\path[->](a-1-2) edge node[right]{$q'$}  (a-2-2); 
\end{tikzpicture}
\end{center}
where it is understood that $\phi_{(-i)}$ acts as an algebra morphism.
Commutativity of the above diagram, together with the fact that $\mathrm{Im}\big(q|_{\Lambda^3(T_{\leq-1})}\big)=\mathrm{Ker}\big(q|_{\Lambda^2(T_{\leq-1})}\big)$ (and a similar equality for $q'$), implies in turn that: 
\begin{equation*}
    \phi_{(-i)}\big(\mathrm{Ker}\big(q|_{\Lambda^2(T_{\leq-1})|_{-i-1}}\big)\big)\subset \mathrm{Ker}\big(q'|_{\Lambda^2(T'_{\leq-1})|_{-i-1}}\big)
    \end{equation*}

Since $\bigslant{\Lambda^2(T_{\leq-1})|_{-i-1}}{\mathrm{Ker}\big(q|_{\Lambda^2(T_{\leq-1})|_{-i-1}}\big)}\simeq T_{-i-1}$ (and a similar identity for $T'_{-i-1}$), this implies that   $\phi_{(-i)}$ passes to the quotient, % $\bigslant{\Lambda^2(T_{\leq-1})|_{-i-1}}{\mathrm{Ker}\big(q|_{\Lambda^2(T_{\leq-1})|_{-i-1}}\big)}\simeq T_{-i-1}$, 
and induces a linear map $\phi_{-i-1}:T_{-i-1}\to T'_{-i-1}$ which makes the following diagram commutative: 
 %Hence, $\phi_{(-2)}\big(\mathrm{Im}(q_{(-1)})\big)\subset \mathrm{Im}(q'_{(-1)})$, so that the map $\phi_{(-2)}$ passes to the quotient and canonically defines a map $\phi_{-3}:T_{-3}\to T'_{-3}$, such that the following diagram is commutative:
 \begin{center}
\begin{tikzpicture}
\matrix(a)[matrix of math nodes, 
row sep=5em, column sep=5em, 
text height=1.5ex, text depth=0.25ex] 
{
\Lambda^2(T_{\leq-1})|_{-i-1}&\Lambda^2(T'_{\leq-1})|_{-i-1}\\
T_{-i-1}&T'_{-i-1}\\
%T_{-2}&T'_{-2}\\
}; 
\path[->](a-1-1) edge node[above]{$\phi_{(-i)}$} (a-1-2); 
\path[->](a-1-1) edge node[left]{$q$}  (a-2-1); 
\path[->](a-2-1) edge node[above]{$\phi_{-i-1}$} (a-2-2); 
\path[->](a-1-2) edge node[right]{$q'$}  (a-2-2); 
%\path[->](a-2-1) edge node[left]{$\partial_{-2}$}  (a-3-1); 
%\path[->](a-3-1) edge node[above]{$\phi_{-2}$} (a-3-2); 
%\path[->](a-2-2) edge node[right]{$\partial'_{-2}$}  (a-3-2); 
\end{tikzpicture}
\end{center}
Replacing back $q$ by $\llbracket\,.\,,.\,\rrbracket$, commutativity of the diagram is equivalent to the following equation:
\begin{equation}\label{direct}
    \phi_{-i-1}\big(\llbracket x,y\rrbracket\big)=\big\llbracket\phi_{(-i)}(x),\phi_{(-i)}(y)\big\rrbracket
\end{equation}
for every $x,y\in T_{\leq-1}$ such that $|x|+|y|=-i-1$.

The similar condition involving an element of $\mathfrak{g}$ and an element of $T_{-i-1}$ comes from the fact that $q:\Lambda^2(T_{\leq-1})\to T_{\leq-2}$ (resp. $q'$) is a surjective $\mathfrak{g}$-equivariant (resp. $\mathfrak{g}'$-equivariant) map. More precisely, let $a\in\mathfrak{g}$ and $x\in T_{-i-1}$, then there exists $u\in\Lambda^2(T_{\leq-1})|_{-i-1}$ such that $x=q(u)$. But then, %using  Equation \eqref{eq:morphism5}, 
one obtains:
\begin{align}
    \phi_{-i-1}\big(\llbracket a,x\rrbracket\big)&=\phi_{-i-1}(a\cdot q(u))\\
    &= \phi_{-i-1}(q(a\cdot u))=q'(\phi_{(-i)}(a\cdot u))\\
    &=q'\big(\phi_0(a)\cdot \phi_{(-i)}(u)\big)=\phi_0(a)\cdot q'(\phi_{(-i)}(u))\\
    &=\phi_0(a)\cdot\phi_{-i-1}(q(u))=\big\llbracket \phi_0(a),\phi_{-i-1}(x)\big\rrbracket
\end{align}
where we used successively Equation \eqref{crochet2}, the $\mathfrak{g}$-equivariance of $q$, Equation \eqref{direct}, Equation \eqref{eq:morphism5}, $\mathfrak{g}'$-equivariance of $q'$, Equation \eqref{direct} again, and finally Equation \eqref{crochet2}. The proof does not depend on the choice of preimage $u$ we make because $\mathrm{Ker}\big(q|_{\Lambda^2(T_{\leq-1})|_{-i-1}}\big)$ is stable under the action of $\mathfrak{g}$. Thus, we have extended Equation \eqref{eq:morphism4} one level higher:
\begin{equation}
\phi_{(-i-1)}\big(\llbracket x,y\rrbracket\big)=\big\llbracket\phi_{(-i-1)}(x),\phi_{(-i-1)}(y)\big\rrbracket\label{eq:morphismter}
\end{equation} 
 for every $x,y\in T_{\leq0}$  such that $-i-1\leq|x|+|y|\leq0$.

There are two more compatibility equations between $\phi_{-i-1}$ and the graded Lie bracket to check, that is:
\begin{align}
    \phi_{-i}\big(\llbracket\Theta,x\rrbracket\big)&=\big\llbracket\phi_{+1}(\Theta),\phi_{-i-1}(x)\big\rrbracket\label{finalequ}\\
     \phi_{-i}\big(\llbracket\Theta_{a_1\ldots a_m},x\rrbracket\big)&=\big\llbracket\phi_{+1}(\Theta_{a_1\ldots a_m}),\phi_{-i-1}(x)\big\rrbracket\label{finalequ2}
\end{align}
for every $x\in T_{-i-1}$ and every $a_1,\ldots, a_m\in\mathfrak{g}$. Equation \eqref{finalequ} can be rewritten in the form of Equation \eqref{eq:morphism3}, at level $-i-1$:
\begin{equation}
    \phi_{-i}\big(\partial(x)\big)=\partial'\big(\phi_{-i-1}(x)\big)
\end{equation}
%so it is necessary and sufficient to prove this equation to finish the proof of the inductive step.
This is indeed satisfied; by chosing a pre-image $u\in\Lambda^2(T_{\leq-1})_{-i-1}$ of $x$, one has the following result:
\begin{align}
    \partial'\big(\phi_{-i-1}(x)\big)&=\partial'\big(\phi_{-i-1}(q(u))\big)=\partial'\big(q'(\phi_{(-i)}(u))\big)=q'\big(\partial'(\phi_{(-i)}(u))\big)+m'(\phi_{(-i)}(u))\\
    &=q'\big(\phi_{(-i)}(\partial(u))\big)+\phi_{(-i)}(m(u))=\phi_{(-i)}\big(q(\partial(u))+m(u)\big)=\phi_{-i}\big(\partial(x)\big)
\end{align}
where we have successively used Equations \eqref{direct}, \eqref{commutator} (or \eqref{commutatorbis}) \eqref{eq:morphism3}, \eqref{eq:morphism4}, and Equation \eqref{commutator} (or \eqref{commutatorbis}) again. Notice that we passed from $m'(\phi_{(-i)}(u))$ to $\phi_{(-i)}(m(u))$ by using Equation \eqref{eq:morphism4} and the definition of $m$ (resp. $m'$). Proving Equation \eqref{finalequ2} then relies on using Equations \eqref{eq:morphismter} (on $T_{0}\wedge T_{-i-1}$) and \eqref{finalequ}, as well as the same kind of arguments as for Equations \eqref{bientot}-\eqref{bientot2}.  %Notice that in each case $m$ and $m'$ vanish because $u$ and $\phi_{(-i)}(u)$ do not possess any component of degree 0. 
This concludes the proof of the inductive step.
%\begin{align}
 %   \partial'\big(\phi_{-i-1}(x)\big)&=\partial'\big(\phi_{-i-1}(q(u))\big)=\partial'\big(q'(\phi_{(-i)}(u))\big)=q'\big(\partial'(\phi_{(-i)}(u))\big)+\underbrace{m'\big(\phi_{(-i)}(u)\big)}_{=\ 0}\\
  %  &=q'\big(\phi_{(-i)}(\partial(u))\big)=\phi_{(-i)}\big(q(\partial(u))\big)=\phi_{(-i)}\big(\partial(x)\big)-\underbrace{\phi_{(-i)}\big(m(u)\big)}_{=\ 0}
%\end{align}
%where we used Equations  \eqref{eq:morphism3}, \eqref{eq:morphism4} and \eqref{eq:morphism5}. Notice that in each case $m$ and $m'$ vanish because $u$ and $\phi_{(-i)}(u)$ do not possess any component of degree 0. This concludes the proof of the inductive step.
\bigskip

%\textbf{Inductive step:} Assume that the linear map $\phi_{(-i)}:T_{(-i)}\to T'_{(-i)}$ is constructed up to order $-i$, and that is satisfies:
% \begin{align}
%\partial'_{(-i+1)}\circ\phi_{(-i)}&=\phi_{(-i+1)}\circ\partial_{(-i+1)}\\
%\phi_{(-i)}\big(\llbracket x,y\rrbracket\big)&=\big\llbracket\phi_{(-i)}(x),\phi_{(-i)}(y)\big\rrbracket
%\end{align} 
 %for every $x,y\in T_{(-i-1)}$  such that $|x|+|y|\geq-i-1$. Then we construct $\phi_{-i-1}:T_{-i-1}\to T'_{-i-1}$ by following exactly the same lines of argument as in the second step, so that Equations \eqref{eq:morphism3} and \eqref{eq:morphism4} extend to level $-i-1$.
 
  The induction provides us with a linear application $\phi:\mathbb{T}\to \mathbb{T}'$ satisfying the following conditions:
 \begin{align}
% \phi|_{T_{-i}}&=\phi_{-i}\\
\partial'\circ\phi(a)&=\phi\circ\partial(a)\\
\phi\big(\llbracket a,b\rrbracket\big)&=\big\llbracket\phi(a),\phi(b)\big\rrbracket
\end{align} 
Moreover, this differential graded Lie algebra morphism between $\mathbb{T}$ and $\mathbb{T}'$ is 
 such that $\phi|_{T_{-1}}=\chi$,  $\phi|_{T_0}=\varphi$, and $\phi_{+1}(\Theta)=\Theta'$. The  morphism $\phi$ is the image through the function $G:\textbf{Lie-Leib}\to\textbf{DGLie}_{\leq1}$ of the compatible morphism of Lie-Leibniz triples $(\varphi,\chi)$:
 \begin{equation}
     G(\varphi,\chi)=\phi
 \end{equation}
 Then, when restricted to the wide subcategory \textbf{compLie-Leib}, the injective-on-objects function $G$ is a functor.
Moreover, since by construction $G(\varphi,\chi)=G(\varphi',\chi')$ if only if $\varphi=\varphi'$ and $\chi=\chi'$, $G$ is faithful.

 % and $\phi|_{T_{-2}}=\tau$.
%This morphism has been constructed from the data of the morphism of strict Lie-Leibniz triples $(\varphi,\chi)$.

Finally, let us explain why  \textbf{compLie-Leib} is the biggest wide subcategory of \textbf{Lie-Leib} (with respect to inclusion) such that $G$ is a (faithful) functor. The idea is that in the general case where the morphism  of Lie-Leibniz triples $(\varphi,\chi)$ is not compatible in the sense of Definition \ref{defcomp},
an obstruction arises. Indeed, let $(\varphi,\chi)$ be any morphism of Lie-Leibniz triples between $(\mathfrak{g},V,\Theta)$ and $(\mathfrak{g}',V',\Theta')$. Then, as seen in the first step of the induction,  the couple $(\varphi,\chi)$ would induce a morphism of differential graded Lie algebras between the associated tensor hierarchies $\mathbb{T}$ and $\mathbb{T}'$ if and only if there  exists a map $\tau:\bigslant{S^2(V)}{K}\to\bigslant{S^2(V')}{K'}$ making the following diagram commutative:
\begin{center}
\begin{tikzpicture}[on top/.style={preaction={draw=white,-,line width=#1}},
on top/.default=5pt]
\matrix(a)[matrix of math nodes, 
row sep=2.5em, column sep=3.5em, 
text height=1.5ex, text depth=0.25ex] 
{&S^2(V)&&S^2(V')\\
&&&\\
&\bigslant{S^2(V)}{K}&&\bigslant{S^2(V')}{K'}\\}; 
%\path[->](a-3-1) edge node[above right]{$\chi\odot\chi$} (a-3-4); %node[below right]{$\varphi$}  (a-1-2); 
%\path[->>](a-3-1) edge node[above left]{$p$} (a-1-2);
%\path[->](a-3-1) edge node[below left]{$\{\,.\,,.\,\}$} (a-4-2);
%\path[->](a-3-4) edge node[above right]{$\{\,.\,,.\,\}'$} (a-4-5);
%\path[->>](a-3-4) edge node[below right]{$p'$} (a-1-5);
\path[->](a-1-2) edge node[above]{$\chi\odot\chi$} (a-1-4);
\path[->](a-3-2) edge node[above]{$\tau$} (a-3-4);
\path[->](a-1-2) edge[on top=5pt] node[left]{$p$} (a-3-2);
\path[->](a-1-4) edge node[right]{$p'$} (a-3-4);
\end{tikzpicture}
\end{center}
However, the existence of such a map $\tau$ is dependent on the fact that $\chi\odot\chi(K)\subset K'$, namely that $(\varphi,\chi)$ is a compatible morphism.
\end{proof}

\begin{remark}
The equivalence of 2-categories between \emph{\textbf{Lie$\times$Mod}} and that of strict Lie 2-algebras \cite{baez:Lie2alg} does not seem to extend to \emph{\textbf{compLie-Leib}}, as it is not obvious how to define a homotopy between two morphisms of tensor hierarchies $(\phi_{-i})_i$ and $(\phi'_{-i})_i$ from the data of a homotopy between morphisms of Lie-Leibniz triples.\end{remark}

%In the general case where the Lie-Leibniz triples are not strict, i.e. when the embedding tensor needs not be $\mathfrak{g}$-invariant, one could not unequivocally define a linear map $\phi_{+1}:R_\Theta\to R_{\Theta'}$. This prevents to define a canonical assignment from morphisms of Lie-Leibniz triples to morphisms of differential graded Lie algebras.

We conclude this section and the paper by some discussion on the algebraic structures involved and open questions. First,  what is the
significance of only partial functoriality in Theorem \ref{centraltheorem}? While the construction of tensor hierarchies from Lie-Leibniz triples was inspired by theoretical physical considerations, the question of functoriality was motivated by the mathematical relevance of characterizing the injective-on-objects function~$G$. With respect to this latter motivation, the result established in Theorem \ref{centraltheorem} seems to be the best one can hope for. It hints at the possibility that the more relevant notion of morphism in the category of Lie-Leibniz triples is that of \emph{compatible morphisms} (see Definition \ref{defcomp}).

 Second, what is the meaning of the cohomology of the tensor hierarchy associated to a given Lie-Leibniz triple? In supergravity theories, the tensor hierarchy appears as a tool to cope with gauge theories in which the gauge algebra is not Lie \cite{deWit:2008ta, deWit-sam:endofhierarchy, Trigiante:2016mnt}. The cohomology of the tensor hierarchy then parametrizes the  $p$-form gauge fields which cannot be obtained from $p+1$-form gauge fields, thus measuring some sort of independence in the gauge parameters. In the algebraic construction described in the present paper, the meaning of the cohomology is not so clear as the relationship with the original physical inspiration has vanished.
 However, the following argument may shed some light on this question.
 For every Lie-Leibniz triple $(\mathfrak{g},V,\Theta)$ for which:
\begin{equation}\label{condstring}
K=\mathrm{Ker}\big(\{\,.\,,.\,\}\big)
\end{equation} the map $\partial_{-1}:T_{-2}\to T_{-1}$ is injective because $T_{-2}=\bigslant{S^2(V)}{\mathrm{Ker}\big(\{\,.\,,.\,\}\big)}[2]$ is isomorphic to the ideal of squares $\mathcal{I}$, the image of $\partial_{-1}$. %Since $\partial_{-1}\big(\partial_{-2}(T_{-3})\big)=0$, this implies in turn  that $\partial_{-2}=0$. 
Moreover, one can then show that Condition \eqref{condstring} implies that $\partial_{-2}=0$, as would be expected from $T_{\leq-1}$ being a cochain complex. %, in agreement with the fact that $\mathrm{Ker}(\partial_{-1})=\mathrm{Im}(\partial_{-2})$, so that is always zero when the Lie-Leibniz triple under consideration satisfies condition \eqref{condstring}. 
The question arises: what about the higher cohomology spaces? 
We have some reason to think that the following conjecture holds:
\begin{conjecture}\label{conjecturr}
For a Lie-Leibniz triple  $(\mathfrak{g},V,\Theta)$ satisfying condition \eqref{condstring} (so in particular for semi-strict Lie-Leibniz triples), the associated tensor hierarchy induces an exact sequence of the ideal of squares $\mathcal{I}$:
\begin{center}
\begin{tikzcd}[column sep=1cm,row sep=0.4cm]
\ldots\ar[r,"\partial_{-4}"]&T_{-4}\ar[r,"\partial_{-3}"]&T_{-3}\ar[r,"\partial_{-2}"]&T_{-2}\ar[r,"\partial_{-1}"]&\mathcal{I}[1]\ar[r]&0
\end{tikzcd}
\end{center}
\end{conjecture}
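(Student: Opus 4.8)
The plan is to reduce the conjecture to an acyclicity statement and then attack it by comparing the tensor hierarchy with the Chevalley--Eilenberg complex of the free graded Lie algebra $F_\bullet=\mathrm{Free}(V[1])$, which is acyclic by Lemma \ref{lemmaexact}. For a stringent triple the map $\partial_{-1}:T_{-2}\to\mathcal{I}[1]$ is already an isomorphism (since $T_{-2}\cong\mathcal{I}$), so exactness at $\mathcal{I}[1]$ and at $T_{-2}$ holds; the entire content of the conjecture is therefore that $\bigslant{\mathrm{Ker}(\partial_{-i})}{\mathrm{Im}(\partial_{-i-1})}=0$ for every $i\geq 2$, i.e. that the tail $\ldots\to T_{-4}\to T_{-3}$ is acyclic. (In the simplest non-Lie examples this tail already vanishes, which makes the statement plausible.)

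First I would package the relevant operators on $\Lambda^\bullet(T_{\leq-1})$: the bracket map $q$, which is a differential ($q^2=0$, Equation \eqref{chevalleyeilen}), the derivation $\partial$ with $\partial^2=0$, and the auxiliary map $m=\rho(\Theta(-);-)$, which obey $mq+qm=0$ (Lemma \ref{lemmacommut}), $m\partial+\partial m=0$ (Lemma \ref{lemmahandy}) and the defining relation $\partial q=m+q\partial$ (Equation \eqref{commutator4}). This is exactly the data of a deformation of the Koszul-type differential $q$ by the embedding tensor. The strategy is then to filter $\Lambda^\bullet(T_{\leq-1})$ and run the resulting spectral sequence (equivalently, a homological-perturbation argument) with $q$ as the unperturbed differential: the $q$-homology is computed degreewise by the exact sequences of Remark \ref{remarquexacte} (Equation \eqref{equationcentrale}), which themselves descend from the acyclicity of $F_\bullet$ through the iterated use of Lemma \ref{snake} made in the proof of Proposition \ref{prophierarchy}. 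Stringency is what forces $K_{-2}=\mathrm{Ker}(\{\,.\,,.\,\})$ exactly, so that the unperturbed homology is concentrated in the single place accounting for $\mathcal{I}$, and the induced map on the next page is governed entirely by $m$, i.e. by multiplication by $\Theta$.

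The final step would be to show that this spectral sequence degenerates, so that the perturbed homology $H_\bullet(T_\bullet,\partial)$ remains concentrated at $\mathcal{I}[1]$; concretely I would try to convert the degeneration into an explicit contracting homotopy $s_{-i}:T_{-i}\to T_{-i-1}$ with $\partial s+s\partial=\mathrm{id}$ off the bottom degree, built by induction from a splitting of the symmetric bracket and the surjectivity of the maps $q_{-i}$. I expect the main obstacle to be precisely here. The difficulty is representation-theoretic: by Proposition \ref{propref} each $\partial_{-i}$ is only $\mathfrak{h}$-equivariant and generates a cyclic $\mathfrak{g}$-module $R_{-i}$ that is merely a quotient of $R_\Theta$, so neither the kernels nor the images of $\partial$ need admit $\mathfrak{g}$-equivariant complements, and a homotopy chosen at one level need not be compatible with the quotients $F_{-i}\onto T_{-i}$ at the next. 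Controlling these higher differentials---ruling out a spurious cycle in $\mathrm{Ker}(\partial_{-i})$ that escapes $\mathrm{Im}(\partial_{-i-1})$---is the crux, and it is plausible that it either requires an extra hypothesis on the triple or a finer understanding of how the relation ideal $K_\bullet$ interacts with $\partial$; this is consistent with the statement being left as a conjecture.
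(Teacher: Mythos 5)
The statement you are addressing is presented in the paper only as Conjecture \ref{conjecturr}: the authors give no proof, only the observation (which you reproduce correctly) that for a stringent triple $\partial_{-1}$ is injective with image $\mathcal{I}$, so that exactness holds at $\mathcal{I}[1]$ and at $T_{-2}$. Up to that point your proposal coincides with what the paper itself says just before stating the conjecture. Everything beyond it --- acyclicity of the tail in degrees $\leq -3$ --- is precisely what is open, and your text does not close it: it proposes a filtration/perturbation scheme and then, candidly, identifies the degeneration of the resulting spectral sequence (equivalently, the construction of a contracting homotopy compatible with the quotients $F_{-i}\onto T_{-i}$ and with the merely $\mathfrak{h}$-equivariant maps $\partial_{-i}$) as an unresolved obstacle. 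So this is a research program with an acknowledged gap, not a proof, and it should be presented as such.

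Two points in the program deserve a concrete warning. First, the object whose homology the conjecture concerns is the chain complex $(T_{\leq-2},\partial)$, whereas the identities $q^2=0$, $mq+qm=0$, $m\partial+\partial m=0$ and $\partial q=m+q\partial$ live on $\Lambda^\bullet(T_{\leq-1})$; before any spectral sequence can be run you must specify which total complex and which filtration actually computes $H_\bullet(T_{\leq-2},\partial)$, and this identification is not automatic. Second, the exact sequences of Remark \ref{remarquexacte} record only the analogues of $H_1=H_2=0$ for the free graded Lie algebra pushed through Lemma \ref{snake}: they give exactness of $\Lambda^3\to\Lambda^2\to T_{-n-1}\to 0$ at two spots in each internal degree, not the full $q$-homology of $\Lambda^\bullet(T_{\leq-1})$ in all exterior degrees, so your assertion that ``the $q$-homology is computed degreewise by the exact sequences of Remark \ref{remarquexacte}'' is itself an unproved claim that the unperturbed page is what you need it to be. Since the authors leave the statement as a conjecture, the fair assessment is that you have correctly isolated where the difficulty lies, but neither you nor the paper proves it.
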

\noindent This would be plausible since in this case, the role of $\mathfrak{g}$ in the construction of the tensor hierarchy is not as salient as compared to the general case (e.g. in defining $T_{-2}$). %This raises the question of the significance of the the homology of the tensor hierarchy associated 

%{example5}
 %It would seem that the injective-on-objects function $G:\textbf{Lie-Leib}\to\textbf{DGLie}_{\leq1}$ being a functor has something to do with the existence of resolutions of the ideal of squares. %However, we know from examples in supergravity that this is not necessarily the case for general Lie-Leibniz triples \cite{???}. 
%This resolution of this conjecture is still under investigation.

Third, the motivation behind this axiomatic construction of tensor hierarchies goes well beyond supergravity theories (the following argument is taken from \cite{Bonezzi:2019ygf}). In classical gauge theories, gauge transformations are interpreted as smooth transformations of the fields $\phi^i$ leaving the action invariant:
\begin{equation}
0=\delta_\epsilon(S[\phi^i])=\int\frac{\delta S}{\delta \phi^i}\delta_\epsilon\phi^i
\end{equation}
where $\phi^i$ denote all the fields and $\epsilon$ denotes an arbitrary vector in the space of gauge parameters $V$. Invariance of the action under the commutator of two gauge transformations $\big[\delta_\epsilon, \delta_{\epsilon'}\big]$ implies that the space of gauge transformations forms a Lie algebra $\mathfrak{g}$.
%\begin{equation}
%0=[\delta_\epsilon,\delta_{\epsilon'}](S[\phi^i])=\int2\frac{\delta^2 S}{\delta\phi^i\delta\phi^j}\delta_{[\epsilon}\phi^i\delta_{\epsilon']}\phi^j+\frac{\delta S}{\delta \phi^k}[\delta_\epsilon,\delta_{\epsilon'}]\phi^k\quad\text{implies}\quad [\delta_\epsilon,\delta_{\epsilon'}]\phi^k=0
%\end{equation}
 From this, physicists deduce that the space of gauge parameters $V$ is itself a Lie algebra and $\delta:V\to\mathfrak{g}$ is a Lie algebra morphism. But as was noticed in \cite{Bonezzi:2019ygf, Hohm:2019wql}, it could well be that $V$ is a Leibniz algebra and $\delta$ is an embedding tensor. As is currently established in higher gauge theories, if the gauge algebra is not Lie, we expect some \emph{higher form of Lie algebras}  to appear, as these are the ones
integrating to (higher) symmetry groups. In the case at hand, the corresponding higher gauge algebra would be the tensor hierarchy, or an alternative, equivalent higher structure \cite{2021arXiv210600108B}. The present article thus provides a rigorous construction of the needed material.

Finally, the question arises as to how the statement of Theorem \ref{centraltheorem} generalizes to pairs of algebras over other operads.
 As explained earlier the motivation behind the construction of the tensor hierarchy associated to a Lie-Leibniz triple comes from theoretical physical considerations. We did not (yet) consider the possibility of extending this construction to other kinds of algebras, with suitable compatibility conditions. This question may be answered once we have a more categorical approach to the construction of tensor hierarchies, departing from the technical, computational proofs presented in Section \ref{section3}. The present paper only provides an explicit, constructive proof of Theorem \ref{centraltheorem}, and hints at the existence of a proof using more operadic and categorical tools.  Actually, this kind of  proof may be of great help in order to generalize Theorem \ref{centraltheorem} to the category of Leibniz algebroids.

 % affichŽ sur la page qui annonce les annexes

% dŽbut des annexes
\appendix
% ajout ˆ la ToC :
%\addappheadtotoc
% page qui annonce les annexes
%\appendixpage

\section{Corrigendum (in collaboration with Jakob Palmkvist)}\label{corrige}%\footnote{\emph{Orebro University, Department of Mathematics, SE-701 82, Orebro, Sweden}})}

In \cite{lavauLieAlgebraCrossed2023}, Theorem \ref{centraltheorem}, $\textbf{DGLie}_{\leq1}$ should be replaced by $\textbf{DGLie}$. The reason is that when the embedding tensor is not $\mathfrak{g}$-equivariant, consistency of the differential graded Lie algebra structure at level $+1$ may sometimes require the introduction of spaces of higher degrees (see Proposition \ref{experimental2} and Sequence \eqref{eq:sequence}). This observation had been missed in \cite{lavauLieAlgebraCrossed2023} and  is based on  an example inspired by supergravity theories contradicting Proposition~\ref{experimental} in \cite{lavauLieAlgebraCrossed2023} which has been provided by one of the authors of the present Corrigendum (see Example~\ref{example1}). Likewise, the part of the proof of Proposition 23 in \cite{lavauLieAlgebraCrossed2023} involving the space of degree~$+1$ is modified accordingly.  These modifications imply that $\textbf{DGLie}_{\leq1}$ should be replaced by $\textbf{DGLie}$ in Propositions~\ref{prop26} and~\ref{prop7}, and thus in Theorem \ref{centraltheorem} too. Table~\ref{fig1} summarizes how this updated result applies to the most distinguished cases.

\begin{comment}
\begin{theorem}\label{centraltheorem}
The functor
\begin{align*}
\overline{G}:\hspace{0.2cm}\emph{\textbf{Lie$\times$Mod}}&\xrightarrow{\hspace*{1.2cm}} \hspace{0.4cm}\emph{\textbf{DGLie}}\\
	\big(\mathfrak{c}\xrightarrow{\Theta}\mathfrak{g}\big)&\xmapsto{\hspace*{1.2cm}} \big(\mathfrak{c}\xrightarrow{\Theta}\mathfrak{g}\xrightarrow{0}\mathbb{R}[-1]\big)
\end{align*}
can be canonically extended to an injective-on-objects function $G:\emph{\textbf{Lie-Leib}}\to\emph{\textbf{DGLie}}$ such that the restriction of this function to the wide sub-category \textbf{\emph{compLie-Leib}} is a faithful functor. Moreover, \textbf{\emph{compLie-Leib}} is the biggest wide subcategory of \textbf{\emph{Lie-Leib}} (with respect to inclusion) such that this functorial property holds. % which coincides with $\overline{G}$ on \emph{\textbf{Lie$\times$Mod}}.
\end{theorem}
\end{comment}

\medskip
\noindent \textbf{I.}  First, Proposition \ref{experimental} in \cite{lavauLieAlgebraCrossed2023} should be rewritten:
\begin{customthm}{13'}\label{experimental2}
To every Lie-Leibniz triple $(\mathfrak{g},V,\Theta)$ in which $V$ is a Lie algebra, there corresponds a differential graded Lie algebra:
\begin{center}
\begin{tikzcd}[column sep=1cm,row sep=0.4cm]
V[1]\ar[r,"\Theta"]&\mathfrak{g}\ar[r,"-\eta(-;\Theta)"]& R_\Theta[-1]\ar[r,"\text{$\llbracket\Theta,.\,\rrbracket$}"]&T_2\ar[r,"\text{$\llbracket\Theta,.\,\rrbracket$}"]&\ldots
\end{tikzcd}
\end{center}
which canonically extends the graded Lie algebra structure on $V[1]\oplus \mathfrak{g}$ defined by Equations \eqref{bronze}-\eqref{silver}. %\eqref{bronze}-\eqref{silver}. %Here, $ R_\Theta\subset \mathrm{Hom} (V,\mathfrak{g})$ is the cyclic $\mathfrak{g}$-submodule generated by the embedding tensor $\Theta$ (see Equation \eqref{defrep}), and $\phi:\mathfrak{g}\to\mathrm{End}( R_\Theta)$ denotes the associated action of $\mathfrak{g}$.
\end{customthm}

\begin{remark}
The error in the proof of Proposition \ref{experimental} in \cite{lavauLieAlgebraCrossed2023} was that Equation \eqref{cannotbe} cannot be enforced, as both sides of the equation are determined independently from existing data, and so they may not coincide. Likewise, one cannot assume Equation \eqref{enforcedjacobi2}. Moreover, Equation \eqref{imporfg2} and Equation \eqref{thetasquared} in \cite{lavauLieAlgebraCrossed2023} should be replaced by $\llbracket\Theta,\Theta\rrbracket=0$ (which is equivalent to the cohomological equation $\partial\circ\partial=0$).
\end{remark}

\begin{proof}
The proof of Proposition \ref{experimental} in \cite{lavauLieAlgebraCrossed2023} is valid, until Equation \eqref{cannotbe}. The construction of the spaces of degrees higher than one is similar to that appearing in \cite{palmkvistTensorHierarchyAlgebra2014} (where the grading convention is reversed).
 We generate the spaces $T_i$, $i\geq2$ by induction, from the commutators of the adjoint action of the elements at level $1\leq k\leq i-1$. Given an element $\chi\in T_{1}=R_{\Theta}[-1]$, the adjoint action of $\chi$ is defined from  Equations~\eqref{imporfg} and \eqref{imporfgbis} in \cite{lavauLieAlgebraCrossed2023} by
$\mathrm{ad}_\chi(x)=\llbracket \chi,x\rrbracket$.

Then, one defines $T_2$ to be the vector space generated by elements of the form $\left(\mathrm{ad}_\chi\circ \mathrm{ad}_\psi+\mathrm{ad}_\psi\circ \mathrm{ad}_\chi\right)\big|_V$, which correspond to the graded commutator $\big[\mathrm{ad}_\chi,\mathrm{ad}_\psi\big]\big|_V$.
If $T_2$ is zero (see Example \ref{example1bis}), the process stops and the differential graded Lie algebra obtained through Proposition \ref{experimental2} is $V[1]\to \mathfrak{g}\to T_1$. Otherwise, we represent the element $\big[\mathrm{ad}_\chi,\mathrm{ad}_\psi\big]\big|_V$ of $T_2$ by the bracket $\llbracket \chi,\psi\rrbracket$ and we assign to it a degree $+2$. The evaluation of $\llbracket \chi,\psi\rrbracket$ on an element $x\in V$ is symbolized by the nested bracket $\llbracket\llbracket \chi,\psi\rrbracket,x\rrbracket$ or, alternatively, by the notation $\mathrm{ad}_{\llbracket \chi,\psi\rrbracket}(x)$. By construction, we then have
% by $\llbracket\llbracket \chi,\psi\rrbracket,.\,\rrbracket$ %When evaluating the latter element on $x\in V$, we write %$\mathrm{ad}_{\llbracket \chi,\psi\rrbracket}(x)=
%$\llbracket\llbracket \chi,\psi\rrbracket,x\rrbracket$, allowing to write 
the following Jacobi identity, for every $x\in V[1]$:
\begin{equation}\label{eq:action2}
\llbracket\llbracket \chi,\psi\rrbracket,x\rrbracket=\llbracket\chi,\llbracket \psi,x\rrbracket\rrbracket+\llbracket\psi,\llbracket \chi,x\rrbracket\rrbracket
\end{equation}

Moreover, the bracket $\llbracket \chi,\psi\rrbracket$ is graded skew-symmetric, i.e. $\llbracket \chi,\psi\rrbracket=\llbracket \psi,\chi\rrbracket$ and represents the action of $T_1$ on $T_1$, taking values in $T_2$. %In particular, if $\chi=\Theta$, then $$
%The action of $T_1$ on $T_1$ is precisely given by:
%\begin{equation}
%\llbracket \chi,\psi\rrbracket=
%\end{equation}
The action of $\mathfrak{g}$ on $T_2$ is then defined by:
\begin{equation}\label{eq:actiong}
a\cdot \llbracket \chi,\psi\rrbracket\overset{\mathrm{def}}{=}\llbracket a\cdot\chi,\psi\rrbracket+\llbracket \chi,a\cdot\psi\rrbracket
\end{equation}
Therefore, if $\chi=\psi=\Theta$, we have:
\begin{equation}\label{eq:zeroness}
0=a\cdot\llbracket\Theta,\Theta\rrbracket=2\llbracket \Theta, a\cdot \Theta\rrbracket
\end{equation}
which is consistent with Equation \eqref{imporfg} in \cite{lavauLieAlgebraCrossed2023}. As $a\cdot\Theta=\eta(a,\Theta)$, where $\eta:\mathfrak{g}\to \mathrm{Hom}(V,\mathfrak{g})$ is the representation of $\mathfrak{g}$ on $\mathrm{Hom}(V,\mathfrak{g})$, Equation \eqref{eq:zeroness} means that $\llbracket\Theta,\eta(a,\Theta)\rrbracket=0$, implying in turn that the linear operator $\partial_{2}=\llbracket\Theta,.\,\rrbracket:T_1\to T_2$ extends the chain complex $V[1]\longrightarrow \mathfrak{g}\longrightarrow  R_{\Theta}[-1]$ to the right.

One defines $T_3$ to be the vector space generated by elements of the form:
\begin{equation}\label{eq:rule}
\big[\mathrm{ad}_\phi,\mathrm{ad}_{\llbracket \chi,\psi\rrbracket}\big]\big|_V\overset{\mathrm{def}}{=}\left(\mathrm{ad}_\phi\circ \mathrm{ad}_{\llbracket \chi,\psi\rrbracket}-\mathrm{ad}_{\llbracket \chi,\psi\rrbracket}\circ \mathrm{ad}_\phi\right)\big|_V\end{equation} Such elements %correspond to the graded commutator $\big[\mathrm{ad}_\phi,\mathrm{ad}_{\llbracket \chi,\psi\rrbracket}\big]\big|_V$, 
are represented in $T_3$ by the nested bracket $\llbracket \phi,\llbracket \chi,\psi\rrbracket\rrbracket$, inheriting the graded skew-symmetry property of the commutator from Equation \eqref{eq:rule}:
\begin{equation}
 \llbracket \phi,\llbracket \chi,\psi\rrbracket\rrbracket=-\llbracket \llbracket \chi,\psi\rrbracket, \phi\rrbracket
\end{equation}
This property is consistent with the grading of $\phi$ and $\llbracket \chi,\psi\rrbracket$.
To elements of $T_3$ are attributed a degree $+3$.%Contrary to earlier, even if $T_3$ is zero, the induction process does not necessarily stop because one may generate elements at level 4 from two elements at level 2.  

The definition of the nested bracket $\llbracket \phi,\llbracket \chi,\psi\rrbracket\rrbracket$  corresponds to the following Jacobi identity:
\begin{equation}\label{eq:action3}
\llbracket\llbracket \phi,\llbracket \chi,\psi\rrbracket\rrbracket,x\rrbracket=\llbracket\phi,\llbracket\llbracket \chi,\psi\rrbracket,x\rrbracket\rrbracket-\llbracket \llbracket \chi,\psi\rrbracket, \llbracket \phi,x\rrbracket\rrbracket
\end{equation}
characterizing the action of $T_3$ on every element $x\in V[1]$. Using Equation \eqref{eq:action2}, the first term on the right-hand side reads:
\begin{equation}\label{eq:mana1}
\llbracket\phi,\llbracket\llbracket \chi,\psi\rrbracket,x\rrbracket\rrbracket=\llbracket\phi,\llbracket\chi,\llbracket\psi,x\rrbracket\rrbracket\rrbracket+\llbracket\phi,\llbracket\psi,\llbracket\chi,x\rrbracket\rrbracket\rrbracket
\end{equation}
while using Equation \eqref{eq:actiong} shows that the second term of the right-hand side of Equation \eqref{eq:action3} reads:
\begin{equation}\label{eq:mana2}
-\llbracket \llbracket \chi,\psi\rrbracket, \llbracket \phi,x\rrbracket\rrbracket=-\llbracket\chi,\llbracket\psi,\llbracket\phi,x\rrbracket\rrbracket\rrbracket-\llbracket\psi,\llbracket\chi,\llbracket\phi,x\rrbracket\rrbracket\rrbracket
\end{equation}
Summing both right-hand sides of Equations \eqref{eq:mana1} and \eqref{eq:mana2}, gathering terms by using Equation \eqref{eq:action2} and eventually using Equation \eqref{eq:action3}, one deduces that, as elements of $T_3$:
\begin{equation}\label{eq:jacobi}
\llbracket \phi,\llbracket \chi,\psi\rrbracket\rrbracket=\llbracket \llbracket \phi,\chi\rrbracket,\psi\rrbracket-\llbracket \chi,\llbracket \phi,\psi\rrbracket\rrbracket
\end{equation}

This is the Jacobi identity for three elements of degree $+1$, and can be understood as how $T_2$ acts on $T_1$ (or conversely). In particular, for $\phi=\chi=\Theta$, one has:
\begin{equation}
\llbracket \Theta,\llbracket \Theta,\psi\rrbracket\rrbracket=\llbracket \underbrace{\llbracket \Theta,\Theta\rrbracket}_{=\,0},\psi\rrbracket-\llbracket \Theta,\llbracket \Theta,\psi\rrbracket\rrbracket
\end{equation}
From this, one deduces that $\partial_3=\llbracket \Theta,.\,\rrbracket:T_2\to T_3$ extends the chain complex:
\begin{center}
\begin{tikzcd}[column sep=1cm,row sep=0.4cm]
V[1]\ar[r,"\Theta"]&\mathfrak{g}\ar[r,"-\eta(-;\Theta)"]& R_\Theta[-1]\ar[r,"\text{$\llbracket\Theta,.\,\rrbracket$}"]&T_2
\end{tikzcd}
\end{center}
to the right.
 The action of $\mathfrak{g}$ on $T_3$ is defined in a similar fashion as in Equation \eqref{eq:actiong} and meets no obstruction.
 
  If $T_3$ is zero-dimensional, then the process stops because $T_4$  could only be generated by elements of the form $\big[\mathrm{ad}_\xi,\mathrm{ad}_{\llbracket \zeta,\llbracket \chi,\psi\rrbracket\rrbracket}\big]\big|_V$ or $\big[\mathrm{ad}_{\llbracket \xi,\zeta\rrbracket},\mathrm{ad}_{\llbracket \chi,\psi\rrbracket}\big]\big|_V$ (where the commutator is graded), but the latter kind of elements can be rewritten using terms of the former kind. Indeed, by using Equations \eqref{eq:action3} and \eqref{eq:jacobi}, together with the knowledge of the action of $\mathfrak{g}$ on $T_3$, one finds:
  \begin{equation}
  \big[\mathrm{ad}_{\llbracket \xi,\zeta\rrbracket},\mathrm{ad}_{\llbracket \chi,\psi\rrbracket}\big](x)=\big[\mathrm{ad}_\xi,\mathrm{ad}_{\llbracket \zeta,\llbracket \chi,\psi\rrbracket\rrbracket}\big](x)+\big[\mathrm{ad}_\zeta,\mathrm{ad}_{\llbracket \xi,\llbracket \chi,\psi\rrbracket\rrbracket}\big](x)
  \end{equation} 
 for every $x\in V$. If $T_3$ is not zero-dimensional, the induction continues, with the space at level $k\geq4$ being defined as: %the vector space generated by elements of the form $\big[\mathrm{ad}_{}\big]$
\begin{equation*}
T_{k}= \mathrm{Span}\left(\big[\mathrm{ad}_{T_1},\mathrm{ad}_{T_{k-1}}\big]\big|_{V}\right) %\sum_{j=1}^{E(\frac{k+1}{2})}\mathrm{Span}\left(\big[\mathrm{ad}_{T_j},\mathrm{ad}_{T_{k-j}}\big]\big|_{V}\right)
\end{equation*}
where the commutator is again graded. If  $T_i=0$ for some $i$ then $T_{j}=0$ for every $j\geq i$. % and $E$ is the floor function. 
The graded Lie bracket between two elements $\Xi$ and $\Omega$ whose degrees sum up to $k$ comes from the very definition of $T_k$, while the graded Jacobi identity between elements of various degrees can be obtained from a similar technique performed through Equations \eqref{eq:actiong} and \eqref{eq:action3}--\eqref{eq:jacobi}. At each step, the chain complex
\begin{center}
\begin{tikzcd}[column sep=1cm,row sep=0.4cm]
V[1]\ar[r,"\Theta"]&\mathfrak{g}\ar[r,"-\eta(-;\Theta)"]& R_\Theta[-1]\ar[r,"\text{$\llbracket\Theta,.\,\rrbracket$}"]&T_2\ar[r,"\text{$\llbracket\Theta,.\,\rrbracket$}"]&T_3\ar[r,"\text{$\llbracket\Theta,.\,\rrbracket$}"]&\ldots\ar[r,"\text{$\llbracket\Theta,.\,\rrbracket$}"]&T_{k-1}
\end{tikzcd}
\end{center}
 can be extended to the right by the linear map $\partial_k=\llbracket\Theta,.\,\rrbracket: T_{k-1}\to T_{k}$. Eventually, the compatibility between the graded Lie bracket and the differential is guaranteed by construction.
\end{proof}

\medskip
\noindent \textbf{II.} We modify the proof of Proposition \ref{prop2} in \cite{lavauLieAlgebraCrossed2023} following the same lines as in heading \textbf{I.}. The only remaining data to be introduced for the differential graded Lie algebra structure to be consistent is the action of the positively graded spaces $T_k$, for $k\geq2$, on the negatively graded spaces $T_{-i}$, for $i\geq2$. This is done by noticing that 1. the action of $T_k$ on $V$ is properly defined by the proof of Proposition \ref{experimental2}, and 2. that the graded Lie bracket on $T_{\leq-1}$ is surjective on $T_{\leq-2}$ (this is item 2 of Proposition \ref{prophierarchy} in \cite{lavauLieAlgebraCrossed2023}). Then, we proceed by induction and we set, to begin with, for any $x,y\in V[1]$ and $\Xi\in T_k$:
\begin{equation}\label{eqdefomega0}
\llbracket \Xi,\llbracket x,y\rrbracket\rrbracket\overset{\mathrm{def}}{=}\llbracket \llbracket \Xi,x\rrbracket,y\rrbracket+(-1)^{k}\llbracket x,\llbracket \Xi,y\rrbracket\rrbracket
\end{equation}

Assume that the action of $T_k$ has been defined on $T_{-j}$ for $1\leq j\leq i-1$, then for any $u\in T_{|u|}$ and $v\in T_{|v|}$ -- where $|u|$ symbolizes the (necessarily negative) degree of $u$ -- such that  % $\llbracket u,v\rrbracket\in T_{-i}$ such that 
$-i-1\leq |u|,|v|\leq -1$ and $|u|+|v|=-i$, % -- where $|u|$ symbolizes the (necessarily negative) degree of $u$ -- 
we set:
\begin{equation}\label{eqdefomega}
\llbracket \Xi,\llbracket u,v\rrbracket\rrbracket\overset{\mathrm{def}}{=}\llbracket \llbracket \Xi,u\rrbracket,v\rrbracket+(-1)^{k|u|}\llbracket u,\llbracket \Xi,v\rrbracket\rrbracket
\end{equation}
In order for the definition of the left-hand side to be consistent -- e.g. it should not depend on the choice of representent of $\llbracket u,v\rrbracket$ in the free graded Lie algebra of $V[1]$ -- one should check by induction that Equation~\eqref{eqdefomega} respects the graded Jacobi identity, in the following sense:
\begin{equation}\label{eqdefomega2}
\llbracket \llbracket \Xi,u\rrbracket,\llbracket v,w\rrbracket\rrbracket+(-1)^{k|u|}\llbracket u,\llbracket \llbracket\Xi,v\rrbracket, w\rrbracket\rrbracket+(-1)^{k(|u|+|v|)}\llbracket u,\llbracket v, \llbracket\Xi,w\rrbracket\rrbracket\rrbracket+\circlearrowright\,=0
\end{equation}
where the arrow symbolizes graded circular permutation. This statement has been justified and proved (for the non-graded case) in  Section 2.2 in~\cite{palmkvistNonlinearRealizationsLie2022}.

Definitions \eqref{eqdefomega0} and \eqref{eqdefomega}, being consistent with the existing graded Lie bracket and differential, turn the following chain complex:
\begin{equation}\label{eq:sequence}
\begin{tikzcd}[column sep=1cm,row sep=0.4cm]
\ldots\ar[r,"\partial_{-3}"]&T_{-3}\ar[r,"\partial_{-2}"]&T_{-2}\ar[r,"\partial_{-1}"]&V[1]\ar[r,"\Theta"]&\mathfrak{g}\ar[r,"-\eta(-;\Theta)"]& R_\Theta[-1]\ar[r,"\text{$\llbracket\Theta,.\,\rrbracket$}"]&T_2\ar[r,"\text{$\llbracket\Theta,.\,\rrbracket$}"]&T_{3}\ar[r,"\text{$\llbracket\Theta,.\,\rrbracket$}"]&\ldots
\end{tikzcd}
\end{equation}
into a differential graded Lie algebra, called the \emph{tensor hierarchy} associated to the Lie-Leibniz triple $(\mathfrak{g},V,\Theta)$ in \cite{lavauLieAlgebraCrossed2023}.
We present  in Table \ref{fig1} the tensor hierarchies associated to several distinguished Lie-Leibniz triples.

\begin{table}[!h]
\centering
\begin{tabular}{l|c|}
\hline
  \multicolumn{1}{|c|}{\textbf{Lie-Leibniz triple}} & \textbf{corresponding dgLa}\\
  \hline
   \multicolumn{1}{|c|}{\begin{tabular}{@{}c@{}}\textbf{Lie alg. crossed module} \\ $V\overset{\Theta}{\longrightarrow} \mathfrak{g}$\\
  {\small $V$ is a Lie algebra}\\
  {\small  $\Theta$ is $\mathfrak{g}$-equivariant}
   \end{tabular}}& $V[1]\overset{\Theta}{\longrightarrow} \mathfrak{g}\overset{0}{\longrightarrow} \mathbb{R}[-1]\hspace{0.475cm} $\\
    \hline
    \multicolumn{1}{|c|}{\begin{tabular}{@{}c@{}}\textbf{augmented Leibniz alg. \cite{bordemannGlobalIntegrationLeibniz2017}}\\
    $V\overset{\Theta}{\longrightarrow} \mathfrak{g}$\\
  {\small  $\Theta$ is $\mathfrak{g}$-equivariant}\end{tabular}}& $\ldots\overset{\partial_{-2}}{\longrightarrow}T_{-2}\overset{\partial_{-1}}{\longrightarrow}V[1]\overset{\Theta}{\longrightarrow} \mathfrak{g}\overset{0}{\longrightarrow} \mathbb{R}[-1]\hspace{3.43cm}$ \\
   \hline
  \multicolumn{1}{|c|}{\begin{tabular}{@{}c@{}}\textbf{Proposition \ref{experimental2}}\\
  $V\overset{\Theta}{\longrightarrow} \mathfrak{g}$ \\ {\small  $V$ is a Lie algebra}\end{tabular}}&$\hspace{2.955cm} V[1]\overset{\Theta}{\longrightarrow} \mathfrak{g}\overset{-\eta(\,.\,,\Theta)}{\longrightarrow} R_\Theta[-1]\overset{\llbracket \Theta,.\,\rrbracket}{\longrightarrow}T_{2}\overset{\llbracket \Theta,.\,\rrbracket}{\longrightarrow}\ldots$ \\
  \hline
    \multicolumn{1}{|c|}{\begin{tabular}{@{}c@{}} \textbf{general Lie-Leibniz triple} \\ $V\overset{\Theta}{\longrightarrow} \mathfrak{g}$ %\\ {\small  $V$ is a Leibniz algebra}\\ {\small  $\Theta$ is not $\mathfrak{g}$-equivariant} 
  \end{tabular} }& $\ldots\overset{\partial_{-2}}{\longrightarrow}T_{-2}\overset{\partial_{-1}}{\longrightarrow}V[1]\overset{\Theta}{\longrightarrow} \mathfrak{g}\overset{-\eta(\,.\,,\Theta)}{\longrightarrow}R_\Theta[-1]\overset{\llbracket \Theta,.\,\rrbracket}{\longrightarrow}T_{2}\overset{\llbracket \Theta,.\,\rrbracket}{\longrightarrow}\ldots$ \\
    \hline
\end{tabular}
\caption{Description of the injective-on-object function \textbf{LieLeib}$\longrightarrow$\textbf{DGLie} in various cases.}
\label{fig1}
\end{table}

\begin{remark}
It is conjectured  that the negatively graded part of these tensor hierarchies are isomorphic to the positively graded part of %the algebra denoted $W(\mathfrak{g})$ in~\cite{cederwallTensorHierarchyAlgebras2020} or, equivalently, 
the Borcherds algebras considered in \cite{cederwallSuperalgebrasConstraintsPartition2015, palmkvistTensorHierarchyAlgebra2014}. This isomorphism will be explored elsewhere.
\end{remark}

\medskip
\noindent \textbf{III.} As a consequence of heading \textbf{II.}, $\textbf{DGLie}_{\leq1}$ should be replaced by $\textbf{DGLie}$ in Propositions \ref{prop26} and \ref{prop7} in  \cite{lavauLieAlgebraCrossed2023}. The proof of the former is not impacted by this change however, as it relies on the non-positively graded part of the algebra only. On the other hand, the functorial property described in Proposition \ref{prop7} is still valid under the modifications brought in by heading \textbf{II.}. This is because the linear map $\phi_{1}:T_{1}\to  T_{1}'$ straightforwardly extends by induction to a graded Lie algebra morphism from $T_{1\leq i}$ to $T_{1\leq i}'$, as we now show.

To start with, one sets:
\begin{equation}
\phi_{2}(\llbracket \chi,\psi \rrbracket)\overset{\mathrm{def}}{=}\llbracket\phi_{1}(\chi),\phi_{1}(\psi)\rrbracket'
\end{equation}
for every $\chi,\psi\in T_1=R_\Theta[-1]$, and more generally, for every $\chi_1,\ldots, \chi_i\in T_1$:
\begin{equation}\label{eqinducmor2}
\phi_{i}\big(\llbracket \chi_1,\llbracket \chi_2,\llbracket\ldots\llbracket \chi_{i-1},\chi_{i}\rrbracket\ldots\rrbracket\big)\overset{\mathrm{def}}{=}\llbracket \phi_1(\chi_1),\llbracket \phi_1(\chi_2),\llbracket\ldots\llbracket \phi_1(\chi_{i-1}),\phi_1(\chi_{i})\rrbracket'\ldots\rrbracket'
\end{equation}
%Let  $i\geq2$ and assume that the morphism $\phi_j:T_j\to T_j'$ has been defined for every $1\leq j\leq i$. 
Since the graded Lie bracket of two elements $\Xi\in T_{j_1}$, $\Omega\in T_{j_2}$, $j_1,j_2\geq1$, can be written as a sum of nested brackets of elements of $T_1$, %$T_{i+1}$ is generated by $\big[\mathrm{ad}_{T_{1}},\mathrm{ad}_{T_{i}}\big]\big|_{V}$, we
%brackets of the form $\big[\mathrm{ad}_{\chi},\mathrm{ad}_{\Xi}\big]\big|_{V}$, where $\chi\in T_{1}$ and $\Xi\in T_{i}$,  one then sets at level $i+1$:
one deduces from Equation \eqref{eqinducmor2} that:
\begin{equation}\label{eqinducmor}
\phi_{j_1+j_2}(\llbracket \Xi,\Omega \rrbracket)=\llbracket\phi_{j_1}(\Xi),\phi_{j_2}(\Omega)\rrbracket'
\end{equation}
%Using Equation \eqref{eqinducmor} inductively, one has:
%\begin{equation}\label{eqinducmor2}
%\phi_{i+1}\big(\llbracket \chi_1,\llbracket \chi_2,\llbracket\ldots\llbracket \chi_{i},\chi_{i+1}\rrbracket\ldots\rrbracket\big)=\llbracket \phi_1(\chi_1),\llbracket \phi_1(\chi_2),\llbracket\ldots\llbracket \phi_1(\chi_{i}),\phi_1(\chi_{i+1})\rrbracket\ldots\rrbracket
%\end{equation}
%for every $\chi_k\in T_1$. 
%This describes uniquely the linear map $\phi_{i+1}$. 
This implies that the collection of linear maps $(\phi_i)_{i\geq1}$ forms a graded Lie algebra morphism from $T_{\geq1}$ to $T_{\geq1}'$.

These linear maps  are moreover compatible with their non-positively graded counterpart, in the sense that Equations \eqref{eqraja1}, \eqref{eqraja2}, \eqref{eq:pasdid3}, \eqref{eq:pasdid3bis} and \eqref{finalequ}, \eqref{finalequ2} in  \cite{lavauLieAlgebraCrossed2023} are still valid for elements of degree higher than $1$. %$\phi_{i}$ ($i\geq1$). 
More precisely, these equations, together with Equations  \eqref{eqdefomega} and \eqref{eqinducmor}, imply by induction that for any $\Xi\in T_i$ ($i\geq2$) and $u\in T_{-j}$ ($j\geq0$) one has:
\begin{equation}
\phi_{i-j}(\llbracket \Xi,u\rrbracket)=\llbracket \phi_i(\Xi),\phi_{-j}(u)\rrbracket
\end{equation}
Let $\phi:\mathbb{T}\to \mathbb{T}'$ be the unique degree 0 linear map restricting to $\phi_i$ on $T_i$, $i\in\mathbb{Z}$. From the proof of Proposition~\ref{prop7} in \cite{lavauLieAlgebraCrossed2023} together with the above discussion, we obtain that $\phi$ is a differential graded Lie algebra morphism.
As a consequence, $\textbf{DGLie}_{\leq1}$ should be replaced by $\textbf{DGLie}$ in Proposition \ref{prop7}. As Propositions \ref{prop26} and \ref{prop7} are constitutive of Theorem \ref{centraltheorem}, we conclude that $\textbf{DGLie}_{\leq1}$ should be replaced by $\textbf{DGLie}$ in Theorem~1 in~\cite{lavauLieAlgebraCrossed2023}.

 %Since it is $\mathfrak{g}$-equivariant, we conclude that It is ob extends naturally to all positive degrees so that the functorial property of Theorem \ref{centraltheorem} is not impacted by the discovery.

%\noindent\textbf{IV.} Most examples of tensor hierarchies arise in gauged supergravity, we refer to \cite{cederwallTensorHierarchyAlgebras2020, palmkvistTensorHierarchyAlgebra2014}.

\medskip
\noindent \textbf{IV.}  Example \ref{example1} was proposed by Jakob Palmkvist as a counter-example to Proposition \ref{experimental} in~\cite{lavauLieAlgebraCrossed2023}. It has been inspired by the treatment of tensor hierarchies in gauged supergravity, from which the notion originates, and in extended geometry. We refer to \cite{cederwallTensorHierarchyAlgebras2020, palmkvistTensorHierarchyAlgebra2014} and references within for details and more examples on this perspective. The graded Lie algebras appearing in Examples \ref{example1} and \ref{example1bis} are the Lie superalgebras of Cartan type denoted by $W(3)$ and $W(2)$ in \cite{kacLieSuperalgebras1977}. The relation between these Lie superalgebras and the more general tensor hierarchy algebras (originally introduced in the gauged supergravity context) was explained in \cite{carboneGeneratorsRelationsLie2019}.

\begin{example}\label{example1}
The data for the Lie-Leibniz triple is $\mathfrak{g}=\mathfrak{gl}_3(\mathbb{R})$, $V=\mathbb{R}^3$ is the defining representation, and the embedding tensor is defined on the basis vectors $e_1, e_2, e_3$ as:
\begin{equation}
\Theta(e_1)=-E_{22}, \quad \Theta(e_2)=E_{21}\quad \text{and} \quad \Theta(e_3)=0
\end{equation}
where $E_{ij}$ is the $3\times 3$ matrix with 1 at the $i$-th line and the $j$-th column. The Leibniz product is null, except in the following two cases:
\begin{equation}
e_1\circ e_2=-e_2\quad \text{and}\quad e_2\circ e_1=e_2
\end{equation}
Thus, the Leibniz product is actually a Lie bracket. The quadratic constraint is verified.

%Let us act on $\Theta$ with $E_{13}$ on the one hand, and  with $E_{32}$ on the other hand 
Let $\chi=E_{13}\cdot \Theta$ and $\psi=E_{32}\cdot\Theta$. We can compute their action on the basis vectors of $V$, using the following generic formula, valid for any vector of $T_{+1}=R_{\Theta}[-1]$:
\begin{equation}\label{eq:actiong2}
(a\cdot \chi)(x)=[a,\chi(x)]-\chi(a\cdot x)
\end{equation}
 for any $x\in V$. It gives:
\begin{align}
\chi(e_i)&=[E_{13},-E_{22}\delta_{i}^1+E_{21}\delta_i^2]-\Theta(e_1\delta_{i}^3)=-E_{23}\delta_i^2+E_{22}\delta_i^3\\
\psi(e_i)&=[E_{32},-E_{22}\delta_{i}^1+E_{21}\delta_i^2]-\Theta(e_3\delta_{i}^2)=-E_{32}\delta_{i}^1+E_{31}\delta_i^2
\end{align}
where $\delta^j_i$ is the Kronecker delta.
Then, we compute the right-hand side of Equation \eqref{eq:action2} in order to prove that the left-hand side is not zero, hence proving that Proposition \ref{experimental} in \cite{lavauLieAlgebraCrossed2023} was erroneous. We have:
\begin{equation}
\llbracket\chi,\llbracket\psi,e_i\rrbracket\rrbracket+\llbracket\psi,\llbracket\chi,e_i\rrbracket\rrbracket=\delta_i^1E_{32}\cdot\chi-\delta_i^2E_{31}\cdot \chi+\delta_i^2E_{23}\cdot\psi-\delta_i^3E_{22}\cdot \psi
\end{equation}
The first term cannot be canceled by any other and is not identically null since, by Equation \eqref{eq:actiong2}:
\begin{align}
E_{32}\cdot \chi(e_i)=[E_{32},-E_{23}\delta_i^2+E_{22}\delta_i^3]-\chi(e_3\delta^2_i)=-E_{33}\delta_i^2+E_{22}\delta_i^2+E_{32}\delta_i^3
\end{align}
This means that $\llbracket\chi,\llbracket\psi,e_1\rrbracket\rrbracket+\llbracket\psi,\llbracket\chi,e_1\rrbracket\rrbracket\neq0$, proving that $\llbracket \chi,\psi\rrbracket\neq0$. Notice that in that particular example,  although $T_2$ is not zero, it turns out that $T_3$ is, so every space of degree higher than 3 is identically zero.
\end{example}

\begin{example}\label{example1bis} This example illustrates that it is possible that $\mathrm{dim}(R_\Theta)>1$ while at the same time $T_i=0$ for every $i\geq2$. Let $\mathfrak{g}=\mathfrak{gl}_2(\mathbb{R})$ and $V=\mathbb{R}^2$ be the defining representation of $\mathfrak{g}$. As embedding tensor, we set:
\begin{equation}
\Theta(e_1)=-E_{12}\quad \text{and} \quad \Theta(e_2)=E_{11}
\end{equation}
The kernel of the embedding tensor is zero, so the Leibniz product induced on $V$ is actually Lie: $
e_1\circ e_2=-e_1$ and $e_2\circ e_1=e_1$, and there are no spaces in degrees lower than $-1$. The image of the embedding tensor in $\mathfrak{gl}_2(\mathbb{R})$ -- that is to say the subspace spanned by $E_{12}$ and $E_{11}$ -- acts trivially on $\Theta$. From Equation \eqref{eq:actiong2}, one deduces the action of the remaining two matrices on the embedding tensor:
\begin{align}
E_{21}\cdot\Theta(e_i)&=-E_{22}\delta_i^1+E_{21}\delta_i^2\\
E_{22}\cdot\Theta(e_i)&=E_{12}\delta_i^1-E_{11}\delta_i^2
\end{align}
One notices that $E_{22}\cdot \Theta=-\Theta$, while $\psi=E_{21}\cdot\Theta:V\to \mathfrak{g}$ is another embedding tensor as it satisfies the quadratic constraint. The only non-trivial actions of matrices of $\mathfrak{g}$ on $\psi$ are the following: $E_{11}\cdot\psi=-\psi$ and $E_{12}\cdot\psi=\Theta$.
The space $R_\Theta$ is thus two-dimensional and we only have to compute the nested bracket involving $\Theta$ and $\psi$:
\begin{equation}\label{eq:example3}
\llbracket\Theta,\llbracket\psi,e_i\rrbracket\rrbracket+\llbracket\psi,\llbracket\Theta,e_i\rrbracket\rrbracket=-\delta_i^1\Theta-\delta_i^2\psi+ \delta_i^1\Theta+\delta_i^2\psi=0
\end{equation}
There is thus no need to introduce the degree $2$ bracket $\llbracket\Theta,\psi\rrbracket$ so $T_2$, and more generally every $T_i$ for $i>1$, is zero. Finally, notice that Equation \eqref{eq:example3} is nothing but Equation \eqref{eq:zeroness}, as seen from  the present example.
\end{example}

\begin{example}\label{example2}
This example provides a case where the Lie algebra is the same as that of \ref{example1bis}, but where the representation $V$ is different, thus generating another orbit at level +1 and at higher levels as well.
This example is a continuation of Example \ref{lodayexample2} of \cite{lavauLieAlgebraCrossed2023}, where the spaces of positive degree had not been computed.
% by computing the bracket of any two elements of the $\mathfrak{g}$-orbit $R_D$ of the embedding tensor $D$ and show that this bracket vanishes.  %(in $\mathrm{Hom}(V,\mathfrak{g})$). 
In the present case, the Lie algebra is $\mathfrak{g}=\mathfrak{gl}_2(\mathbb{R})$, while the Leibniz algebra is $V=\mathcal{M}_{2\times 2}(\mathbb{R})$, equipped with the following Leibniz product:
\begin{equation}\label{majora3}
\begin{pmatrix}
a_{11}& a_{12}\\
a_{21}&a_{22}\\
\end{pmatrix}\circ \begin{pmatrix}
b_{11}& b_{12}\\
b_{21}&b_{22}\\
\end{pmatrix}=\begin{pmatrix}
0& (a_{11}-a_{22})b_{12}\\
-(a_{11}-a_{22})b_{21}&0\\
\end{pmatrix}
\end{equation}The action of $\mathfrak{g}$ on $V$ occurs through the matrix commutator.
The embedding tensor is given by:
\begin{equation}
D:	\begin{pmatrix}
a_{11}& a_{12}\\
a_{21}&a_{22}\\
\end{pmatrix}\xmapsto{\hspace*{1.2cm}}\begin{pmatrix}
a_{11}& 0\\
0&a_{22}\\
\end{pmatrix}
\end{equation}
Using Equation \eqref{eq:actiong2}, one deduces the action of $\mathfrak{g}$ on the embedding tensor $D$.
This action was originally given by Equation \eqref{mixmiya} in \cite{lavauLieAlgebraCrossed2023} as:
\begin{equation}\label{mixmiya}
\eta(a;D)(b)=[a,Db]_A-D([a,b])=\begin{pmatrix}
b_{12}a_{21}-a_{12}b_{21}& -(b_{11}-b_{22})a_{12}\\
(b_{11}-b_{22})a_{21}&b_{21}a_{12}-a_{21}b_{12}\\
\end{pmatrix}
\end{equation}
where $[a,b]$ denotes the commutator of matrices.
 %We notice from Equation \eqref{mixmiya} that the embedding tensor is $\mathrm{Im}(D)$-equivariant, as expected.

%Using Equation \eqref{eq:actiong2}, one deduces the action of $\mathfrak{g}$ on the embedding tensor $D$.
Equation \eqref{mixmiya} implies two things. On the one hand, $D$ is invariant under the action of the two matrices $E_{11}$ and $E_{22}$, which translates the fact that $D$ is $\mathrm{Im}(D)$-equivariant, as it should be. On the other hand,
the actions of $E_{12}$ and $E_{21}$ on $D$ define two linear maps $D^+:V\to \mathfrak{g}$ and $D^-:V\to \mathfrak{g}$ whose actions % on a matrix $a$ %$a=\begin{pmatrix}a_{11}&a_{12}\\a_{21}&a_{22}\end{pmatrix}$ 
are:
\begin{equation}
D^+(a)=\begin{pmatrix}-a_{21}&a_{22}-a_{11}\\0&a_{21}\end{pmatrix}\quad \text{and}\quad D^-(a)=\begin{pmatrix}a_{12}&0\\a_{11}-a_{22}&-a_{12}\end{pmatrix}
\end{equation}
%where $\alpha=a_{22}-a_11$.
Using  Equation \eqref{eq:actiong2}, one observes that %the  $V^+$ and $V^-$ are stable under 
the actions of $E_{12}$ and $E_{21}$ generate
three other linear operators $D^{++}, D^{--}$ and $D^0$ from $V$ to $\mathfrak{g}$:
\begin{align}
&D^{++}(a)=E_{12}\cdot D^+(a)=4\begin{pmatrix}0&a_{21}\\0&0\end{pmatrix},\quad D^{--}(a)=E_{21}\cdot D^-(a)=4\begin{pmatrix}0&0\\a_{12}&0\end{pmatrix}\\
&D^0(a)=E_{21}\cdot D^+(a)=E_{12}\cdot D^-(a)=2\begin{pmatrix}a_{11}-a_{22}&-a_{12}\\-a_{21}&a_{22}-a_{11}\end{pmatrix}
\end{align}
Each one of these linear maps from $V$ to $\mathfrak{g}$ are eigenvectors of the actions of the diagonal matrices (the image of $D$).
 %\begin{equation}
 %E_{ii}\cdot e=(-1)^{i-1}e\quad \text{and}\quad  E_{22}\cdot f=(-1)^{i}f
 %\end{equation}
  %On the other hand, the actions of $E_{12}$ (resp. $E_{21}$) on $f$ (resp. on $e$) give a linear map $h:V\to \mathfrak{g}$ defined by:
 %\begin{equation}
 %h(a)=\begin{pmatrix}-\alpha&0\\0&\alpha\end{pmatrix}
 %\end{equation}
%This map is invariant under $E_{11}$ and $E_{22}$, while $E_{12}\cdot h=2e$ and $E_{21}\cdot h=-2f$.
The $\mathfrak{g}$-orbit $R_D\subset\mathrm{Hom}(V,\mathfrak{g})$ is then spanned by the linear maps $D, D^+, D^-, D^{++}, D^{--}$ and $D^0$, so it is 6-dimensional.

By computing the actions of every matrix of $\mathfrak{g}=\mathfrak{gl}_2(\mathbb{R})$ on the generators $D^+, D^-, D^{++}, D^{--}$ and $D^0$ via Equation \eqref{eq:actiong2}, one deduces that they form an irreducible representation of $\mathfrak{sl}_2(\mathbb{R})$. Indeed, the vectors:
\begin{equation}
v_0=D^{++}, \quad v_1=4D^+, \quad v_2=2D^0, \quad v_3=4D^-\quad \text{and}\quad v_4=D^{--},
\end{equation}
satisfy the usual properties of the 5-dimensional irreducible representation of $\mathfrak{sl}_2(\mathbb{R})$, denoted $V_4$. One can check that the operator $\Omega=6D-D^0$ generates the one-dimensional trivial representation of $\mathfrak{sl}_2(\mathbb{R})$ -- denoted $V_0$ --  so the $\mathfrak{g}$-orbit $R_D$ is fully reducible: $R_D=V_0\oplus V_4$. In particular, the embedding tensor $D$ has one leg in each of the representations.

Then, by denoting $\llbracket \chi,a\rrbracket=\chi(a)$ for every $\chi\in R_D$, one observes that: 
\begin{equation}\label{eq:rhside}
\llbracket D^+,\llbracket D^-,a\rrbracket\rrbracket+\llbracket D^-,\llbracket D^+,a\rrbracket\rrbracket=-2a_{12}D^+-2a_{21}D^-
\end{equation} %Notice as well that we obtain the same situation as in $D=$
The right-hand side being non-zero for a general matrix $a\in V$, one deduces that $\llbracket D^+,D^-\rrbracket$ cannot be zero, and rather should be set to act on matrices as the right-hand side of Equation \eqref{eq:rhside}. % follows: $\llbracket\llbracket D^+,D^-\rrbracket,a\rrbracket=-2a_{12}D^+-2a_{21}D^-$.
 The space $R_D$ being 6-dimensional and the bracket of two elements of $R_D$ being symmetric, one should compute 21 brackets in total to exhaust all the possibilities and find a basis for the space $T_{2}$. The dimension of the latter is strictly lower than 21 however, since some of the brackets vanish, as can be seen from the two following examples:
\begin{equation}
\llbracket D^{++},\llbracket D^{++},a\rrbracket\rrbracket=-4a_{21}E_{12}\cdot D^{++}=0\qquad\text{and}\qquad\llbracket D^{--},\llbracket D^{--},a\rrbracket\rrbracket=-4a_{12}E_{21}\cdot D^{--}=0
\end{equation}

%Upon observing that $D(a)=\frac{1}{2}(h(a)+\mathrm{tr}(a)\mathrm{I_2})$, one can check that $R_D$ is the adjoint representation of $\mathfrak{g}$, with the linear maps $e,f,h$ corresponding to $\mathfrak{sl}_2(\mathbb{R})$, while $2D-h$ is the singlet.

\end{example}

 \bigskip

 \section*{Acknowledgements:} 
The first author wants to thank the University of Pennsylvania, where the project started, for hosting him during one month. He is also grateful to Georges Skandalis and St\'ephane Vassout for having funded most of his research under the ANR project `Analysis on singular and non compact spaces. A $C^*$-algebra approach'. The final stage of this research could not have happened without the financial sustainability of the French social welfare system, as well as  scholarship No. 075-15-2019-1620 from the government of the Russian Federation. Many thanks to  Christian Saemann and Henning Samtleben for their helpful comments during the preparation of the paper. We are indebted to Jakob Palmkvist for his insightful ideas on taking a quotient of the free graded Lie algebra and to the anonymous referees who encouraged us to ponder several questions that led us to improve the first version of this paper. Thanks to Marco Manetti and Mario Trigiante whose comments benefited the current version. We thank Juan Moreno and Jonathan Wise for providing the elegant Lemma~\ref{snake}, which is essential to the construction of the tensor hierarchy. We finally would like to thank Johannes Huebschmann for his careful reading and remarks in the final stage of redaction. Concerning the appendix, Sylvain Lavau and Jakob Palmkvist are grateful to the Mainz Institute for Theoretical Physics (MITP) of the DFG Cluster of Excellence PRISMA* (Project ID 39083149), for its hospitality and support during the workshop "Higher Structures, Gravity and Fields" in January 2023, when this work was completed. We would also like to thank the organizers Martin Cederwall and Henning Samtleben for insightful comments and discussions.

\newpage

\bibliography{mybig} %commenter cette ligne sur overleaf
%\input{Main.bbl}
%\printbibliography %decommenter cette ligne sur overleaf

\end{document}